\documentclass[a4paper,onecolumn,11pt,accepted=2025-12-04]{quantumarticle}
\pdfoutput=1
%% General

\usepackage[utf8]{inputenc}
\usepackage[english]{babel}
\usepackage{bbold}
\usepackage{amsmath}
\usepackage{amsthm}
%% Layout
\usepackage{csquotes}
\usepackage{enumitem}
\usepackage{fullpage}
\usepackage[caption=false,subrefformat=parens,labelformat=parens]{subfig}
\usepackage{graphicx}
\usepackage[dvipsnames]{xcolor}
\usepackage{tabularx}
\usepackage{tikz}
\usetikzlibrary{shapes.geometric,plotmarks,backgrounds,fit,positioning,matrix,calc,circuits.logic.US}
\usepackage{hyperref}
\hypersetup{
    colorlinks = true,
    citecolor=blue,
    filecolor=blue,
    linkcolor=blue,
    urlcolor=blue
}
\usepackage{mathtools}
\usepackage{framed} % shaded
\definecolor{shadecolor}{gray}{0.9}

\usepackage{scrextend} % indent paragraph

\usepackage{float} 
\usepackage{physics}
\usepackage{cleveref}

%% Mathematics

%\usepackage{mathpazo}
\usepackage{amsmath}
\usepackage{amsfonts}
\usepackage{amssymb}
\usepackage{IEEEtrantools}
\usepackage{amsthm}
\usepackage{cases}
\usepackage{stmaryrd}
\usepackage{dsfont}
\usepackage{braket}
\usepackage{algorithm}
\usepackage{algpseudocode}
\usepackage{bm}
\usepackage{longtable}

\usepackage{booktabs}
\usepackage{comment}
\usepackage{placeins}
\usepackage{bm}

\usepackage[affil-it]{authblk}

%% Bibliography

\usepackage[numbers]{natbib}

%% Propositions etc.

\newtheorem{definition}{Definition}
\newtheorem{prop}{Proposition}
\newtheorem{theorem}{Theorem}
\newtheorem{assumption}{Assumption}

\newtheorem{lemma}{Lemma}

\theoremstyle{definition}

\newtheorem{remark}{Remark}

  \usepackage{longtable}
% Specific to this project

\newcommand{\tA}{\widetilde A}
\newcommand{\tB}{\widetilde B} % "t" for tilde

 % "b" for Bar = Overline
 % "bb" for double bar

\newcommand{\vect}[1]{\bm{#1}}
\DeclareMathOperator*{\maximize}{maximize}

\DeclareMathOperator{\aff}{aff}
\DeclareMathOperator{\relint}{relint}
\DeclareMathOperator{\pos}{Pos}

\newcommand{\q}{\vect{q}} % probability distribution on X space
\DeclareMathOperator{\Q}{\mathcal{Q}} % acceptance set in probability space
\DeclareMathOperator{\M}{ \mathcal{M}}
\DeclareMathOperator{\id}{ \mathcal{I} }
\DeclareMathOperator{\idm}{ \mathds{1} }
\DeclareMathOperator{\DM}{D}
\DeclareMathOperator{\dom}{dom}
\newcommand{\leak}{\text{\rm leak}}
\newcommand{\hash}{\text{\rm hash}}
\newcommand{\cA}{\mathcal{A}}
\newcommand{\cB}{\mathcal{B}}
\newcommand{\cE}{\mathcal{E}}
\newcommand{\cF}{\mathcal{F}}
\newcommand{\primal}{\text{primal}}
\newcommand{\dual}{\text{dual}}
\newcommand{\opt}{\text{\rm opt}}
\newcommand{\lin}{\text{\rm lin}}
\newcommand{\gen}{\text{\rm gen}}

%Commands I (Ian) took out of the EAT papers' LaTeX so I could get the theorem to compile in the same form. Feel free to change anything you want moving forward.
\newcommand*{\freq}[1]{\mathsf{freq}(#1)}
\newcommand*{\cM}{\mathcal{M}}
\newcommand*{\Max}[1]{\mathsf{Max}(#1)}
\newcommand*{\MinSigma}[1]{\mathsf{Min}_{\Sigma}(#1)}
\newcommand*{\Min}[1]{\mathsf{Min}(#1)}
\newcommand*{\Var}[1]{\mathsf{Var}(#1)}
\newcommand*{\cX}{\mathcal{X}}

\newcommand{\mbP}{\mathbb{P}}
\newcommand{\mbN}{\mathbb{N}}
\newcommand{\mbR}{\mathbb{R}}
\newcommand*{\cI}{\mathcal{I}}
\newcommand*{\cT}{\mathcal{T}}

\newcommand*{\cZ}{\mathcal{Z}}
\newcommand*{\cK}{\mathcal{K}}
\newcommand{\cL}{\mathcal{L}}
\newcommand{\cQ}{\mathcal{Q}}
\newcommand{\ve}{\varepsilon}
\newcommand{\cS}{\mathcal{S}}
\newcommand{\cP}{\mathcal{P}}

% additional notation (Kun)

\newcommand{\PP}{{{\mathbb P}}}
\newcommand{\<}{\langle}
\renewcommand{\>}{\rangle}
\usepackage{accents}

\newcommand{\ox}{\otimes}

\newcommand{\cD}{\mathcal{D}}

\newcommand{\testround}{\text{\rm test}}
\newcommand{\genround}{\text{\rm gen}}

% General abbreviations
%\newcommand{\ketbra}[1]{ \ket{#1}\bra{#1} }
%\DeclareMathOperator{\Tr}{Tr}

%Comments for theorems etc.
\newtheorem{corollary}[theorem]{Corollary}
\newtheorem{proposition}[theorem]{Proposition}
\newtheoremstyle{named}%
    {}{}{\itshape}{}{\bfseries}{:}{.5em}{\thmnote{#3}}
\theoremstyle{named}
\newtheorem*{assumption*}{Structure Assumption}

%Varepsilon commands

\newcommand{\EC}{\text{EC}}
\newcommand{\vPE}{\varepsilon_\text{\rm PE}}
\newcommand{\vPA}{\varepsilon_\text{\rm PA}}
\newcommand{\vEC}{\varepsilon_\text{\rm EC}}
\newcommand{\vSec}{\varepsilon_{\text{\rm sec}}}
\newcommand{\vBar}{\bar{\varepsilon}}
\newcommand{\vEA}{\varepsilon_{\text{\rm EA}}}
\newcommand{\vAcc}{\varepsilon_{\text{\rm{acc}}}}
\newcommand{\vECRob}{\varepsilon_{\text{\rm{EC,rob}}}}

\newcommand{\ol}{\overline}
\newcommand{\wt}{\widetilde}
\newcommand{\Density}{\mathrm{D}}

\DeclareMathOperator{\st}{subject\ to\ }
\DeclareMathOperator{\fa}{,\ for\ all\ }
\newcommand{\minimize}{ \mathop{\mathrm{minimize}}\limits }

% Colored comments in the PDF

%Protocol Environment
\makeatletter
\newenvironment{breakablealgorithm2}[1][htb]
{% \begin{breakablealgorithm}
	\begin{flushleft}
		\refstepcounter{protocol}% New algorithm
		\hrule height.8pt depth0pt \kern2pt% \@fs@pre for \@fs@ruled
		\renewcommand{\caption}[2][\relax]{% Make a new \caption
			{\raggedright\textbf{\fname@algorithm~\thealgorithm} ##2\par}%
			\ifx\relax##1\relax % #1 is \relax
			\addcontentsline{loa}{algorithm}{\protect\numberline{\thealgorithm}##2}%
			\else % #1 is not \relax
			\addcontentsline{loa}{algorithm}{\protect\numberline{\thealgorithm}##1}%
			\fi
			\kern2pt\hrule\kern2pt
		}
	}{% \end{breakablealgorithm}
		\kern2pt\hrule\relax% \@fs@post for \@fs@ruled
	\end{flushleft}
}
\makeatother

\newcounter{protocol}
\makeatletter
\newenvironment{protocol}{%
	\renewcommand{\ALG@name}{Protocol}% Update algorithm name
		\let\c@algorithm\c@protocol
	\begin{breakablealgorithm2}%
	}{\end{breakablealgorithm2}
}

%\newcounter{protocol}
%\makeatletter
%\newenvironment{protocol}[1][htb]{%
%  \let\c@algorithm\c@protocol
%  \renewcommand{\ALG@name}{Protocol}% Update algorithm name
%  \begin{algorithm}[#1]%
%  }{\end{algorithm}
%}
\makeatother

%\newcounter{protocol}
%\newcommand{\sbline}{\\[.5\normalbaselineskip]}% small blank line
%\newenvironment{protocol}[1]
%  {\par\addvspace{\topsep}
%   \noindent
%   \tabularx{\linewidth}{@{} X @{}}
%    \hline
%    \refstepcounter{protocol}\textbf{Protocol \theprotocol} #1 \\
%    \hline}
%  { \\
%    \hline
%   \endtabularx
%   \par\addvspace{\topsep}}

\newcommand{\cH}{\mathcal{H}}
\newenvironment{aeq}{\begin{equation}
		\begin{aligned}
		}{
		\end{aligned}
\end{equation}}

\crefname{thm}{Theorem}{Theorems}
\Crefname{thm}{Theorem}{Theorems}
\crefname{assump}{Assumption}{Assumptions}
\Crefname{assump}{Assumption}{Assumptions}
\crefname{problem}{Problem}{Problems}
\Crefname{problem}{Problem}{Problems}
\crefname{conjecture}{Conjecture}{Conjectures}
\Crefname{conjecture}{Conjecture}{Conjectures}
\crefname{proposition}{Proposition}{Propositions}
\Crefname{proposition}{Proposition}{Propositions}
\crefname{prop}{Proposition}{Propositions}
\Crefname{prop}{Proposition}{Propositions}
\crefname{corr}{Corollary}{Corollaries}
\Crefname{corr}{Corollary}{Corollaries}
\crefname{lem}{Lemma}{Lemmas}
\Crefname{lem}{Lemma}{Lemmas}
\theoremstyle{definition}
\crefname{definition}{definition}{definitions}
\Crefname{definition}{Definition}{Definitions}
\crefname{defn}{definition}{definitions}
\Crefname{defn}{Definition}{Definitions}
\crefname{remark}{Remark}{Remarks}
\Crefname{remark}{Remark}{Remarks}
\crefname{rmk}{Remark}{Remarks}
\Crefname{rmk}{Remark}{Remarks}
\crefname{example}{Example}{Examples}
\Crefname{example}{Example}{Examples}
\crefname{align}{}{}
\Crefname{align}{}{}
\crefname{equation}{Eq.}{Eqs.}
\Crefname{equation}{Eq.}{Eqs.}
\crefname{ineq}{Ineq.}{Ineqs.}
\Crefname{ineq}{Ineq.}{Ineqs.}
\crefname{protocol}{protocol}{protocols}
\Crefname{protocol}{Protocol}{Protocols}
\crefname{algorithm}{algorithm}{algorithms}
\Crefname{algorithm}{Algorithm}{Algorithms}
\crefname{step}{Step}{Steps}
\Crefname{step}{Step}{Steps}
\crefname{item}{}{}
\Crefname{item}{}{}
\crefname{condition}{condition}{conditions}
\Crefname{condition}{Condition}{Conditions}
\creflabelformat{ineq}{~\upshape(#2#1#3)}

\makeatletter
\newenvironment{breakablealgorithm}
{% \begin{breakablealgorithm}
	\begin{flushleft}
		\refstepcounter{algorithm}% New algorithm
		\hrule height.8pt depth0pt \kern2pt% \@fs@pre for \@fs@ruled
		\renewcommand{\caption}[2][\relax]{% Make a new \caption
			{\raggedright\textbf{\fname@algorithm~\thealgorithm} ##2\par}%
			\ifx\relax##1\relax % #1 is \relax
			\addcontentsline{loa}{algorithm}{\protect\numberline{\thealgorithm}##2}%
			\else % #1 is not \relax
			\addcontentsline{loa}{algorithm}{\protect\numberline{\thealgorithm}##1}%
			\fi
			\kern2pt\hrule\kern2pt
		}
	}{% \end{breakablealgorithm}
		\kern2pt\hrule\relax% \@fs@post for \@fs@ruled
	\end{flushleft}
}
\makeatother

\begin{document}
	
	\title{Finite-Key Analysis of Quantum Key Distribution with Characterized Devices Using Entropy Accumulation}

	\author{Ian George$^{*}$}
		\affiliation{Institute for Quantum Computing and Department of Physics and Astronomy, University of Waterloo, Waterloo, Ontario N2L 3G1, Canada}
	\affiliation{Department of Electrical \& Computer Engineering, University of Illinois, Urbana, Illinois 61801, USA}
	\orcid{0000-0002-2803-2421}
	\author{Jie Lin$^{*}$}
	\affiliation{Institute for Quantum Computing and Department of Physics and Astronomy, University of Waterloo, Waterloo, Ontario N2L 3G1, Canada}
	\orcid{0000-0003-1750-2659}
		\author{Thomas van Himbeeck$^{*}$}
	\affiliation{Institute for Quantum Computing and Department of Physics and Astronomy, University of Waterloo, Waterloo, Ontario N2L 3G1, Canada}
	\affiliation{Department of Electrical \& Computer Engineering, University of Toronto, Toronto, Ontario M5S 3G4, Canada}
\author{Kun Fang}
		\affiliation{Institute for Quantum Computing and Department of Physics and Astronomy, University of Waterloo, Waterloo, Ontario N2L 3G1, Canada}
 \affiliation{School of Data Science, The Chinese University of Hong Kong, Shenzhen, Guangdong, 518172, China}
	
	\author{Norbert L{\"u}tkenhaus}
	\affiliation{Institute for Quantum Computing and Department of Physics and Astronomy, University of Waterloo, Waterloo, Ontario N2L 3G1, Canada}
	\orcid{0000-0002-4897-3376}
	
	\def\thefootnote{*}\footnotetext{These authors contributed equally to this work}\def\thefootnote{\arabic{footnote}}

	\maketitle

	\begin{abstract}
	  The Entropy Accumulation Theorem (EAT) was introduced to significantly improve the finite-size rates for device-independent quantum information processing tasks such as device-independent quantum key distribution (QKD). A natural question would be whether it also improves the rates for device-dependent QKD. In this work, we provide an affirmative answer to this question. We present new tools for applying the EAT in the device-dependent setting. We present sufficient conditions for the Markov chain conditions to hold as well as general algorithms for constructing the needed min-tradeoff function. Utilizing Dupuis' recent privacy amplification without smoothing result, we improve the key rate by optimizing the sandwiched R\'{e}nyi entropy directly rather than considering the traditional smooth min-entropy. We exemplify these new tools by considering several examples including the BB84 protocol with the qubit-based version and with a realistic parametric downconversion source, the six-state four-state protocol and a high-dimensional analog of the BB84 protocol.
	\end{abstract}
	\tableofcontents
\section{Introduction}
Quantum key distribution (QKD) protocols \cite{Bennett1984, Ekert1991}  enable two legitimate parties (Alice and Bob) to establish shared secret keys in the presence of any eavesdropper (Eve) who has control over the channel connecting the two parties and is only limited by the laws of quantum mechanics. That is, QKD protocols establish an information-theoretically secure shared secret key. The generated secret keys can be used in many cryptographic applications that our modern cybersecurity relies on, such as encryption and authentication. The field of QKD has grown rapidly in the last couple decades. Novel protocols have been proposed and analyzed (see \cite{Scarani2009, Xu2020,Pirandola2020} for reviews). Many QKD experiments (e.g. \cite{Boaron2018, Fang2020}) have been demonstrated to reach increasingly longer distances and/or to achieve higher key bits per second. The maturity of the field can be witnessed by development of commercial prototypes in several companies, successful launch of a QKD satellite in China \cite{Liao2017}, many ongoing efforts to launch QKD satellites from all over the world \cite{Bedington2017}, and developments of chip-based QKD systems \cite{Sibson2017,Zhang2019,Wei2020}.

While currently the field of implementation security (as opposed to protocol security), requiring certification procedures and security proofs with more refined models, captures an increasing amount of attention, there are still challenges to be addressed on the level of protocol security itself.  One major technical difficulty is guaranteeing security of the protocol when using finite resources. The security of a novel protocol is often first proved in the asymptotic key limit where Alice and Bob exchange infinitely many signals, since in that regime typically reduction to the identically and independently distributed (i.i.d.) assumption about the individual exchanged signals can be made and statistical fluctuations can be neglected. However, any realistic system can only exchange a finite number of signals. For example, the time interval of low orbiting satellites having line of sight to ground stations is limited to minutes, thus naturally limiting the total number of signals that can be exchanged. The study of security with finite resources is known as finite key analysis. In recent years, finite key analysis against general attacks has been performed for several QKD protocols. One aims to provide security proofs that allow one to generate the most key using the fewest signals and converge to the fundamental asymptotic limit quickly.

Over the past decade, several techniques have been proposed to bridge the gap between asymptotic security and finite key analysis against general attacks. The Quantum de Finetti theorem \cite{Renner2005} was first used to achieve this task. While it makes no restrictions on the QKD protocol implemented,\footnote{The Quantum de Finetti theorem requires permutation invariance, which in principle entails applying a coordinated random permutation on the state by Alice and Bob. However, it was shown in \cite{Scarani2008b} that the min-entropy after symmetrization lower bounds the original, so the \textit{implemented} protocol does not need the symmetry enforced.} the key rate becomes very pessimistic at practical signal block size. The postselection technique \cite{Christandl2009} was later presented to improve the key rate if the implemented protocol is permutation invariant with respect to any input. Although it gives better performance, the key rate is known to in general remain pessimistic as it scales exponentially in the dimension of Alice and Bob's signals. More recently, the entropy accumulation theorem \cite{Dupuis2016, Dupuis2019} was developed to offer tighter finite key analysis. It has been successfully applied to prove the security of device-independent (DI) QKD protocols \cite{Arnon2018,Arnon2019}.

From the perspective of implementation security, DIQKD is favorable since it requires minimal assumptions on Alice and Bob's devices and thus is immune to most side-channel attacks. Thanks to the entropy accumulation theorem \cite{Dupuis2016} and its application to DIQKD \cite{Arnon2018}, the experimental requirements to implement DIQKD have become within reach of state-of-the-art technologies. Very recently, proof-of-principle experimental demonstrations of DIQKD with positive key rates are reported for extremely short distances \cite{Nadlinger2021,Zhang2021,Liu2021}. Even if the transmission distance is extended to practical regimes in the near future, a DIQKD system is expected to be more costly since the experimental requirements are still demanding. Beyond the experimental requirements, unlike device-dependent QKD, DIQKD is currently not known to be universally composably secure \cite{Barrett13,Arnon2019}. This means currently this property must either be sacrificed or a new set of devices must be used each time one establishes a key. As such, while DIQKD has advantages for implementation security and can be implemented, at least over very short distances, device-dependent QKD (i.e. QKD with characterized devices) remains a practical option with different theoretical security guarantees at the moment.\footnote{We note that measurement-device-independent (MDI) QKD also falls in the category of device-dependent QKD since one still makes assumptions on both Alice's and Bob's devices.}

Beyond these limitations, as less assumptions are put on trusted parties' devices, DIQKD key rates are unavoidably lower than a standard device-dependent QKD protocol. On the other hand, with the best practice to countermeasure known side-channel attacks, suitable risk management and use cases, and efforts from standardization, device-dependent QKD remains a vital player for a wide adoption of QKD in the future. Therefore, it is still of great interest to provide as tight as possible key rates for device-dependent QKD. Our goal here is to apply the entropy accumulation theorem to provide better finite key rates of device-dependent QKD. 

While concurrent work used the generalised entropy accumulation theorem \cite{Metgers-2022a} to establish the security of device-dependent QKD protocols, with a focus on prepare-and-measure protocols when only one signal is in the physical channel at a time \cite{Metgers-2022b}, the original entropy accumulation theorem has not been applied to determine the key rate for any device-dependent QKD protocol.\footnote{In \cite{Dupuis2016} the authors presented how the asymptotic rate was recovered for BB84, but did not present finite-size key lengths.} In this work we focus on applying the original EAT to the analysis to the most general class of entanglement-based protocols we can.\footnote{For entanglement-based protocols, there is no real advantage to using the generalised EAT and so this is largely without loss of generality.} There are two major obstacles to overcome in applying the original EAT to device-dependent QKD. The first is to guarantee the Markov condition is satisfied by the protocol. The second challenge is to have a general method to construct the required min-tradeoff function. In this work we address both challenges.

The Markov chain condition in effect guarantees the protocol is well-behaved in what side-information is leaked to Eve during the protocol. This is an important aspect as device-dependent protocols often make use of announcing more information about each signal than DI protocols. In this work, we provide sufficient conditions for satisfying the Markov conditions of the EAT. Specifically, we prove that when Alice and Bob's announcements are determined by POVMs with a simple block-diagonal structure, then the protocol satisfies the Markov chain conditions (see \Cref{subsec:restrictionsOnAnnouncements}). This block-diagonal structure is satisfied for many practical discrete-variable QKD protocols. Moreover, we note that many practical non-sequential protocols will satisfy the Markov chain conditions necessary, but as they aren't sequential, don't satisfy the conditions of \cite{Metgers-2022b}.

The challenge of developing a general method for constructing a min-tradeoff function has been addressed for the device-independent scenario in a recent work \cite{Brown2020}. In concurrent work \cite{Metgers-2022b}, the construction of a min-tradeoff function has been reduced to asymptotic key rate solvers in the device-dependent setting under the assumption only a single signal is ever in the physical channel at a time. This is analagous to the first algorithm of our numerical method for constructing tight min-tradeoff functions for device-dependent QKD protocols although we only recover the Markov chain conditoins be satisfied. The basic idea of this first algorithm is similar to our existing numerical method for asymptotic key rates \cite{Coles2016, Winick2018}. The second algorithm improves on the first one by considering the second-order term in the original EAT. It is based on Fenchel duality. However, unlike the formulation of the objective function in Refs. \cite{Coles2016, Winick2018}, we present here a different formulation of the objective function for the optimization which has advantages in terms of dimension needed for the numerical optimization as well as a simpler representation of procedures to handle classical postprocessing.

Using the guarantee of our numerical method constructing a min-tradeoff function, we present a general key length bound. It is largely similar to that of Ref. \cite{Arnon2018}, but, building on the recent result that privacy amplification may be characterized using sandwiched R\'{e}nyi entropies rather than the smooth min-entropy \cite{Dupuis2021}, we are able to show we can improve the key rate by optimizing over sandwiched R\'{e}nyi entropies rather than using the smooth min-entropy bounds. To the best of our knowledge, the only work to use R\'{e}nyi entropies in establishing the security of QKD is \cite{Abruzzo_2011}. However, we note that \cite{Abruzzo_2011} does so by considering the smoothed $2$-R\'{e}nyi entropy. This differs from the ability to analyze the key rate of R\'{e}nyi entropies directly by establishing a security proof that makes no use of smoothed entropic quantities as is done in this work.\footnote{We remark that the smoothed $2$-R\'{e}nyi entropy was also considered for analyzing the large deviation regime of privacy amplification by Hayashi \cite{Hayashi-2014a}.}

Taking all of these tools together, we apply our method to several examples. The first example is a simple qubit-based BB84 protocol and the second example is the entanglement-based six-state four-state protocol \cite{Tannous2019}. We use these two examples to demonstrate behaviors of our algorithms. The third one is a high-dimensional analog of the BB84 protocols. We compare these results with the key rates obtained by the postselection technique \cite{Christandl2009} to show that the EAT can outperform the postselection technique for high-dimensional signals. In the last example, we demonstrate that our method can also work for optical implementations of QKD and in particular, we study the entanglement-based BB84 protocol with a spontaneous parametric downconversion source in a lossy and noisy channel.

Our work here provides a first step along the direction of applying EAT to general device-dependent QKD protocols. Currently, we limit ourselves to entanglement-based protocols in exchange for the generality of protocol which we can consider. To handle prepare-and-measure protocols, one needs to be able to incorporate the source-replacement scheme \cite{Ferenczi2012}, but it remains unclear how to do this while satisfying the Markov chain conditions of the original EAT. In \cite{Metgers-2022b}, source-replacement is used by circumventing the Markov chain conditions by using the generalised EAT and assuming truly sequential implementation of the rounds in the protocol. As noted previously, the demand of a truly sequential implementation is not practical in general. It remains a technical challenge how to relax this condition and prove security for prepare and measure protocols.

The rest of the paper is organized as follows. In \Cref{sec:notation}, we summarize our notational conventions and in \Cref{sec:securitydefinition}, we review the security definition of QKD. In \Cref{sec:EATBackground}, we review the entropy accumulation theorem and adapt the theorem to the device-dependent setting. We do this first in the case where side information is all seeded randomness, as was the case in previous works. We then provide sufficient conditions for satisfying the Markov chain conditions necessary to apply the EAT without relying on all side-information being randomly seeded. This is important as many device-dependent QKD protocols make announcements which depend on the outcome of the measurement. In \Cref{sec:QKD_protocol}, we describe the generic protocol to which our method is applicable, protocols where all quantum signals are exchanged prior to any classical communication and all communication acts on a single `round' of quantum communication. We also provide the assumptions on public announcements. In \Cref{sec:SecurityofDDQKDfromEAT}, we then apply the EAT to this family of protocols and obtain a key length formula. In \Cref{sec:construction_min_tradeoff}, we discuss how to construct min-tradeoff functions numerically and present two numerical algorithms for obtaining a nearly tight min-tradeoff function. In \Cref{sec:example}, we apply our method to several examples. We then conclude in \Cref{sec:conclusions}. We leave technical details to appendixes. 

\section{Preliminaries}\label{sec:preliminaries}
We discuss the notational convention and some useful definitions in \Cref{sec:notation}. We then review the security definition of QKD in \Cref{sec:securitydefinition}. 
\subsection{Notation}\label{sec:notation}
We briefly summarize the notation which we use throughout this work.

\begin{longtable}{l|l} %longtable is so it can make itself split across pages if needed
\toprule[2pt]
General notations & Descriptions \\
\hline
$A,B,C,\cdots$ & Quantum systems and their associated Hilbert spaces\\
$|A|,|B|,|C|,\cdots$ & Dimension of the Hilbert spaces\\
$A^n_1$ & Shorthand for $A_1A_2 \cdots A_n$\\
$[n]$ & The set of natural numbers from 1 to $n$\\
$\id$ & Identity map\\
$f$ & A min-tradeoff function\\
$\vect{f}$ & A vector that is used to construct the min-tradeoff function $f$\\
$\vect{1}$ & A vector of all ones\\
\midrule[2pt]
EAT statement & Descriptions \\
\hline
$P_i$ & Public information in the EAT statement\\
$S_i$ & Secret information in the EAT statement\\
$T_i$ & Test flag variable\\
$X_i$ & Test result register in the EAT statement\\
$E$ & Eve's system\\
$\Omega$ & An event\\
$\rho[\Omega]$ & Probability of $\Omega \subset \cX$ for $\rho_{XA}$ (See main text in \Cref{sec:notation}) \\
$\freq$ & Frequency of a given event\\
$\q$ & A probability vector\\
$\PP(\cX)$ & The probability vector space with events from $\cX$\\
$\mathrm{D}(A)$ & The set of quantum states on Hilbert space $A$\\
 \bottomrule[2pt]  
\caption{\small Overview of notational conventions.}
\label{tab: notations}
\end{longtable}

For consistency, we at times use the notation for quantum states conditioned on an event from previous works \cite{Dupuis2016,Dupuis2019}. Let $\cX$ be a finite alphabet, $X \cong \mathbb{C}^{|\cX|}$. Then a classical-quantum (CQ) state may be written as $\rho_{XA} = \sum_{x \in \cX} \dyad{x} \otimes \rho_{A,x}$ where the quantum part of this decomposition is generally called the conditional state. Given an event $\Omega \subset \cX$, the probability of the event is $\rho[\Omega]$, which may be calculated by $\rho[\Omega]:= \sum_{x \in \Omega} \Tr[\rho_{A,x}]$. Lastly, the state conditioned on the event is given by $\rho_{XA|\Omega} = \frac{1}{\rho[\Omega]} \sum_{x \in \Omega} \rho_{A,x}$. Note that a conditioned state is therefore re-normalized.

We will also require various entropies which we also introduce here. We use the notation of \cite{Dupuis2016,Dupuis2019}. We refer to \cite{Tomamichel2016} for further background, but note that the notation differs in that work.

For $\alpha \in (0,1) \cup (1, \infty)$, define the minimal divergence by 
\begin{align*}
	D_{\alpha}(\rho||\sigma) := 
	\begin{cases}
		\frac{1}{\alpha-1} \log \frac{\|\sigma^{\frac{1-\alpha}{2\alpha}} \rho \sigma^{\frac{1-\alpha}{2\alpha}} \|^{\alpha}_{\alpha}}{\Tr(\rho)} & \text{ if }\alpha <1 \text{ or if } \alpha > 1\text{ and } \mathrm{supp}(\rho) \subseteq \mathrm{supp}(\sigma) \\ 
		+\infty & \text{ otherwise}
	\end{cases}\ , 
\end{align*}where $\norm{\cdot}_p$ is the Schatten $p$-norm for $p \geq 1$ and the definition is extended to all $p >0$. 
For any $\rho \in \mathrm{D}(A\otimes B)$, $\sigma \in \mathrm{D}(B)$, define  $$H_{\alpha}(\rho_{AB}||\sigma_{B}) := -D_{\alpha}(\rho_{AB}||\mathbb{1}_{A} \otimes \sigma_{B}) \, . $$
We then define
$H_{\alpha}(A|B)_{\rho_{AB}} := H_{\alpha}(\rho_{AB}||\rho_{B})$ and $H^{\uparrow}_{\alpha}(A|B)_{\rho} := \sup_{\sigma_{B}} -D_{\alpha}(\rho_{AB}|| \mathbb{1}_{A} \otimes \sigma_{B})$. We refer to both these classes of entropies as sandwiched R\'{e}nyi entropies. We may also recall the smooth min-entropy defined by 
$$ H^{\varepsilon}_{\min}(A|B)_{\rho_{AB}} := \underset{\tilde{\rho} \in \mathcal{B}^{\epsilon}(\rho_{AB})}{\max} H_{\min}(A|B)_{\tilde{\rho}} \ , $$
where $H_{\min}(A|B)_{\rho_{AB}} := -\log(\min \{\Tr(Y): \rho_{AB} \leq \mathbb{1}_{A} \otimes Y_{B} \ \})$, $\mathcal{B}^{\epsilon}(\rho_{AB}) := \{\tilde{\rho}_{AB} \geq 0 : \Tr(\tilde{\rho}_{AB}) \leq 1 \, \& \, P(\tilde{\rho}_{AB},\rho_{AB}) \leq \varepsilon \}$, and $P$ is the purified distance \cite{Tomamichel2012thesis,Tomamichel2016}. Throughout this work, $\log$ is referred to $\log_2$.

\subsection{QKD security definition}\label{sec:securitydefinition}
We provide a short review of the $\ve$-security framework of QKD \cite{Renner2005,Portmann2014}. A QKD protocol is $\ve$-secure if for any input state, the output state $\rho_{S_AS_BE}$ conditioned on that the protocol does not abort (and thus subnormalized) satisfies
\begin{aeq}\label{eq:security_def_trace_distance}
\frac{1}{2}\norm{\rho_{S_AS_BE}-\pi_{S_AS_B}\ox \rho_E}_1 \leq \ve \, ,
\end{aeq}where $\pi_{S_AS_B} = \sum_{s \in \cS}\frac{1}{\abs{\cS}}\dyad{s}\ox\dyad{s}$, and $\cS$ is the space of keys that could be generated from the protocol. The security parameter $\ve$ quantifies the amount of deviation of the real protocol from an ideal protocol. In an ideal QKD protocol, Alice and Bob are supposed to obtain an identical key, which is the correctness requirement; The key is supposed to be distributed from a uniform distribution among all possible keys and that Eve knows no information about the key, which is the secrecy requirement. This security definition in terms of trace distance has an operational interpretation: If a distinguisher is given either the real or the ideal protocol as a black box with an equal \textit{a priori} probability and the goal is to verify which protocol the black box implements, then the probability that this distinguisher can guess correctly by looking at output states of the black box is at most $\frac{1}{2}(1+\ve)$. We note that in the case of aborting, both protocols output a trivial key symbol.    

In \Cref{eq:security_def_trace_distance}, we use subnormalized states. Equivalently, if we define in terms of normalized output states $\tilde{\rho}_{S_AS_BE} = \rho_{S_AS_BE}/\Tr(\rho_{S_AS_BE})$ and $\tilde{\rho}_E = \rho_E/\Tr(\rho_E)$ where $\Tr(\rho_{S_AS_BE}) = \Tr(\rho_E):=\Pr(\text{accept})$ denotes the probability that the protocol does not abort, then \Cref{eq:security_def_trace_distance} can be written as
\begin{equation}\label{eq:security-scaled-by-accept}
\frac{1}{2}\Pr(\text{accept})\norm{\tilde{\rho}_{S_AS_BE}-\pi_{S_AS_B}\ox \tilde{\rho}_E}_1 \leq \ve \ .
\end{equation}

It is often convenient to discuss the secrecy requirement and correctness requirement separately since the correctness requirement is usually guaranteed by the error correction and error verification steps of a QKD protocol. A key is $\vSec$-secret if 
\begin{aeq}\label{eq:secrecy}
\frac{1}{2}\norm{\rho_{S_AE}-\pi_{S_A}\ox \rho_E}_1 \leq \vSec \, ,
\end{aeq}where $\pi_{S_A} = \sum_{s \in \cS}\frac{1}{\abs{\cS}}\dyad{s}$. A QKD protocol is $\ve_{\text{cor}}$-correct if the joint probability that the protocol does not abort and that Bob's key is different from Alice's key is at most $\ve_{\text{cor}}$. By the triangle inequality of the trace norm, it is easy to see that if the QKD protocol is $\ve_{\text{cor}}$-correct and it generates $\vSec$-secret keys, then the protocol is $\ve$-secure with $\ve=\vSec +\ve_{\text{cor}}$.

\section{The Entropy Accumulation Theorem}\label{sec:EATBackground}
	\FloatBarrier
	
	\begin{figure}
		\begin{center}
			\begin{tikzpicture}[thick]
				\tikzstyle{porte} = [draw=black!50, fill=black!20]
				\draw
				++(1, 0) node[porte] (m1) {$\cM_1$}
				++(3, 0) node[porte] (m2) {$\cM_2$}
				++(2, 0) node (dotdotdot) {$\cdots$}
				++(2, 0) node[porte] (mn) {$\cM_n$}
				(m1) ++(-1, -1.2) node (a1) {$S_1$}
				(m1) ++(1, -1.2) node (b1) {$P_1$}
				(m1) ++(0, -2.3) node (x1) {$X_1$}
				(m2) ++(-1, -1.2) node (a2) {$S_2$}
				(m2) ++(1, -1.2) node (b2) {$P_2$}
				(m2) ++(0, -2.3) node (x2) {$X_2$}
				(mn) ++(-1, -1.2) node (an) {$S_n$}
				(mn) ++(1, -1.2) node (bn) {$P_n$}
				(mn) ++(0, -2.3) node (xn) {$X_n$}
				(x2) ++(0,-1) node {(a)  EAT Process.}
				;
				\draw 
				(m1) edge[->] (a1)
				(m1) edge[->] (b1)
				(m2) edge[->] (a2)
				(m2) edge[->] (b2)
				(mn) edge[->] (an)
				(mn) edge[->] (bn)
				(a1) edge[->, dotted] (x1)
				(b1) edge[->, dotted] (x1)
				(a2) edge[->, dotted] (x2)
				(b2) edge[->, dotted] (x2)
				(an) edge[->, dotted] (xn)
				(bn) edge[->, dotted] (xn)
				;
				\draw
				(m1) ++(-1.5, 0) edge[->] node[midway, above] {$R_0$} (m1)
				(m1) edge[->] node[midway, above] {$R_1$} (m2)
				(m2) edge[->] node[midway, above] {$R_2$} (dotdotdot)
				(dotdotdot) edge[->] node[midway, above] {$R_{n-1}$} (mn)
				(-0.5,0)  -- (-1,0.5) -- (-0.5,1) node[above,shift={(0.5,0)}]  {$E$} (-0.5,1) edge[->] (8.9,1);
				\begin{scope}[shift={(12,0)}]
					\draw[dashed,very thick] (-2.2,2.4) -- (-2.2,-3);
					\draw
					(0.5, 0) node[porte] (mi) {$\cM_i$}
					(mi) ++(-1, -1.2) node (ai) {$S_i$}
					(mi) ++(1, -1.2) node (bi) {$P_i $}
					(mi) ++(0, -2.3) node (xi) {$X_i$} 
					(xi) ++(0,-1) node {(b) Single round.}
					;
					\draw 
					(mi) ++(-1.5, 0) edge[->] node[midway, above] {$R_{i-1}$} (mi)
					(mi) edge[->] (ai)
					(mi) edge[->] (bi)
					(ai) edge[->, dotted] (xi)
					(bi) edge[->, dotted] (xi)
					(mi) edge[->] node[midway,above] {$R_{i}$} (2,0)
					;
					\draw
					(-1,0)  -- (-1.5,0.5) -- (-1,1) node[above,shift={(0.5,0)}]  {$R$} (-1,1) edge[->] (2,1)
					;
				\end{scope}
			\end{tikzpicture}
		\end{center}
		\caption{Diagrammatic depiction of EAT process and theorem. (a) Captures the overall process and (b) is the process captured by the min-tradeoff function (See \Cref{defn:mintradeofffunction}). They are related by the Entropy Accumulation Theorem (\Cref{thm:EATv2}). Note that $\cM_{n}$ may be viewed as outputting a trivial register, $R_{n} \cong \mathbb{C}$, which we have suppressed. The output of the EAT process is $\rho_{S^{n}_{1}P^{n}_{1}X^{n}_{1}E}$, the output registers along with the purification.}
		\label{fig:EATProcess}
	\end{figure}

	In this section we review the Entropy Accumulation Theorem (EAT) \cite{Dupuis2016,Dupuis2019} which is the unifying technical tool of this work. The EAT is motivated as follows. Fundamentally, the formal property of secrecy using a process with a finite number of rounds is characterized by the smooth min-entropy \cite{Renner2005}. Unfortunately, the smooth min-entropy is a functional that is difficult to calculate directly for large systems. It therefore follows that one wishes to determine tight bounds on the smooth min-entropy of a given process via a reduction that is computationally feasible. The EAT accomplishes this task for well-behaved sequential processes. The idea of the EAT is that, under some reasonable behavior, one can bound the smooth min-entropy of the finite-length process by the worst-case accepted asymptotic behavior along with some correction terms. 
	
	However, in this work we find empirically that a recent work that characterizes secrecy using the sandwiched R\'{e}nyi entropies $H^{\uparrow}_{\alpha}$ \cite{Dupuis2021} can lead to improved key rates. As such, rather than presenting results in terms of smoothed entropies in the main text, we present them in terms of $H^{\uparrow}_{\alpha}$ entropies. For completeness and perhaps intuition, in Appendix \ref{app:EAT-Sec-with-Smoothing}, we present all results in terms of smooth entropies. We now present the formal statement of the EAT along with relevant definitions after which we elaborate on the intuition and relation to the rest of this work.
	
	\subsection{Formal statement}
	
	To formally state the EAT, we review some definitions first \cite{Dupuis2016,Dupuis2019} (See \Cref{fig:EATProcess} for a visualization).
	
	\begin{definition}[EAT Channels]\label{defn:EATchannels}
		EAT channels are Completely Positive Trace-Preserving (CPTP) maps 
		\begin{IEEEeqnarray}{rL}
			\cM_i: R_{i-1} \to S_i P_i X_i R_{i}
		\end{IEEEeqnarray}
		where for all $i \in \{1,\cdots n\}$, $R_i$ are quantum systems (with $R_{n} \cong \mathbb{C}$ the trivial register) and where $S_i$, $P_i$, and $ X_i$ for $i\in \{1,\cdots n\}$ are classical systems taking values in $\cS, \cP$, and $\cX$ respectively. 
		
		Furthermore, we assume that $X_i$ is a deterministic function of $S_i$ and $P_i$. In other words, the EAT channels can be decomposed as 
		\begin{IEEEeqnarray}{rL}
			\cM_{i} = \cT_{i} \circ \cM_{i} '
		\end{IEEEeqnarray} 
		where $\cM_{i}': R_{i-1} \to S_{i} P_{i} R_{i}$ is some CPTP map and where $\cT_{i}:S_{i} P_{i} \to S_{i} P_{i}X_{i}$ is a classical operation assigning a value for $X_i$ as a function of $S_i$ and $P_i$ of the form  
		\begin{aeq}\label{eq:eat_testingMap}
	\cT_{i}(W_{S_i P_i}) = \sum_{s \in \cS , \; p \in \cP} (\Pi_{S_i,s} \ox     \Pi_{P_i,p}) W_{S_i P_i} (\Pi_{S_i,s} \ox \Pi_{P_i,p}) \ox   \ket{t(s,p)}\bra{t(s,p)}_{X_i},
\end{aeq}where $\{\Pi_{S_i, s}: s \in \cS\}$ and $\{\Pi_{P_i, p}: p \in \cP\}$ are families of mutually orthogonal projectors on $S_i$ and $P_i$, and the function $t : \cS \times \cP  \to \cX$ is a deterministic function.

	\end{definition}
	
	In the above definition, the registers $S_{i}$ and $P_{i}$ stand for `secret' key registers and `public' announcement registers respectively as this will be the natural interpretations of these registers in the context of cryptography. The $X_{i}$ register is then a `testing' register which is a function of the secret and public registers. Note that we consider the particular case of classical systems $S_{i}$ and $P_{i}$, compared to \cite{Dupuis2016,Dupuis2019,Dupuis2021} where they can be quantum, as this is sufficient for this work. The main idea of the EAT channels is that they can be composed such as in \Cref{fig:EATProcess}, starting from an initial state $\rho_{R_{0}E}^0 \in \mathrm{D}(R_0\otimes E)$ and applying the EAT maps sequentially to produce the output state 
	\begin{align}\label{eq:outputStateForm}
		\rho_{S^{n}_{1}P^{n}_{1}X^{n}_{1}E} = (\mathcal{M}_{n} \circ \cdots \circ \mathcal{M}_{1} \ox \id_{E})(\rho_{R_0E}^0)\,.
	\end{align}
	
	The most important restriction on this state is that we require it to satisfy the Markov chain conditions: 
	\begin{align} \label{eq:Markov_cond}
		S_1^{i-1} \leftrightarrow P_1^{i-1}  E \leftrightarrow P_{i} \qquad \forall i \in \{1, \ldots, n\} \ ,
	\end{align}where one may recall a quantum Markov chain, denoted $A \leftrightarrow B \leftrightarrow C$, is a quantum state $\rho_{ABC}$ such that the conditional mutual information is zero, i.e.\ $I(A:C|B) = 0$.\footnote{There exist other characterizations of quantum Markov chains, though the characterization presented here is sufficient for our exposition. See \cite{Sutter2018} for further information on quantum Markov chains.} It follows the Markov chain conditions are claims about the mutual information between the previous secret registers and the previous public registers (along with Eve's information) when conditioned on the current public announcement. The reason this is important to the proof of the theorem is that the Markovian behavior restriction guarantees that the process (defined by the sequential application of $\cM_{i}$) does not \textit{a priori} destroy entropy being accumulated in the $S^{n}_{1}$ registers. It does this by guaranteeing that, for each $i \in [n]$, the public announcement in round $i$, $P_{i}$, is not correlated to the generated secret information of previous rounds, $S^{i-1}_{1}$, when we condition on the side information $E$ and all public announcements up to then $P_1^{i-1}$.
 
 It is also perhaps useful to preemptively stress that the EAT channels do not have to be a model of the actual implementation, rather they only need to capture the same relationships between the output random variables $S^{n}_{1},P^{n}_{1}$ and $X^{n}_{1}$ as the actual implementation because the entropy only depends on the random variables. This insight is what allowed the EAT to be used for device-independent information processing where dimensions cannot be bounded. 
	
\begin{definition}[Min-tradeoff functions]\label{defn:mintradeofffunction}
	An affine function $f$ on the space of probability distributions $\PP(\cX)=\{\vect{p} \in \mathbb{R}^{\abs{\cX}}:\vect{p}(x)\geq 0, \sum_x \vect{p}(x) = 1\}$ is called a min-tradeoff function for the EAT channel $\cM_i$ if it satisfies 
	\begin{align}\label{eqn:mintradeofffunction}
		f(\vect{q}) \leq \min_{\nu \in \Sigma_i(\vect{q})} H(S_i|P_i R)_{\nu} \quad \forall \vect{q} \in \PP(\cX),
	\end{align}
	where the minimization is over the set of quantum states compatible with the statistics $\vect{q}$ that is defined as below:
	\begin{align}\label{eq:minTF_seet}
		\Sigma_i(\vect{q}):= \left\{\nu_{S_iP_iX_iR_iR} = (\cM_i \ox \id_{R})(\omega_{R_{i-1}R}): \omega \in \DM(R_{i-1}\ox R) \text{ and } \nu_{X_i} =\vect{q}\right\} \, ,
	\end{align}
	where $R$ is isomorphic to $R_{i-1}$.
\end{definition}
	\begin{remark}
	    Note that Ref. \cite{Dupuis2016} considers affine functions. However, because of $\sum_x \vect{q}(x) = 1$, any affine function on $\PP(\cX)$ can be represented as a linear function. In particular, $f$ can be specified by a vector $\vect{f} \in \mbR^{\abs{\cX}}$, that is, $f(\vect{q})= \vect{f}\cdot \vect{q} := \sum_x \vect{f}(x) \vect{q}(x)$. We note that this is equivalent to an affine function since for an affine function $f'_0 + \vect{f'}\cdot \vect{q}$, one can define $\vect{f}(x) = \vect{f'}(x) + f'_0 \ \forall x \in \cX$ such that $f'_0 + \vect{f'}\cdot \vect{q} =\vect{f}\cdot \vect{q}$.
	\end{remark}

	By \Cref{defn:mintradeofffunction}, the min-tradeoff function characterizes the minimum amount of (von Neumann) entropy of the secret information conditioned on the public information and quantum side-information (encoded as the register $R$) for all probability distributions on the testing register. This encapsulates the notion of the minimum von Neumann entropy being accumulated in a given round. For QKD security proofs, the min-tradeoff function can be seen as the term that bounds Eve's ignorance about the key in the asymptotic regime given by the Devetak-Winter formula \cite{Devetak2005}, and so the min-tradeoff function may be seen as determining the first-order term in a finite-size key distillation protocol. 
	
	With these considerations, we can state the EAT with its improved second order term \cite{Dupuis2019}.
	\begin{theorem}[Special Case of Proposition V.3 of \cite{Dupuis2019}]\label{thm:EATv2}
		Consider EAT channels $\cM_{1},...,\cM_{n}$ and their output $\rho_{S^{n}_{1}P^{n}_{1}X^{n}_{1}E}$ such that it satisfies the Markov conditions and $S_{i}$ is a classical register for all $i \in [n]$. Let $h \in \mathbb{R}$, $\alpha \in (1,2)$, and $f$ be a min-tradeoff function for $\cM_1,\dots,\cM_n$. Then, for any event $\Omega \subseteq \cX^n$ that implies $f(\freq{X_1^n}) \geqslant h$,\footnote{Following \cite{Dupuis2016}, we say an event $\Omega \subseteq \cX^{n}$ implies $f(\freq{X_{1}^n}) \geq h$ if for every $x^{n}_{1} \in \Omega$, $f(\freq{X_{1}^n}) \geq h$.}
		\begin{align}
			\label{eqn:eat-min} H_{\alpha}^{\uparrow}(S_1^n | P_1^n E)_{\rho_{|\Omega}} & > nh - n \frac{(\alpha-1)\ln 2}{2}V^{2} - \frac{\alpha}{\alpha-1} \log \frac{1}{\rho[\Omega]} - n(\alpha-1)^{2} K_{\alpha}
		\end{align}
		holds for 
		\begin{align}
			V &= \sqrt{\Var{f}+2} + \log(2 d_{S}^{2}+1) \; \label{eq: EAT constant V} \\
			K_\alpha &= \frac{1}{6(2-\alpha)^{3} \ln 2} 2^{(\alpha-1)(\log d_{S} + (\Max{f} - \MinSigma{{f}))}} \ln^{3} \left( 2^{\log d_{S} + (\Max{f} - \MinSigma{f})} + e^{2} \right) \label{eq: EAT constant Kalpha}
		\end{align}
		where 
		\begin{aeq}\label{eq:definition_minmaxvar}
		\Max{f} &:= \max_{\vect{q} \in \PP(\cX)} f(\vect{q}),\\
		\Min{f} &:= \min_{\vect{q} \in \PP(\cX)} f(\vect{q}),\\
		\mathrm{Min}_{\Sigma}(f) &:= \min_{\vect{q}:\Sigma_{i}(\vect{q}) \neq \emptyset} f(\vect{q}),\\
		\mathrm{Var}(f) &:= \max_{\vect{q}:\Sigma_{i}(\vect{q}) \neq \emptyset} \sum_{x} \vect{q}(x)\vect{f}(x)^{2} - \left[ \sum_{x} \vect{q}(x)\vect{f}(x) \right]^{2}
		\end{aeq}and $d_S = \max_{i\in [n]} |S_i|$ is the maximum dimension of the systems~$S_i$.
	\end{theorem}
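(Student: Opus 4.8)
The plan is to derive the statement as a direct specialization of Proposition V.3 of \cite{Dupuis2019}, which already delivers a lower bound on the sandwiched R\'{e}nyi entropy $H^{\uparrow}_{\alpha}$ of the output of a sequence of EAT channels conditioned on an event. The first task is therefore to set up a precise dictionary between the objects here and those of the general proposition: I would identify the entropy-accumulating register with the secret register, so that the general ``output'' system is $S_i$ (hence $d_S$ is the relevant dimension), identify the conditioning register with the public register $P_i$, and observe that the testing register $X_i$ is exactly the deterministic classical function of $(S_i,P_i)$ that the general framework permits. Because Proposition V.3 is itself phrased in terms of $H^{\uparrow}_{\alpha}$, no conversion from the smooth min-entropy is required, and the improvement over smoothing advertised in the introduction is inherited for free.

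The second step is to check that our hypotheses are genuine instances of the hypotheses of that proposition. The decomposition $\cM_i = \cT_i \circ \cM_i'$ of \Cref{defn:EATchannels}, with $\cT_i$ assigning $X_i$ through the deterministic map $t$ via families of mutually orthogonal projectors, is precisely the required ``$X_i$ is a deterministic function of the outputs'' structure; the classicality of $S_i$ and $P_i$ is a further restriction that only helps. The Markov conditions \Cref{eq:Markov_cond}, namely $S_1^{i-1} \leftrightarrow P_1^{i-1} E \leftrightarrow P_i$, coincide with the Markov conditions of the general proposition under the above register identification, and the min-tradeoff function of \Cref{defn:mintradeofffunction}, defined through $H(S_i|P_i R)_{\nu}$ on the compatibility set $\Sigma_i(\vect{q})$, matches the general definition once one substitutes the accumulated system by $S_i$ and the conditioning system by $P_i$.

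With the hypotheses verified, the third step is to instantiate the conclusion. Proposition V.3 bounds $H^{\uparrow}_{\alpha}(S_1^n|P_1^n E)_{\rho_{|\Omega}}$ from below by $n$ times a first-order term governed by the per-round value of $f$ consistent with the realized frequency, minus a variance correction controlled by $\Var{f}$, minus a third-order term controlled by $K_\alpha$, minus an event-conditioning penalty. Here the hypothesis that $\Omega$ implies $f(\freq{X_1^n}) \geq h$ is exactly what converts the frequency-dependent first-order term into the clean $nh$, while the penalty $\frac{\alpha}{\alpha-1}\log\frac{1}{\rho[\Omega]}$ is the standard cost of restricting to an event of probability $\rho[\Omega]$ for a R\'{e}nyi entropy. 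The only remaining work is to specialize the dimensional constants: substituting the accumulated-register dimension $d_S$ into the general expressions should reproduce $V = \sqrt{\Var{f}+2} + \log(2 d_S^2 + 1)$ together with the stated $K_\alpha$, and I would confirm that the admissible interval $\alpha \in (1,2)$ sits inside the validity range of the general proposition.

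I expect the main obstacle to be bookkeeping rather than conceptual, since the substantive content lives in the cited proposition. Concretely, the delicate point is to verify with care that the relevant dimension collapses to $d_S$ and not to a product such as $d_S d_P$ or the full $\abs{\cX}$ under the register identification, and to track the exact form of $V$ and $K_\alpha$ through the substitution, confirming that the deterministic, classical structure of the testing map $\cT_i$ introduces no additional dimensional factors. A secondary item is to ensure that an event $\Omega \subseteq \cX^n$ implying $f(\freq{X_1^n}) \geq h$ is handled exactly as in \cite{Dupuis2016,Dupuis2019}, so that the first-order term is genuinely $nh$ and not an expression depending on the realized frequency.
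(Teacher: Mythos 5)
Your proposal matches the paper's treatment exactly: the paper offers no independent proof of \Cref{thm:EATv2}, presenting it purely as a citation-and-specialization of Proposition V.3 of \cite{Dupuis2019}, with the single substantive remark that the classicality of $S_i$ is what yields the stated form of $K_\alpha$ (the point flagged after Eq.~(22) of that reference), which is precisely the dimensional bookkeeping concern you identify. Your register dictionary, hypothesis verification, and handling of the event $\Omega$ implying $f(\freq{X_1^n}) \geq h$ all follow the same route, so the proposal is correct and takes essentially the same approach.
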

	\begin{remark}
	    Note that, beyond the Markov chain conditions, the EAT applying to a state depends on the event $\Omega \subseteq \cX^{n}$ implying $f(\freq{X^{n}_{1}})$. Letting $\cZ$ be a finite alphabet, if one is interested in an event $\Omega' \subseteq \cZ^{n} \times \cX^{n}$ that guarantees such an $\Omega$ (i.e.\ $\Pr[\Omega|\Omega'] = 1$ which denotes the probability of event $\Omega$ conditioned on event $\Omega'$) then, assuming the conditions for EAT held on generating $X^{n}_{1},S^{n}_{1},P^{n}_{1}$, the EAT will hold for $\rho_{X^{n}_{1}S^{n}_{1}P^{n}_{1}E|_{|\Omega'}}$. This has been used implicitly in previous works such as \cite{Arnon2018}, but we stress it here for completeness as we also use it.
	\end{remark}
	
	It is worth noting the above theorem from \cite{Dupuis2019} is actually an improved version of the original EAT \cite{Dupuis2016}. The improvement is in the the second-order term where the dependency on the gradient of the min-tradeoff function $f$ is eliminated, which for certain applications caused the second-order term to dominate. We note that for a specific choice of $\alpha$, one can write \Cref{eqn:eat-min} in the form $H_{\alpha}^{\uparrow} >nh - O(\sqrt{n})$ \cite{Dupuis2019} to clearly separate the first-order and second-order terms. However, we do not state it in this form because, while asymptotically optimal, to get the best finite size bounds we will optimize over the parameter $\alpha \in (1,2)$ as suggested in \cite{Dupuis2019}. It is also noted that the exact from of the expression of $K_{\alpha}$ here uses the fact that $S_i$ is classical \cite[Dicussion after Eq. (22)]{Dupuis2019}.

	\subsection{Applying EAT to device-dependent QKD}\label{subsec:ApplyingEATtoDDQKD}
	
	\subsubsection{Tensor product structure}
	
	As noted earlier, the EAT maps in general do not have to be the same maps as the actual process as long as they capture the same relationship between output random variables. In the case of device-independent information processing, this is necessary since one cannot describe the device itself. In contrast, for device-dependent QKD as we consider here, without loss of generality we can let the EAT maps model the guaranteed behavior of the device in each round. It then follows that the EAT maps act on separate quantum systems. Formally, if we let $(Q_i)_{i\in [n]}$ be $n$ quantum systems then we can consider the $n$ rounds of the QKD protocol as $n$ CPTP maps of the form 
	\begin{aeq}\label{eq:tildeMaps}
		\widetilde{\cM}_i : Q_i \rightarrow S_i P_i X_i \, ,
	\end{aeq}each acting independently on its own $Q_i$ space. It follows that these maps can be expressed in the notation of the EAT theorem by defining the EAT channels $\cM_i: R_{i-1} \rightarrow S_i P_i X_i R_i$ as follows:
	\begin{aeq}
		\cM_i &= \widetilde {\cM}_i \otimes \id_{Q_{i+1}^n}
	\end{aeq}where $R_i = \bigotimes_{j=i+1}^n Q_j$ and $\id_{Q_{i+1}^n}$ is the identity map on the registers $Q_{i+1}^n$. In other words, at round $i$ the EAT channel $\cM_i$ effectively only acts on the system $Q_i$ to produce the outputs $S_i,P_i,$ and $X_i$, but not on the next systems $Q_{i+1}^n$. 
	
By the fact that the outputs $S_i$ and $P_i$ are classical, we can make another simplification. In this case, the first requirement in \Cref{defn:EATchannels} of the EAT channels boils down to the requirement that $X_i$ is obtained by applying a deterministic function on $S_i$ and $P_i$: $x_i = t(s_i,p_i)$. In typical QKD protocols, we can further assume this function is identical for all rounds. Under this assumption, the EAT channels are thus entirely defined by specifying the function $t$ and the POVM elements $\{M_{sp}\}_{s,p}$ such that 
	\begin{aeq}\label{eq:param_tensor_map}
		\widetilde\cM_i (\rho) = \sum_{s,p} \Tr(\rho M_{sp}) \ketbra{s,p}_{S_iP_i} \otimes \ketbra{t(s,p)}_{X_i}
	\end{aeq}for all $\rho\in \mathrm D(Q_i)$. These POVM elements $\{M_{sp}\}_{s,p}$ are uniquely associated to $\widetilde\cM_i$ and satisfy $M_{sp}\geq 0$ and $\sum_{s,p} M_{sp}= \idm{}$.
	
	Now that we have reduced the scope of the EAT theorem, we still have two challenges to solve. The first challenge is how  we can generate the best possible min-tradeoff function. The second is how we guarantee the Markov chain conditions in \Cref{eq:Markov_cond}.
	
	\subsubsection{Challenge 1: constructing optimal min-tradeoff functions}
	
	It is desirable to have a general procedure (applicable to many protocols) to construct min-tradeoff functions according to \Cref{defn:mintradeofffunction}. In addition to having valid min-tradeoff functions, we would like to find the best possible min-tradeoff function that can produce as tight key rates as possible for each signal block size when it is used in the EAT. In general, the construction of tight min-tradeoff functions is difficult. The difficulty arises from the non-trivial behavior of the conditional von Neumann entropy of the output state of a map (in this case the EAT channel) as a function of the resulting observations. We note that while for certain small-dimensional and theoretically simple QKD protocols, it may be possible to determine the optimal min-tradeoff function from uncertainty relations \cite[Section 5.1]{Dupuis2016}, the analytic construction of min-tradeoff functions for generic device-dependent protocols is less straightforward, and thus it warrants a numerical method. This issue has also been recognized in the device-independent scenario and been addressed with its own numerical method \cite{Brown2020}. In this work, we address this issue in the device-dependent setting and utilize additional structures of device-dependent QKD protocols. In \Cref{sec:construction_min_tradeoff}, we present two algorithms for numerically constructing (almost) tight min-tradeoff functions in the case where one knows the structure of the EAT channels.
	
	\subsubsection{Challenge 2: guaranteeing Markov chain conditions}\label{sec: MC-condition}
	
	The Markov chain conditions [\Cref{eq:Markov_cond}] put strong restrictions on the maps $\cM_i$'s that can be used with the EAT theorem. Roughly speaking, it states that, from the point of view of the adversary $E$, the process at round $j$ does not leak information about the secret register(s) $S_{i}$ of previous rounds $i < j$. In typical device-independent protocols, this restriction on the output state is in a sense trivially satisfied as all public announcements $P_{i}$, such as measurement settings, are independently seeded with random numbers. In other words, the probability distribution of the announcement $P_i$ does not depend on the state sent by Eve. Formally, if $\{p\}$ is the set of public announcements in round $i$ and there are $n$ rounds, then the Markov chain conditions trivially hold if $\Pr[p|\rho_i] = \Pr[p]$ $\forall \rho_i \in \mathrm{D}(Q_{i})$, $\forall i \in [n]$, i.e., the distribution over announcements is independent of the state being measured in each round.
	
	However, one advantage of device-dependent QKD over device-independent QKD is its ability to have more complicated public announcement structures. One example is postselection on detection events in device-dependent QKD. Postselection implicitly requires an announcement. However, since the probability of a detection event depends on the state sent by Eve, the public announcement is not based on independently seeded randomness. A simple argument shows that this is potentially problematic, as it can lead to a violation of the Markov chain conditions, even when the EAT channels act on independent systems. This happens when Eve prepares a pure state $\ket{\phi}_{Q_1^nE} = \ket{\psi}_{Q_1^n}\otimes \ket{\psi'}_E$, entangled between different rounds but not with her quantum memory. In that case, there can be correlations between announcements in one round and the private key register in another round. This could potentially prevent us from applying the EAT even if Eve could only learn from the public announcement. We therefore prove in \Cref{app:AppendixMarkov} that the following condition (\Cref{defn:weakly-dependent}) still guarantees the Markov conditions hold for Eve's optimal attack, which is sufficient for applying the EAT (\Cref{thm:EATv2}). We state this result below as \Cref{thm:Markov_block_diag}. 
	
	\begin{definition}\label{defn:weakly-dependent}
		Given some quantum-to-classical CPTP map $\mathcal W: Q \rightarrow SP$ which can be fully specified by a POVM $\{M_{sp}: s \in \cS, \ p \in \cP\}$, we say that the variable $P$ is \emph{weakly dependent} (on the input state) when there exists a decomposition of the input space $Q$ into a direct sum of orthogonal subspaces $V^\lambda$, i.e.,\ $Q = \bigoplus_\lambda V^\lambda$, such that 
		\begin{enumerate}[label=(\alph*)]
			\item the channel $\mathcal W$ is block diagonal along $V^\lambda$: i.e.,\ its POVM elements are of the form $M_{sp} = {\bigoplus}_\lambda M_{sp}^{(\lambda)}$ with $M_{sp}^{(\lambda)}\in \mathcal L(V^{\lambda})$ acting on $V^\lambda$;
			
			\item  \label{item:weakly_dependent_b} the probability of an announcement $P$ is the same for all states in a given subspace $V^\lambda$: i.e., there exist constants $c_{p,\lambda}$'s such that
			\begin{IEEEeqnarray}{rL}
				\mathrm{Pr}_P(p|\rho_\lambda) = c_{p,\lambda} \, , \quad \text{ for all }\rho_{\lambda} \in D(V^\lambda)
			\end{IEEEeqnarray}
			where $\mathrm{Pr}_P(p|\rho_\lambda) := \Tr[M_p \rho_\lambda]$ with $M_p = \sum_{s} M_{sp}$.
		\end{enumerate}
	\end{definition}
	
	Note that \labelcref{item:weakly_dependent_b} in the definition is equivalent to saying that for each $\lambda$ and $p$, $M_p^{(\lambda)} := \sum_s M_{sp}^{(\lambda)} = c_{p,\lambda}\idm_{V^\lambda}$ is proportional to the identity. This means that, equivalently, for any $\sigma \in Q$, $\Pr_{P}(p|\sigma) = c_{p,\lambda} \Tr[\Pi_{\lambda}\sigma\Pi_{\lambda}]$, where $\Pi_{\lambda}$ is the projector onto the subspace, i.e. the announcement only depends on a constant and the weight of the state on the subspace, which makes it `weakly' dependent on the state. For intuition, this is distinct from independently randomly seeded announcements where each announcement has POVM elements of the form $M_{p} = c_{p} \idm_{Q}$, which results in the announcement being independent on the state all together. Note the independently randomly seeded case trivially satisfies \Cref{defn:weakly-dependent}, and so it is a (strictly) special case of weakly dependent announcements. 
	
	Here we state that the advantage of weakly dependent announcements is that they guarantee the EAT may be applied.
	
	\begin{shaded}
	\begin{theorem}\label{thm:Markov_block_diag}
		If the announcements $P_i$ of the CPTP maps $\widetilde \cM_i: Q_i \rightarrow S_iP_iX_i$ are \emph{weakly dependent}, then the result of \Cref{thm:EATv2} may be applied to prove security.
	\end{theorem}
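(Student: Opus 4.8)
The plan is to reduce the claim to verifying that the Markov chain conditions $S_1^{i-1} \leftrightarrow P_1^{i-1} E \leftrightarrow P_i$ of \Cref{eq:Markov_cond} can be made to hold, since the tensor-product construction of \Cref{subsec:ApplyingEATtoDDQKD} already supplies EAT channels $\cM_i = \widetilde{\cM}_i \otimes \id_{Q_{i+1}^n}$ with classical $S_i$, and a min-tradeoff function is furnished separately (Challenge 1). The obstacle flagged in \Cref{sec: MC-condition} is that, for an entangled input $\rho_{Q_1^n E}$, the announcement $P_i$ of round $i$ can be correlated with the secret registers $S_1^{i-1}$ of earlier rounds. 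The idea is to remove this correlation by handing Eve a record of which block $V^\lambda$ each input $Q_i$ occupies, exploiting both parts of \Cref{defn:weakly-dependent}.

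First I would augment the side information. Let $\Pi_\lambda$ denote the projector onto $V^\lambda$ and insert the non-demolition block measurement $\rho \mapsto \sum_\lambda \Pi_\lambda \rho \Pi_\lambda \otimes \ketbra{\lambda}_{\Lambda_i}$ on each $Q_i$ before $\widetilde{\cM}_i$ acts, appending the outcome $\Lambda_i$ to Eve's memory. Because condition (a) makes $\widetilde{\cM}_i$ block diagonal, $M_{sp} = \bigoplus_\lambda M_{sp}^{(\lambda)}$, one has $\Tr(\Pi_\lambda \rho \Pi_\lambda M_{sp}) = \Tr(\rho M_{sp}^{(\lambda)})$ with $\sum_\lambda M_{sp}^{(\lambda)} = M_{sp}$, so this measurement commutes with $\widetilde{\cM}_i$ and leaves the joint statistics of $(S_1^n, P_1^n, X_1^n, E)$ untouched while producing a consistent record $\Lambda_1^n$. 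Writing $E'' = E \Lambda_1^n$ and $\rho'$ for the resulting state, monotonicity of $H_\alpha^{\uparrow}$ under enlargement of the conditioning register (handing Eve the blocks can only lower the entropy) gives $H_\alpha^{\uparrow}(S_1^n | P_1^n E)_{\rho} = H_\alpha^{\uparrow}(S_1^n | P_1^n E)_{\rho'} \geq H_\alpha^{\uparrow}(S_1^n | P_1^n E'')_{\rho'}$, so any lower bound proved for the block-augmented state transfers to the real protocol. The same min-tradeoff function may be reused, since \Cref{defn:mintradeofffunction} quantifies over a purifying register $R \cong R_{i-1}$ that already dominates the classical record $\Lambda_i$, so the validity of $f$ for $\cM_i$ is unaffected by the presence of $\Lambda_1^n$ in $E''$.

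Next I would verify the Markov conditions for $\rho'$ with side information $E''$. Condition \labelcref{item:weakly_dependent_b}, in the form $M_p^{(\lambda)} = c_{p,\lambda} \idm_{V^\lambda}$, gives for any within-block state $\sigma \in \mathrm{D}(V^\lambda)$ that $\Tr(M_p^{(\lambda)} \sigma) = c_{p,\lambda}$, independent of $\sigma$. Hence, conditioned on $\Lambda_i = \lambda$, the marginal announcement $P_i$ is fresh randomness with law $c_{p,\lambda}$ that does not read the within-block state of $Q_i$, and is therefore independent of everything else in the process, including $S_1^{i-1}$, $P_1^{i-1}$, and the remaining contents of $E''$. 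Since $\Lambda_i$ is part of $E''$, the conditioning set $P_1^{i-1} E''$ contains $\Lambda_i$, and this conditional independence yields $I(S_1^{i-1} : P_i | P_1^{i-1} E'') = 0$ for every $i$, which is exactly \Cref{eq:Markov_cond}. With the Markov conditions in hand, \Cref{thm:EATv2} applies to $\rho'$ and $E''$, and the bound transfers to $\rho$ by the chain of inequalities above; any event $\Omega$ implying $f(\freq{X_1^n}) \geq h$ is unaffected since the $X_1^n$ statistics are preserved.

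The hard part will be the conditional-independence step: one must argue rigorously that handing Eve the blocks genuinely decouples $P_i$ from the earlier secrets despite the arbitrary cross-round entanglement of $Q_1^n$, i.e.\ that condition \labelcref{item:weakly_dependent_b} alone (proportionality of $M_p^{(\lambda)}$ to the identity, rather than full state-independence of the whole channel) suffices to kill the correlation once the block is known. Care is also needed to confirm that the non-demolition recording commutes exactly with $\widetilde{\cM}_i$, so that $\rho'_{S_1^n P_1^n X_1^n E} = \rho_{S_1^n P_1^n X_1^n E}$, and that the data-processing inequality for the sandwiched R\'enyi entropy $H_\alpha^{\uparrow}$ holds in the direction used. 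These are the places where the detailed verification, deferred to \Cref{app:AppendixMarkov}, must be carried out.
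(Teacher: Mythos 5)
Your proposal is correct and takes essentially the same approach as the paper's proof in \Cref{app:AppendixMarkov}: your QND block-recording map is precisely the dephasing map of \Cref{lemma:block_attack}, the statistics-preservation and entropy-transfer steps (via data processing/strong subadditivity, which the paper notes also hold for R\'enyi entropies) appear there in the same form, and the Markov conditions are derived from condition \labelcref{item:weakly_dependent_b} of \Cref{defn:weakly-dependent} exactly as you outline. The only cosmetic difference is that the paper formalizes your conditional-independence step by exhibiting a recovery map that regenerates $P_i$ from $\Lambda_i$ with law $m_p^{\lambda}$, the standard recoverability characterization of a quantum Markov chain, which supplies the rigor you flag as the remaining "hard part."
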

	\end{shaded}
	
	We give a physical intuition why this theorem would hold and defer the proof to \Cref{app:AppendixMarkov}. The POVM being of the block-diagonal form in \Cref{defn:weakly-dependent} means the subspace information is knowable to Eve. This is because without loss of generality, Eve should send such block-diagonal states and so she knows the information by implementing a quantum nondemolition (QND) measurement that determines the subspace which she then stores in a secondary register. Indeed, this idea of Eve having an extra register with the subspace information is how this is proven in \Cref{app:AppendixMarkov}.  We note equivalently that Eve's purification of such a block-diagonal state includes the subspace information and so it is knowable to Eve from the perspective of purification as well.

	\section{The Quantum Key Distribution Protocol}\label{sec:QKD_protocol}
	In this section we present the class of entanglement-based (EB) QKD protocols for which we prove security. At a high level, there are three related protocols: the  QKD protocol for physical implementations (\Cref{prot:physicalDDQKD}), its equivalent virtual QKD protocol (\Cref{prot:DDQKD}) for security proof purposes, and the (also virtual) Entropy Accumulation subprotocol (\Cref{prot:DDEA}), referred to as the EAT subprotocol, to which we apply the EAT to derive a desired entropic bound. The virtual QKD protocol requires certain properties so that the EAT subprotocol can be applied to its analysis. Of particular importance is that the virtual QKD protocol can be thought of as acting sequentially, i.e.\ the signals are processed round by round, and that the announcements satisfy the Markov chain conditions in \Cref{eq:Markov_cond}. The necessary properties of the virtual QKD protocol impose necessary structures on the physical implementation of the protocol so that it is equivalent to the virtual QKD protocol, where equivalence is in terms of the (quantum) random variables generated as this is what both the EAT and the security proof more generally consider. 
	
	\subsection{Physical and virtual protocol description}\label{subsec:QKD_protocol_description}
	In this section, we present the physical QKD protocol followed by the virtual QKD protocol to which it is equivalent.
	\begin{protocol}[H]
		\caption{Physical Device-Dependent Quantum Key Distribution Protocol}\label[protocol]{prot:physicalDDQKD}
		\textbf{Inputs:} \\
		\hspace{0.5cm}
		\begin{tabular}{ l l}
			$\{M^{A}_{a}\}_a, \{M^{B}_{b}\}_b$ & Alice and Bob's measurement devices (POVMs) \\
			$ \mathbf{K} $  & Subset of Alice and Bob's public announcements to be kept during sifting \\
			$n \in \mathbb{N}_{+}$ & Number of rounds  \\
			$\gamma \in (0,1]$ & Probability of testing \\
			$\mathcal{Q}$ & Set of acceptable frequency distributions over $\cX$
		\end{tabular}
		
		\vspace{0.5cm}
		
		\textbf{Protocol:}
		\begin{enumerate}
			{\setlength\itemindent{-1em}\item[] For $i \in [n]$, do \Cref{step:phys_transmission,step:phys_measurements,step:data_partitioning,step:phys_labeling}}:
			    \item\label[step]{step:phys_transmission} \begin{addmargin}[0.5cm]{0cm}
			        \textbf{State Transmission}: A source (that may be under Eve's control) distributes a state $\rho_{Q_{i}}$ between Alice and Bob.
			    \end{addmargin}
			
			    \item \label[step]{step:phys_measurements}
			    \begin{addmargin}[0.5cm]{0cm} 
				    \textbf{Measurements}: 
				    Alice and Bob implement their local POVMs $\{M_{a}^{A}\}_{a}$, $\{M_{b}^{B}\}_{b}$ to measure their respective halves of the state and record their outcomes. 
				\end{addmargin}
			
			    \item \label[step]{step:data_partitioning}
			    \begin{addmargin}[0.5cm]{0cm} 
				\textbf{Data Partition}: 
				    Alice partitions her data into (data that will eventually be) public $\widetilde{A}_{i}$ and (data that will stay) private $\overline{A}_{i}$. Likewise, Bob partitions his data into public $\widetilde{B}_{i}$ and private $\overline{B}_{i}$.
			    \end{addmargin}
			
			    \item \label[step]{step:phys_labeling}\begin{addmargin}[0.5cm]{0cm}
		        \textbf{Testing Designation}:
			    Alice randomly chooses $T_{i} \in \{0,1\}$ according to $\Pr(T_{i}=1) = \gamma$.
			    \end{addmargin}

			\item \label[step]{step:phys_announcements} \textbf{Announcements}: Alice and Bob announce their public data $\wt{A}^{n}_{1}$ and $\wt{B}^{n}_{1}$, respectively. Alice then announces $T^{n}_{1}$. For all $i \in [n]$ such that $T_{i} = 1$, Alice announces $\ol{A}_{i}$.
			
			\item \label[step]{step:phys_PE}
			\textbf{Parameter Estimation}:
			For all $i \in [n]$, Bob computes $X^{n}_{1}$ according to $X_{i} = t(\overline{A}_{i}, \overline{B}_{i}, \widetilde{A}_{i}, \widetilde{B}_{i})$ if $T_{i}=1$ and $X_{i} = \perp$ otherwise, where $t$ is a deterministic function. Alice and Bob abort the protocol if $\freq{X^{n}_{1}} \not \in \mathcal{Q}$.

			\item \label[step]{step:phys_sifting} 
			\textbf{General Sifting}:
			If $(\widetilde{A}_{i},\widetilde{B}_{i},T_{i}) \in \mathbf{K}^{C} \times \{0\}$, Alice sets the $\overline{A}_{i} = \perp$, where $\mathbf{K}^{C}$ is the complement of $\mathbf{K}$.
			
			\item \label[step]{step:phys_keymap}
			\textbf{Key Map}: If $(\widetilde{A}_{i},\widetilde{B}_{i},T_{i}) \in \mathbf{K} \times \{0\}$, Alice updates $\overline{A}_{i} := f_{KM}(\overline{A}_{i},\widetilde{A}_i,\widetilde{B}_i)$ where $f_{KM}$ is the key map. This subset of rounds may be denoted as the register $\mathbf{Z}$, Alice's raw key.
			
			\item \label[step]{step:phys_EC}\textbf{Error Correction \& Detection:} Using an error correction scheme, Alice and Bob communicate for Bob to construct his guess of Alice's raw key, $\mathbf{\widehat{{Z}}}$. He then uses a 2-universal hash function to send a hash of his guess to Alice. This detects if the correction worked. If it did not, they abort. Otherwise, they continue.
			
			\item \label[step]{step:phys_PA}\textbf{Privacy Amplification:} Using a family of 2-universal hash functions, Alice and Bob perform privacy amplification to achieve the desired secrecy.
		\end{enumerate}
	\end{protocol}
	A few remarks are necessary. First, we have implicitly required that the announcement structure of the protocol be round by round. This announcement structure is necessary to move to our virtual protocol as we need a protocol that is sequential for the majority of the steps. Second, we have required no announcements be made until all of the measurement data have been obtained. This is important as it avoids Eve altering her actions depending on announcements. This requirement makes the protocol equivalent to the one where Eve distributes the whole $n$-round state (for which she may hold a purification) at the beginning and then only gains additional information via Alice's and Bob's announcements. The latter will be the necessary structure for the virtual protocol. Third, we have required the function $t$ in \Cref{step:phys_PE} to be deterministic. This is a condition needed to apply the Entropy Accumulation Theorem [see \Cref{eq:eat_testingMap}], but it does not seem limiting for standard protocols.
	
	We note that requiring the testing be done round by round specifically differs from standard practice in device-dependent QKD security proofs \cite{Renner2005} which perform fixed-length parameter estimation. Fixed-length parameter estimation is when, before the protocol is executed, it is decided that $m$ of the $n$ signals will be used for parameter estimation. Then, rather than having \Cref{step:phys_labeling,step:phys_PE} of \Cref{prot:physicalDDQKD}, the protocol would include Alice uniformly choosing a bit string from the set of bit strings of length $n$ and Hamming weight $m$ which determines which rounds to test. This is not necessarily a large gap if one considers testing probability $\gamma$ such that $\gamma n = m$, as a Bernoulli variable converges to its mean quickly. However, for rigor, after proving the security of \Cref{prot:DDQKD}, which is equivalent to the security of \Cref{prot:physicalDDQKD}, in \Cref{sec:FixedLengthPE}, we present how to convert statements of security on this probabilistic round-by-round testing protocol to statements of security on the fixed length testing protocol without introducing any looseness.
	
	Lastly, we note that we need the extra assumption beyond being round by round that the announcements satisfy the Markov conditions. More technically, the announcement structure will have to be such that the virtual QKD protocol would satisfy the Markov chain conditions in \Cref{eq:Markov_cond} in the case that Alice did not announce her fine-grained data $\ol{A}_{i}$ when the round $i$ is a testing round, i.e., $T_{i} = 1$. Since Alice does announce her fine-grained data during testing rounds, we stress why this works preemptively. If Alice's fine-grained data $\ol{A}_{i}$ is announced publicly, it could threaten the Markov chain conditions by leaking too much information if Eve sends states that are correlated across rounds. However, this fine-grained data is needed by Bob to compute $X_{i}$ on testing rounds, so we require Alice to announce this data. We therefore need a way to address this. In the security proof, we start with the conditional entropy of Alice's raw key which, among other registers, is conditioned on Eve knowing the fine-grained data Alice announces during testing rounds. By applying entropic chain rules, we are able to convert to a conditional entropy term corresponding to the case in which Alice kept all fine-grained data private. In this case, the Markov conditions hold by restrictions on the announcement structure we require. The final technical issue is that physically $X_{i}$ needs to be computed using $\ol{A}_{i}$ and $\ol{B}_{i}$. It follows that if neither party has both registers, there is no physical process to compute $X_{i}$. However, this is not an issue as the EAT subprotocol is virtual (that is, there is no need to implement this protocol in practice) and only needs to generate the same (quantum) random variables as the real process. Therefore we construct a sequence of protocols where the security claim on the physical protocol holds by the equivalence to the security of the virtual protocol whose security relies on a virtual EAT subprotocol. We also note for intuition that there is a penalty to the key rate by announcing the $\ol{A}_{i}$ in the aforementioned chain rules, which is not a part of the virtual EAT subprotocol. We discuss the specific assumptions on the announcement structure to guarantee the Markov chain conditions hold in \Cref{subsec:restrictionsOnAnnouncements} after presenting the virtual QKD protocol.
	
	We now present the virtual QKD protocol. Given the discussion above, we note that the difference from the physical QKD protocol is that announcements, sifting, generation of the test round data (but not aborting based on the test data), and the key map are all implemented round by round. Beyond this conversion of many steps to being performed sequentially, the virtual QKD protocol is the same as the physical one. This is what will allow the protocols to be equivalent.
	
	\begin{protocol}[H]
		\caption{Virtual Device-Dependent Quantum Key Distribution Protocol}\label[protocol]{prot:DDQKD}
		\textbf{Inputs:} \\
		\hspace{0.5cm}
		\begin{tabular}{ l l}
			$\{M^{A}_{a}\}_a, \{M^{B}_{b}\}_b$ & Alice and Bob's measurement devices (POVMs) \\
			$ \mathbf{K} $  & Subset of Alice and Bob's public announcements to be kept during sifting \\
			$n \in \mathbb{N}_{+}$ & Number of rounds  \\
			$\gamma \in (0,1]$ & Probability of testing \\
			$\mathcal{Q}$ & Set of acceptable frequency distributions over $\cX$
		\end{tabular}
		
		\vspace{0.5cm}
		
		\textbf{Protocol:}
		\begin{enumerate}
			\addtocounter{enumi}{-1}
			\itemsep0em
			\item \label[step]{step:transmission} \textbf{State Transmission:} Eve distributes the $n$ states, which may be entangled in an arbitrary manner, such that the total state is of the form $\rho_{Q_{1}^{n}E}$.
			{\setlength\itemindent{-1em}\item[] For $i \in [n]$, do \Cref{step:measurements,step:announcement,step:labeling,step:sifting,step:keymap,step:test}:}
			
			\item \label[step]{step:measurements}\begin{addmargin}[0.5cm]{0cm} 
				\textbf{Measurements}: 
				Alice and Bob implement their local POVMs $\{M_{a}^{A}\}_{a}$, $\{M_{b}^{B}\}_{b}$ to measure their respective halves of the state and record their outcomes. \end{addmargin}
			
			\item \label[step]{step:announcement}\begin{addmargin}[0.5cm]{0cm} 
				\textbf{Data Partition and Announcement:} Alice partitions her data into public $\widetilde{A}_{i}$ and private $\overline{A}_{i}$. Likewise, Bob partitions his data into public $\widetilde{B}_{i}$ and private $\overline{B}_{i}$. Then they announce their public data.
			\end{addmargin}
			
			\item \label[step]{step:labeling}\begin{addmargin}[0.5cm]{0cm} 
				\textbf{Testing Designation}: Alice randomly chooses $T_{i} \in \{0,1\}$ according to $\Pr(T_{i}=1) = \gamma$.
			\end{addmargin}
			
			\item \label[step]{step:sifting}\begin{addmargin}[0.5cm]{0cm} \textbf{General Sifting}:
				If $(\widetilde{A}_{i},\widetilde{B}_{i},T_{i}) \in \mathbf{K}^{C} \times \{0\}$, Alice sets the $\overline{A}_{i} = \perp$. Denote $$\mathcal{S} = \{ i \in [n]: (\widetilde{A}_{i},\widetilde{B}_{i},T_{i}) \in \mathbf{K}^{C} \times \{0\} \} $$ so that $\overline{A}_{\mathcal{S}}$ denotes the registers of discarded rounds. \end{addmargin}
			
			\item \label[step]{step:keymap}\begin{addmargin}[0.5cm]{0cm} \textbf{Key Map}: If $(\widetilde{A}_{i},\widetilde{B}_{i},T_{i}) \in \mathbf{K} \times \{0\}$, Alice updates $\overline{A}_{i} := f_{KM}(\overline{A}_{i},\widetilde{A}_i,\widetilde{B}_i)$ where $f_{KM}$ is the key map. This subset of rounds may be denoted as the register $\mathbf{Z}$, Alice's raw key. \end{addmargin}
			
			\item \label[step]{step:test}\begin{addmargin}[0.5cm]{0cm} \textbf{Statistical Tests}:\\[-7mm]
				\begin{itemize}
					\item If $T_{i} = 1$, Alice announces $\overline{A}_{i}$ publicly. Using this, Bob generates $X_i$ using a deterministic function $t$ such that $X_{i} = t(\overline{A}_{i}, \overline{B}_{i}, \widetilde{A}_{i}, \widetilde{B}_{i})$.
					\item If $T_{i} = 0$, Bob sets $X_{i} = \perp$. 
				\end{itemize}
				Denote $\cT = \{i \in [n]: T_{i} = 1\}$. Then the registers Alice announces are $\overline{A}_{\mathcal{\cT}}$. \end{addmargin}

			\item  \label[step]{step:PE}\textbf{Parameter Estimation}: Alice and Bob abort the protocol if $\freq{X^{n}_{1}} \not \in \mathcal{Q}$.
			
			\item \textbf{Error Correction and Parameter Estimation}: Do \Cref{step:phys_EC,step:phys_PA} of \Cref{prot:physicalDDQKD}.

		\end{enumerate}
	\end{protocol}

	We note that both \Cref{prot:physicalDDQKD,prot:DDQKD} assume Alice establishes the key. However, if Bob were to establish the key, this merely switches the roles of Alice and Bob in \Cref{prot:physicalDDQKD} and \Cref{prot:DDQKD}. Thus, this setting is not a restriction. Furthermore, we emphasize we still require that testing be determined in a round-by-round manner as the EAT is an i.i.d. reduction for sequential processes, which was discussed earlier.
	
	The only remaining assumption to make explicit is the ability to guarantee the Markov chain conditions hold from the announcement structure.
	
	\subsection{Assumptions on public announcements}\label{subsec:restrictionsOnAnnouncements}
	
	To verify the applicability of the EAT for a given protocol, we need to verify that the conditions from \Cref{thm:Markov_block_diag} are satisfied. This puts some restrictions on the type of protocols that can be included in our security proof and more specifically on the type of announcements and postselection that we can do. 
	
	\Cref{thm:Markov_block_diag} says that the partitioned announcements can have some dependence on the state, but only in a limited way. It can only depend on the subspace $V^\lambda$ in which the state lies. The underlying reason for this is that for block diagonal measurements, we can assume without loss of generality that Eve sends a state $\rho_{Q_1^nE}$ where each state in register $Q_i$ is block diagonal and so she already holds that information in her purification. We therefore do not give her new information about the state by leaking $\wt{A}_i$ and $\wt{B}_i$. Recall that announcements being independent of the input state is a particular case of this setting as explained in \Cref{sec: MC-condition}.
	
	To summarize, we make the following assumption throughout this work, which guarantees we can apply \Cref{thm:EATv2} by \Cref{thm:Markov_block_diag} so long as we guarantee the conditions stated in Definition \ref{defn:weakly-dependent} are satisfied.
	\begin{shaded}
	    \begin{assumption}\label[assump]{assumption:block_diagonal}
	        The measurements and subsequent announcement structure of Alice and Bob guarantee that the partitioned data $\widetilde{A}_i$'s and $\wt{B}_i$'s are weakly dependent (\Cref{defn:weakly-dependent}).
	    \end{assumption}
	\end{shaded}
	
	\FloatBarrier
	
	\paragraph{Example (optical discrete-variable protocols):}
	
	The generalization from independently seeded announcements to weakly dependent announcements is crucial in the case of discrete-variable protocols. In those protocols, we typically perform postselection in the case of loss to remove the no-detection events from the raw data. However, postselection implies that each party needs to publicly announce if they have a detection or not (in addition to announcing the basis choice for protocols like BB84). This announcement is potentially problematic because the detection probability depends on the state; i.e., states with more photons have a higher probability of being detected. However, we can use the fact that the measurements by single-photon detectors commute with the total photon number operators $N$. In other words, Alice's (or Bob's) optical space can be decomposed in subspaces, each of which has a given total photon number $n_a$ (or $n_b$), as $Q_A = \oplus_{n_a=0}^\infty V_A^{n_a}$ (or $Q_B = \oplus_{n_b=0}^\infty V_B^{n_b}$), and the measurement device's POVM elements are block diagonal along the subspaces $V^{n_a}_A$ (or $V^{n_b}_B$). 
	
	Let's take BB84 as an example. Assuming that the basis choice is independently seeded, we only need to verify that the probability of a detection is the same for all states with the given basis choice $x$ and the same total photon number $N_A=n_a$. That is, for each total photon number $n_a$ and basis choice $x$, there exists some constant $c_{x,n_a}$ such that the probability of detection conditioned on sending a state from the basis $x$ in the subspace $V^{n_a}_A$ is
	\begin{aeq}
		\mathrm{Pr}(\mathrm{detection}\; | x,\rho_{n_a})= c_{x,n_a}\fa \rho_{n_a} \in \DM(V^{n_a}_A) \; .
	\end{aeq}Likewise, we have a similar requirement for Bob's detectors. This requirement needs to hold for both Alice's and Bob's detectors.
	
We remark that this property holds for the BB84 passive-detection setup using identical (imperfect) single-photon detectors (See \Cref{subsec:opticalBB84}).

	\begin{remark}
	    As mentioned in the introduction, \cite{Metgers-2022b} uses the generalised EAT \cite{Metgers-2022a} to establish security for  prepare-and-measure protocols by assuming the protocol only has one signal in the physical channel at at time, i.e. is truly a sequential protocol. The assumption in \cite{Metgers-2022b} is so that they can satisfy the `no-signalling' condition for the generalised EAT. In \cite{Metgers-2022a} it is shown how the original EAT's Markov chain conditions imply the `no-signalling' condition. Thus, the set of protocols we consider here all can also be covered by the generalised EAT too. However, as explained in \cite{Metgers-2022b}, the only advantage of the generalised EAT for device-dependent QKD is to apply to prepare-and-measure protocols, which requires the truly sequential assumption, which then does not include most of the protocols considered in \Cref{prot:physicalDDQKD}. That is to say, our class of entanglement-based protocols can be analysed by the generalised EAT, but this does not seem to provide an advantage over the original EAT, and if one considers prepare-and-measure protocols, our class of protocols is too general to be addressed by the generalised EAT.
	\end{remark}

 \subsubsection{Completeness of a QKD Protocol}\label{subsec:completeness}
 A protocol that generates a secret key but always aborts on the honest implementation is not practically useful. Completeness measures the probability the protocol will not abort on the honest implementation.\footnote{This has also been referred to as robustness \cite{Renner2005}.}
 \begin{definition}\label{def:completeness}
     A QKD protocol is called $\ve_{\text{QKD}}^{\text{C}}$-complete if the probability of aborting the entire protocol is at most $\ve_{\text{QKD}}^{\text{C}}$. Similarly, any subprotocol $P$ is $\ve^{\text{C}}_{\text{P}}$-complete if that subprotocol aborts with probability at most $\ve^{\text{C}}_{\text{P}}$.
 \end{definition}
  Clearly one may bound the QKD protocol's completeness $\ve^{\text{C}}_{\text{QKD}}$ via the union bound of the completeness of each subprotocol as has been done previously, e.g. \cite{Arnon-2020a}. The completeness of error correction is a function of the specific implementation, so we omit a discussion on this. Therefore, the major concern is the completeness of parameter estimation. It is both intuitive and well-known that as the set of accepted statistics increases, the probability of aborting the protocol can only decrease for any input, but the key length is likely to also decrease. We provide simple bounds for completeness that are relevant for any protocol that satisfies Protocol \ref{prot:DDQKD}.
  \begin{proposition}
      Following the notation of Protocol \ref{prot:DDQKD}, let $\sigma = \cE(\rho)$ be the state after the honest preparation, $\rho$, and (possibly noisy) transmission channel $\cE$, which we assume to be memoryless between transmissions. Let $p$ be the probability distribution of joint outcomes of Alice and Bob's measurements on $\sigma$ in testing rounds. Let the set of accepted statistics be $\cQ := \{f \in \PP(\cX): \|p-f\|_{1} \leq \xi_{t} \}$ where $\xi_{t} \in (0,1]$. Then the completeness of the parameter estimation subprotocol when $\ve^{\text{C}}_{\text{PE}}$ satisfies the upper bound
      \begin{align*}
            \ve^{\text{C}}_{\text{PE}} \leq 2\exp(-2m' (\gamma - \gamma')^{2}) + \exp(-m' \left(\frac{t^{2}}{2\ln2}-|\cX|\frac{\log(m'+1)}{m'}\right)) \ , 
      \end{align*}
      where $\gamma' < \gamma$ and $m' := \lfloor \gamma' n \rfloor$. Moreover, this goes to zero as $n$ goes to infinity.
  \end{proposition}
  \begin{proof}
      First, for any fixed number of tests $m$, the probability of aborting, $\ve^{\text{C}}_{\text{PE|m}}$, is bounded as
      \begin{align*}
          \ve^{\text{C}}_{\text{PE|m}} \leq \min\left\{1, \exp(-m\left(\frac{t^{2}}{2\ln2}-|\cX|\frac{\log(m+1)}{m}\right))\right\} \ ,
      \end{align*}
      which is a straightforward implication of Pinsker's inequality \cite[Lemma 11.6.1]{Cover2005} and the law of large numbers via method of types \cite[Theorem 11.2.1]{Cover2005} as noted in \cite{George-2020a}. Then we have
      \begin{align*}
          \ve^{C}_{\text{PE}} =& \sum_{k' \in [n]} \Pr[k' \text{ test rounds}]\ve^{\text{C}}_{\text{PE}|k'} \\
          \leq& \Pr[\# \text{ of test rounds} < m'] + \ve^{\text{C}}_{\text{PE}|m'} \\
          \leq & \Pr[\text{fraction of rounds that are tests}\geq \gamma - \gamma' ] + \ve^{\text{C}}_{\text{PE}|\lfloor \gamma' n \rfloor } \\
          \leq & 2\exp(-2 \lfloor \gamma' n \lfloor (\gamma - \gamma')^{2}) + \ve^{\text{C}}_{\text{PE}|\lfloor \gamma' n \rfloor }\ ,
      \end{align*}
      where the first inequality is choosing to group the probabilities by some $m'$, the second is using $m' := \gamma' n$ for some $\gamma' < \gamma$, and the final equality Hoeffding's inequality for independent Bernoulli random variables.
  \end{proof}

	\section{Security of Device-Dependent QKD from EAT}\label{sec:SecurityofDDQKDfromEAT}
		In this section we present the security of the considered QKD protocols (\Cref{prot:physicalDDQKD}). We stress that our security proof is for coherent attacks. Recall \Cref{prot:physicalDDQKD} does not assume anything about the state distribution but guarantees announcements are after all measurements. It is therefore equivalent to \Cref{prot:DDQKD} where $\rho_{Q_1^{n}E}$ is an arbitrary state and announcements are made round-by-round. Recall that for an i.i.d.\ collective attack,\footnote{Collective attacks are usually defined (e.g. \cite{Scarani2009}) by assuming that Eve interacts with each signal with a new ancillary state using the same unitary operation (which also includes mixed-unitary channels). Under this definition, collective attacks and i.i.d.\ assumption can be treated as synonymous for many protocols (as long as the protocol structure allows for the i.i.d. behavior). Some authors seem to generalize the definition to allow time-dependent unitary operations to include, for example, channels with a slowly rotating reference frame. This generalized definition would be non-i.i.d. collective attacks. For this reason, we use the terminology of i.i.d.\ collective attacks to emphasize the i.i.d. assumption.} one would assume $\rho_{Q_{1}^{n}E} = \sigma_{QR}^{\otimes n}$ so that Eve sends $n$ copies of the state $\sigma_{Q}$ for which she holds a purification $\sigma_{QR}$ to Alice and Bob. However, we do not make this assumption here in the security proof.

	\subsection{Entropy rate of EAT process}\label{sec:EATprocess}
	
As discussed previously, we aim to apply the EAT to analyze the virtual QKD protocol (\Cref{prot:DDQKD}) which is equivalent to the physical QKD protocol (\Cref{prot:physicalDDQKD}) in terms of security. However, we cannot directly use the EAT to analyze the virtual protocol. Specifically, the QKD protocol only accumulates entropy\footnote{In the main text we consider the sandwiched R\'{e}nyi entropy. In \Cref{app:EAT-Sec-with-Smoothing} we consider smooth min-entropy. In both cases the main point is that some entropic quantity accumulates, so we just use entropy without a qualifier to refer to both cases.} until parameter estimation. As such, our interest is in the entropy accumulated given some event passes, namely that parameter estimation passes. Therefore, the security proof is broken into two parts. First, one proves the entropy accumulation rate on a `subprotocol' that is nearly (for technical reasons) equivalent to that of \Cref{step:transmission,step:measurements,step:announcement,step:labeling,step:sifting,step:keymap,step:test,step:PE} of \Cref{prot:DDQKD}. After this, one proves the security of the virtual protocol by relating it back to the subprotocol. In this subsection, we present the EAT subprotocol (\Cref{prot:DDEA}) and its entropy accumulation rate. Then in the next subsection we present the secure key length of the QKD protocol (\Cref{thm:keylengthwithoutsmoothing}). A similar proof for a device-independent QKD protocol was given in \cite{Arnon2018}. However, that proof relied on a particular structure of the protocol which simplifies the analysis but is not as general as the protocol we consider here. In particular, the parameter estimation was done on the error corrected bit string instead of on the raw measurement results, like we do here. This allows us to use all the available measurement information to bound the key rate.
	\begin{protocol}[H]
		\caption{Device-Dependent Entropy Accumulation Subprotocol} \label[protocol]{prot:DDEA}
		\textbf{Inputs:} Same as \Cref{prot:DDQKD} \\
		\textbf{Protocol:} Run \Cref{step:transmission,step:measurements,step:announcement,step:labeling,step:sifting,step:keymap,step:test,step:PE} of \Cref{prot:DDQKD}, except in \Cref{step:sifting,step:test}, Bob assigns $\overline{B}_{i} = \perp$ (i.e. if $T_{i} = 0$, then $\overline{B}_{i} = \perp$), and in \Cref{step:test} Alice \textit{does not} announce $\ol{A}_{i}$ when $T_{i} = 1$. Regardless, $X_{i} = t(\ol{A}_{i},\ol{B}_{i},\wt{A}_{i},\wt{B}_{i})$ can be calculated the same as before.
	\end{protocol}
	
	It is worth noting that in principle the EAT does not rely on knowing all of the steps of the protocol explicitly. It just requires the existence of EAT channels that output (quantum) random variables that are the same as the real process. This is why we are not concerned about $X_{i}$ being computed locally. Here we described the procedure per round largely the same as in \Cref{prot:DDQKD}, because when we use our numerical algorithms to construct the min-tradeoff function (\Cref{sec:construction_min_tradeoff}), we use our knowledge of a way to implement the process to construct the EAT channels explicitly. We also note for this reason, this protocol can handle QKD protocols in which one party's public announcement is conditioned upon the other, so long as one can prove the resulting output random variables still satisfy the required Markov conditions.
	
	With the protocol defined, we may use the EAT to bound the entropy accumulated. We first state the relabeling from the registers used in the EAT statement in \Cref{sec:EATBackground} and ones used in \Cref{prot:DDEA}:
	\begin{align*}
		S_{i} &\leftrightarrow \overline{A}_{i} \overline{B}_{i} \\
		P_{i} &\leftrightarrow \widetilde{A}_{i} \widetilde{B}_{i} T_{i} \\
		X_{i} &\leftrightarrow X_{i} \\
		Q_{i} &\leftrightarrow A_{i}B_i \\
		R_{i} & \leftrightarrow R_i\\
		E &\leftrightarrow E \ .
	\end{align*}
	With these conversions, we state the sandwiched R\'{e}nyi  entropy rate of the entropy accumulation subprotocol (\Cref{prot:DDEA}).
	\begin{theorem}\label{thm:entRate} 
		Consider the entropy accumulation protocol defined in \Cref{prot:DDEA} and assume the \Cref{assumption:block_diagonal} is satisfied. Let $\Omega = \{x^{n}_{1} \in \cX^{n}: \freq{x^{n}_{1}} \in \mathcal{Q}\}$ and $\rho$ be the output of the protocol. Let $h$ such that $f(\vect{q}) \geq h$ for all $\vect{q} \in \mathcal{Q}$ where $f$ is the min-tradeoff function generated by either \Cref{alg:Algorithm1_min_tradeoff_function} or \Cref{alg:Algorithm2_min_tradeoff_function}. Then for any $\vEA \in (0,1)$, either the protocol aborts with probability greater than $1-\vEA$, or
		\begin{align}\label{eq:entropyRate}
			H^{\uparrow}_{\beta}(\ol{A}^{n}_{1} \ol{B}^{n}_{1}| \widetilde{A}^{n}_{1} \widetilde{B}^{n}_{1} T^{n}_{1} E)_{\rho_{|\Omega}} > n h - n \frac{(\beta-1)\ln 2}{2}V^{2} - \frac{\beta}{\beta-1} \log \frac{1}{\vEA} - n(\beta-1)^{2} K_{\beta}
			\end{align}
		where $\beta \in (1,2)$ by \Cref{thm:EATv2}.
	\end{theorem}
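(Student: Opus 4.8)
The plan is to recognize \Cref{thm:entRate} as a direct instantiation of the EAT (\Cref{thm:EATv2}) for the channels that model one round of \Cref{prot:DDEA}, so that the bulk of the work is verifying the hypotheses of \Cref{thm:EATv2} under the stated relabeling and then converting the $\log(1/\rho[\Omega])$ correction term into the $\log(1/\vEA)$ term. First I would make the EAT channels explicit: using the correspondence $S_i \leftrightarrow \ol{A}_i \ol{B}_i$, $P_i \leftrightarrow \wt{A}_i \wt{B}_i T_i$, and $Q_i \leftrightarrow A_i B_i$, I would define $\widetilde{\cM}_i : A_iB_i \to S_i P_i X_i$ to perform in one shot the per-round operations of \Cref{prot:DDEA}: measurement with $\{M^A_a\}$, $\{M^B_b\}$, data partition, testing designation $T_i$, general sifting, key map, and computation of $X_i$. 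Crucially, in the subprotocol Alice does \emph{not} announce $\ol{A}_i$ on testing rounds and Bob sets $\ol{B}_i = \perp$ when $T_i = 0$, so the only public outputs are $\wt{A}_i, \wt{B}_i, T_i$. As in \Cref{subsec:ApplyingEATtoDDQKD}, I would then promote these to EAT channels $\cM_i = \widetilde{\cM}_i \otimes \id_{Q_{i+1}^n}$ acting on the fixed initial state $\rho_{Q_1^n E}$ distributed by Eve.

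Next I would verify the two structural hypotheses of \Cref{thm:EATv2}. The deterministic-function requirement of \Cref{defn:EATchannels} holds because $X_i = t(\ol{A}_i,\ol{B}_i,\wt{A}_i,\wt{B}_i)$ on testing rounds and $X_i = \perp$ otherwise is a deterministic function of $S_i = \ol{A}_i \ol{B}_i$ and $P_i = \wt{A}_i \wt{B}_i T_i$; moreover $S_i$ is classical, which is what makes the form of $K_\beta$ in \eqref{eq: EAT constant Kalpha} the correct one. For the Markov chain conditions \eqref{eq:Markov_cond}, I would invoke \Cref{assumption:block_diagonal} together with \Cref{thm:Markov_block_diag}: the basis/detection announcements $\wt{A}_i,\wt{B}_i$ are weakly dependent by assumption, and $T_i$ is independently seeded (hence trivially weakly dependent), so the public register $P_i$ is weakly dependent as a whole and \Cref{thm:Markov_block_diag} certifies that the Markov conditions hold for Eve's optimal attack. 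This is precisely the step that forced the subprotocol to withhold $\ol{A}_i$ on testing rounds, since announcing fine-grained data could otherwise break the Markov property for states correlated across rounds.

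With the hypotheses in place, I would apply \Cref{thm:EATv2} with $\alpha = \beta \in (1,2)$, the supplied min-tradeoff function $f$, and the threshold $h$. The event $\Omega = \{x_1^n : \freq{x_1^n} \in \mathcal{Q}\}$ implies $f(\freq{X_1^n}) \geq h$ because for every $x_1^n \in \Omega$ we have $\freq{x_1^n} \in \mathcal{Q}$ and hence $f(\freq{x_1^n}) \geq h$ by the choice of $h$. \Cref{thm:EATv2} then yields \eqref{eqn:eat-min} with $\rho[\Omega]$ appearing in the correction term. Finally I would dispatch the dichotomy: $\rho[\Omega]$ equals the acceptance probability, so either the protocol aborts with probability greater than $1-\vEA$ (that is, $\rho[\Omega] < \vEA$) and there is nothing to prove, or $\rho[\Omega] \geq \vEA$, in which case $\log\frac{1}{\rho[\Omega]} \leq \log\frac{1}{\vEA}$; substituting this into the $-\frac{\beta}{\beta-1}\log\frac{1}{\rho[\Omega]}$ term only decreases the quantity being subtracted, yielding \eqref{eq:entropyRate}.

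The main obstacle is not the final chain of inequalities, which is routine, but arguing cleanly that the round channels $\widetilde{\cM}_i$ genuinely realize \Cref{prot:DDEA} as quantum-to-classical maps of the \Cref{defn:EATchannels} form, and that the weak dependence of $\wt{A}_i, \wt{B}_i, T_i$ \emph{jointly} (not merely individually) certifies the Markov conditions through \Cref{thm:Markov_block_diag}. In particular, one must check that folding the classical post-processing (sifting and key map) into $\widetilde{\cM}_i$ introduces no additional public side information beyond $P_i$, so that the weakly dependent structure — and hence the reduction via \Cref{thm:Markov_block_diag} — continues to apply.
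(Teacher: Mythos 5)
Your proposal is correct and takes essentially the same route as the paper's proof: you verify the Markov chain conditions via \Cref{assumption:block_diagonal} and \Cref{thm:Markov_block_diag}, check that $X_i$ is a deterministic function of $S_i$ and $P_i$ so the channels have the form required by \Cref{defn:EATchannels}, invoke the validity of the algorithmically constructed min-tradeoff function together with the choice of $h$, and then apply \Cref{thm:EATv2}. Your only addition is spelling out the dichotomy converting $\log\frac{1}{\rho[\Omega]}$ into $\log\frac{1}{\vEA}$ (which the paper leaves implicit), and this step is handled correctly.
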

	\begin{proof} 
		We simply check that everything we have done satisfies the requirements of EAT.
		\begin{enumerate}
			\item As we assume the protocol satisfies \Cref{assumption:block_diagonal} which implies that \Cref{thm:Markov_block_diag} holds for the protocol, the output state $\rho$ satisfies the Markov conditions in \Cref{eq:Markov_cond}.
			\item By our definition of how we compute the test register $X_{i}$, the testing map $\cT_{i}$ is of the form in \Cref{eq:eat_testingMap} of \Cref{defn:EATchannels}.
			\item By our construction of the min-tradeoff function (\Cref{alg:Algorithm1_min_tradeoff_function} or \Cref{alg:Algorithm2_min_tradeoff_function}), we have a min-tradeoff function and the value $h$ in the statement of the theorem above satisfies the requirements to be $h$ in the statement of \Cref{thm:EATv2}.
		\end{enumerate}
		Thus we have satisfied the requirements of the EAT and can apply it.
	\end{proof}

	\begin{remark} 
		The statement of \Cref{thm:entRate} requires that either the EAT protocol aborts with a probability greater than $1-\vEA$ or else the entropy bound holds. In the language of Renner's PhD thesis \cite{Renner2005}, this theorem says either $\rho$ is $\vEA$-securely filtered by the EAT protocol or the entropy bound holds. This explains the replacement of the failure probability of parameter estimation, which appeared in Renner's original coherent-attack security proofs \cite{Renner2005}, with the term $\vEA$ in the statement of  $\ve$-security.
	\end{remark}
	
	\FloatBarrier

	\subsection{Security of QKD protocol}\label{sec:SecurityProof}
	We can now present the key length for a QKD protocol using the EAT without introducing any smoothing. We note this depends on the construction of a max-tradeoff entropy accumulation theorem for the sandwiched R\'{e}nyi entropy $H^{\uparrow}_{\delta}$, which we provide in \Cref{app:EAT-Sec-without-Smoothing}.
	\begin{shaded}
		\begin{theorem}
			\label{thm:keylengthwithoutsmoothing} 
			Consider any QKD protocol which follows \Cref{prot:physicalDDQKD} and satisfies \Cref{assumption:block_diagonal}. Let $\vSec,\vEC,\vAcc \in (0,1]$ such that $\vAcc \geq \vSec + \vEC$. Let $h$ such that $f(\vect{q}) \geq h$ for all $\vect{q} \in \mathcal{Q}$ where $f$ is the min-tradeoff function generated by \Cref{alg:Algorithm1_min_tradeoff_function} or \Cref{alg:Algorithm2_min_tradeoff_function}. \\
			
			\noindent Let $\beta \in (1,2)$, $\delta \in (1/2,1)$ and $\alpha = \frac{-\beta +\delta}{-1+2\delta-\beta\delta}$.
			The QKD protocol is $\vAcc$-secure for key length
			\begin{equation}\label{eq:KeyLength_without_smoothing} 
				\begin{aligned}
					\ell \leq & n h - \leak_{\vEC} - n \frac{(\beta-1)\ln 2}{2}V^{2} - n(\beta-1)^{2} K_{\beta} - n\gamma\log|\mathcal{A} \times \mathcal{B}| \\
					& \hspace{3mm} 
					+\frac{\beta-\delta}{(\beta+1)(1-\delta)}\log(\vAcc)
					+ \frac{\alpha}{\alpha - 1} \log(\vSec) + 1
				\end{aligned}
			\end{equation}
			where 
			\begin{align*}
				V &= \sqrt{\Var{f}+2} + \log(2 d_{S}^{2}+1) \\
				K_\beta &= \frac{1}{6(2-\beta)^{3} \ln 2} 2^{(\beta-1)(\log d_{S} + \Max{f} - \MinSigma{{f})}} \ln^{3} \left( 2^{\log d_{S} + \Max{f} - \MinSigma{f}} + e^{2} \right)
			\end{align*}
			where
			$\mathcal{A}$ and $\mathcal{B}$ are the alphabets of private outcomes for Alice's and Bob's announcements excluding the symbol $\perp$, respectively, $d_{S} = (|\cA|+1)(|\cB|+1)$, $\leak_{\vEC}$ is the amount of information leakage during the error correction step.
		\end{theorem}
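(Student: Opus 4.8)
The plan is to prove secrecy and correctness separately and combine them through the union bound $\vAcc \geq \vSec + \vEC$ recorded in the hypotheses. Correctness is immediate from the error-verification step of \Cref{prot:physicalDDQKD}: the two-universal hash comparison guarantees $\vEC$-correctness and contributes nothing to the key-length bound beyond the leakage accounting below. The substance is secrecy, which I would control through the quantity $H^{\uparrow}_{\alpha}(\mathbf{Z}|E_{\text{tot}})$, where $\mathbf{Z}$ is Alice's raw key and $E_{\text{tot}}$ collects \emph{all} of Eve's information: her quantum register $E$, the public announcements $\wt{A}^{n}_{1}\wt{B}^{n}_{1}T^{n}_{1}$, the fine-grained testing announcements $\ol{A}_{\cT}$, and the full error-correction transcript. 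Because I work directly with sandwiched R\'{e}nyi entropies rather than smoothed entropies, the extraction step invokes the privacy amplification result of \cite{Dupuis2021}: hashing to $\ell$ bits yields $\vSec$-secrecy provided $\ell \leq H^{\uparrow}_{\alpha}(\mathbf{Z}|E_{\text{tot}}) + \frac{\alpha}{\alpha-1}\log(\vSec) + 1$, which supplies the two rightmost terms of \Cref{eq:KeyLength_without_smoothing}.

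It then remains to lower bound $H^{\uparrow}_{\alpha}(\mathbf{Z}|E_{\text{tot}})$. First I would strip off the error-correction transcript: conditioning on a classical register of bit-length $\leak_{\vEC}$ decreases the entropy by at most $\leak_{\vEC}$, producing the $-\leak_{\vEC}$ term and reducing the task to bounding $H^{\uparrow}_{\alpha}(\mathbf{Z}|\wt{A}^{n}_{1}\wt{B}^{n}_{1}T^{n}_{1}\ol{A}_{\cT}E)$, i.e.\ the key entropy in the real protocol where Alice's testing data is public. The key observation is that, after the round-by-round keymap applied in \Cref{prot:DDEA}, the accumulated register factorizes as $\ol{A}^{n}_{1}\ol{B}^{n}_{1} = \mathbf{Z}\,\ol{A}_{\cT}\ol{B}_{\cT}$ (generation rounds yield $\mathbf{Z}$, testing rounds yield $\ol{A}_{\cT}\ol{B}_{\cT}$, sifted rounds are trivial), so the subprotocol entropy from \Cref{thm:entRate} is exactly $H^{\uparrow}_{\beta}(\mathbf{Z}\,\ol{A}_{\cT}\ol{B}_{\cT}|\wt{A}^{n}_{1}\wt{B}^{n}_{1}T^{n}_{1}E)$. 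The real key entropy differs in three respects: the R\'{e}nyi order ($\alpha$ versus $\beta$); the conditioning carries the announced $\ol{A}_{\cT}$; and the accumulated register contains Bob's $\ol{B}_{\cT}$, which is absent from the key. I would reconcile all three at once with a chain rule for sandwiched R\'{e}nyi entropies of the form $H^{\uparrow}_{\alpha}(\mathbf{Z}|\ol{A}_{\cT}\ol{B}_{\cT}C) \geq H^{\uparrow}_{\beta}(\mathbf{Z}\,\ol{A}_{\cT}\ol{B}_{\cT}|C) - H^{\uparrow}_{\delta}(\ol{A}_{\cT}\ol{B}_{\cT}|C)$, whose three orders are forced to satisfy precisely $\alpha = \frac{-\beta+\delta}{-1+2\delta-\beta\delta}$; conditioning additionally on $\ol{B}_{\cT}$ only lowers the left side by monotonicity of $H^{\uparrow}_{\alpha}$, so the resulting bound remains valid for the real protocol where only $\ol{A}_{\cT}$ is revealed.

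The register $\ol{A}_{\cT}\ol{B}_{\cT}$ is nontrivial only on testing rounds, where it takes at most $|\cA\times\cB|$ values. Rather than bound its R\'{e}nyi entropy $H^{\uparrow}_{\delta}(\ol{A}_{\cT}\ol{B}_{\cT}|C)$ by the crude worst case $n\log|\cA\times\cB|$, I would invoke the max-tradeoff (upper-bound) EAT for $H^{\uparrow}_{\delta}$ from \Cref{app:EAT-Sec-without-Smoothing}, applied to the process that emits this leaked register; its natural max-tradeoff function equals $\log|\cA\times\cB|$ on testing outcomes and $0$ on generation outcomes, giving a first-order bound of $n\gamma\log|\cA\times\cB|$ plus a tail term. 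This simultaneously produces the penalty $-n\gamma\log|\cA\times\cB|$ and, through the $\vAcc$-level failure probability of the accumulate-or-abort dichotomy, the term $\frac{\beta-\delta}{(\beta+1)(1-\delta)}\log(\vAcc)$. Substituting the lower bound of \Cref{thm:entRate} for $H^{\uparrow}_{\beta}(\ol{A}^{n}_{1}\ol{B}^{n}_{1}|\cdots)$ then assembles the remaining terms ($nh$ and the second-order $V^{2}$ and $K_{\beta}$ corrections, with $d_{S}=(|\cA|+1)(|\cB|+1)$). The main obstacle I anticipate is the order bookkeeping in the R\'{e}nyi chain rule together with establishing the max-tradeoff EAT: one must check that $\beta\in(1,2)$ and $\delta\in(1/2,1)$ map to an admissible $\alpha$ and that the chain rule's order constraint is exactly $\alpha = \frac{-\beta+\delta}{-1+2\delta-\beta\delta}$, which is the delicate quantitative heart of the argument.
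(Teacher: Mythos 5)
Your proposal is correct and follows essentially the same route as the paper's own proof in \Cref{app:EAT-Sec-without-Smoothing}: the correctness/secrecy split with the union bound $\vAcc \geq \vSec+\vEC$, Dupuis' R\'{e}nyi leftover hashing lemma, removal of the error-correction transcript, data-processing to add Bob's test registers to the conditioning, the Case-2 chain rule for $H^{\uparrow}$ with precisely the order constraint $\alpha = \frac{-\beta+\delta}{-1+2\delta-\beta\delta}$, the max-tradeoff EAT for $H^{\uparrow}_{\delta}$ on the leaked registers, and \Cref{thm:entRate} for the $H^{\uparrow}_{\beta}$ term. The only cosmetic differences are bookkeeping ones: the paper takes the max-tradeoff function to be the \emph{constant} $\gamma\log|\cA\times\cB|$ (obtained by averaging the per-round entropy over the seeded $T_i$, so the accept event trivially implies the bound and the gradient term vanishes), and the final $+1$ arises as the $+2$ from privacy amplification minus a correction term deliberately set to $-1$ by the choice of the auxiliary parameter $\eta$.
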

	\end{shaded}
	\begin{proof}
	       See \Cref{app:EAT-Sec-without-Smoothing}.
    \end{proof}
    \begin{remark}
        In the proof of \Cref{thm:keylengthwithoutsmoothing} in \Cref{app:EAT-Sec-without-Smoothing} there is another parameter, $\eta$, to optimize over. At the end of the proof, we make a natural choice for this parameter. As a result, the parameter $\eta$ is not stated in the above theorem.
    \end{remark}
    
    \begin{remark}
          While not necessary, it seems natural to set $\vAcc = \vEC + \vSec$. As shown in the proof of \Cref{thm:keylengthwithoutsmoothing}, the optimal choice of security parameters will always have $\vAcc \geq \vEC + \vSec$. In principle, one has no control over the input states, so it is necessary to prove the security for many states, which would require $\vAcc$ to be small. The scaling term of $\log(\vAcc)$ in the theorem is small unless $(\beta,\delta)$ is near $(2,1)$, which is clearly suboptimal as the $V$ and $K_{\beta}$ correction terms increase linearly and exponentially in $\beta$ respectively. Moreover, $\delta$ is effectively free as $\alpha \in [1,1.5]$ always. As such, it is reasonable to set $\vAcc = \vEC + \vSec$ as this results in the strongest security claim without changing $\vEC,\vSec$ and, one would expect, obtains similar key rates assuming $\vAcc,\vSec,\vEC$ were not originally significantly different orders.
    \end{remark}

	\begin{remark}
	    As noted in \Cref{sec:EATBackground}, by specific choice of parameters in using the EAT, the resulting bound on the entropy can scale as $nh - O(\sqrt{n})$. As such, with suitable choices of $\beta$ and $\delta$, \Cref{thm:keylengthwithoutsmoothing} gives the the key length that scales as $\ell(n) \leq nh - \leak_{\vEC} - O(\sqrt{n})$. If the min-tradeoff function is chosen appropriately, the key rate will reach the asymptotic key rate in the infinite-key limit.
	\end{remark}
	
	\subsubsection{Fixed number of test rounds}\label{sec:FixedLengthPE}
	
	We have just proven the security of \Cref{prot:physicalDDQKD} by proving the security of \Cref{prot:DDQKD} with the help of entropy accumulation subprotocol (\Cref{prot:DDEA}). However, traditionally device-dependent QKD protocols use fixed-length testing instead of probabilistic round-by-round testing. This leaves us with two options. First, the device-dependent QKD protocol could be altered to do the parameter estimation round by round as is described in \Cref{prot:DDQKD}. In this case, one can use the result of \Cref{thm:keylengthwithoutsmoothing} directly. However, if one wishes to use \Cref{thm:keylengthwithoutsmoothing} and apply it to QKD protocols with fixed-length testing, one must connect the failure probability of \Cref{prot:DDQKD} to the failure probability of the device-dependent QKD protocol actually implemented. Here we state the relation between the two failure probabilities in the case that $T_{i}$ is an independent Bernoulli random variable (e.g. determined by seeded randomness). This can then be used to calculate secure key length using \Cref{thm:keylengthwithoutsmoothing} for protocols with fixed-length testing as explained beneath the following theorem. In \Cref{app:EATtoDDQKDCorrespondence}, we present the derivation of this result. We note that, given the proof method, this result is exact rather than a bound.
	
	\begin{shaded}
	\begin{theorem}\label{thm:EATOutvsTrueOut}
		Let $\rho^{in}_{Q_{1}^{n}E} \in \mathrm{D}(Q_1^{n}E)$ be the input to the protocol. Let $\rho^{out}$ denote the output of \Cref{prot:physicalDDQKD} but with fixed-length parameter estimation on the input state $\rho^{in}_{Q_{1}^{n}E}$. Let $\rho^{out}_{EAT}$ denote the output of the EAT protocol where for each round the probability of testing is $\gamma$. Let $\Omega$ be the event of not aborting on parameter estimation, which to be the same in both protocols, can only accept when there are $m$ tests. Then $\rho^{out}_{EAT}[\Omega] = 2^{-nh(\gamma)} \binom{n}{m}\rho^{out}[\Omega]$. Furthermore, $\rho^{out}_{|\Omega} = \rho^{out}_{EAT|\Omega}$. (See \Cref{sec:notation} to recall notation.)
	\end{theorem}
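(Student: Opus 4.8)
The plan is to exploit the fact that the two protocols are \emph{identical} except in the mechanism by which the test-pattern register $T_1^n$ is generated: in the fixed-length protocol Alice draws a uniformly random binary string of Hamming weight $m$, whereas in the EAT protocol (\Cref{prot:DDEA}) each $T_i$ is an independent Bernoulli$(\gamma)$ variable. Every subsequent operation --- the measurements, the data partition and announcements, sifting, the key map, the computation of $X_1^n$, and the parameter-estimation test defining $\Omega$ --- is one and the same quantum channel in both protocols \emph{once the test pattern is fixed}. So first I would condition the entire analysis on the classical value $\tau \in \{0,1\}^n$ of $T_1^n$ and decompose both output states accordingly.

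Concretely, for each pattern $\tau$ let $\omega_\tau$ denote the subnormalized state on all registers other than $T$ produced by running the shared processing with that fixed test pattern on the common input $\rho^{in}_{Q_1^nE}$ and then projecting onto acceptance; by construction $\Tr[\omega_\tau]$ is the probability of passing parameter estimation given pattern $\tau$, and $\omega_\tau$ is the same object in both protocols. Because $\Omega$ accepts only when exactly $m$ rounds are tested, only patterns with $|\tau| = m$ contribute. Weighting $\dyad{\tau}_{T} \otimes \omega_\tau$ by the respective test-selection probabilities then gives, on the event $\Omega$,
\begin{equation*}
\rho^{out}[\Omega] = \sum_{\tau:|\tau|=m} \binom{n}{m}^{-1}\Tr[\omega_\tau], \qquad \rho^{out}_{EAT}[\Omega] = \sum_{\tau:|\tau|=m} \gamma^{m}(1-\gamma)^{n-m}\,\Tr[\omega_\tau],
\end{equation*}
since a specific weight-$m$ string has probability $\binom{n}{m}^{-1}$ under the uniform choice and probability $\gamma^{m}(1-\gamma)^{n-m}$ under i.i.d.\ Bernoulli$(\gamma)$.

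The probability relation now follows by pulling the constant per-pattern prefactors out of the sums: $\rho^{out}_{EAT}[\Omega] = \binom{n}{m}\gamma^{m}(1-\gamma)^{n-m}\,\rho^{out}[\Omega]$. Under the natural matching $m = \gamma n$ (the regime highlighted after \Cref{prot:physicalDDQKD}), the elementary identity $\gamma^{m}(1-\gamma)^{n-m} = 2^{n(\gamma\log\gamma + (1-\gamma)\log(1-\gamma))} = 2^{-n h(\gamma)}$ with $h$ the binary entropy turns this into the claimed $\rho^{out}_{EAT}[\Omega] = 2^{-n h(\gamma)}\binom{n}{m}\,\rho^{out}[\Omega]$. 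For the equality of conditioned states, I would note that conditioning on $\Omega$ renormalizes each mixture and that the per-pattern prefactor (either $\binom{n}{m}^{-1}$ or $\gamma^m(1-\gamma)^{n-m}$) is a constant independent of which weight-$m$ string $\tau$ is, so it cancels between numerator and denominator in both protocols, leaving the identical state $\big(\sum_{|\tau|=m}\dyad{\tau}_T\otimes\omega_\tau\big)/\big(\sum_{|\tau|=m}\Tr[\omega_\tau]\big)$; equivalently, i.i.d.\ Bernoulli testing conditioned on there being exactly $m$ tests is uniform over weight-$m$ strings, which is precisely the fixed-length selection rule.

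The computation is short and yields an exact equality rather than a bound, so there is no genuinely hard analytic step; the only points demanding care are the bookkeeping ones. I would need to argue cleanly that the shared processing really does factor as a single $\tau$-indexed channel common to both protocols (so that $\omega_\tau$ is well defined and shared), to track the subnormalization carefully through the $T$ register, and to make explicit the implicit use of $\gamma = m/n$ so that the $\gamma^m(1-\gamma)^{n-m}$ factor can be rewritten as $2^{-nh(\gamma)}$. The conceptual crux --- and the reason the conditioned states coincide at all --- is the standard fact that conditioning an i.i.d.\ Bernoulli sequence on its total count erases all dependence on $\gamma$ and returns the uniform distribution over strings of that weight.
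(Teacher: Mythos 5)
Your proposal is correct and follows essentially the same route as the paper's proof: both condition on the test pattern $T_1^n$, use that the remaining processing is an identical ($\tau$-indexed) channel in the two protocols, compare the per-pattern probabilities $\binom{n}{m}^{-1}$ versus $\gamma^m(1-\gamma)^{n-m}$, invoke $m=\gamma n$ to rewrite the latter as $2^{-nh(\gamma)}$, and observe that these constant prefactors cancel under renormalization to give equality of the conditioned states. The only cosmetic difference is that you compute the Bernoulli string probability directly, whereas the paper invokes the method of types (\cite[Theorem 11.1.2]{Cover2005}) to reach the same $2^{-nh(\gamma)}$ factor.
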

	\end{shaded}
	
	\begin{proof}
		See \Cref{app:EATtoDDQKDCorrespondence}.
	\end{proof}
	
	In proving security, one wishes to consider the set of inputs $\rho^{in}$ which will be accepted except with probability $\ve$ in the testing. In EAT this probability is $\vEA$ and in fixed-length testing we will say this is $\vPE$. That is, one would like to consider the set of $\rho^{in}$ such that $\rho^{out}[\Omega] \geq \vEA$ (resp. $\vPE$). The above theorem tells us how the set of $\rho^{in}$ that satisfy these conditions changes when going between the considered fixed-length setting and probabilistic round-by-round testing. In other words, if one wishes to consider a protocol with a fixed-length testing that considers all input states that are not $\vPE$-filtered, it suffices to calculate the secure key length of the EAT with $\vEA = 2^{-nh(\gamma)} \binom{n}{m} \vPE$. As \Cref{thm:EATOutvsTrueOut} is tight, this completely closes the gap in this setting. Note, however, that this approach does make the second-order term in \Cref{thm:keylengthwithoutsmoothing} scale closer to that of the first-order term. That is, considering that $\log(\vBar \rho[\Omega])$ is replaced with $\log(\vBar \vEA)$ in applying \Cref{thm:EATv2} for  \Cref{thm:entRate}, we see $\log( \vBar \rho^{out}_{EAT}[\Omega]) \geq \log(\vBar \vPE) + \log( \frac{e^2}{\sqrt{2\pi}} \sqrt{\gamma(1-\gamma)n})$, as is shown in \Cref{app:EATtoDDQKDCorrespondence}.\footnote{One may verify this is the appropriate direction of bound as we are interested in the $\delta$ term of \Cref{thm:keylengthwithoutsmoothing}, and if $y \geq x$, then $n \beta - \sqrt{1-z-y} \geq n\beta - \sqrt{1-z-x}$, so the bound on the key length has only decreased.} This means that the correction term scales as $O(\sqrt{n}\log(\sqrt{n}))$ rather than $O(\sqrt{n})$, which may suggest this is not the ideal way to merge fixed-length testing and the ideas from EAT.

	\section{Construction of (Near-)Optimal Min-Tradeoff Functions}\label{sec:construction_min_tradeoff}
	
	The QKD key rates obtained by the EAT method crucially depend on the choice of min-tradeoff function. For any given block size, it is desirable to choose the min-tradeoff function that maximizes the key rate among all valid min-tradeoff functions. In the infinite-key limit, we would like to choose a min-tradeoff function such that the key rate obtained by the EAT method reproduces the expected asymptotic key rate. We also would like to make our method as general as possible so that it can be applied to a large family of protocols. Our framework can be used whenever the EAT maps have the specific tensor product form as explained in \Cref{subsec:ApplyingEATtoDDQKD}. Our first approach is to use the numerical framework for asymptotic QKD rate calculation \cite{Winick2018} to construct min-tradeoff functions (via a similar two-step procedure). As will be explained in depth later, the important observation here is that the dual problem of the linearization of the original optimization problem gives us the desired min-tradeoff functions. The linearization is a semidefinite program (SDP) and thus its dual problem can be efficiently solved. This method is conceptually simple and can give us a family of valid min-tradeoff functions. We then optimize the choice of min-tradeoff functions when we evaluate the key rate in the finite-key regime using this algorithm. On the other hand, the generation of each individual min-tradeoff function by this approach takes into account only the first-order information in the key rate expression. It is typically the case that the min-tradeoff function that gives the highest first-order term does not give the optimal finite-key rate when lower-order terms are included. This motivates us to derive the second algorithm that also considers the second-order terms. With the aid of Fenchel duality, we show that the second algorithm can also be written in terms of convex optimization.

	As a starting point, we review the asymptotic key rate optimization formulation in \cite{Winick2018}. We present our first algorithm that utilizes the essential idea of \cite{Winick2018} in \Cref{subsec:first_algorithm} and then discuss the second algorithm in \Cref{subsec:second_algorithm}.
	
		\subsection{Review of asymptotic key rate optimization}\label{subsec:review_asymptotic}
	
	To construct min-tradeoff functions, we establish the intimate relation that exists between the problem of generating a good min-tradeoff function for a given protocol and the problem of computing asymptotic key rates in QKD. It is shown \cite{Winick2018,George2021} that the latter problem can be rewritten as a convex optimization problem. The main idea is that, given a map $\wt \cM_i: Q_i \rightarrow S_iP_iX_i$ (from an input quantum system to the key, public announcement and testing registers), the function $r_{\infty}(\vect{q})$ gives the worst-case conditional entropy compatible with the given statistics $\vect{q}$. Explicitly, it is the result of the following convex optimization problem:
	\begin{aeq}\label{eq:asymptotic_primal}
		r_{\infty}(\q) = \minimize_{\rho_{QR}} \quad & W(\rho_{QR}) \\
		 \st & \Tr[\rho_Q M_x] = \q(x)\\
		& \rho_{QR} \geq 0
	\end{aeq}where the objective function
	\begin{IEEEeqnarray}{rL}
		W(\rho_{QR}) := H(S_i|P_iR)_{\widetilde \cM_i \otimes \id_R(\rho_{QR})}
	\end{IEEEeqnarray}
	is defined as the conditional entropy of the state obtained by applying the map $\wt\cM_i \otimes \id_R$ to the state $\rho_{QR}$ which is a purification of $\rho_Q$, and the constraints come from the POVM elements $M_x := \sum_{s,p:t(s,p) = x} M_{sp}$ associated to the map $\Tr_{S_iP_i} \circ \wt \cM_{i}$. Here we use register $R$ to refer to Eve's register in a single round as depicted in \Cref{fig:EATProcess}. The objective function can be written in terms of these POVM elements and without needing to optimize over the $R$ register directly as \Cref{prop:conditional_entropy_form} shows.
	
\begin{prop}\label{prop:conditional_entropy_form}
	Let $\{M_{sp}: s, p\}$ be Alice and Bob's joint POVM which is regrouped according to the public information $p$ and the value of the final key $s$. Let $M_p = \sum_s M_{sp}$. Then for $\rho \in \DM(Q)$,
	\begin{align}\label{eq:objective_conditional_entropy}
		W(\rho_{QR})& = \sum_{s,p} H\left(\cK_{sp}(\rho_{Q})\right) - \sum_p H\left( \cK_p(\rho_{Q})\right)
	\end{align}where $\cK_{sp}(\rho) := K_{sp}\rho K_{sp}^\dagger$ with $K_{sp} = \sqrt{M_{sp}}$,  and $\cK_{p}(\rho) := K_{p}\rho K_{p}^\dagger$ with $K_{p} = \sqrt{M_{p}}$. We note the subscripts highlight that the optimization does not depend on the $R$ register.
\end{prop}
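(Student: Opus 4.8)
The plan is to unfold the definition $W(\rho)=H(S_i|P_iR)$, reduce it to von Neumann entropies of subnormalized ``branch'' states living on $R$, and then transport those entropies back onto the input register $Q$ by exploiting the symmetry of a purification. First I would observe that $X_i$ is a deterministic function of $(S_i,P_i)$, so discarding it leaves $H(S_i|P_iR)$ unchanged; hence I may replace $\wt\cM_i$ by the measurement channel $Q_i\to S_iP_i$ with Kraus operators $K_{sp}=\sqrt{M_{sp}}$. Fixing a purification $\ket{\psi}_{QR}$ of $\rho$, the output is the classical--quantum state
\[
\sigma_{S_iP_iR} = \sum_{s,p}\ketbra{s,p}{s,p}_{S_iP_i}\otimes \rho_{R,sp},\qquad \rho_{R,sp} := \Tr_Q[(M_{sp}\otimes\mathbb{1}_R)\rho_{QR}],
\]
where I have used $K_{sp}^\dagger K_{sp}=M_{sp}$ together with cyclicity of the partial trace. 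Applying the chain rule $H(S_i|P_iR)=H(S_iP_iR)-H(P_iR)$ and the elementary identity that the entropy of a classical--quantum state equals the sum of the von Neumann entropies of its (subnormalized) branches, I obtain
\[
W(\rho) = \sum_{s,p} H(\rho_{R,sp}) - \sum_p H(\rho_{R,p}),\qquad \rho_{R,p}:=\sum_s \rho_{R,sp} = \Tr_Q[(M_p\otimes\mathbb{1}_R)\rho_{QR}].
\]

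The crucial step is to rewrite each $R$-side branch entropy in terms of the input register. For fixed $(s,p)$, the subnormalized vector $(K_{sp}\otimes\mathbb{1}_R)\ket{\psi}_{QR}$ is a pure (unnormalized) state whose reduced operators are $K_{sp}\rho K_{sp}^\dagger=\cK_{sp}(\rho)$ on $Q$ and $\rho_{R,sp}$ on $R$. Since the two reductions of a pure state share the same nonzero spectrum, and the extended von Neumann entropy $H(\cdot)=-\Tr[(\cdot)\log(\cdot)]$ depends only on that spectrum, we get $H(\cK_{sp}(\rho))=H(\rho_{R,sp})$. The identical argument applied to $K_p=\sqrt{M_p}$ gives $H(\cK_p(\rho))=H(\rho_{R,p})$. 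Substituting both equalities into the expression for $W(\rho)$ reproduces exactly \Cref{eq:objective_conditional_entropy}.

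I expect the only genuine subtlety to lie in the subnormalization bookkeeping: every branch operator $\cK_{sp}(\rho)$ and $\rho_{R,sp}$ has trace $p_{sp}=\Tr[M_{sp}\rho]\le 1$, so I must consistently use the extended entropy $H(A)=-\Tr[A\log A]$ on positive operators and verify that it decomposes as $H(A) = -p\log p + p\,H(A/p)$ with $p=\Tr A$; the collected $-p\log p$ terms are precisely what reconstitute the classical Shannon contributions to $H(S_iP_iR)$ and $H(P_iR)$, so that ``sum of branch entropies'' equals the joint entropy. A minor accompanying remark I would record is that $W(\rho)$ is well defined independently of the purification chosen, since any two purifications differ by an isometry on $R$ that commutes with the measurement on $Q$ and preserves all von Neumann entropies; this is what makes the right-hand side of \Cref{eq:objective_conditional_entropy}, which depends only on $\rho$, a legitimate value of $W$.
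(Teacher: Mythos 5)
Your proof is correct and follows essentially the same route as the paper's: both decompose $W(\rho)=H(S_i|P_iR)$ into (extended) von Neumann entropies of Eve's subnormalized conditional states $\rho_R^{(s,p)}=\Tr_{Q_i}[(M_{sp}\otimes\idm_R)\rho_{Q_iR}]$, and both transfer those entropies to the input register by a spectral-equivalence argument. The only difference is in execution of that last step: the paper fixes the canonical purification $\sqrt{\rho}\otimes\idm_R\ket{\Phi^+}$, computes $\rho_R^{(s,p)}=(\sqrt{\rho}M_{sp}\sqrt{\rho})^T$ explicitly, and invokes the fact that $A^\dagger A$, $AA^\dagger$, and $(A^\dagger A)^T$ share a spectrum, whereas you work with an arbitrary purification and appeal directly to the Schmidt symmetry of the subnormalized pure state $(K_{sp}\otimes\idm_R)\ket{\psi}_{QR}$ --- the same underlying fact in coordinate-free form.
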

\begin{proof}
        See \Cref{app:key_rate_formula}.
\end{proof}

To solve the convex optimization problem in \Cref{eq:asymptotic_primal}, numerical algorithms typically require the gradient information of the objective function. The gradient of $W$ for $\rho > 0$ is given as
\begin{aeq}
	\nabla W(\rho) &= - \sum_{s,p} \cK_{sp}^{\dagger} \Big(\log \cK_{sp}(\rho)\Big) + \sum_p \cK_p^{\dagger}\Big( \log \cK_p(\rho)\Big),
\end{aeq}where $\cK_{sp}^{\dagger}$ denotes the adjoint map of $\cK_{sp}$ and can be written as $\cK_{sp}^{\dagger}(\rho) := K_{sp}^{\dagger}\rho K_{sp}$. Similarly, $\cK_{p}^{\dagger}$ is the adjoint map of $\cK_{p}$. When $\rho$ is singular, this gradient is not well-defined. We can use the same perturbation technique used in \cite{Winick2018} for the quantum relative entropy formulation to define the gradient for every $\rho \geq 0$. In particular, we denote the depolarizing channel with a depolarizing probability $p$ by $\cD_p$, which is defined as
	\begin{aeq}\label{eq:depolarizing_channel}
		\cD_{p}(\rho) = (1-p) \rho + p \Tr(\rho) \frac{\idm}{d} \, ,
		\end{aeq}where $d$ is the dimension of the Hilbert space relevant for $\rho$. We denote the perturbed version of the objective function as $W_{\epsilon}(\rho)$ with a perturbation $\epsilon$, which is defined as
\begin{align}\label{eq:perturbed_objective_function}
	W_\epsilon(\rho)= \sum_{s,p} H\left(\cK^\epsilon_{sp}(\rho)\right) - \sum_p H\left( \cK^\epsilon_p(\rho)\right),
\end{align}where $\cK^\epsilon_{sp} = \cD_{\epsilon} \circ \cK_{sp} $ and $\cK^\epsilon_{p} = \cD_{\epsilon} \circ \cK_p$. In \Cref{app:key_rate_formula}, we also discuss the continuity of our objective function under this small perturbation. In particular, we have
\begin{align}
	\label{eq:continuity_bound}
	|W(\rho) - W_\epsilon(\rho)| \leq \eta_\epsilon\quad \text{with} \quad \eta_\epsilon = (\abs{S}+1)\abs{P} \epsilon(d-1)\log \frac{d}{\epsilon(d-1)},
\end{align}where $\abs{S}$ and $\abs{P}$ denote the size of alphabets for $S_i$ and $P_i$, respectively.

\subsection{An algorithm derived from the asymptotic numerical optimization}\label{subsec:first_algorithm}
		
\Cref{alg:Algorithm1_min_tradeoff_function} for finding a min-tradeoff function has the same spirit as the algorithm for finding the asymptotic key rate in Ref. \cite{Winick2018}. We prove it provides a valid min-tradeoff function in \Cref{prop:alg1_correctness}. In \Cref{prop:alg1:tightness}, we show that each construction of min-tradeoff function gives us tight asymptotic key rate for the observed statistics $\vect{q}_0$ that we use in the construction. 
\begin{breakablealgorithm}
	\caption{Algorithm for constructing the min-tradeoff functions based on the asymptotic key rate method}
\label[algorithm]{alg:Algorithm1_min_tradeoff_function}
	\textbf{Inputs:} \\[0.1cm]
	\hspace{0.3cm}
	
	\begin{tabular}{ l l}
		$\vect{q}_0$ & A given probability distribution in $\mbP(\cX)$ \\
		$\{M_x: x \in \cX\}$ & Bipartite POVM used for testing
	\end{tabular}
	
	\vspace{0.3cm}
	
	\textbf{Output:} \\[0.1cm]
	\hspace{0.3cm}
	\begin{tabular}{ l l}
		$\vect{y^\star}$ & A vector in $\mbR^\abs{\cX}$ which defines a min-tradeoff function by $f_\epsilon(\vect{q}) := \langle \vect{q},\vect{y^\star}\rangle - \eta_\epsilon$.
	\end{tabular}
	
	\vspace{0.3cm}
	
	\textbf{Algorithm:}
	\begin{enumerate}
		\addtocounter{enumi}{0}
		\itemsep0em
		\item Consider the convex optimization  $r_\epsilon(\vect{q}_0) := \min_{\rho \in \Sigma_i(\vect{q}_0)} W_\epsilon(\rho)$ with the true optimal solution $\rho^{\text{opt}}$. Solve the optimization (e.g. by Frank-Wolfe algorithm) and obtain a near-optimal solution $\rho_\epsilon^{\star}$ with the perturbation error $\epsilon \in (0,1/(e(d-1))]$ determined by the algorithm.
		
		\item Let $W_{\epsilon}^{\text{lin}}(\rho):= W_\epsilon(\rho_\epsilon^\star) + \Tr [\nabla W_\epsilon(\rho_\epsilon^\star) \cdot (\rho - \rho_\epsilon^\star)]$ be the linearization of the function $W_\epsilon$ at the point $\rho_\epsilon^\star$. It can be equivalently written as
		\begin{align}
			\label{eq: algorithm tmp2}
			W_{\epsilon}^{\text{lin}}(\rho) = \Tr[O_\epsilon\rho] \quad \text{with}\quad O_\epsilon:= \left(W_\epsilon(\rho_\epsilon^\star) - \Tr \big[\nabla W_\epsilon(\rho_\epsilon^\star) \cdot \rho_\epsilon^\star\big]\right) \idm +\nabla W_\epsilon(\rho_\epsilon^\star).
		\end{align} 
		Since $W_\epsilon(\rho)$ is a convex function in $\rho$, we know that $W_{\epsilon}^{\text{lin}}(\rho) \leq W_\epsilon(\rho)$, $\forall \rho$, and $W_{\epsilon}^{\text{lin}}(\rho_\epsilon^\star) = W_\epsilon(\rho_\epsilon^\star)$.
		
		\item Consider the SDP $\min_{\rho \in \Sigma_i(\vect{q}_0)} W_{\epsilon}^{\text{lin}}(\rho)$ whose dual SDP is given by
		\begin{align}\label{eq:EAT_algorithm_linearization}
			\max_y\ \langle \vect{q}_0, \vect{y}\rangle \quad \text{subject to} \quad \sum_{x \in \cX} \vect{y}(x) M_x \leq O_\epsilon,\ \vect{y} \in \mbR^{\abs{\cX}}.
		\end{align}
		Solve the dual program and obtain an optimal solution $\vect{y^\star}$.
		
		\item Construct the min-tradeoff function by $f_\epsilon(\vect{q}) := \langle \vect{q},\vect{y^\star}\rangle - \eta_\epsilon$.
	\end{enumerate}
\end{breakablealgorithm}

\begin{remark}
Note that the first term of $O_\epsilon$ in \Cref{eq: algorithm tmp2} always vanishes, i.e., $W_\epsilon(\rho_\epsilon^\star) - \Tr \big[\nabla W_\epsilon(\rho_\epsilon^\star) \cdot \rho_\epsilon^\star\big] = 0$ for any $\rho_\epsilon^\star$. This can be shown by expanding the definitions.
\end{remark}

\begin{remark}[Strong duality for SDP in \Cref{eq:EAT_algorithm_linearization}]\label{rmk: slater condition}
Let $\lambda_{\min}$ be the smallest eigenvalue of $O_\epsilon$. It follows that $(\lambda_{\min} - 1) \idm  < O_\epsilon$. Since $\sum_x M_x = \idm$, $(\lambda_{\min} - 1, \lambda_{\min} - 1, \cdots, \lambda_{\min} - 1)$ is a strictly feasible solution for~\Cref{eq:EAT_algorithm_linearization}. As long as $\Sigma_i(\q_0)$ is non-empty, the Slater's condition is satisfied \cite[Theorem 1.18]{Watrous2018}. Thus, the strong duality holds for the SDP in~\Cref{eq:EAT_algorithm_linearization}. This means that the dual problem in \Cref{eq:EAT_algorithm_linearization} gives the same optimal value as the primal problem which is $\min_{\rho \in \Sigma_i(\vect{q}_0)} W_{\epsilon}^{\text{lin}}(\rho)$.
\end{remark}

\begin{remark}
Note that the min-tradeoff function constructed by \Cref{alg:Algorithm1_min_tradeoff_function} depends on the input choice of $\q_0 \in \PP(\cX)$. In the end we need to optimize over $\q_0$ to get the best key rate (similar to~\cite[Eq.~(32)]{Arnon2018}).
\end{remark}

In the following we show that the function constructed in \Cref{alg:Algorithm1_min_tradeoff_function} is indeed a valid min-tradeoff function. 

\begin{shaded}
\begin{prop}[Correctness]\label{prop:alg1_correctness}
Let $\epsilon \in (0,1/(e(d'-1))]$ and assume that $\sum_i \vect{y^\star}(i) \Gamma_i \leq O_\epsilon$ is satisfied. Then $f_\epsilon$ constructed from \Cref{alg:Algorithm1_min_tradeoff_function} is a valid min-tradeoff function.
\end{prop}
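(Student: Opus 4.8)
The plan is to verify directly that the function $f_\epsilon(\q) := \langle \q, \vect{y^\star}\rangle - \eta_\epsilon$ returned by \Cref{alg:Algorithm1_min_tradeoff_function} satisfies the defining inequality of a min-tradeoff function (\Cref{defn:mintradeofffunction}), i.e.\ that $f_\epsilon(\q) \leq \min_{\nu \in \Sigma_i(\q)} H(S_i|P_i R)_\nu$ for every $\q \in \PP(\cX)$. First I would dispose of the trivial case: if $\Sigma_i(\q) = \emptyset$ the right-hand side is $+\infty$ by convention and there is nothing to prove, so I fix $\q$ with $\Sigma_i(\q) \neq \emptyset$. Recalling from the review of the asymptotic formulation (\Cref{eq:asymptotic_primal}) that giving Eve the purification of $\rho_{Q_i}$ is the worst case, one has $\min_{\nu \in \Sigma_i(\q)} H(S_i|P_i R)_\nu = r_\infty(\q) = \min\{W(\rho) : \rho \geq 0, \ \Tr[\rho M_x] = \q(x)\}$. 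It therefore suffices to show $f_\epsilon(\q) \leq W(\rho)$ for every feasible $\rho \geq 0$ with $\Tr[\rho M_x] = \q(x)$, and then take the minimum over such $\rho$.

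The core is a short chain of inequalities applied to such a $\rho$. First, the continuity estimate \Cref{eq:continuity_bound}, which is valid precisely because the stated range $\epsilon \in (0, 1/(e(d'-1))]$ keeps $\epsilon(d-1) \leq 1/e$ so that $\eta_\epsilon$ controls $|W-W_\epsilon|$, gives $W(\rho) \geq W_\epsilon(\rho) - \eta_\epsilon$. Next, convexity of $W_\epsilon$ forces its linearization at $\rho_\epsilon^\star$ to lie below it, so by the identity in Step 2 of \Cref{alg:Algorithm1_min_tradeoff_function} one has $W_\epsilon(\rho) \geq W_\epsilon^{\text{lin}}(\rho) = \Tr[O_\epsilon \rho]$. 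Finally, the assumed dual feasibility $\sum_x \vect{y^\star}(x) M_x \leq O_\epsilon$ together with $\rho \geq 0$ yields $\Tr[O_\epsilon \rho] \geq \sum_x \vect{y^\star}(x)\Tr[M_x \rho] = \sum_x \vect{y^\star}(x)\q(x) = \langle \q, \vect{y^\star}\rangle$, where the middle equality uses feasibility of $\rho$. Stringing these together, $W(\rho) \geq \langle \q, \vect{y^\star}\rangle - \eta_\epsilon = f_\epsilon(\q)$, which is exactly what is needed; minimizing over feasible $\rho$ then completes the argument.

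Each estimate is individually routine, so the interesting work is in lining up the hypotheses rather than in any single computation. I would emphasize the role of the feasibility assumption $\sum_x \vect{y^\star}(x) M_x \leq O_\epsilon$: because $\vect{y^\star}$ is produced by a numerical SDP solver, the returned vector need not exactly satisfy the constraint of \Cref{eq:EAT_algorithm_linearization}, and the final inequality collapses without it, which is why it appears as a hypothesis to be verified a posteriori (its counterpart in the proposition is the condition $\sum_i \vect{y^\star}(i)\Gamma_i \leq O_\epsilon$). The one genuinely nontrivial ingredient I would cite rather than reprove is the reduction $\min_{\nu \in \Sigma_i(\q)} H(S_i|P_i R)_\nu = r_\infty(\q)$, namely that the conditional entropy is minimized by letting Eve hold a purification of $\rho_{Q_i}$; this monotonicity-under-extension fact is what lets the single-system optimization over $\rho$ stand in for the optimization over all $\nu \in \Sigma_i(\q)$, and it is the step that really uses the tensor-product structure of the EAT channel acting only on $Q_i$.
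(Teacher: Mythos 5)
Your proof is correct and follows essentially the same route as the paper's: the identical chain of estimates (dual feasibility $\sum_x \vect{y^\star}(x) M_x \leq O_\epsilon$, then the linearization lying below the convex function $W_\epsilon$, then the continuity bound \Cref{eq:continuity_bound}), merely written in the reverse direction, bounding $W(\rho)$ from below rather than $f_\epsilon(\q)$ from above. Your explicit treatment of the case $\Sigma_i(\q) = \emptyset$ and of the purification reduction $\min_{\nu \in \Sigma_i(\q)} H(S_i|P_iR)_\nu = r_\infty(\q)$ only makes precise what the paper leaves implicit in its overloaded use of $\Sigma_i(\q)$.
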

\end{shaded}
\begin{proof}
To check whether $f_\epsilon$ is a valid min-tradeoff function, according to \Cref{defn:mintradeofffunction}, one needs to check that $f_\epsilon$ is an affine function and it satisfies \Cref{eqn:mintradeofffunction}. It is clear that $f_\epsilon$ is a real-valued affine function by construction. It remains to check \Cref{eqn:mintradeofffunction}. For any $\q \in \PP(\cX)$ and any $\rho\in \Sigma_i(\q)$, it holds
\begin{align}
  f_\epsilon(\q)  =\<\q,\vect{y}^\star\> - \eta_\epsilon & = \sum_i \vect{y}^\star(i) \vect{q}(i) - \eta_\epsilon \\
  & = \sum_i \vect{y}^\star(i) \Tr[\rho \Gamma_i] - \eta_\epsilon\\
  & = \Tr \big[\big(\sum_i \vect{y}^\star(i) \Gamma_i \big) \rho\Big] - \eta_\epsilon\\
  & \leq \Tr [O_\epsilon \rho] - \eta_\epsilon\\
  & = W_{\epsilon}^{\lin}(\rho) - \eta_\epsilon\\
  &  \leq W_\epsilon(\rho) - \eta_\epsilon\\
  & \leq W(\rho),
\end{align}
where the second line follows by the assumption that $\rho \in \Sigma_i(\q)$, the third line follows by the linearity of trace, the fourth line follows by the assumption that $\vect{y}^\star$ satisfying the constraint $\sum_i \vect{y}^\star(i)  \Gamma_i \leq O_\epsilon$, the fifth line follows by definition of $W_{\epsilon}^{\lin}$, the sixth line follows as $W$ is a convex function and $W_{\epsilon}^{\lin}$ is a linearization of $W$, and the last line follows by the continuity bound in \Cref{eq:continuity_bound}. Minimizing over all $\rho \in \Sigma_i(\q)$, we have
\begin{align}
  f_\epsilon(\q) \leq r(\q) := \min_{\rho \in \Sigma_i(\q)} W(\rho).
\end{align}
As this holds for any $\q \in \PP(\cX)$, we conclude that $f_\epsilon$ is a valid min-tradeoff function.
\end{proof}

\begin{prop}[Tightness]\label{prop:alg1:tightness}
Let $\epsilon\in (0,1/(e(d'-1))]$ and assume that $\Sigma_i(\q_0)$ is non-empty and the first step of \Cref{alg:Algorithm1_min_tradeoff_function} is solved exactly, i.e., $\rho_\epsilon^\star = \rho^{\opt}$. Then $\lim_{\epsilon \to 0^+} f_\epsilon(\q_0) = r(\q_0)$.
\end{prop}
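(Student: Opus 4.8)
The plan is to establish the single chain of (in)equalities $f_\epsilon(\q_0) = \langle \q_0, \vect{y}^\star\rangle - \eta_\epsilon = r_\epsilon(\q_0) - \eta_\epsilon$, where $r_\epsilon(\q_0) := \min_{\rho \in \Sigma_i(\q_0)} W_\epsilon(\rho)$, and then to take $\epsilon \to 0^+$ using that both $\eta_\epsilon \to 0$ and $r_\epsilon(\q_0) \to r(\q_0)$. First I would recall that by construction $f_\epsilon(\q_0) = \langle \q_0, \vect{y}^\star\rangle - \eta_\epsilon$, so the whole argument reduces to identifying the dual optimum $\langle \q_0, \vect{y}^\star\rangle$ with the exact perturbed rate $r_\epsilon(\q_0)$.

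For that identification I would invoke two facts. By the strong duality for the SDP in \Cref{eq:EAT_algorithm_linearization} (which holds by Slater's condition whenever $\Sigma_i(\q_0) \neq \emptyset$, as established in the preceding remark), the dual optimal value equals the primal value $\min_{\rho \in \Sigma_i(\q_0)} W_\epsilon^{\lin}(\rho)$. Then, because the first step is solved exactly, $\rho_\epsilon^\star = \rho^{\opt}$ is the true minimizer of the convex function $W_\epsilon$ over the convex set $\Sigma_i(\q_0)$; the first-order optimality condition for convex minimization, namely $\Tr[\nabla W_\epsilon(\rho_\epsilon^\star)(\rho - \rho_\epsilon^\star)] \geq 0$ for all feasible $\rho$, says precisely that the affine function $W_\epsilon^{\lin}$ also attains its minimum over $\Sigma_i(\q_0)$ at $\rho_\epsilon^\star$. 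Combining this with the identity $W_\epsilon^{\lin}(\rho_\epsilon^\star) = W_\epsilon(\rho_\epsilon^\star)$ noted in \Cref{alg:Algorithm1_min_tradeoff_function}, I get $\min_{\rho \in \Sigma_i(\q_0)} W_\epsilon^{\lin}(\rho) = W_\epsilon(\rho_\epsilon^\star) = r_\epsilon(\q_0)$, and hence $\langle \q_0, \vect{y}^\star\rangle = r_\epsilon(\q_0)$, giving $f_\epsilon(\q_0) = r_\epsilon(\q_0) - \eta_\epsilon$.

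Finally I would pass to the limit. The uniform continuity bound \Cref{eq:continuity_bound}, $|W(\rho) - W_\epsilon(\rho)| \leq \eta_\epsilon$, transfers to the minima over the fixed feasible set $\Sigma_i(\q_0)$ by the standard sandwiching argument, yielding $|r_\epsilon(\q_0) - r(\q_0)| \leq \eta_\epsilon$. Since the explicit expression for $\eta_\epsilon$ in \Cref{eq:continuity_bound} behaves like $\epsilon \log(1/\epsilon)$ and therefore tends to $0$ as $\epsilon \to 0^+$, I conclude $|f_\epsilon(\q_0) - r(\q_0)| \leq |r_\epsilon(\q_0) - r(\q_0)| + \eta_\epsilon \leq 2\eta_\epsilon \to 0$, which is the claim. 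The only genuinely delicate step is the middle one: one must use the \emph{exact-solve} hypothesis $\rho_\epsilon^\star = \rho^{\opt}$ to guarantee via first-order optimality that the minimizer of the linearized objective coincides with that of $W_\epsilon$ (so that the dual value is exactly $r_\epsilon(\q_0)$ rather than merely a lower bound); without this hypothesis one would only obtain $f_\epsilon(\q_0) \le r_\epsilon(\q_0) - \eta_\epsilon$ and the tight limit could fail.
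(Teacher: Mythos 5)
Your proof is correct and follows essentially the same route as the paper: strong duality from Slater's condition identifies $\langle \q_0,\vect{y}^\star\rangle$ with $\min_{\rho \in \Sigma_i(\q_0)} W_\epsilon^{\lin}(\rho)$, the exact-solve hypothesis plus first-order optimality gives $\min_{\rho \in \Sigma_i(\q_0)} W_\epsilon^{\lin}(\rho) = W_\epsilon(\rho_\epsilon^\star) = r_\epsilon(\q_0)$, and then $\epsilon \to 0^+$ yields the claim. The only differences are cosmetic: you derive the first-order optimality condition directly from convexity where the paper cites \cite{Winick2018}, and you spell out the sandwiching argument $|r_\epsilon(\q_0)-r(\q_0)|\leq \eta_\epsilon$ that the paper leaves implicit.
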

\begin{proof}
Since $\rho_\epsilon^\star$ is the minimizer of $W_\epsilon$ over the convex set $\Sigma_i(\q_0)$, by~\cite[Lemma 2 and Eq.~(86),  equivalently Eq.~(95)]{Winick2018}, we know that $\min_{\rho \in \Sigma_i(\q_0)} \Tr [\nabla W_\epsilon(\rho_\epsilon^\star) \cdot (\rho - \rho_\epsilon^\star)] = 0$. Thus $ \min_{\rho \in \Sigma_i(\q_0)} W_{\epsilon}^{\lin}(\rho) = W_\epsilon(\rho_\epsilon^\star) = r_\epsilon(\q_0)$. Moreover, since $\Sigma_i(\q_0)$ is non-empty, by~\Cref{rmk: slater condition} the strong duality of \Cref{eq:EAT_algorithm_linearization} holds. Then we have $f_\epsilon(\q_0) + \eta_\epsilon = \langle\q_0,\vect{y}^\star\rangle = \min_{\rho \in \Sigma_i(\q_0)} W_{\epsilon}^{\lin}(\rho) = r_\epsilon(\q_0)$. This implies $\lim_{\epsilon \to 0^+} f_\epsilon(\q_0) = \lim_{\epsilon \to 0^+} r_\epsilon(\q_0) = r(\q_0)$.
\end{proof}

	\subsection{An alternative algorithm that uses second-order information}\label{subsec:second_algorithm}

	 To apply \Cref{alg:Algorithm1_min_tradeoff_function} in the finite-key rate calculation, we need to optimize the choice of min-tradeoff functions by heuristically picking different starting points. As such, while the previous algorithm will reproduce the asymptotic key rate in the infinite-key limit, it may behave poorly in the small block-size regime if the optimization over the starting point $\vect{q}_0$ is not done properly. This limitation motivates us to design a new algorithm that considers the effect of the choice of min-tradeoff function on second-order correction terms when constructing the min-tradeoff function. While ideally we would look at the key length expression in \Cref{thm:keylengthwithoutsmoothing} and collect all terms that depend on the choice of min-tradeoff function $f$ for our new objective function, this would involve an additional optimization over the choices of $\beta$ and $\delta$. This is because the terms $V$ and $K_{\beta}$ depend on both the min-tradeoff function $f$ and the choice of $\beta$, and there are constant terms that depend on both $\beta$ and $\delta$. In principle, one could optimize the min-tradeoff function $f$ and two parameters $\beta$ and $\delta$ simultaneously to obtain the best possible key rate. However, because such a joint optimization is challenging, we choose to consider a simpler scenario that we now explain.
	 
	 As we will claim the R\'{e}nyi entropy key length obtains better key lengths than the smooth min-entropy key length, we build an algorithm that should behave best for the smooth min-entropy key length, \Cref{thm:keyLengthWithSmoothing} [see \Cref{eq:KeyLength_with_smoothing}], which already has one fewer free parameters than in \Cref{thm:keylengthwithoutsmoothing}. To simplify further, we follow \cite{Dupuis2019} in fixing a specific choice of $\alpha$ that leads to a simplified statement of the EAT \cite[Theorem V.2, Eq. (28)]{Dupuis2019}. Again using the fact that $S_i$ is classical, and dividing relevant terms by the number of signals, $n$, we have the following candidate for the objective function to use to to generate a near-optimal min-tradeoff function:
	 
\begin{aeq}\label{eq:eat_min_tradeoff_related_terms_cc_version}
	\cL(f):= f(\vect{q}_0) - \frac{1}{\sqrt{n}}c(f)-\frac{1}{n} c'(f), 
	\end{aeq}where $c(f)$ and $c'(f)$ are defined as
	\begin{aeq}
    c(f) &= \sqrt{2\ln(2)}\Big[\log(2d_S^2+1)+\sqrt{2+\Var{f}}\Big]\sqrt{1-2\log(\vBar\rho[\Omega])},\\
    c'(f) &= \frac{35[1-2\log(\vBar\rho[\Omega])]}{[\log(2d_S^2+1)+\sqrt{2+\Var{f}}]^2}2^{\log(d_S) + \Max{f}-\MinSigma{f}}\ln^3\Big(2^{\log d_S+\Max{f}-\MinSigma{f}}+e^2\Big).
	\end{aeq}

	For a general min-tradeoff function $f$, $\Var{f}$ can be upper bounded by a function of $\Max{f}$ and $\Min{f}$ as 
\begin{aeq}
	\Var{f} \leq \frac{1}{4}[\Max{f}-\Min{f}]^2.
	\end{aeq}We note that in the application of EAT to security proofs of QKD protocols (see \Cref{thm:keyLengthWithSmoothing}), one replaces $\rho[\Omega]$ by $\vAcc$ and uses $\vBar/4$ in the place of $\vBar$. We also note while the term $c'(f)$ has a complicated dependence on the min-tradeoff function $f$, its contribution to the key rate is much smaller than the first two terms of \Cref{eq:eat_min_tradeoff_related_terms_cc_version} due to the $1/n$ dependence. Therefore, for simplicity of our method, we ignore the $c'(f)$ term in our objective function for the purpose of constructing min-tradeoff function. We make another simplification in the $c(f)$ term by dropping the term related to $\log(2d_S^2+1)$ since it does not depend on the min-tradeoff function. With all these simplifications, we would like to consider the following objective function:
	\begin{aeq}\label{eq:alg2_objective_generalf}
	\tilde{\cL}(f)&:= f(\vect{q}_0) - \frac{1}{\sqrt{n}}\sqrt{2\ln(2)}\sqrt{1-2\log(\vAcc\vBar/4)}\sqrt{2+\frac{1}{4}[\Max{f}-\Min{f}]^2}\\
	& = f(\vect{q}_0) - \frac{2}{\sqrt{n}}\sqrt{\ln(2)}\sqrt{1-2\log(\vAcc\vBar/4)}\sqrt{1+\frac{1}{8}[\Max{f}-\Min{f}]^2}.
	\end{aeq}Since a min-tradeoff function $f$ can be fully specified by a vector $\vect{f}$, it is the case that $\Max{f} = \max(\vect{f}) := \max_x \vect{f}(x)$ and similarly, $\Min{f} = \min(\vect{f}) := \min_x \vect{f}(x)$ (see \Cref{eq:definition_minmaxvar} for definitions). This leads to the following optimization problem
\begin{aeq}\label{eq:eat_second_optimization}
	\maximize_{\vect{f}} & \ \vect{f} \cdot \vect{q}_0  - c_0 \sqrt{1+ c_1^2 [\max(\vect{f}) - \min(\vect{f})]^2} \\
	\st 
	& \ \sum_{x} \vect{f}(x) \Tr(\rho M_x) \leq W(\rho) \  \forall \rho \in \DM(Q_i),
	\end{aeq}where $c_0, c_1$ are two constants to be set for generality, and $\{M_x\}$ is the POVM used for testing. In particular, the set of values, $c_0 =2\sqrt{\ln2}\sqrt{1-2\log(\vAcc\vBar/4)}/ \sqrt{n}, c_1=\frac{1}{2\sqrt{2}}$, correspond to the optimization of \Cref{eq:alg2_objective_generalf}. We emphasize that in deriving this simplified expression, we have made a heuristic choice. As we will show later, any vector $\vect{f}$ returned by this optimization gives a valid min-tradeoff function. This means that our heuristic choice does not affect the correctness of a min-tradeoff function. However, it might give sub-optimal min-tradeoff functions that lead to looser key rates. We note that it is possible to make further improvements, particularly for optimizing the min-tradeoff function for \Cref{thm:keylengthwithoutsmoothing}.

The reason that we introduce two constants $c_0$ and $c_1$ is to make our algorithm general enough to allow the construction of crossover min-tradeoff function (which is defined later in \Cref{def:crossover_minTF}) as well as the normal min-tradeoff function in the statements of EAT. If our algorithm is used to find a crossover min-tradeoff function $g$, which can be used to reconstruct a min-tradeoff function $f$ by \Cref{eq:crossover_to_normal_min_tradeoff_1,eq:crossover_to_normal_min_tradeoff_2}, then $\Var{f}$ is upper bounded by $ \frac{1}{\gamma}[\Max{g}-\Min{g}]^2$ according to \Cref{eq: min crossover min relation 4}. Moreover, $g(\vect{q}') =f(\vect{q})$ for every $\vect{q} \in \PP(\cX)$, where $\vect{q}'$ is renormalized after removing the position corresponding to the $\perp$ symbol from $\vect{q}$. Thus, it is the case that the problem for finding a crossover min-tradeoff function still has the form of \Cref{eq:eat_second_optimization}. For crossover min-tradeoff functions, these two constants take the following values: $c_0 = 2\sqrt{\ln2}\sqrt{1-2\log(\vAcc\vBar/4)}/ \sqrt{n}$, $c_1 = 1/\sqrt{2\gamma}$. 

 The optimization problem in \Cref{eq:eat_second_optimization} has an infinite number of constraints that we cannot really handle since the constraint needs to be held for every density operator. However, we can use the Fenchel duality \index{Fenchel duality}(see \Cref{sec:fenchel_duality}) to show it is the dual problem of some primal problem that we can actually solve. Let $\{M_x: x \in \cX\}$ denote the relevant bipartite POVM of a protocol. \Cref{app_sec:algorithm2details} presents a detailed derivation of the primal-dual problem relation including strong duality. The primal problem corresponding to \Cref{eq:eat_second_optimization} is 
\begin{aeq}\label{eq:algorithm2_primal}
	\minimize_{\rho, \vect{\xi}}\ & \  W(\rho)  - \sqrt{c_0^2-\Big[\sum_{x}\vect{\xi}(x)\Big]^2/(4c_1^2)}\\
	\text{subject to } \ & \  \Tr(\rho)=1\\
	& \  -\vect{\xi}(x) \leq \Tr(\rho M_x) - \vect{q}_0(x)\leq \vect{\xi}(x) \\
	& \ \sum_x \vect{\xi}(x) \leq 2c_0 c_1\\
	& \ \rho \geq 0 \ , \ \vect{\xi} \in \mbR^{\abs{\cX}} \ .
	\end{aeq}
	
	We note that the primal problem in \Cref{eq:algorithm2_primal} is very similar to the primal problem in the asymptotic case in \Cref{eq:asymptotic_primal}, but the difference is that the state $\rho$ is not required to reproduce the statistics $\vect{q}_0$ exactly. Instead there is a penalty term in the objective function when $\Tr[\rho M_x] \neq \q_0(x)$ and the additional constraint ensures that the penalty term is well-defined. We also note that one may need to use the perturbed version of $W$. The same perturbation procedure used for \Cref{alg:Algorithm1_min_tradeoff_function} is applicable here for the function $W$. For simplicity of the presentation, we ignore the perturbation here. 
	
	To solve the primal problem in \Cref{eq:algorithm2_primal}, it is often useful to use the gradient information. As the gradient of $W(\rho)$ is already discussed in the previous algorithm (including necessary perturbation), we just write the derivative of $\sqrt{c_0^2-\Big[\sum_{j}\vect{\xi}(j)\Big]^2/(4c_1^2)}$ with respect to $\vect{\xi}(k)$ here as
\begin{aeq}
	\frac{\partial}{\partial \vect{\xi}(k)} \sqrt{c_0^2-\Big(\sum_{j}\vect{\xi}(j)\Big)^2/(4c_1^2)} = -\frac{1}{4 c_0^2 c_1^2} \frac{1}{\sqrt{1-[\sum_j \vect{\xi}(j)]^2/(4c_0^2c_1^2)}} \sum_{j} \vect{\xi}(j) \ .
	\end{aeq}

	We can follow a similar two-step procedure as in \cite{Winick2018} to solve the primal problem in \Cref{eq:algorithm2_primal}. In the first step, we try to obtain a near-optimal solution $\rho^\star$ and in the second step, we solve the dual problem of the linearization of the objective function at the point $\rho^\star$. The dual problem of the linearization at a point $\rho^\star \in \DM(Q_i)$ is
\begin{aeq}\label{eq:algorithm2_linearized_dual}
	\maximize_{\vect{f}} \ & \ \vect{f}   \cdot \vect{q}_0 -  c_0 \sqrt{1+ c_1^2 [\max(\vect{f}) - \min(\vect{f})]^2} \\
	\st \ & \ \sum_{x} \vect{f}(x) M_{x} \leq \nabla W(\rho^\star) \ .
 	\end{aeq}We rewrite this problem as an SDP by introducing slack variables $u, v, t$:
 \begin{aeq}\label{eq:algorithm2_linearized_dualSDP}
 	\maximize_{\vect{f},u,v,t} \ & \ \vect{f} \cdot \vect{q}_0  -t \\
 	\st \ & \  \sum_{x} \vect{f}(x) M_{x} \leq \nabla W(\rho^\star) \\
 	& \ v\vect{1} \leq \vect{f} \leq u\vect{1} \\
 	& \ \begin{pmatrix}
 		t- c_0 & c_0c_1(u-v)\\
 		c_0c_1(u-v) & t+c_0
 		\end{pmatrix} \geq 0 \ .
 	\end{aeq}

	We now present our second algorithm for constructing min-tradeoff function in \Cref{alg:Algorithm2_min_tradeoff_function}.
	\begin{breakablealgorithm}\caption{The second algorithm for constructing min-tradeoff functions}
    \label[algorithm]{alg:Algorithm2_min_tradeoff_function}
	\textbf{Inputs:} \\[0.1cm]
	\hspace{0.3cm}
	
	\begin{tabular}{ l l}
		$\vect{q}_0$ & A given probability distribution in $\mbP(\cX)$ \\
		$c_0, c_1$ & Two constants related to the EAT correction terms \\
		$\{M_x: x \in \cX\}$ & Bipartite POVM used for testing
	\end{tabular}
	
	\vspace{0.3cm}
	
	\textbf{Output:} \\[0.1cm]
	\hspace{0.3cm}
	\begin{tabular}{ l l}
		$\vect{y^\star}$ & A vector in $\mbR^\abs{\cX}$ which defines a min-tradeoff function by $f(\vect{q}) := \langle \vect{q},\vect{y^\star}\rangle $.
	\end{tabular}
	
	\vspace{0.3cm}
	
	\textbf{Algorithm:}
	\begin{enumerate}
		\addtocounter{enumi}{0}
		\itemsep0em
		\item Use either the Frank-Wolfe method or an interior-point method to solve \Cref{eq:algorithm2_primal} and obtain a nearly optimal solution $\rho^\star$.
		\item Solve the dual SDP problem of the linearization at the point $\rho^\star$ in \Cref{eq:algorithm2_linearized_dualSDP} to obtain $\vect{y}^\star=\vect{f}$.
		\end{enumerate}
	\end{breakablealgorithm}

	Our next task is to show that \Cref{alg:Algorithm2_min_tradeoff_function} constructs a valid min-tradeoff function.
	\begin{shaded}
	\begin{prop}[Correctness]\label{prop:alg2_correctness}
	 Assuming that $\sum_{x} \vect{f}(x) M_{x} \leq \nabla W(\rho^\star)$ is satisfied, the min-tradeoff function $f$ constructed from $\vect{f} = \vect{y}^\star$ returned by \Cref{alg:Algorithm2_min_tradeoff_function} is a valid min-tradeoff function.
	\end{prop}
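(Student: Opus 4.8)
The plan is to follow the same template as the correctness proof for \Cref{alg:Algorithm1_min_tradeoff_function} (\Cref{prop:alg1_correctness}), exploiting the fact that the only structural difference between the two algorithms lies in the feasibility constraint imposed on $\vect{f}$. First I would verify the two defining properties of a min-tradeoff function from \Cref{defn:mintradeofffunction}: that $f$ is affine, which is immediate since $f(\q) = \langle \q, \vect{y}^\star\rangle$ is linear by construction, and that it satisfies the bound \Cref{eqn:mintradeofffunction}, which is the substantive part.

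For the bound, I would fix an arbitrary $\q \in \PP(\cX)$ and an arbitrary $\rho \in \Sigma_i(\q)$, so that $\Tr[\rho M_x] = \q(x)$ for all $x$ and (recalling $\vect{f}=\vect{y}^\star$) the key chain of inequalities reads
\begin{align}
  f(\q) = \langle \q, \vect{y}^\star\rangle = \sum_x \vect{f}(x)\Tr[\rho M_x] = \Tr\Big[\Big(\sum_x \vect{f}(x) M_x\Big)\rho\Big] \leq \Tr[\nabla W(\rho^\star)\rho] = W^{\lin}(\rho) \leq W(\rho),
\end{align}
where the first inequality uses the assumed feasibility constraint $\sum_x \vect{f}(x) M_x \leq \nabla W(\rho^\star)$ together with $\rho \geq 0$. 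The identification of $\Tr[\nabla W(\rho^\star)\rho]$ with the linearization $W^{\lin}(\rho)$ of $W$ at $\rho^\star$ uses the observation in the remark following \Cref{eq: algorithm tmp2} that the constant term $W(\rho^\star) - \Tr[\nabla W(\rho^\star)\rho^\star]$ vanishes; this follows from $\Tr[\nabla W(\rho)\rho] = W(\rho)$, obtained by expanding the gradient and using the adjoint identity $\Tr[\cK^\dagger_{sp}(X)\rho] = \Tr[X\,\cK_{sp}(\rho)]$. The final inequality is just convexity of $W$, whose linearization at $\rho^\star$ lower bounds it. Minimizing over all $\rho \in \Sigma_i(\q)$ gives $f(\q) \leq r(\q) := \min_{\rho \in \Sigma_i(\q)} W(\rho)$, and since $\q$ was arbitrary this establishes \Cref{eqn:mintradeofffunction}.

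I do not expect a serious obstacle here, as the argument is essentially a transcription of \Cref{prop:alg1_correctness}; the only conceptual point worth flagging is why the constraint in \Cref{eq:algorithm2_linearized_dualSDP} is stated directly as $\sum_x \vect{f}(x) M_x \leq \nabla W(\rho^\star)$ rather than against an operator carrying an additive $\idm$ shift as in \Cref{alg:Algorithm1_min_tradeoff_function}, namely precisely the vanishing of the constant term noted above. The remaining technical care is the perturbation: for singular $\rho^\star$ one must replace $W$ by its perturbed version $W_\epsilon$ and carry the continuity penalty $\eta_\epsilon$ through the final line exactly as in \Cref{prop:alg1_correctness}, recovering $f(\q) \leq W_\epsilon(\rho) - \eta_\epsilon \leq W(\rho)$; this was suppressed above in line with the convention adopted for \Cref{alg:Algorithm2_min_tradeoff_function}.
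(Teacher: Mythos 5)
Your proof is correct and follows essentially the same route as the paper's: the feasibility constraint $\sum_x \vect{f}(x) M_x \leq \nabla W(\rho^\star)$ together with $\rho \geq 0$ bounds the linear functional by $\Tr[\nabla W(\rho^\star)\rho]$, which convexity of $W$ (with the vanishing constant term of the linearization) bounds by $W(\rho)$, and minimizing over $\Sigma_i(\q)$ gives the min-tradeoff property. Your explicit justification of the vanishing constant term and the perturbation remark are consistent with what the paper states elsewhere (the remark after \Cref{eq: algorithm tmp2} and the discussion preceding \Cref{alg:Algorithm2_min_tradeoff_function}); the only detail the paper's proof adds is the trivial case $\Sigma_i(\q) = \emptyset$, where the minimum is $\infty$ by convention.
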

	\end{shaded}
\begin{proof}
        From the assumption  $\sum_{x} \vect{f}(x) M_{x} \leq \nabla W(\rho^\star)$, it follows that for any $\rho \in \DM(Q_i)$,
        \begin{aeq}
        \sum_x \vect{f}(x) \Tr(\rho M_x) \leq \Tr(\rho \nabla W(\rho^\star)) \leq W(\rho),
        \end{aeq}where the last inequality follows from the linearization of the function $W$ at the point $\rho^\star$ since $W$ is a convex function. We note that this is exactly the condition for a valid min-tradeoff function since the left-hand side is the min-tradeoff function evaluated at the statistics produced by a state $\rho$ and the right-hand side is the conditional entropy evaluated at the state $\rho$. As this inequality is true for any state, it follows that $f(\vect{q}) \leq \min_{\nu \in \Sigma_i(\vect{q})}H(S_i|P_iR)_{\nu}$ for every $\vect{q} \in \mbP(\cX)$ such that $\Sigma_i(\vect{q}) \neq \emptyset$. We also note that if  $\Sigma_i(\vect{q}) = \emptyset$, the minimum on the right-hand side of \Cref{eqn:mintradeofffunction} is defined as $\infty$ \cite{Dupuis2016} so that $f(\vect{q}) \leq \min_{\nu \in \Sigma_i(\vect{q})}H(S_i|P_iR)_{\nu} = \infty$ is still satisfied in this case. 
\end{proof}
\begin{remark}
    Due to the numerical precision of any solver, $\sum_{x} \vect{f}(x) M_{x} \leq \nabla W(\rho^\star)$ may not be exactly satisfied. In implementation, we relax this constraint by  $\sum_{x} \vect{f}(x) M_{x} + \epsilon \idm \leq \nabla W(\rho^\star)$ for some small $\epsilon$ that is slightly larger than the solver precision. By doing so, we make the value $f(\vect{q})$ smaller than it could be if there were no numerical precision issue. Thus, the correctness is guaranteed even when one takes into account of the numerical precision. 
\end{remark}
\begin{remark}
    By taking into account some second-order correction terms in the objective function, as we will see later, \Cref{alg:Algorithm2_min_tradeoff_function} can produce similar or better key rates than \Cref{alg:Algorithm1_min_tradeoff_function} \emph{without} optimizing the initial choice of $\vect{q}_0 \in \PP(\cX)$, which can save computational time. Moreover, it is also possible to optimize the choice of $\vect{q}_0$ with this algorithm and choose the best one among all selected choices of $\vect{q}_0$. In addition, as we mentioned previously, this algorithm can be further improved since we made several simplifications and ignored some second-order correction terms. While it is possible to do so, adding back more terms will definitely make the optimization problem more complicated and thus a more sophisticated problem formulation is potentially needed. We leave any potential improvement for a future work.  
\end{remark}

\subsection{Crossover min-tradeoff function}\label{sec:crossover-min-tradeoff}
	In practice, the number of testing rounds is typically chosen to be a small fraction of the total signals sent in the QKD protocol. When the number of test rounds becomes sufficiently smaller than the total number of signals, the original version of the EAT (\Cref{prop:EAT2}) is generally dominated by the second-order term because it scales inversely with the testing probability $\gamma$. A solution for this issue is given in Ref. \cite{Dupuis2019}, where authors of \cite{Dupuis2019} present the `crossover min-tradeoff function', which may be used to induce a proper min-tradeoff function that does not generally become dominated by the second-order term when testing probability $\gamma$ is small. For this reason, it is often advantageous to construct the crossover min-tradeoff function first and then reconstruct a normal min-tradeoff function from the crossover version. We review the definitions from \cite{Dupuis2019} for completeness of our presentation and refer to \cite[Section V.A]{Dupuis2019} for further discussion.
	\begin{definition}[Channel with infrequent sampling]
		A channel with testing probability $\gamma \in [0,1]$ is an EAT channel $\cM_{i,Q_i\to S_iP_iX_i}$ such that $\cX = \cX' \cup \{\perp\}$ and that can be expressed as
		\begin{align}
			\cM_{i,Q_i\to S_iP_iX_i}(\cdot) = \gamma \cM_{i,Q_i\to S_iP_iX_i}^{\testround}(\cdot) + (1-\gamma) \cM_{i,Q_i\to S_iP_i}^{\genround}(\cdot) \ox \ketbra{\perp}{\perp}_{X_i},
		\end{align}
		where $\cM_i^{\testround}$ never outputs the symbol $\perp$ on $X_i$.
	\end{definition}
	
	In our case $\M^{\gen}_{i,Q_i \rightarrow S_iP_i}$ 
	is given by the protocol description where $S_i = \overline A_i$ and $P_i = \tA_i\tB_i$. 
	The testing map $\M^{\testround}_{i,Q_i\rightarrow S_iP_iX_i}$ is given by $S_i = \ol{A}_i\ol{B}_i$ and $P_i = \wt{A}_{i}\wt{B}_{i}$ as per \Cref{prot:DDEA}. 
	
	\begin{definition}[Crossover min-tradeoff function]\label{def:crossover_minTF}
		Let $\cM_i$ be a channel with testing probability $\gamma$ as defined above. The crossover min-tradeoff function for $\cM_{i}$ is an affine function $g: \PP(\cX') \to \mathbb{R}$ satisfying
		\begin{align}
			\label{eq: definition of crossover min tradeoff function}
			g(\vect{q}') \leq \min_{\nu \in \Sigma_i'(\vect{q}')} H(S_i|P_iR)_{\nu} \quad \forall \vect{q}' \in \PP(\cX'),
		\end{align}
		where the set of quantum states
		\begin{align}
			\Sigma_i'(\vect{q}'):= \left\{\nu_{S_iP_iX_iR} = (\cM_i \ox \id_{R})(\omega_{Q_{i}R}): \omega \in \DM(Q_{i}\ox R) \, \& \, [(\cM_i^{\testround} \ox \cI_{R})(\omega_{Q_iR})]_{X_i} = \vect{q}'\right\}.
		\end{align}
	\end{definition}
	We note that the difference between the crossover min-tradeoff function and the original min-tradeoff function defined in \Cref{defn:mintradeofffunction} is that we only require the testing rounds to give the correct frequency distribution. 
	
	For each $x \in \cX$, let $\delta_{x} \in \PP(\cX)$ denote the frequency distribution with $\delta_x(x)=1$ and $\delta_x(x') = 0$ for all other $x' \in \cX$ such that $x' \neq x$. The crossover min-tradeoff function $g$ automatically defines a min-tradeoff function $f: \PP(\cX) \to \mathbb{R}$ by \cite{Dupuis2019}:
	\begin{align}\label{eq:crossover_to_normal_min_tradeoff_1}
		f(\delta_x) & = \Max{g} + \frac{1}{\gamma} [g(\delta_x) - \Max{g}] \qquad \forall x \in \cX'\\
		f(\delta_{\perp}) & = \Max{g} \, . \label{eq:crossover_to_normal_min_tradeoff_2}
	\end{align}
	
	Moreover, we have the relations \cite{Dupuis2019}:
	\begin{align}
		\Max{f} & = \Max{g} \label{eq: min crossover min relation 1}\\
		\Min{f} & = (1-\frac{1}{\gamma}) \Max{g} + \frac{1}{\gamma} \Min{g} \label{eq: min crossover min relation 2}\\
		\MinSigma{f} & \geq \Min{g} \label{eq: min crossover min relation 3}\\
		\Var{f} & \leq \frac{1}{\gamma} [\Max{g}-\Min{g}]^2.\label{eq: min crossover min relation 4}
	\end{align}
	
	\subsection{Procedure for key rate calculation}
	
	We now provide an instruction for the finite-key length $\ell$ calculation using \Cref{alg:Algorithm1_min_tradeoff_function} and \Cref{thm:keylengthwithoutsmoothing}.
	\begin{enumerate}
		\item We first pick a frequency distribution $\vect{q}_0$ and then apply \Cref{alg:Algorithm1_min_tradeoff_function} to construct a crossover min-tradeoff function $g$. By solving the dual SDP of the linearized problem, the algorithm returns us a list of dual variables, which are coefficients of the min-tradeoff function $g$.
		\item We construct the min-tradeoff function $f$ needed for EAT by \Cref{eq:crossover_to_normal_min_tradeoff_1} and \Cref{eq:crossover_to_normal_min_tradeoff_2}, and then compute $\min_{\vect{q} \in \mathcal{Q}} f(\vect{q})$ to get the first-order term $h$. 
		\item We evaluate $\Max{g}$, $\Min{g}$ by simply taking the max and min of coefficients.    
		\item We apply \Cref{thm:keylengthwithoutsmoothing} with the relations in \Cref{eq: min crossover min relation 1,eq: min crossover min relation 2,eq: min crossover min relation 3,eq: min crossover min relation 4} to obtain a lower bound. To do so, we optimize the choice of $\beta$ and $\delta$ in \Cref{thm:keylengthwithoutsmoothing} using MATLAB built-in {\tt fmincon} function. This optimization gives us the optimal second-order correction terms for the given choice of min-tradeoff function. 
		\item We repeat this process with a different frequency distribution $\vect{q}_0$ to generate a different min-tradeoff function. We optimize the choice of min-tradeoff functions in a simple heuristic way by picking several different $\vect{q}_0$'s. 
	\end{enumerate}
	Similarly, we can use \Cref{alg:Algorithm2_min_tradeoff_function} and \Cref{thm:keylengthwithoutsmoothing}. The procedure is similar to the above except that we do not need to optimize the initial choice $\vect{q}_0$ (although one can still do it if it can give a better choice of the min-tradeoff functions). To apply \Cref{thm:keyLengthWithSmoothing}, we also have a similar procedure except that we optimize the choice of $\alpha$ in the statement of \Cref{thm:keyLengthWithSmoothing} with MATLAB built-in {\tt fminbnd} function.

\section{Examples}\label{sec:example}

We first present examples with announcements based on only seeded randomness and then consider an example with more sophisticated announcements. In the first example, we apply our method to the entanglement-based BB84 protocol with an ideal entangled photon source. In the second example, we provide a finite key analysis of six-state four-state protocol \cite{Tannous2019}. We use both these two examples to demonstrate the effectiveness of our method, compare performances of two algorithms and compare two versions of the EAT (smoothed min-entropy versus sandwiched R\'{e}nyi entropy). In the third example, we then show the key rates of high-dimensional  protocols with two mutually unbiased bases (MUBs), i.e.\ analogs of BB84 using qudit systems. We show that for these protocols the Entropy Accumulation Theorem can outperform the postselection technique \cite{Christandl2009}. In the last example, we show the key rates of the entanglement-based BB84 with a realistic entangled photon source. 
	
In all examples, for the purpose of illustration, we {make the following choices in the numerics. We set $\vBar=\vPA=\vEC = \vEA = \frac{1}{4} \times 10^{-8}$. To apply Theorem \ref{thm:keylengthwithoutsmoothing}, we set $\vSec =\vBar+\vPA$ and $\vAcc= \vSec + \vEC$. To apply Theorem \ref{thm:keyLengthWithSmoothing}, we set $\vAcc = \vBar+\vPA + \vEC$. This results in both cases using the same value of $\vAcc$ and thus providing the same security guarantee (namely, guaranteeing the same $\vAcc$-secure key).
In all examples, to estimate the cost of error correction $\leak_{\vEC}$, we set $\leak_{\vEC}=n f_{\EC}H(Z|\hat{Z}) + \log(2/\vEC)$, where $f_{\EC}$ is the inefficiency of an error correction code and $H(Z|\hat{Z})$ is the von Neumann entropy of Alice's key (in a single round) conditioned on Bob's guess $\hat{Z}$. In the simulation, we set $f_{\EC} = 1.16$ for all examples.

We also remark that our implementation allows us to define the acceptance set $\Q$ (see \Cref{prot:physicalDDQKD}) as $\Q = \{F\in \mbP(\cX) : \norm{F-\bar{F}}_1 \leq \xi_t \}$, where $\bar{F}$ is the expected frequency distribution in an honest implementation and $\xi_t$ is the acceptance threshold. As mentioned in \Cref{subsec:completeness}, key rates should decrease as the $\xi_t$ grows. We present an example of this scaling for qubit-based BB84 in \Cref{sec:example_BB84} and in other examples we specify the choice for $\xi_{t}$ we have chosen.

We highlight that we optimize both $\beta$ and $\delta$ in the statement of \Cref{thm:keylengthwithoutsmoothing} when we use the key-length expression from this theorem. Similarly, we optimize the choice of $\alpha$ in the statement of \Cref{thm:keyLengthWithSmoothing} when using it. As discussed previously, this optimization is done with \textsc{Matlab}'s {\tt fmincon} function for the former case and {\tt fminbnd} function for the latter case.
	
\subsection{Qubit-based BB84}\label{sec:example_BB84}	
We apply our method to analyze a simple entanglement-based BB84 example based on the qubit implementation to compare different EAT statements, two algorithm variants for the construction of min-tradeoff functions, and study the scaling of the rate as a function of the size of our acceptance set. We assume that Alice's system and Bob's system are qubits and do not consider loss for this example. 
	
	\subsubsection{Protocol description and simulation}
 We consider the following setup for this protocol: 
 \begin{enumerate}
 \item[(1)] Alice chooses the $Z$ basis with a probability $p_z$ and the $X$ basis with a probability $1-p_z$. Bob chooses to measure in the $Z$ basis with a probability $p_z$ and in $X$ basis with a probability $1-p_z$. 
 \item[(2)] Key-generation rounds are where they both choose $Z$ basis. The testing rounds are where they both choose $X$ basis. They discard rounds with mismatched basis choices. 
 \item[(3)] They perform parameter estimation before error correction. For parameter estimation, we use the phase error POVM $\{E_X, \idm-E_X\}$ where $E_X$ is the $X$-basis error operator. This corresponds to statistics $\{e_x, 1-e_x\}$ where $e_x$ is the $X$-basis error rate. 
	\end{enumerate}
	
	We note that in this protocol setup, the testing probability $\gamma$ is given by the probability that both Alice and Bob choose the $X$ basis, that is, $\gamma = (1-p_z)^2$. The sifting factor for the key rate is $p_z^2$. We consider an efficient version of BB84 \cite{Lo2005efficient} by choosing $p_z$ to be close to $1$. This also corresponds to infrequent testing in our setup. We remark that since basis choices are made based on seeded random numbers and they are chosen independently in each round, their announcements trivially satisfy the Markov condition. 
	
	In our simulation, we use the depolarizing channel to model noises. The simulated state that we use to calculate the observed statistics is
	\begin{equation}\label{eq:rho_sim_dp}
		\begin{aligned}
			\rho^{\text{sim}} = (1-\frac{3}{2}Q) \dyad{\Phi^+}{\Phi^+} + \frac{Q}{2}(\dyad{\Phi^-}{\Phi^-}+\dyad{\Psi^+}{\Psi^+} + \dyad{\Psi^-}{\Psi^-}),
		\end{aligned}
	\end{equation}where $\ket{\Phi^+}, \ket{\Phi^-},\ket{\Psi^+}$ and $\ket{\Psi^-}$ are Bell states, and $Q$ is the quantum bit error rate. The statistics $\vect{q}_0$ that we need to give as an input to the min-tradeoff function construction algorithm is then given by $\vect{q}_0(j) = \Tr(\rho^{\text{sim}}M_j)$ for Alice and Bob's joint POVM $\{M_j\}$.

	\subsubsection{Results}
	
		\begin{figure}[ht]
		\centering
		\includegraphics[width = 0.75\linewidth]{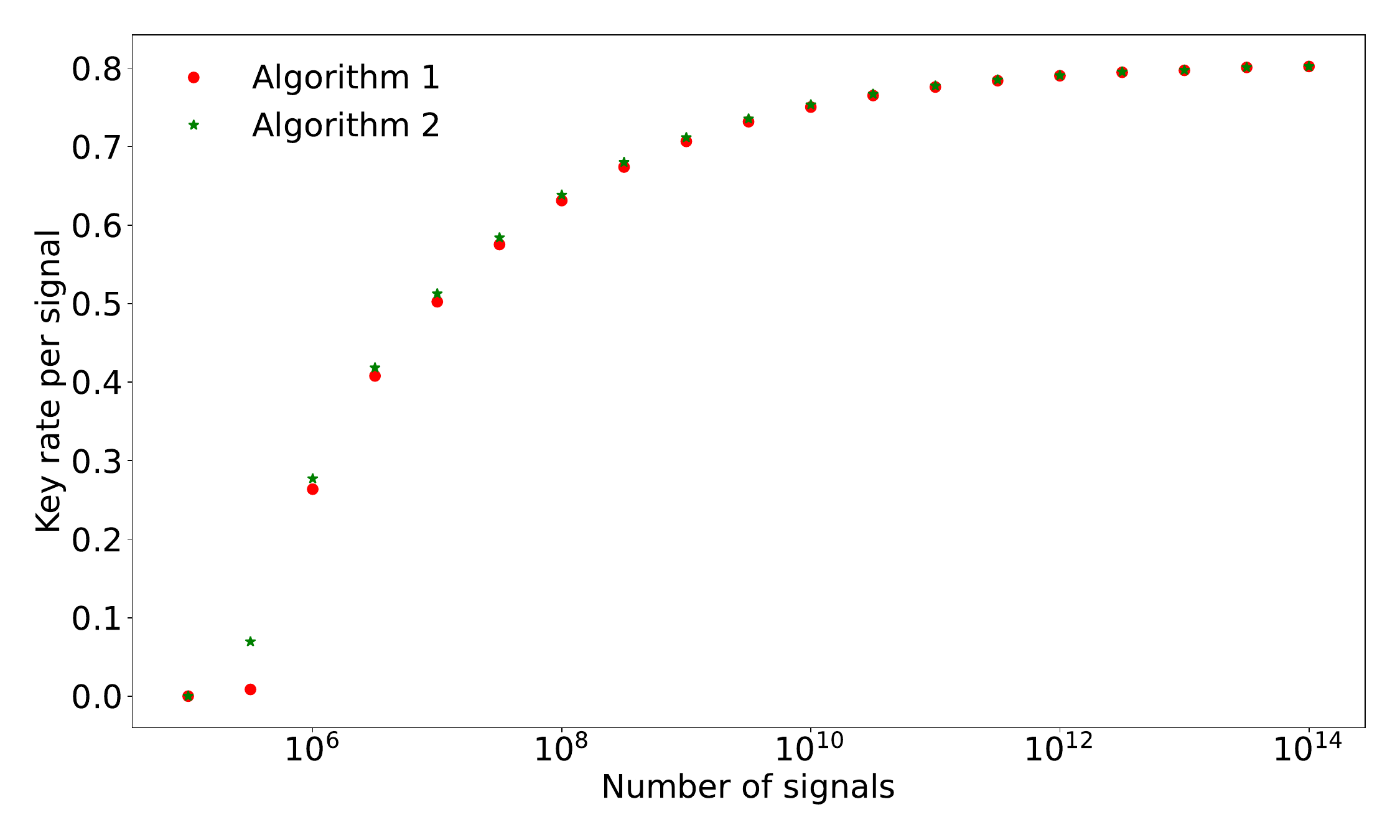}
		\caption{\label{fig:BB84algcomp}  Key rate versus the number of signals for the qubit BB84 protocol to compare two algorithms for the generation of min-tradeoff functions. The quantum bit error rate is set to $Q = 0.01$ in the simulation, and $\zeta_{t} = 0.005$ for defining the acceptance set $\cQ$. The red circle marker corresponds to \Cref{alg:Algorithm1_min_tradeoff_function} while the green star marker corresponds to \Cref{alg:Algorithm2_min_tradeoff_function}. The key rate formula is based on \Cref{thm:keylengthwithoutsmoothing}. Other protocol parameters are optimized as described in the main text.}
	\end{figure}
	
	When applying \Cref{alg:Algorithm1_min_tradeoff_function} to the finite-key rate calculation, we optimize the min-tradeoff functions by choosing different $\vect{q}_0 = (Q, 1-Q)$ where $Q$ is searched over the interval $[0.005, 0.07]$ with a step size $0.005$. For each min-tradeoff function generated from a particular value of $Q$, we calculate the key rate, which is the key length $\ell$ divided by the total number of signals $n$. We then choose the maximum key rate among all possible choices of min-tradeoff functions generated in this way. This way of coarse-grained search over $\vect{q}_0$ is a heuristic approach to optimize the choice of min-tradeoff functions. We find these choices of $\vect{q}_0$ in general give us good results. However, a more sophisticated optimization over $\vect{q}_0$ might potentially improve the results presented here. 
	
	For \Cref{alg:Algorithm2_min_tradeoff_function}, we use the interior-point method from \textsc{Matlab}'s {\tt fmincon} function for the first step and then use \textsc{CVX} for the linearized dual problem. We note that we can also use the Frank-Wolfe algorithm for the first step instead of the and the interior-point method. When we do so, results are similar (slightly worse) for this example.
	
	For both algorithms, we optimize $p_z$ by optimizing $\gamma =(1-p_z)^2 = 10^{-k}$ where $k$ is chosen from the interval $[2,4]$ with a step size of $0.1$. For block sizes larger than or equal to $10^{10}$, we allow $p_z$ to be closer to 1 by searching $k$ in the interval $[3,7]$ with a step size of $0.2$. Again, those choices are heuristic and could be potentially improved. Nevertheless, they are sufficient for our purpose.

	\begin{figure}[h]
		\centering
		\includegraphics[width = 0.7\linewidth]{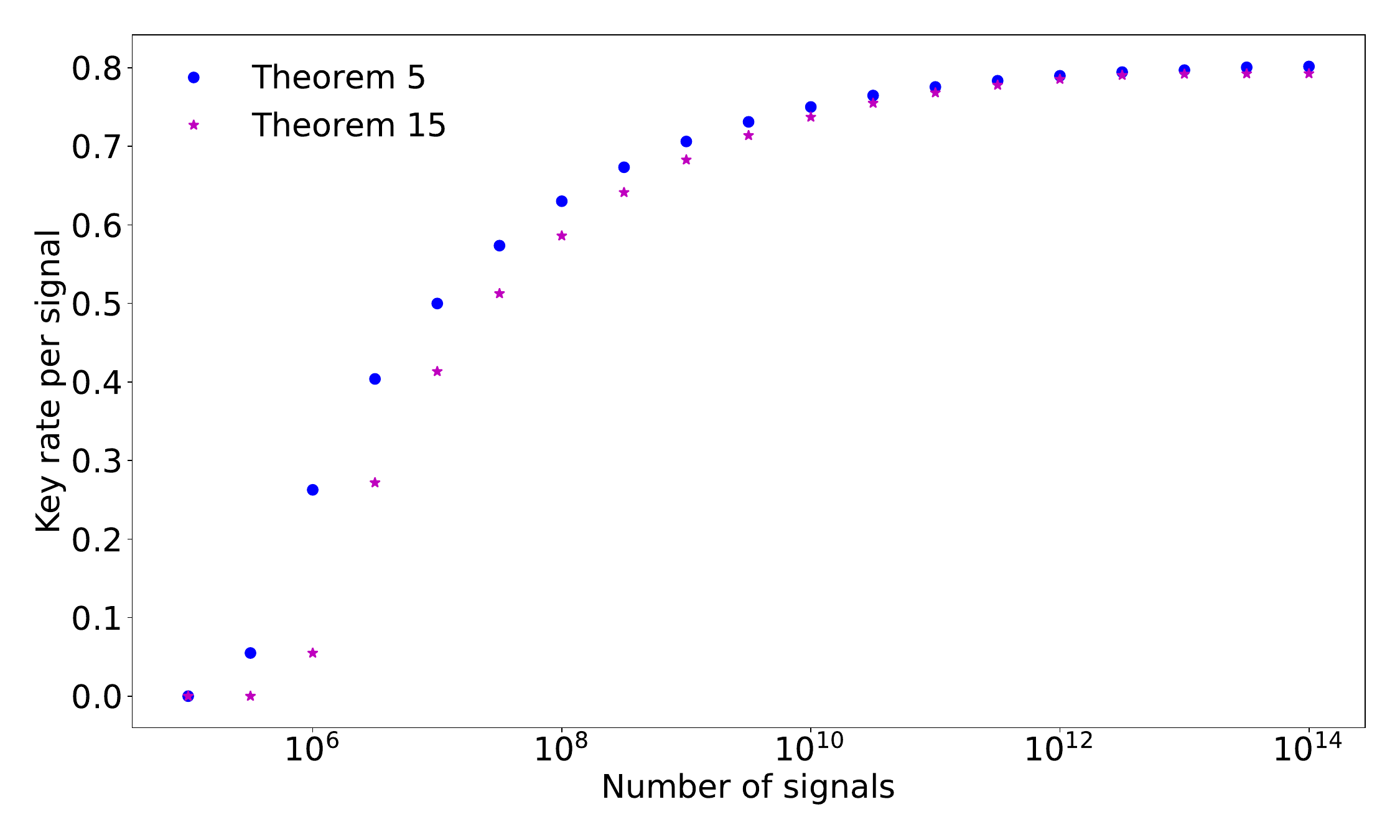}
		\caption{\label{fig:BB84thmcomp} Key rate versus the number of signals for the qubit BB84 protocol to compare different second-order correction terms using \Cref{alg:Algorithm2_min_tradeoff_function}. The quantum bit error rate is set to $Q = 0.01$ in the simulation, and $\zeta_{t} = 0.005$ for defining the acceptance set $\cQ$. The blue circle marker corresponds to the key rate formula given in \Cref{thm:keylengthwithoutsmoothing} \cite{Dupuis2021} while the magenta star marker corresponds to the key rate formula from \Cref{thm:keyLengthWithSmoothing} \cite{Dupuis2019}.}
	\end{figure}
	
	In \Cref{fig:BB84algcomp}, we compare the key rates obtained from these two algorithms with \Cref{thm:keylengthwithoutsmoothing}. Interestingly, both algorithms give similar results while \Cref{alg:Algorithm2_min_tradeoff_function} seems to be slightly better in terms of the smallest number of signals for nonzero key rates. Intuitively, we expect \Cref{alg:Algorithm2_min_tradeoff_function} to behave better as it takes into account some second-order correction terms, while \Cref{alg:Algorithm1_min_tradeoff_function} only looks for the min-tradeoff function that gives the highest leading-order term. As we perform an optimization of the choice of min-tradeoff function by different initial $\vect{q}_0$'s for \Cref{alg:Algorithm1_min_tradeoff_function}, we observe that the optimal finite key rate from \Cref{alg:Algorithm1_min_tradeoff_function} is often given by a min-tradeoff function that does not give the highest value for the leading-order term. Due to the optimization of $\vect{q}_0$, the running time of \Cref{alg:Algorithm1_min_tradeoff_function} is much longer than that of \Cref{alg:Algorithm2_min_tradeoff_function}.
	
	In \Cref{fig:BB84thmcomp}, we compare key rates given by \Cref{thm:keyLengthWithSmoothing,thm:keylengthwithoutsmoothing} when we use \Cref{alg:Algorithm2_min_tradeoff_function}. One can see that \Cref{thm:keylengthwithoutsmoothing} gives better key rates. This confirms our conjecture that the EAT based on the sandwiched $\alpha$-R{\'e}nyi entropy is tighter than the EAT based on the smooth min-entropy for lower-order correction terms. The intuition for this is that the Entropy Accumulation Theorem for smooth entropies is first derived in terms of $H_{\alpha}$ sandwiched R\'{e}nyi entropies and then converted to statements about smooth entropies \cite{Dupuis2016, Dupuis2019}. It follows that avoiding the conversion to smooth entropies should only improve the key rate.

    \begin{figure}[h]
		\centering
		\subfloat[]{\includegraphics[width = 0.5\linewidth]{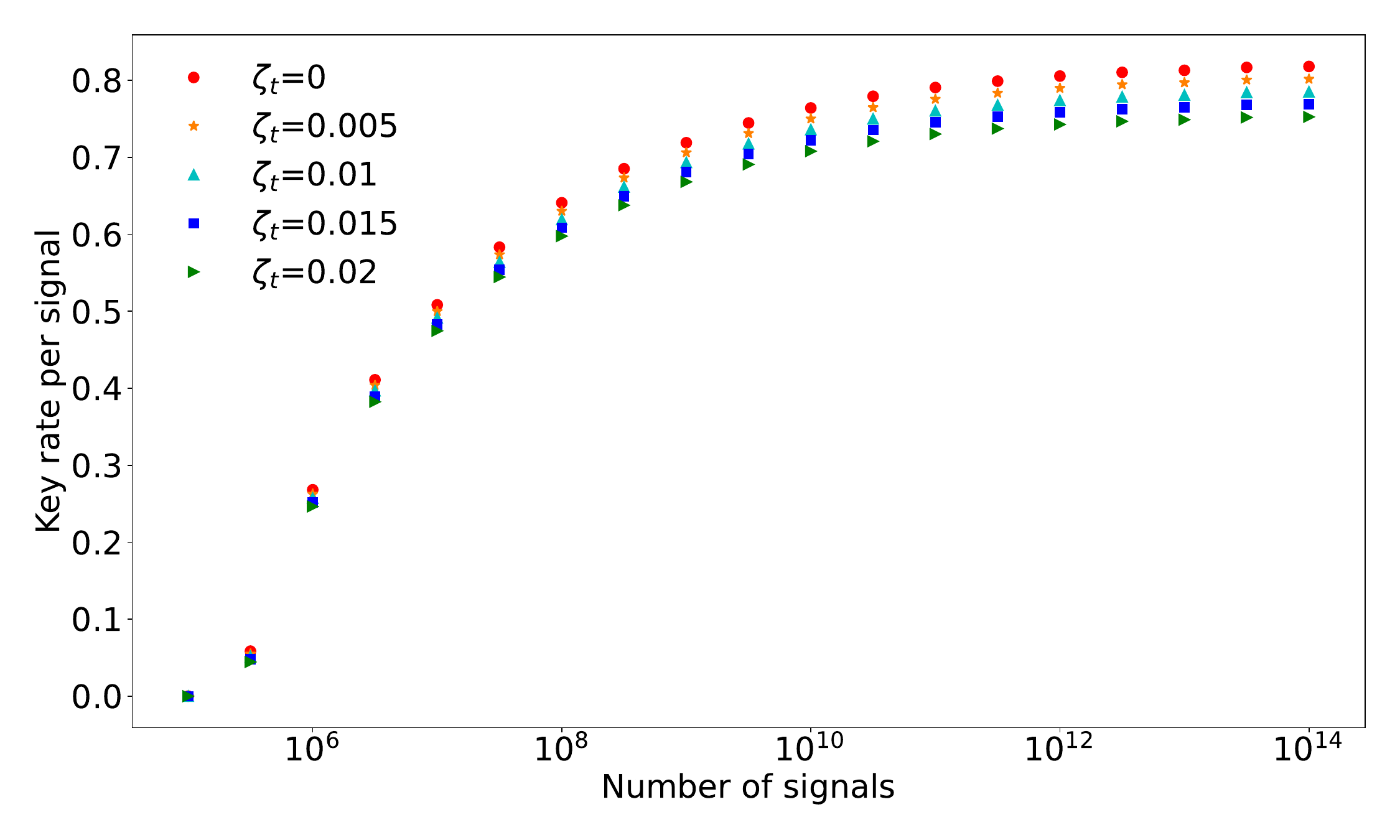}}
        \subfloat[]{\includegraphics[width = 0.5\linewidth]{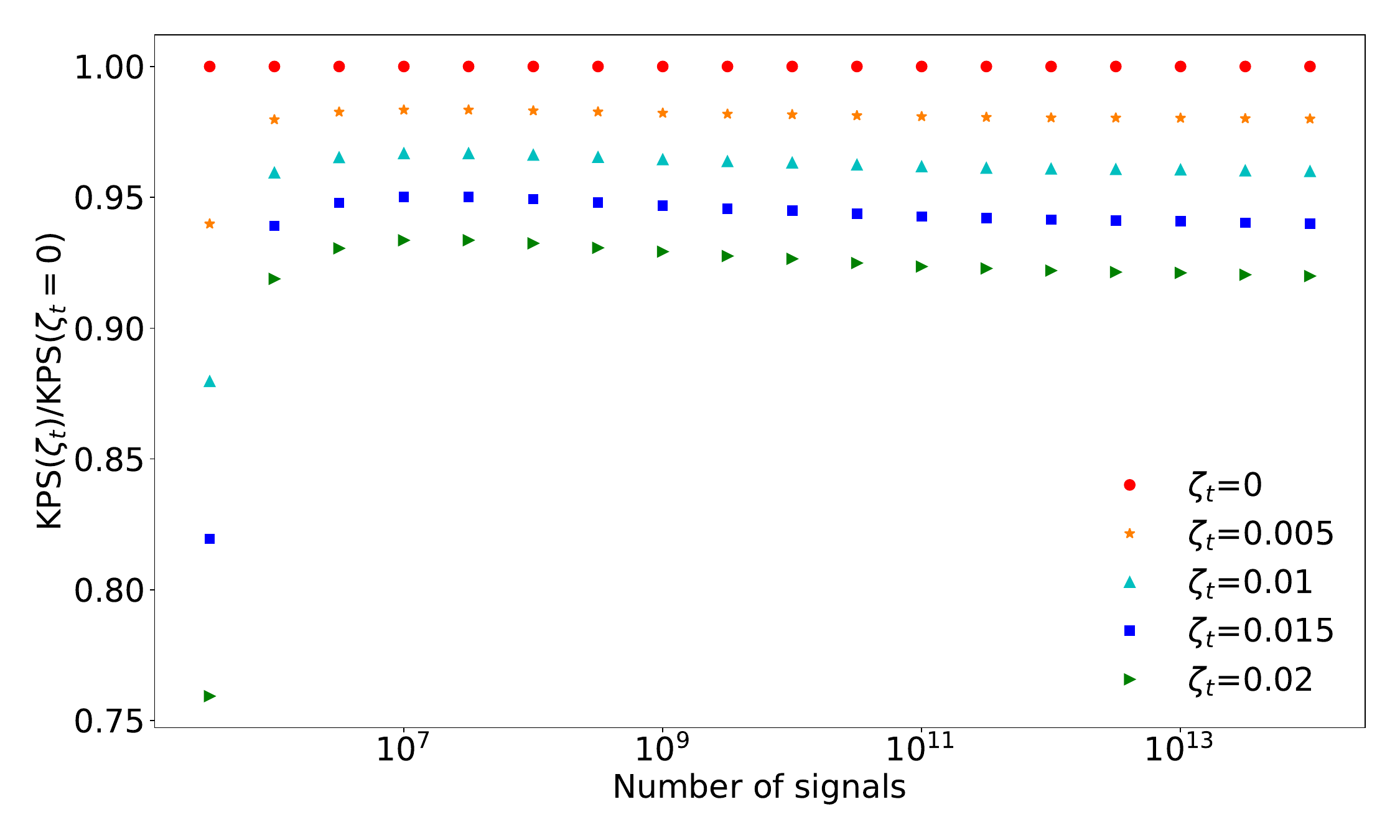}}
		\caption{\label{fig:BB84tplot}(a) Key rate versus the number of signals for the qubit BB84 protocol for different `sizes' of acceptance set controlled by parameter $\zeta_{t}$. (b) The fraction of the key rate per signal (KPS) at $\zeta_{t} = 0$ achieved for other values of $\zeta_{t}$. In both subplots, the quantum bit error rate is set to $Q = 0.01$ in the simulation. Other protocol parameters as described in the main text.}
	\end{figure}

    In \Cref{fig:BB84tplot}, we compare the key rate as a function of the size of the acceptance set, which is relevant when designing practical protocols where there exists a tradeoff between completeness and key rate. In \Cref{fig:BB84tplot}(a), it appears that the fluctuation of key rate per signal is minimal for smaller number of signals and then separates as the signal length grows. In \Cref{fig:BB84tplot}(b), the key rate per signal is normalized to the case $\zeta_{t}=0$ to look at this behaviour more closely. In this case we indeed see that at sufficiently large block size the fraction between each change in $\zeta_{t}$ is relatively constant. However, at the smaller block sizes, there seems to be less of a fluctuation between they key rates. It is possible this has to do with the relevance of the second-order term in small block sizes, but it also may be a property of our algorithm. The ability to investigate such properties with our numerics may be useful in designing protocols with small block size.

	\subsection{Six-state four-state protocol}
	Another interesting example is the six-state four-state protocol \cite{Tannous2019}. In the free space implementation of QKD protocols with the polarization encoding, there is naturally one axis that is stable against turbulence while other axes are slowly drifting. The idea of reference-frame-independent QKD \cite{Laing2010} was motivated to address this issue and it was shown that such a protocol can be robust to slow drifts. The basic idea is that if the reference frame drift can be described by a unitary rotation, by using the information from an additional basis, one can effectively undo this unitary rotation. Here we consider the six-state four-state protocol \cite{Tannous2019} which has the reference-frame-independent feature. In particular, we consider the entanglement-based version and assume Alice and Bob have qubits. We do not consider losses in this example. 
	
	\subsubsection{Protocol and simulation}
	We analyze the entanglement-based version of the six-state four-state protocol \cite{Tannous2019} assuming that Alice and Bob each receive a qubit in each round for simplicity of calculation. In this protocol, Alice measures the state in one of the $X, Y$ and $Z$ bases according to the probability distribution $((1-p_z)/2,(1-p_z)/2, p_z)$, while Bob measures in one of $X$ and $Z$ bases with the probability distribution $(1-p_z, p_z)$. Similarly to the previous qubit BB84 example, when both Alice and Bob choose the $Z$ basis, this round is used for key generation. When Alice chooses $X$ or $Y$ basis and Bob chooses $X$ basis, this round is used for parameter estimation. All other rounds are discarded. We consider an efficient version by setting $p_z$ to be close to $1$. In the testing step of the protocol, we use the POVM that contains error rates in the $XX$ and $YX$ bases.
	
	For the simulation, we assume the $Z$ basis is free of misalignment. The misalignment happens in the $X$-$Y$ plane of the Bloch sphere. Thus, on top of the qubit depolarizing channel, we also apply a unitary rotation along the $Z$ axis to Bob's qubit in order to model the misalignment. We choose the angle of rotation to be  $11^{\circ}$ in the simulation. The state used to obtain the observed statistics from this simulation is 
	\begin{aeq}
	\rho^{\text{sim}} = (\idm_A \ox e^{i\theta \sigma_Z}) \rho^{\text{dp}}(\idm_A \ox e^{-i\theta \sigma_Z}),
	\end{aeq}where $\sigma_Z$ is the Pauli-$Z$ matrix, $\theta$ is the angle of rotation and $\rho^{\text{dp}}$ is the state given in \Cref{eq:rho_sim_dp} (that is, the simulated state in the qubit-based BB84 example).

	\subsubsection{Results}
	
	In the application of \Cref{alg:Algorithm1_min_tradeoff_function} to the finite-key rate calculation, we optimize the min-tradeoff functions by choosing different $\vect{q}_0$'s. We adopt a heuristic approach to optimize the choice of min-tradeoff functions. Each $\vect{q}_0$ is created by choosing a different depolarizing probability $Q$, which is searched over the interval $[0.005, 0.07]$ with a step size $0.005$. It is a heuristic choice that serves our purpose. For each min-tradeoff function generated from a particular value of $Q$, we calculate the key rate and then choose the maximum key rate among them. For \Cref{alg:Algorithm2_min_tradeoff_function}, we use same procedure as for the qubit BB84 example in \Cref{sec:example_BB84} including the optimization over the choice of $p_z$.

	\begin{figure}[ht]
		\centering
		\includegraphics[width = 0.7\linewidth]{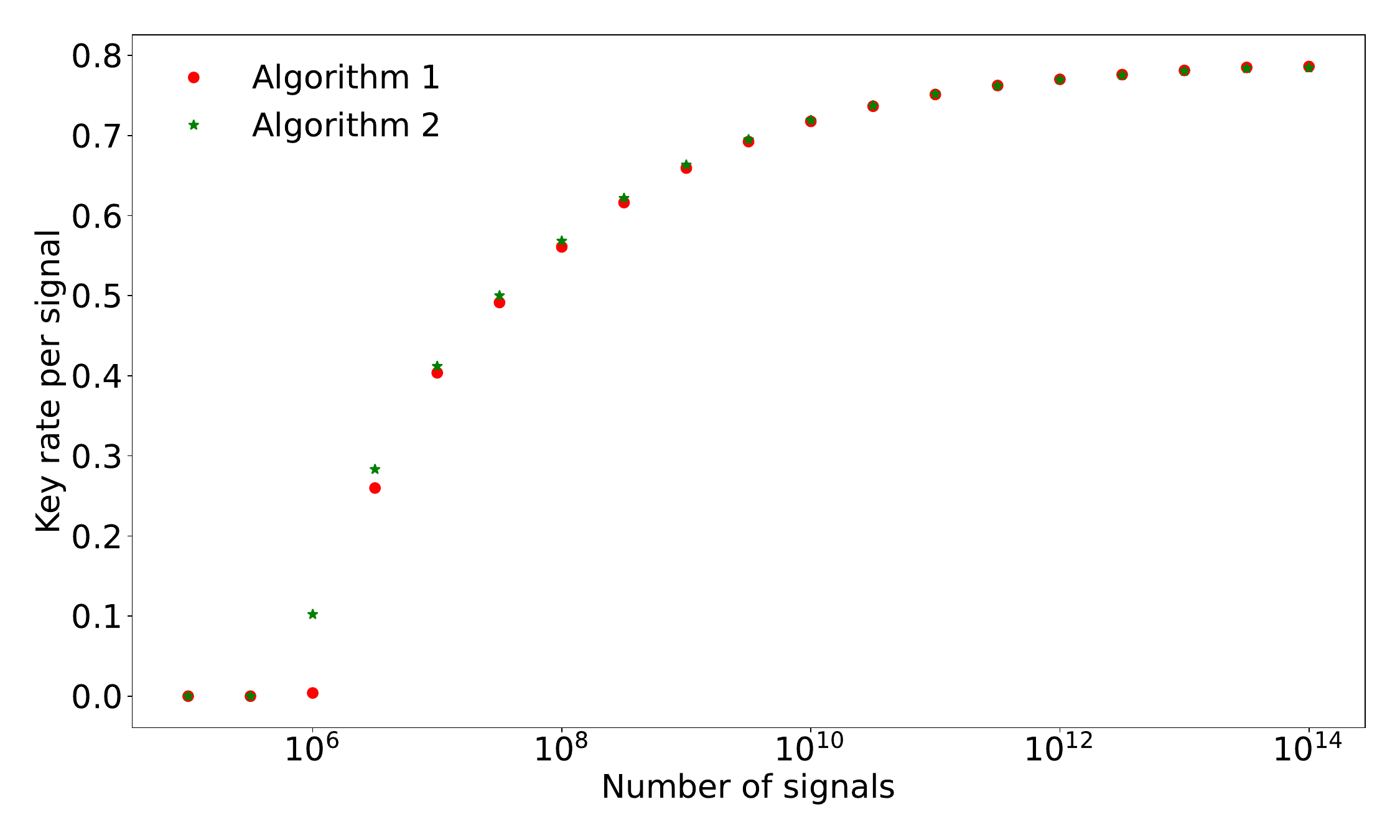}
		\caption{\label{fig:sixfour_alg_cmp} Key rate versus the number of signals for the six-state four-state protocol with two algorithms. The quantum bit error rate is set to $Q = 0.01$ in the simulation, and $\zeta_{t} = 0.005$ for defining the acceptance set $\cQ$. This plot is obtained by using the key rate formula given in \Cref{thm:keylengthwithoutsmoothing} \cite{Dupuis2021}. The red circle marker corresponds to the min-tradeoff function construction by \Cref{alg:Algorithm1_min_tradeoff_function}  while the green star marker corresponds to the min-tradeoff function construction by \Cref{alg:Algorithm2_min_tradeoff_function}.} 
	\end{figure}
	
	In \Cref{fig:sixfour_alg_cmp}, we compare two algorithms for the min-tradeoff function construction. For this plot, similar to qubit-based BB84 example, we perform additional initial point $\q_0$ optimization for \Cref{alg:Algorithm1_min_tradeoff_function} while we do not optimize $\q_0$ for \Cref{alg:Algorithm2_min_tradeoff_function}.  Like \Cref{fig:BB84algcomp}, both algorithms give similar key rates.
	
	\begin{figure}[ht]
		\centering
		\includegraphics[width = 0.7\linewidth]{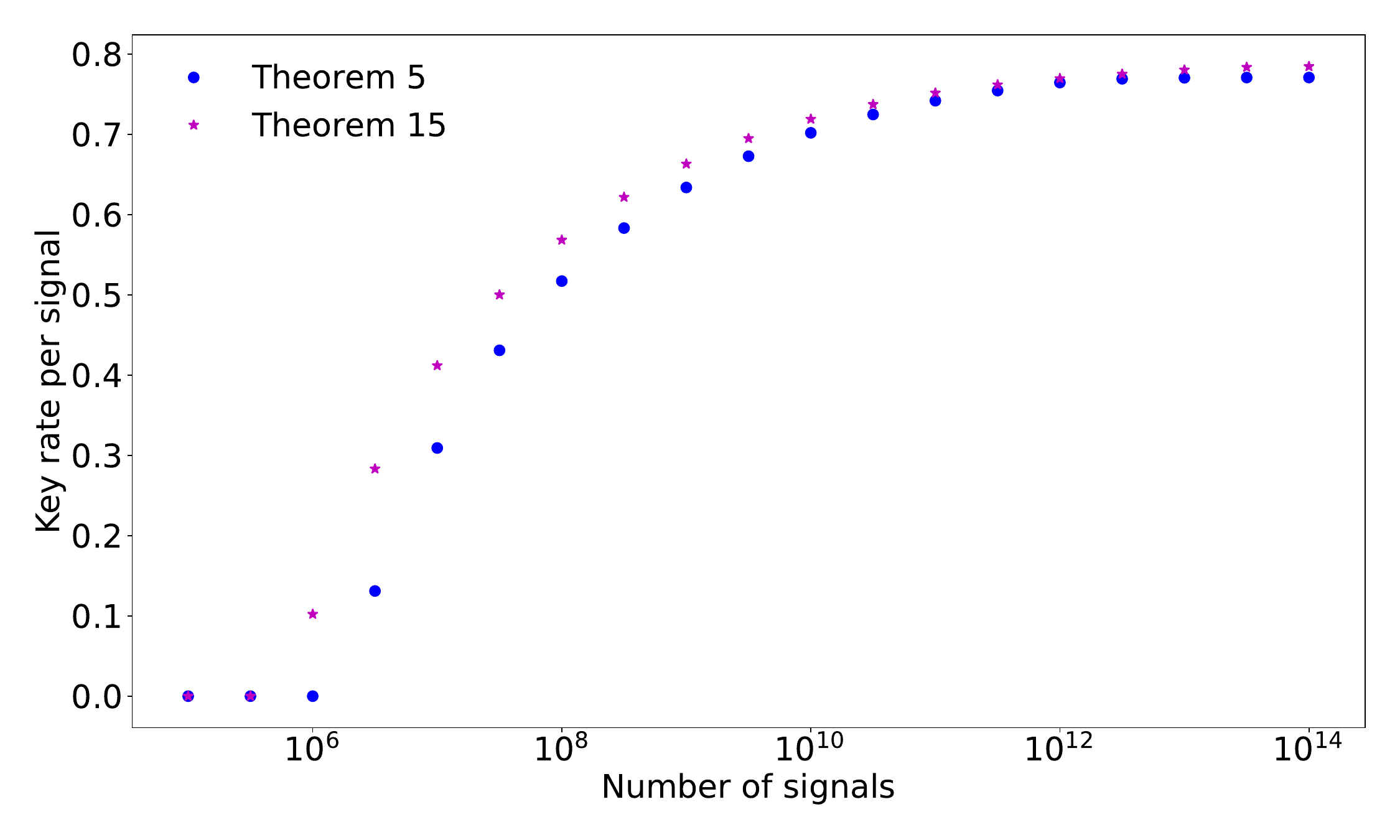}
		\caption{\label{fig:sixfour_thm_cmp} Key rate versus the number of signals for the six-state four-state protocol with different second-order correction terms. The quantum bit error rate is set to $Q = 0.01$ in the simulation, and $\zeta_{t} = 0.005$ for defining the acceptance set $\cQ$. This plot is obtained by using \Cref{alg:Algorithm2_min_tradeoff_function}. The red circle marker corresponds to the key rate formula given in \Cref{thm:keylengthwithoutsmoothing} \cite{Dupuis2021} while the green star marker corresponds to the key rate formula from \Cref{thm:keyLengthWithSmoothing} \cite{Dupuis2019}.}
	\end{figure}
	
	In \Cref{fig:sixfour_thm_cmp}, we compare key rates for \Cref{thm:keyLengthWithSmoothing,thm:keylengthwithoutsmoothing} and observe the same behavior as the qubit-based BB84 example in \Cref{fig:BB84thmcomp}. As explained in the qubit-based BB84 example in \Cref{fig:BB84thmcomp}, this behavior is not surprising since the sandwiched R{\'e}nyi entropy was used in the middle step of the proofs of \Cref{thm:EATv1,thm:EATv2} in Refs. \cite{Dupuis2016,Dupuis2019} before converting to the smooth min-entropy by an additional inequality. One would expect that bypassing the smooth min-entropy gives tighter key rates due to the removal of an inequality.

	\subsection{High-dimensional 2-MUB protocol}
	
	To demonstrate an advantage of our approach in the EAT framework, we analyze an interesting family of protocols which are the high-dimensional analog of the BB84 protocol. In BB84, two mutually unbiased bases (MUBs) are used. We consider qudit systems with 2 MUBs. We compare our calculation with the postselection technique \cite{Christandl2009} combined with the numerical approach of \cite{George2021}. We use this example to demonstrate that EAT can give better key rates compared to the postselection technique \cite{Christandl2009}, especially when the dimension $\dim(\cH_{AB})$ in the protocol is large. The key idea is that the reduction via the post-selection technique involves a security parameter scaling exponentially in the dimension, which leads to a correction to the key length. This correction term does not arise when applying the EAT. While perhaps intuitive, it is unclear how to prove this results in an advantage at all blocklengths and all dimensions and, to the best of our knowledge, this is the first work that has discussed this difference in security proof methods or numerically investigated them.
    
	\subsubsection{Protocol and simulation}
	To properly define the protocol setup, recall that the discrete Weyl operators are defined as
	\begin{aeq}
		U_{jk} = \sum_{s=0}^{d-1}\omega^{sk} \dyad{s+j}{s}
	\end{aeq}for $j,k \in \{0,1,\dots, d-1\}$ where $\omega = e^{2\pi i /d}$ is a $d$th root of unity. We note that $U_{01}$ is the generalized Pauli-$Z$ matrix and $U_{10}$ is the generalized Pauli-$X$ matrix. We define the qudit version of $Z$ and $X$ operators as $Z:=U_{01}$ and $X:=U_{10}$. The generalized Bell states are 
	\begin{aeq}
		\ket{\Phi_{jk}} =\frac{1}{\sqrt{d}} \sum_{s= 0}^{d-1} \omega^{sk} \ket{s, s+j} = \idm \ox U_{jk} \ket{\Phi_{00}}.
	\end{aeq}In the 2-MUB protocol, Alice measures in the eigenbasis of either $U_{01}$ or $U_{10}$. Bob similarly measures in the eigenbasis of either $U_{01}^*$ or $U_{10}^*$. The eigenbasis of the operator $Z:=U_{01}$ is the computational basis $\{\ket{s}: 0\leq s \leq d-1\}$. The eigenbasis of the operator $X:=U_{10}$ is $\{\ket{\psi^X_j}: 0 \leq j \leq d-1\}$ where
	\begin{aeq}
		\ket{\psi^X_j} := \sum_{s=0}^{d-1}\frac{\omega^{-js}}{\sqrt{d}}\ket{s}.
	\end{aeq}They each choose to measure in the $Z$ basis with the probability $p_z$ and choose to measure in the $X$ basis with the probability $1-p_z$. In the classical phase, Alice and Bob announce their basis choices and discard rounds with mismatched bases.  We allow an asymmetric basis choice, i.e., setting $p_z$ close to 1. In this protocol, all public announcements are based on seeded randomness. For simplicity of calculation, the testing rounds are those when they both choose $X$ basis and the key generation rounds are rounds when they both choose $Z$ basis.  
	
	It is interesting to note that the state $\ket{\Phi_{00}} = \frac{1}{\sqrt{d}}\sum_s \ket{s,s}$ is invariant under any $U_{jk} \ox U^*_{jk}$. In an honest implementation, the source is supposed to prepare the state $\ket{\Phi_{00}}$, and then to distribute one half to Alice and the other half to Bob. If the quantum channel is ideal, then they are supposed to obtain perfectly correlated results just like the qubit case.

	Following the reasoning of \cite{Kraus2005,Renner2005information}, Alice's and Bob's joint density operator can be taken as diagonal in the generalized Bell basis: 
	\begin{aeq}
		\rho_{AB}^{\text{BG}}=\sum_{j,k} \lambda_{jk}\dyad{\Phi_{jk}}{\Phi_{jk}}, 
	\end{aeq}where 
	\begin{aeq}
		\lambda_{jk} = \frac{1}{d} \Big(\sum_s q_{1s}^{(sj-k \bmod d)} + q_{01}^{(j)}-1\Big)
	\end{aeq}with $q_{jk}^{(i)}$ being the $i$th entry of the error vector $\vect{q}_{jk}$. See also \cite{Sheridan2010} for a detailed discussion. For our purpose of simulating the frequency distribution $\vect{q}_0$ needed for \Cref{alg:Algorithm1_min_tradeoff_function} or \Cref{alg:Algorithm2_min_tradeoff_function}, we take the simulation state $\rho_{AB}^{\text{sim}}$ used to generate the full statistics as $\rho_{AB}^{\text{sim}}=\rho_{AB}^{\text{BG}}$. 
	
	We follow the simulation in \cite{Sheridan2010} by considering the following observation for error vector $\vect{q}_{jk}$ in each basis $U_{jk}$, which is based on the natural generalization of the qubit depolarizing channel with a depolarizing probability $Q$: 
	\begin{aeq}
		\vect{q}_{jk}(Q) := \{1-Q, Q/(d-1), \dots, Q/(d-1)\}.
	\end{aeq}

	\subsubsection{Results}
	
	\begin{figure}[h]
		\centering
		\subfloat[]{\includegraphics[width =0.5\linewidth]{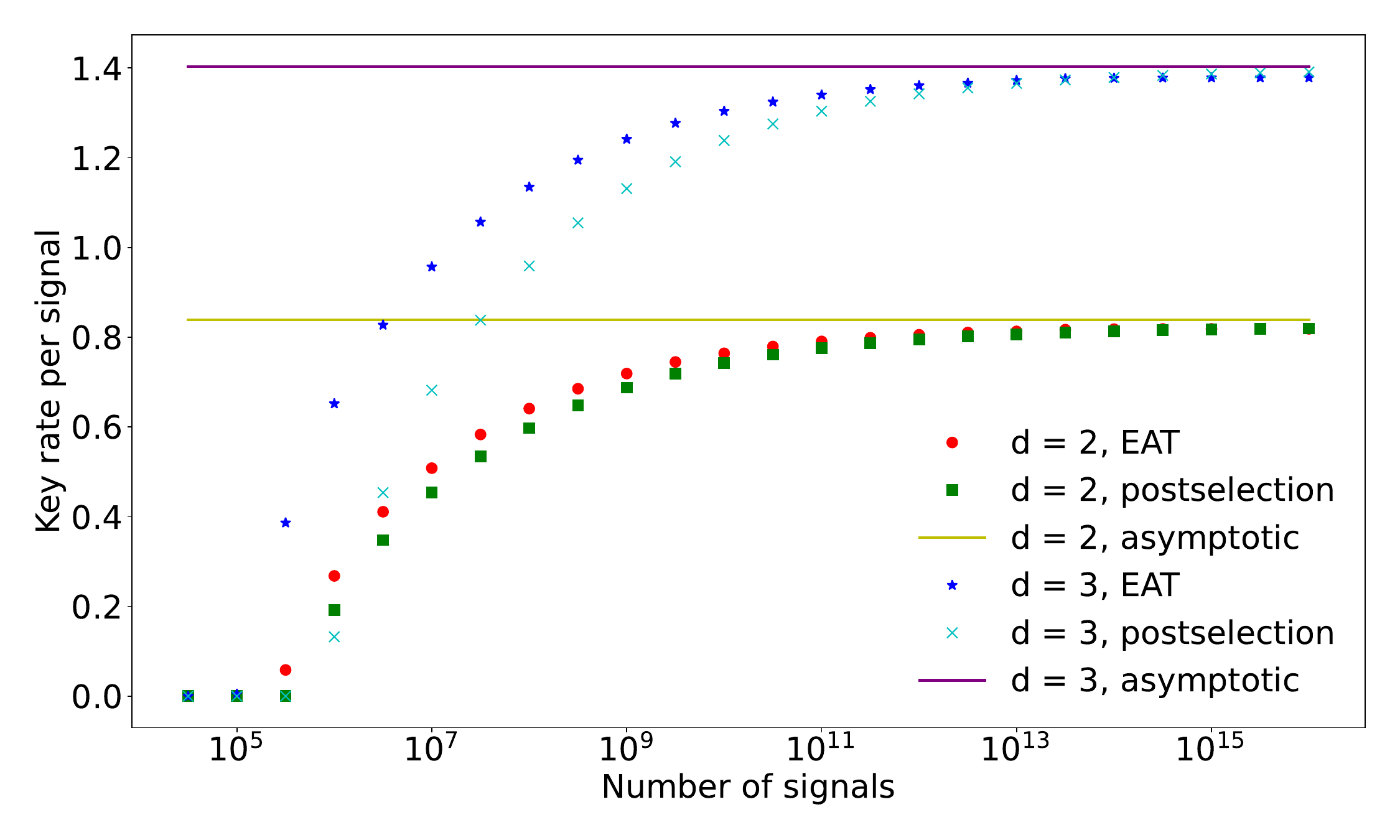}}
		\subfloat[]{\includegraphics[width =0.5\linewidth]{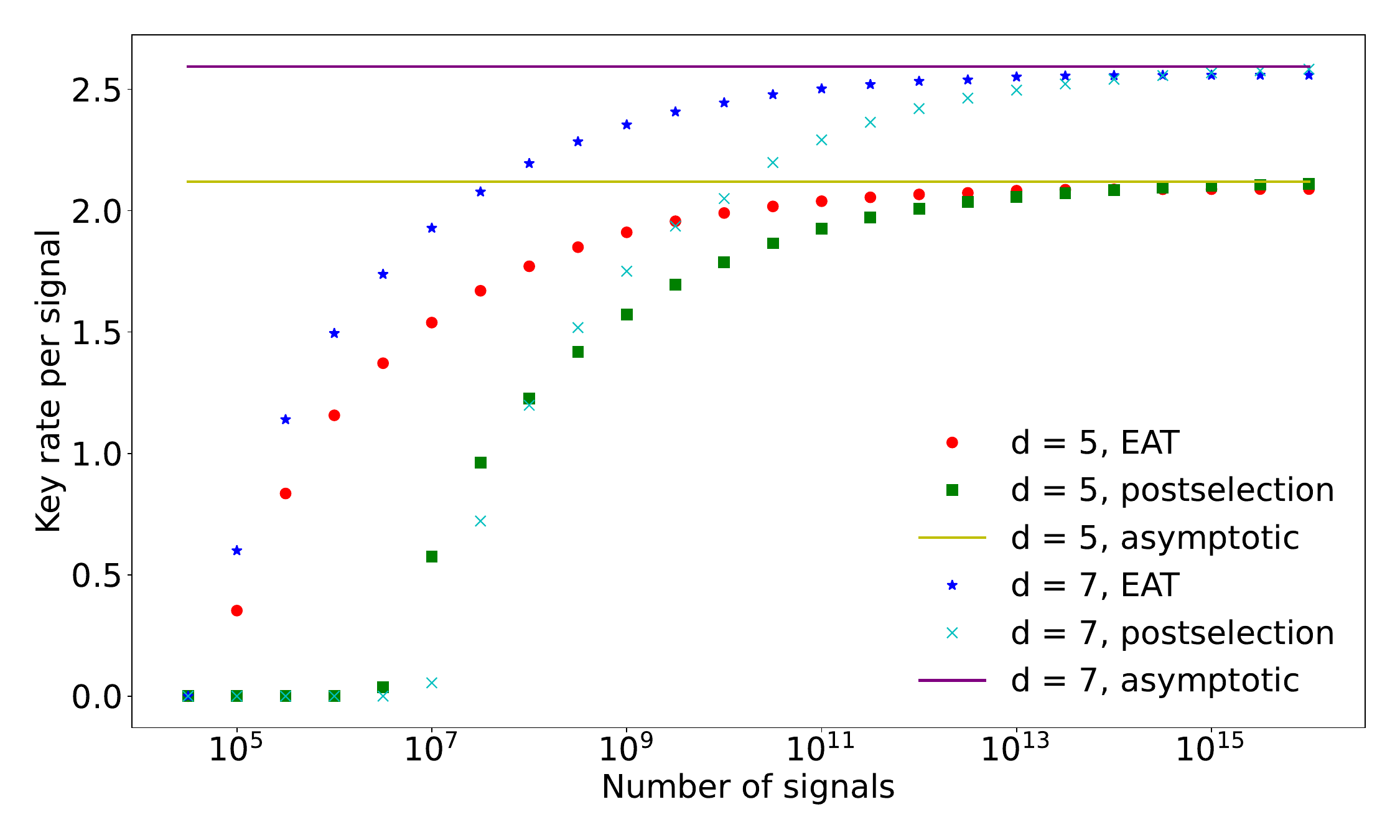}}
		\caption{\label{fig:hd2mub} 2-MUB protocols with qudits for various values of dimension $d$. The depolarizing probability is set to $Q = 0.01$ in the simulation. The calculation is done with \Cref{alg:Algorithm2_min_tradeoff_function} and the finite-key rate formula is from \Cref{thm:keylengthwithoutsmoothing} \cite{Dupuis2021}. The calculation of postselection technique is done using the numerical approach of \cite{George2021} in combination of \cite{Christandl2009}. For simplicity in implementing the post-selection technique code, we use $\zeta_{t}=0$ in these plots. The asymptotic key rate formula is given in \cite{Ferenczi2012,Sheridan2010}.} 
	\end{figure}

	In \Cref{fig:hd2mub}, we compare our results using \Cref{thm:keylengthwithoutsmoothing} \cite{Dupuis2021} with results obtained by the postselection technique \cite{Christandl2009} for 2-MUB protocols in dimensions $d = 2,3,5$ and $7$. For simplicity in implementation of the post-selection technique, we use $\zeta_{t} = 0$ in this simulation, which we stress means the protocol is $\sim1$-complete at reasonable block sizes (See Definition \ref{def:completeness}) and thus not practical.\footnote{This is not to imply the post-selection technique cannot be used more generally.} We note that 2-MUB protocols exist in any dimensions $d \geq 2$ and our proof method can work for any dimension. Here, we restrict to prime dimensions due to our choice of data simulation method. For both EAT and the postselection technique, we optimize the probability of choosing $Z$ basis. The probability of testing is set to $\gamma = (1-p_z)^2$. We optimize the probability of choosing $Z$-basis in the same way as in the qubit-based BB84 example.

	It can be seen that both EAT and postselection technique can approach the expected asymptotic key rate in the infinite-key limit. Also, our method based on EAT outperforms the postselection technique for 2-MUB protocols in any dimensions. We also observe that for larger dimensions, our method can give much higher key rate than the postselection technique for small block sizes. This makes our method attractive since small block sizes are of particular interests for experimental implementations. 
	
	In the same plot, we also show the asymptotic key rate for 2-MUB protocol. The asymptotic key rate formula is given in \cite{Ferenczi2012,Sheridan2010}. Both our method and the postselection technique can approach the asymptotic key rate for sufficiently large block size. We also note that for block size larger than $10^{17}$, there seems to be a small constant deviation from the asymptotic key rate in both postselection technique and our method. This deviation is mainly due to our optimization over $p_z$, which is done by choosing a set of values. The asymptotic key rate formula that we use assumes that we can set $p_z$ to be arbitrarily close to $1$ such that the sifting factor is $1$. On the other hand, numerical optimization over $p_z$ in our method cannot take $p_z$ too close to one. The reason for our EAT method is that our testing probability is set to be $(1-p_z)^2$. When the testing probability is small, the variance of the min-tradeoff function, $\Var{f}$, becomes large as shown in \Cref{eq: min crossover min relation 4}. Since $\Var{f}$ shows up in the second-order correction term, for a fixed block size, there is always a limit on how small the testing probability $\gamma$ can be before we start to lose key rates due to its adverse effect on the second-order correction term. For a fair comparison between our EAT approach and the postselection technique, we also use the same optimization of $p_z$ in the calculation with the postselection technique.

	\subsection{BB84 with a realistic spontaneous parametric downconversion source}\label{subsec:opticalBB84}
 We consider an example where the Markov chain conditions are not simply based on seeded randomness. This example considers an optical implementation of entanglement-based BB84. The photon-pair source is a spontaneous parametric downconversion source where there is a non-negligible probability that the source emits vacuum or more than one photon pair. Due to photon loss during the transmission and non-unity detector efficiencies, there are no-detection events. In the protocol, Alice and Bob announce these events and discard the corresponding rounds. We need to verify that this type of announcements do not violate the Markov chain conditions.  
	
	\subsubsection{Protocol and simulation method}	
	In this protocol, a type-II parametric down-conversion (PDC) source emits a state with polarization encoding \cite{Kok2000, Ma2007} which is 
	\begin{equation}
		\begin{aligned}
			\ket{\Psi} = (\cosh(\chi))^{-2} \sum_{n=0}^{\infty} \sqrt{n+1}\tanh^n(\chi)\ket{\Phi_n}	
		\end{aligned}
	\end{equation}where $\ket{\Phi}_n$ is the state of an $n$-photon pair which can be written as
	\begin{equation}
		\begin{aligned}
			\ket{\Phi_n} = \frac{1}{\sqrt{n+1}}\sum_{m=0}^{n} (-1)^m \ket{n-m,m}_a\ket{m,n-m}_b.
		\end{aligned}
	\end{equation}The average number of photon pairs generated by one pump pulse is $2\lambda$, where $\lambda = \sinh(\chi)$. In this protocol, Alice and Bob each have a set of single-photon detectors. We consider the BB84 detector setup with a passive basis choice; that is, each measurement setup consists of an initial 50/50 beam splitter and each output port of this beam splitter is directed to a polarizing beam splitter with two single-photon detectors. We assume Alice's two detectors have the same detector efficiency $\eta_A$ and the same dark count probability $Y_{0A}$. Similarly, we assume that Bob's two detectors have the same detector efficiency $\eta_B$ and the same dark count probability $Y_{0B}$. Using the similar choice as in other examples, whenever Alice and Bob choose $X$-basis, the round is used for testing. The key generation round is when they both choose $Z$-basis. Their probabilities of choosing $Z$ basis are $p_{zA}$ and $p_{zB}$, respectively. The probability of testing is set to $\gamma = (1-p_{zA})(1-p_{zB})$.

	Our simulation is based on \cite{Ma2007}. In this simulation, there are three main contributions to quantum bit error rate: (i) background counts of detectors, which are random noises with $e_0=1/2$; (ii) intrinsic detector error $e_d$, which is the probability that a photon enters the erroneous detector and is used to characterized the alignment and stability of optical system between Alice's and Bob's detection systems; (iii) errors due to multiphoton-pair states: (a) Alice and Bob may detect different photon pairs and (b) double clicks from detectors. Alice and Bob assign a random bit for each double-click event in order to use the squashing model \cite{Beaudry2008, Gittsovich2014}.

	In particular, the overall gain, $Q_{\lambda}$, as a function of the average number of photons, $\lambda$, in each mode, dark counts and detector efficiencies is given by \cite[Eq. (9)]{Ma2007}
	\begin{equation}
		\begin{aligned}
			Q_{\lambda} = 1 - \frac{1-Y_{0A}}{(1+\eta_A \lambda)} - \frac{1-Y_{0B}}{(1+\eta_B \lambda)} +\frac{(1-Y_{0A})(1-Y_{0B})}{(1+\eta_A \lambda + \eta_B \lambda - \eta_A \eta_B \lambda)^2}. 
		\end{aligned}
	\end{equation}
	
	The overall quantum bit error rate ($E_{\lambda}$) is given by \cite[Eq. (10)]{Ma2007}
	\begin{equation}
		\begin{aligned}
			E_{\lambda}Q_{\lambda} = e_0 Q_{\lambda} - \frac{2(e_0-e_d)\eta_A\eta_B\lambda(1+\lambda)}{(1+\eta_A\lambda)(1+\eta_B\lambda)(1+\eta_A\lambda + \eta_B\lambda - \eta_A \eta_B \lambda)}.
		\end{aligned}
	\end{equation}
	To reduce the number of free parameters in the protocol setup, we set $p_{zA} = p_{zB} =p_z$ and the testing probability is set to $\gamma = (1-p_z)^2$. We optimize the choice of $p_z$ in the same way as in \Cref{sec:example_BB84}.
	
	\subsubsection{Assumption on announcements}
	
	We need to verify for Markov chain conditions that the probability of a detection event only depends on the total photon number, but not on the particular $n$-photon state. We show that this holds when Alice's (Bob's) detectors consist of two single-photon detectors with an identical detection efficiency $\eta_A$ ($\eta_B$), and a basis-independent dark count probability $Y_{0A}$ ($Y_{0B}$). We note that the measurement performed by Bob (Alice) is block diagonal in the total photon number basis, as is the case in all discrete-variable protocols.
	
	Under our assumption about the detectors, we can treat all the imperfections of detectors as a part of the quantum channel and then assume ideal detectors in our analysis. Doing so only strengthens Eve's power. After assigning double-click events to random bits, we note that the measurement setup in this protocol admits a squashing model \cite{Beaudry2008,Gittsovich2014}. In our analysis, we can use the effective qubit measurement for Alice (Bob) as the target measurement. This target measurement acts on the Hilbert space that consists of a one-dimensional vacuum space and a two-dimensional qubit space. In particular, the announcement about detection and no-detection corresponds to the POVM $\{M_{\text{det}},M_{\text{no-det}}\}$, which is defined as
	\begin{aeq}
	M_\text{det} = \begin{pmatrix}
	0 & 0 & 0\\
	0 & 1 & 0\\
	0 & 0 & 1
	\end{pmatrix}, \, 	M_\text{no-det} = \begin{pmatrix}
	1 & 0 & 0\\
	0 & 0 & 0\\
	0 & 0 & 0
	\end{pmatrix}, 
	\end{aeq}where they are represented in the basis $\{\ket{\text{vac}},\ket{0},\ket{1}\}$. Here, $\ket{\text{vac}}$ is the vacuum state and $\ket{0},\ket{1}$ are the computational basis states of a qubit. Clearly, this POVM is weekly dependent according to \Cref{defn:weakly-dependent}. Thus, this announcement is allowed in applying \Cref{thm:keylengthwithoutsmoothing}.

	\subsubsection{Results}

		\begin{figure}[h]
		\centering
		\includegraphics[width =0.7\linewidth]{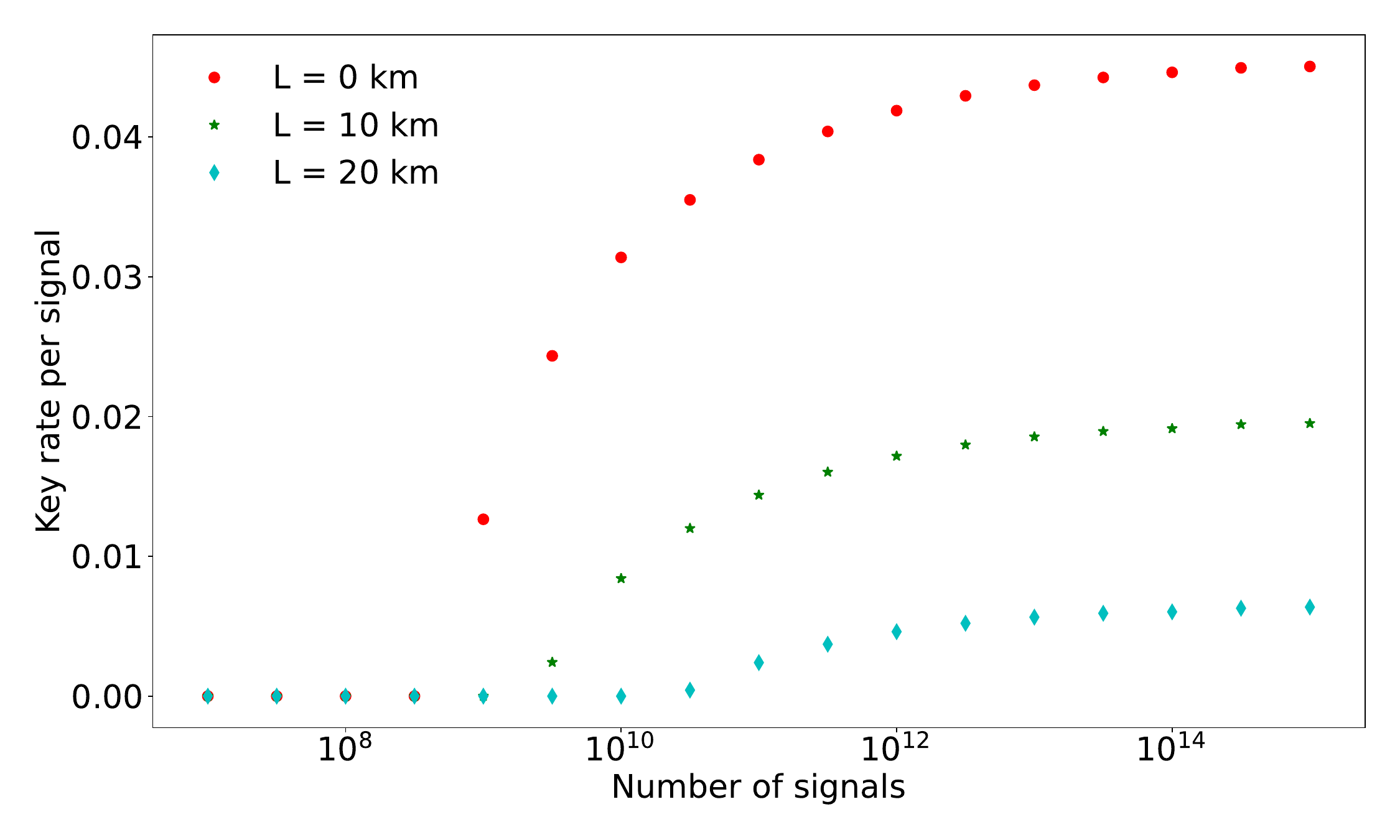}
		\caption{\label{fig:opticalBB84} Key rate per signal sent versus the number of signals for BB84 with a spontaneous parametric downconversion source for various distances between Alice and Bob. This plot is obtained with \Cref{alg:Algorithm1_min_tradeoff_function} and \Cref{thm:keylengthwithoutsmoothing}. The acceptance set threshold $\zeta_t = 0.005$ is used for defining the acceptance set $\cQ$. Parameters are specified in the main text.}
	\end{figure}
	
	In \Cref{fig:opticalBB84}, we show the key rate of this protocol for different distances ($L$ in kilometers) between Alice and Bob. We assume that the source is located in the middle and at an equal distance from Alice and Bob. We choose detector efficiencies $\eta_A = \eta_B = 0.8$ and dark counts $Y_{0A} = Y_{0B} =10^{-7}$ for the simulation in this plot.\footnote{This choice of detector parameters may be realized by superconducting nanowire single-photon detectors. However, the purpose here is to demonstrate that our method can handle some imperfections and loss. Our method also has limitations in the amount of loss it can handle.} We also set the intrinsic detector error $e_d$ as $1\%$. On the one hand, we can see that our method works for optical implementations from this figure. On the other hand, as the distance between Alice and Bob increases, the minimal number of signals for positive key rate also increases significantly, and the key rate drops quickly. In our key rate expression from \Cref{thm:keylengthwithoutsmoothing} (also from \Cref{thm:keyLengthWithSmoothing}), the term $-n\gamma \log(\abs{\cA})$ decreases the key length and becomes more problematic for this protocol since the entropy only accumulates in rounds where both Alice and Bob successfully detected photons. However, this term does not scale with the probability of detection. For long distances, this seems to suggest that the cost for parameter estimation is higher than the entropy accumulated from rounds with successful detection. Unfortunately, this counter-intuitive effect comes from the limitation of our proof method in dealing with parameter estimation registers. We hope that a better approach to handle the information leakage from the parameter estimation step can be found in the future to significantly improve the key rate.

	\section{Discussion and Conclusion}\label{sec:conclusions}
	In this work, we have adapted the EAT to entanglement-based device-dependent QKD protocols. To do so, we introduced new tools. First, we constructed new sufficient conditions on the public announcements of the protocol to guarantee the Markov conditions necessary for the EAT. These conditions capture the intuition that if Eve would always know some information for each round of the protocol, then announcing that information cannot change the security. The interesting point is that this guarantees the Markov conditions on Eve's optimal attack.
	
	Second, we proposed two variants of a numerical algorithm to construct min-tradeoff functions, both of which are efficient and one which considers second-order effects. Both methods build off of previous work \cite{Coles2016,Winick2018}, but the ability to construct min-tradeoff functions that take into account second-order correction information is novel and we expect it could be useful in other settings. We note this second-order correction algorithm relies on using Fenchel duality, which to the best of our knowledge has not been used in quantum information theory previously.
	
	Third, we derived our key length bound (\Cref{thm:keylengthwithoutsmoothing}) using Dupuis' privacy amplification for sandwiched R\'{e}nyi entropies $H^{\uparrow}_{\alpha}$ \cite{Dupuis2021}. In that work, Dupuis demonstrates one can obtain simpler error exponents for privacy amplification using the R\'{e}nyi version of the EAT along with his R\'{e}nyi leftover hashing lemma \cite[Theorem 9]{Dupuis2021} than if one were to apply the smooth min-entropy leftover hashing lemma \cite{Renner2005}. Here we have shown an alternative advantage: by avoiding the conversion of the R\'{e}nyi EAT into smooth min-entropy terms, we can tighten our bound on the key rate.
	
	We then apply our methods to several examples. First, we applied it to ideal qubit-based BB84 and six-state four-state protocols where we show the application of both our min-tradeoff construction algorithms and that using our R\'{e}nyi entropy rate improves the finite-size key rate to considering the smooth min-entropy rate, at least in the current proof method. We next considered the high-dimensional two mutually unbiased bases protocols which exemplified an improvement in the key rate over the postselection technique \cite{Christandl2009}, an alternative proof method for coherent-attack security which is limited in how it scales with respect to the dimension of Alice and Bob's quantum states. This confirms that there is a regime in which the postselection technique is ``loose'', further suggesting the importance of the application of the EAT. Lastly, we demonstrate our method for an optical implementation. This example demonstrates that our method is also applicable to practical protocols instead of restricting to theoretically simple ones. On the other hand, due to unsatisfactory results in small block sizes, we also observe that the EAT currently appears to require more improvements in handling loss and noise in order to have good key rates for experimentally feasible block sizes.
	
	Given these results, it is natural to consider where improvements might be made or further avenues to explore. First and foremost, we note that we were restricted to considering entanglement-based protocols, as we need the entanglement-based protocol framework to use our algorithms. In previous work \cite{George2021}, this has not been limiting as we could use the source-replacement scheme \cite{Curty2004,Ferenczi2012} to convert prepare-and-measure protocols to entanglement-based ones. However, it is not clear how the source-replacement scheme interacts with the Markov chain requirements for EAT. This has been partially resolved in concurrent work \cite{Metgers-2022a,Metgers-2022b}. Second, we have seen that while the EAT scales well in terms of the dimension of the states, it seems limited by loss and noise. In particular, the ability to handling high loss parameter regime is of particular interest for realistic implementations. As such, a natural question is whether the EAT can be more robust to loss and noise, at least in the device-dependent setting. We note follow-up work \cite{Kamin-2024a} has numerical results that show the (generalised) EAT scales worse in loss than for i.i.d. collective attacks. This mathematically makes sense as coherent attacks are more general. However, to the best of our knowledge, no example of such an attack separating the cases is known, so an optimistic researcher could still hope for further improvements.

	\section*{Acknowledgments} 
	\addcontentsline{toc}{section}{Acknowledgements}
	The code used in this work is publicly available at \href{https://github.com/Optical-Quantum-Communication-Theory/Finite-Key-Analysis-of-Quantum-Key-Distribution-with-Characterized-Devices}{this GitHub Repository.} I.G. thanks Fr\'{e}d\'{e}ric Dupuis, Masahito Hayashi, and Marco Tomamichel for helpful discussions. The authors thank Jamie Sikora for helpful suggestions on numerical optimization and thank Ernest Y.~-Z Tan for technical suggestions given an earlier draft. The work has been performed at the Institute for Quantum Computing (IQC), University of Waterloo, which is supported by Innovation, Science and Economic Development Canada. J.L. acknowledges the support of Mike and Ophelia Lazaridis Fellowship from IQC. I.G. acknowledges the support of an Illinois Distinguished Fellowship during this work. K.F. is supported by the National Natural Science Foundation of China (Grant No. 92470113 and 12404569), the Shenzhen Science and Technology Program (Grant No. JCYJ20240813113519025), the Shenzhen Fundamental Research Program (Grant No. JCYJ20241202124023031), the General R\&D Projects of 1+1+1 CUHK-CUHK(SZ)-GDST Joint Collaboration Fund (Grant No. GRDP2025-022), and the University Development Fund (Grant No. UDF01003565). The research has been supported by Natural Sciences and Engineering Research Council of Canada (NSERC) under the Discovery Grants Program, Grant No. 341495, and by NSERC under the Collaborative Research and Development Program, Grant No. CRDP J 522308-17. Financial support for this work has been partially provided by Huawei Technologies Canada Co., Ltd.
		
	\addcontentsline{toc}{section}{References}
	\bibliographystyle{unsrtnat}
	\bibliography{main}
	
	\onecolumn\newpage
	\appendix
	
\section{A Sufficient Condition for the Markov Chain Condition}\label{app:AppendixMarkov}

In this section, we prove \Cref{thm:Markov_block_diag} from the main text which gives sufficient conditions for ensuring the Markov chain conditions hold on the optimal attack. We note the statement in this section (\Cref{prop:Markov_block_diag2}) includes an additional equivalent condition which is quick to verify and so may be of use.

Recall that we consider EAT channels with a special tensor product structure. We consider $n$ CPTP maps $\widetilde\cM_i:Q_i \rightarrow S_iP_iX_i$ acting in tensor product on the $n$ independent systems $\{Q_i\}_i$. The maps are defined by the POVM $\{M_{sp}\}$ such that 
\begin{equation}
	\widetilde\cM_i (\rho) = \sum_{s,p}\Tr[ \rho M_{sp}] \ketbra{s,p}_{S_iP_i} \otimes \ketbra{t(s,p)}_{X_i}\,.
\end{equation}

From these we define the corresponding EAT channels $\cM_i: R_{i-1} \rightarrow R_iS_iP_iX_i$ acting on the quantum systems $R_i = Q_{i+1}^n$, as $\cM_i = \widetilde {\cM}_i \otimes \id_{Q_{i+1}^n}$ (see main text for further details). In this case, it is simple to prove that the output state of the protocol $\rho_{S_1^nP_1^nX_1^nE} = \cM_1\circ \dots \circ \cM_n \otimes \id_E(\rho_{R_0E})$ takes on the simple form
\begin{equation}
	\label{eq:OutputStateTensor}
	\rho_{S_1^nP_1^nX_1^nE} = \bigotimes_{i=1}^n \widetilde \cM_i \otimes \id_E (\rho_{Q_1^nE}) 
\end{equation}
where $R_0 = Q_1^n$.

We now want to prove the following proposition.
\begin{prop}\label{prop:Markov_block_diag2}
	Let $\{M_{sp}: s \in \cS, p\in \cP\}$ be the POVM associated to a given quantum-to-classical CPTP map $\widetilde \cM_i: Q_i\rightarrow S_iP_iX_i$ and let $M_p := \sum_s M_{sp}$ correspond to the POVM elements of the map $\Tr_{S_iX_i}\circ \widetilde \cM_i: Q_i \rightarrow P_i$. If either \cref{condition:A} or \cref{condition:B} below holds for each CPTP map $\widetilde \cM_i$, then without loss of generality the optimal attack by an eavesdropper is of a block-diagonal form such that the Markov chain conditions, \Cref{eq:Markov_cond}, hold and so the EAT may be applied (\Cref{thm:EATv2}). 
	\begin{enumerate}[label=(\Alph*)]
		\item \label[condition]{condition:A} There exists a decomposition $Q_i = \bigoplus_\lambda V^\lambda$ of the space $Q_i$ into orthogonal subspaces $\{V^\lambda\}_\lambda$ such that 
		\begin{itemize}
			\item[(1)] For all $(s,p) \in \cS \times \cP$, $M_{sp}$ is block diagonal:  $M_{sp} = \bigoplus_\lambda M^{(\lambda)}_{sp}$, where $M_{sp}^{(\lambda)}$ acts on the subspace $V^\lambda$.
			\item[(2)] For all $p \in \cP$, $M_p$ is block diagonal and proportional to the identity in each subspace: there exist constants $m_p^\lambda \in [0,1]$ such that $M_p = \bigoplus_\lambda m_p^\lambda \idm_{V^\lambda}$,
			where $\idm_{V^\lambda}$ is the identity operator on $V^\lambda$.
		\end{itemize}
		\item \label[condition]{condition:B} For all $s \in \cS$, $p,p' \in \cP$, $M_{sp}$ and $M_{p'}$ commute: $[M_{sp},M_{p'}] = 0$.
	\end{enumerate}
	Furthermore, \labelcref{condition:A} and \labelcref{condition:B} are equivalent.
\end{prop}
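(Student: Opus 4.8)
The plan is to first settle the equivalence $\labelcref{condition:A}\Leftrightarrow\labelcref{condition:B}$, and then exploit the block structure to exhibit the claimed block-diagonal optimal attack for which the Markov conditions hold.

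For $\labelcref{condition:A}\Rightarrow\labelcref{condition:B}$ I would simply multiply out: writing $M_{sp}=\bigoplus_\lambda M_{sp}^{(\lambda)}$ and $M_{p'}=\bigoplus_\lambda m_{p'}^\lambda\idm_{V^\lambda}$, both products $M_{sp}M_{p'}$ and $M_{p'}M_{sp}$ equal $\bigoplus_\lambda m_{p'}^\lambda M_{sp}^{(\lambda)}$, so the commutator vanishes. The substantive direction is $\labelcref{condition:B}\Rightarrow\labelcref{condition:A}$. From $[M_{sp},M_{p'}]=0$ for all $s$, summing over $s$ gives $[M_p,M_{p'}]=0$, so the family $\{M_p\}_{p\in\cP}$ is a commuting family of positive (hence Hermitian) operators and can be simultaneously diagonalized. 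I would take the $V^\lambda$ to be the common eigenspaces of this family, labelled by the tuple $\lambda=(m_p^\lambda)_p$ of joint eigenvalues; by construction each $M_p$ acts as $m_p^\lambda\idm_{V^\lambda}$ on $V^\lambda$, which is exactly item (2). For item (1), I would use that the projector $\Pi^i_\lambda$ onto each common eigenspace is a product of spectral projectors of the operators $M_{p'}$, each of which is a polynomial in the corresponding $M_{p'}$; since $M_{sp}$ commutes with every $M_{p'}$ it commutes with every such projector and hence with each $\Pi^i_\lambda$, and an operator commuting with all $\Pi^i_\lambda$ is block diagonal along $Q_i=\bigoplus_\lambda V^\lambda$, giving $M_{sp}=\bigoplus_\lambda M_{sp}^{(\lambda)}$.

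For the Markov statement I would work with the tensor-product output state \Cref{eq:OutputStateTensor} and argue that we may hand Eve the subspace labels for free. Concretely, I introduce for each round the isometry $V_i=\sum_\lambda \Pi^i_\lambda\otimes\ket{\lambda}_{\Lambda_i}:Q_i\to Q_i\Lambda_i$, which coherently copies the block index into a new register $\Lambda_i$ that I append to Eve's system $E$. Two things then need checking. First, the measurement statistics are unchanged: because each $M_{sp}$ is block diagonal by (1) it is invariant under the block-dephasing $\mathcal D_i(\cdot)=\sum_\lambda\Pi^i_\lambda(\cdot)\Pi^i_\lambda$, and tracing out $\Lambda_1^n$ reproduces exactly the original joint distribution of $(S_1^n,P_1^n,X_1^n)$; hence the same min-tradeoff function and acceptance event $\Omega$ apply. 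Second, since conditioning on the extra registers only lowers the conditional entropy, $H^\uparrow_\alpha(S_1^n|P_1^nE\Lambda_1^n)\le H^\uparrow_\alpha(S_1^n|P_1^nE)$, any lower bound proven for the enlarged Eve is a fortiori a lower bound for the real attack, so it suffices to prove the Markov conditions for the state that holds $\Lambda_1^n$.

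The heart of the argument is then a short lemma: conditioned on $\Lambda_i=\lambda$ the $Q_i$-part is supported on $V^\lambda$, and by (2) the marginal channel $\Tr_{S_iX_i}\circ\widetilde\cM_i:Q_i\to P_i$ with POVM $\{M_p\}$ sends every state on $V^\lambda$ to the fixed distribution $\tau^\lambda=\sum_p m_p^\lambda\ket{p}\bra{p}$, independently of the input. Such a ``constant-on-each-block'' channel outputs $P_i$ in tensor product with every other system, so conditioned on $\Lambda_i$ the announcement $P_i$ is independent of $(S_1^{i-1},P_1^{i-1},E,\Lambda_{\neq i})$; since $\Lambda_i$ is among the conditioning registers, $I(S_1^{i-1}:P_i|P_1^{i-1}E\Lambda_1^n)=0$, which is precisely \Cref{eq:Markov_cond}. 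With the Markov conditions and the preserved testing statistics in hand, \Cref{thm:EATv2} applies to the enlarged state and, by the entropy inequality above, yields a valid bound for the actual attack. I expect the main obstacle to be making the ``constant-on-each-block gives a product output'' step fully rigorous while correctly tracking that it is only the marginal of $P_i$ (and not its joint with $S_i$, which may remain correlated within the same round) that factorizes, together with getting the direction of the entropy inequality right so that enlarging Eve's register is genuinely without loss of generality.
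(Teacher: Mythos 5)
Your proposal is correct and follows essentially the same route as the paper's proof: the equivalence of \labelcref{condition:A} and \labelcref{condition:B} via simultaneous diagonalization of the commuting family $\{M_p\}$, the enlargement of Eve by subspace-label registers $\Lambda_i$ whose removal leaves the output statistics unchanged, the data-processing/monotonicity step showing this enlargement is without loss of generality, and the observation that item (2) fixes the distribution of $P_i$ given $\Lambda_i$ to be $m_p^{\lambda}$. The only cosmetic differences are that the paper uses the block-dephasing channel rather than its Stinespring isometry (equivalent here, since the block-diagonal measurements kill the inter-block coherences), and it phrases the final step as the existence of a recovery map $\mathcal{R}_{P_iE\leftarrow E'}$ rather than as your direct zero-conditional-mutual-information computation — two formulations of the same independence fact.
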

\begin{remark}
    The statement of \Cref{thm:Markov_block_diag} just states \cref{condition:A} in \Cref{prop:Markov_block_diag2} under the name of `weakly dependent' (\Cref{defn:weakly-dependent}), which is of primary interest. While equivalent, \cref{condition:B} is stated here as it could be of use when one knows the measurement operators, as it is then easy to check \cref{condition:B} to determine if \Cref{prop:Markov_block_diag2} may be applied.
\end{remark}

One way to prove the EAT theorem applies under \cref{condition:A} or \cref{condition:B} would be to show the output state in \Cref{eq:OutputStateTensor} satisfies the Markov chain conditions, but this does not work in general. Instead, we show that we can assume the initial state is of a certain block diagonal form without loss of generality due to the block diagonal structure of the EAT channels (\Cref{lemma:block_attack}). Then we show that this block diagonal form satisfies the Markov chain conditions (Proof of \Cref{prop:Markov_block_diag2}), and, as assuming this form was without loss of generality, this is sufficient. The reduction to a quantum state with block-diagonal structure is well-known in discrete-variable QKD where the measurement operators that model single-photon detectors are block diagonal in the total photon number basis. In such a setting, this block diagonal structure implies Eve's optimal attack includes implementing a QND measurement on the total number of photons before sending out the states, thereby resulting in the state being block-diagonal without loss of generality. One may view \Cref{lemma:block_attack} and \Cref{prop:Markov_block_diag2} as a generalization of such a method, but with the further insight that such structure implies the Markov chain conditions hold. 

\begin{lemma}\label{lemma:block_attack}
	Let each $\widetilde \cM_i$ be a quantum-to-classical CPTP map whose associated POVM elements are block diagonal; i.e.,\ for each $i$ there exists a decomposition $Q_i = \bigoplus_\lambda V^\lambda$ such that $M_{sp}  =\bigoplus _{\lambda} M_{sp}^{(\lambda)}$ with $M_{sp}^{(\lambda)}$ acting on $V^\lambda$. Then for any initial state $\rho_{Q_1^nE}$, there exists another state $\nu_{Q_1^nE'}$ such that
	\begin{itemize}
		\item[(1)] the state $\nu_{Q_1^nE'}$ has the block diagonal form
		\begin{equation}
			\label{eq:block_attack}
			\nu_{Q_1^{n}E\Lambda_1^{n}} = \sum_{\vect{\lambda} = (\lambda_1,\cdots \lambda_n)} \rho_{Q_1^nE}{(\vect{\lambda})} \otimes \ketbra{ \vect{\lambda}}_{\Lambda_1^n}\,,
		\end{equation}where Eve's registers $E' = (E,\Lambda_1,\cdots \Lambda_n)$ are composed of a quantum memory $E$ and $n$ classical registers $\Lambda_i$ indicating the subspace, so that, for all $\vect{\lambda}$, the state $\Tr_E[\rho_{Q_1^nE}{(\vect{\lambda})}]$ is defined on $V^{\vect{\lambda}} := \bigotimes_i V^{\lambda_i}$, and
		\item[(2)] the output states $\nu_{S_1^nP_1^nX_1^nE'}$ and $\rho_{S_1^nP_1^nX_1^nE}$ are related by
		\begin{equation}
			\label{eq: block_attack_trace}
			\Tr_{\Lambda_1^n} [\nu_{S_1^nP_1^nX_1^nE\Lambda_1^n}] = \rho_{S_1^nP_1^nX_1^nE}\,.
		\end{equation}
		
	\end{itemize}
\end{lemma}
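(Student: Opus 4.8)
The plan is to construct $\nu$ explicitly by having Eve perform a quantum nondemolition (QND) measurement of the block index on each system $Q_i$ and record the outcome in a fresh classical register $\Lambda_i$. Concretely, let $\Pi^{(i)}_\lambda$ denote the orthogonal projector onto the subspace $V^\lambda$ inside $Q_i$, and define the measure-and-record channel $\mathcal{R}_i(\sigma) := \sum_\lambda (\Pi^{(i)}_\lambda \sigma \Pi^{(i)}_\lambda) \otimes \ketbra{\lambda}_{\Lambda_i}$. I would then simply set $\nu_{Q_1^n E \Lambda_1^n} := (\mathcal{R}_1 \otimes \cdots \otimes \mathcal{R}_n \otimes \id_E)(\rho_{Q_1^n E})$, with Eve's enlarged system being $E' = (E,\Lambda_1,\dots,\Lambda_n)$.

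First I would verify item (1). Expanding the definition and writing $\rho_{Q_1^n E}(\vect{\lambda}) := (\Pi^{(1)}_{\lambda_1} \otimes \cdots \otimes \Pi^{(n)}_{\lambda_n} \otimes \id_E)\,\rho_{Q_1^n E}\,(\Pi^{(1)}_{\lambda_1} \otimes \cdots \otimes \Pi^{(n)}_{\lambda_n} \otimes \id_E)$ puts $\nu$ immediately into the required form of \Cref{eq:block_attack}, and by construction the $Q$-marginal $\Tr_E[\rho_{Q_1^n E}(\vect{\lambda})]$ is supported on $V^{\vect{\lambda}} = \bigotimes_i V^{\lambda_i}$, as demanded.

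The heart of the argument is item (2). Tracing out the records gives $\Tr_{\Lambda_1^n}[\nu_{Q_1^n E \Lambda_1^n}] = (\mathcal{P}_1 \otimes \cdots \otimes \mathcal{P}_n \otimes \id_E)(\rho_{Q_1^n E})$, where $\mathcal{P}_i(\sigma) = \sum_\lambda \Pi^{(i)}_\lambda \sigma \Pi^{(i)}_\lambda$ is the pinching channel in the block basis. The key single-round fact I would prove is the invariance $\widetilde{\cM}_i \circ \mathcal{P}_i = \widetilde{\cM}_i$. This holds because $\widetilde{\cM}_i$ depends on its input only through the numbers $\Tr[\sigma M_{sp}]$, while $\mathcal{P}_i$ is self-adjoint and fixes every block-diagonal operator, so $\mathcal{P}_i(M_{sp}) = M_{sp}$ and hence $\Tr[\mathcal{P}_i(\sigma) M_{sp}] = \Tr[\sigma \mathcal{P}_i(M_{sp})] = \Tr[\sigma M_{sp}]$. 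Since the EAT channels act in tensor product (\Cref{eq:OutputStateTensor}), the invariance composes factorwise, $(\bigotimes_i \widetilde{\cM}_i) \circ (\bigotimes_i \mathcal{P}_i) = \bigotimes_i \widetilde{\cM}_i$. Because the $\widetilde{\cM}_i$ leave the $\Lambda$ registers untouched, the partial trace over $\Lambda_1^n$ commutes with applying the measurement channels, and chaining these two observations yields \Cref{eq: block_attack_trace}.

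I do not anticipate a deep obstacle: the entire content is that pinching in the block basis is absorbed by a block-diagonal measurement. The one place warranting care is bookkeeping—tracking which map acts on which tensor factor and confirming that the partial trace over $\Lambda_1^n$ indeed commutes with the $\widetilde{\cM}_i$—together with noting that the testing register $X_i$, being the deterministic function $t(s_i,p_i)$, inherits its invariance automatically from the unchanged $(s_i,p_i)$ statistics and needs no separate treatment.
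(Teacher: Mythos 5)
Your proposal is correct and is essentially the paper's own proof: your measure-and-record channel $\mathcal{R}_i$ is precisely the paper's dephasing map $\Delta_i(\rho) = \sum_\lambda P_\lambda \rho P_\lambda \otimes \ketbra{\lambda}_{\Lambda_i}$, and your key invariance $\widetilde{\cM}_i \circ \mathcal{P}_i = \widetilde{\cM}_i$ is exactly the paper's identity $\widetilde{\cM}_i \circ \Tr_{\Lambda_i} \circ \Delta_i = \widetilde{\cM}_i$, chained factorwise in the same way. The only difference is that you spell out the justification (self-adjointness of the pinching plus $\mathcal{P}_i(M_{sp}) = M_{sp}$) where the paper calls it "simple to see," which is a welcome bit of extra rigor rather than a different route.
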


\begin{proof}
	To construct the state $\nu_{Q_1^nE'}$ from the state $\rho_{Q_1^nE}$, we consider the dephasing map $\Delta_i: Q_i \rightarrow  Q_i\Lambda_i$
	\begin{equation}
		\Delta_i(\rho) = {\sum}_\lambda P_\lambda \rho P_\lambda \otimes \ketbra{\lambda}_{\Lambda_i}\,,
	\end{equation}where $P_\lambda$ is a projector on the subspace $V^\lambda$. This map $\Delta_i$ projects the state onto each subspace $V^\lambda$ without affecting the coherence inside the subspace $V^\lambda$ and writes the result in $\Lambda_i$.  We then define $\nu_{Q_1^nE'} := \nu_{Q_1^nE\Lambda_1^n} := \left(\bigotimes_i \Delta_i\otimes \id_E\right)(\rho_{Q_1^nE})$. The state $\nu_{Q_1^n E \Lambda_1^n}$ is indeed of the form \Cref{eq:block_attack}, as required. 
	
	Because of the block diagonal structure of the maps $\widetilde\cM_i$, it is simple to see that the dephasing map does not affect the measurement statistics. We can observe that the maps $\Delta_i$ and $\widetilde\cM_i$ are related by $\Tr_{\Lambda_i} \circ \widetilde \cM_i \circ \Delta_{i} =  \widetilde \cM_i \circ \Tr_{\Lambda_i} \circ \Delta_{i} = \widetilde \cM_i$. Consequently, we have
	\begin{IEEEeqnarray}{rL}
		\Tr_{\Lambda_1^n} [\nu_{S_1^nP_1^nX_1^nE\Lambda_1^n}] 
		&= \Tr_{\Lambda_1^n} \circ \left(\bigotimes_{i} \widetilde \cM_i \otimes \id_{E\Lambda_1^n}\right)\circ \left(\bigotimes_i \Delta_i \otimes \id_E\right)(\rho_{Q_1^nE})\\
		&= \left(\bigotimes_i \widetilde \cM_i \otimes \id_E\right)(\rho_{Q_1^nE})\\
		&=: \rho_{S_1^nP_1^nX_1^nE}.
	\end{IEEEeqnarray}
\end{proof}

\begin{proof}[Proof of \Cref{prop:Markov_block_diag2}]
	Let $\rho_{Q_1^nE}$ be some initial state used in the EAT (\Cref{thm:EATv2}). We assume that the maps $\widetilde \cM_i$ satisfy the \cref{condition:A} instead of the Markov chain conditions. Let us show that the Markov chain conditions follow from this condition. 
	
	Using the block diagonal structure of \cref{condition:A}, we can use~\Cref{lemma:block_attack} to show that there exists a block diagonal state $\nu_{Q_1^nE'}$ with the same passing probability $\nu[\Omega] = \rho[\Omega]$, which satisfies $H_{\min}^\epsilon(S_i^n|P_i^nE)_{\rho_{|\Omega}} \geq H_{\min}^\epsilon(S_i^n|P_i^nE')_{\nu_{|\Omega}}$. Indeed, this follows by using \Cref{eq: block_attack_trace} and the strong subadditivity of smooth min-entropies. We conclude that whatever min-entropy lower-bound statement holds for $\nu_{Q_1^nE'}$ must also hold for $\rho_{Q_1^nE}$.
	
	To prove \Cref{prop:Markov_block_diag2}, we then show that the new state $\nu_{Q_1^nE}$ satisfies the Markov chain conditions in \Cref{eq:Markov_cond}. We can do this by showing that given the state $\nu_{S_1^{i-1} P_1^{i-1} E'}$, there exists a recovery map $\mathcal R_{P_iE\leftarrow E'}$ which acts on Eve's registers $E' = (E,\Lambda_1^n)$ and recovers the register $P_i$, i.e., $\mathcal R (\nu_{S_1^{i-1} P_1^{i-1} E'}) = \nu_{S_1^{i-1} P_1^{i} E}$.  Since the POVM elements $M_{p}$ are proportional to the identity in every subspace $V^\lambda$, the probability distribution of $P_i$ only depends on the subspace $V^\lambda$, which is indicated in Eve's register $\Lambda_i$. This means the variable $P_i$ can therefore be recovered from $\Lambda_i$ by simply generating it with the distribution $\Pr_{P}(p|\lambda) = m_{p}^\lambda$. Moreover, as we have shown the Markov chain conditions for $\nu_{Q^{n}_{1}E'}$ and without loss of generality, Eve can only do better using a state of the form $\nu$, we can conclude we may use the EAT (\Cref{thm:EATv2}) to prove security.
	
	Having shown that the Markov chain conditions follow from \cref{condition:A}, it remains to show that \labelcref{condition:A} and \labelcref{condition:B} are equivalent. To do so, we first note that the commutation relation $[M_{sp},M_{p'}] = 0$ in \cref{condition:B} is a direct consequence of the block diagonal conditions in \labelcref{condition:A}. Conversely, we note that by construction the POVM elements $M_p$'s are linear combinations of the measurement operators $M_{sp}$'s. Therefore, the commutation relation in \cref{condition:B} implies the commutation relation $[M_p,M_{p'}] = 0$ for all $p,p' \in \cP$. This consequently implies the existence of a joint basis that simultaneously diagonalizes the operators $M_p$'s. We can thus find an orthogonal decomposition $V^\lambda$ so that $M_p = \bigoplus_\lambda m_p^\lambda \idm_{V^\lambda}$. Using the commutation relation in \cref{condition:B} a second time implies that the operators $M_{sp}$ must be block diagonal along the same subspaces $V^{\lambda}$'s.
\end{proof}
\begin{remark}
    We note this proof does not change for R\'{e}nyi entropies as they also satisfy strong subadditivity.
\end{remark}

\section{Security of QKD using EAT without Smoothing}\label{app:EAT-Sec-without-Smoothing}

In this appendix, we present the necessary technical lemmas and the resulting proof for \Cref{thm:keylengthwithoutsmoothing}. The primary tool is Dupuis' recent result on applying privacy amplification without smoothing \cite{Dupuis2021}. However, we will also need to prove a max-tradeoff function upper bound on $H^{\uparrow}_{\alpha}(S_{1}^{n}|P_{1}^{n}E)_{\rho_{|\Omega}}$. We note that in the original work \cite{Dupuis2016} a max-tradeoff function upper bound for $H^{\uparrow}_{\alpha}$ is implied to be straightforward given the max-tradeoff result presented in said work and simply not presented because upper bounding an alternative entropy will result in a tighter bound on $H^{\ve}_{\max}$ \cite[Footnote 13]{Dupuis2016}. For completeness, we present our proof of such a bound. Throughout this appendix we present results using the Greek letters $(\alpha,\beta,...)$ that will be ultimately used in our key length security result rather than always using $\alpha$ for the R\'{e}nyi entropy, so that it is easier to see how the results combine. We refer to \Cref{sec:notation} for a summary of common notations. We do however use $A_{i}$ and $B_{i}$ rather than $S_{i}$ and $P_{i}$ respectively from the main text.

\subsection{Background R{\'e}nyi divergence results}

 Here we present the special cases of the R\'{e}nyi entropy chain rules which we will need in this section. We refer to \Cref{sec:notation} for basic entropic notation used in this work.
\begin{theorem}[{\cite[Theorem 1]{Dupuis2015}}] \label{thm:chain_rule}
	Let $\rho_{ABC} \in \Density(ABC)$, $\tau_{C} \in \Density(C)$, $\alpha,\beta,\gamma \in (1/2,1) \cup (1,\infty)$ such that $\frac{\alpha}{\alpha-1} = \frac{\beta}{\beta-1} + \frac{\gamma}{\gamma-1}$.
	\begin{enumerate}
		\item Case 1: If $(\alpha-1)(\beta-1)(\gamma-1) > 0$, \quad
		$ H_{\beta}^{\uparrow}(A|BC)_{\rho} \leq H_{\alpha}(\rho_{ABC}||\tau_{C}) - H_{\gamma}(\rho_{BC}||\tau_{C}) \ . $
		\item Case 2: If $(\alpha-1)(\beta-1)(\gamma-1) < 0$, \quad
		$ H_{\alpha}(\rho_{ABC}||\tau_{C}) \leq H^{\uparrow}_{\beta}(A|BC)_{\rho} + H_{\gamma}(\rho_{BC}||\tau_{C}) \ . $
	\end{enumerate}
\end{theorem}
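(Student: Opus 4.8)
The plan is to recognize the hypothesis $\frac{\alpha}{\alpha-1} = \frac{\beta}{\beta-1} + \frac{\gamma}{\gamma-1}$ as the statement that Hölder-conjugate exponents add. Writing $x' := \tfrac{x}{x-1}$ so that $\tfrac1x + \tfrac1{x'} = 1$, the constraint reads $\alpha' = \beta' + \gamma'$, which is exactly the index relation governing a (possibly reverse) Hölder inequality for Schatten norms. First I would unfold every entropy into a sandwiched R\'enyi divergence against a product reference, using $H^{\uparrow}_{\beta}(A|BC)_{\rho} = -\inf_{\sigma_{BC}} D_{\beta}(\rho_{ABC}\|\mathbb{1}_A\otimes\sigma_{BC})$, $H_{\alpha}(\rho_{ABC}\|\tau_C) = -D_{\alpha}(\rho_{ABC}\|\mathbb{1}_{AB}\otimes\tau_C)$, and $H_{\gamma}(\rho_{BC}\|\tau_C) = -D_{\gamma}(\rho_{BC}\|\mathbb{1}_{B}\otimes\tau_C)$. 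After this substitution both cases collapse to a single three-divergence inequality glued along the $C$ register,
\[
 D_{\alpha}(\rho_{ABC}\|\mathbb{1}_{AB}\otimes\tau_C) \ \lessgtr\ D_{\beta}(\rho_{ABC}\|\mathbb{1}_{A}\otimes\sigma_{BC}) + D_{\gamma}(\rho_{BC}\|\mathbb{1}_{B}\otimes\tau_C),
\]
where the direction of $\lessgtr$ and the quantifier on $\sigma_{BC}$ are dictated by the signs of the three prefactors $\tfrac{1}{x-1}$: in Case 1 the inequality ($\le$) must hold for \emph{all} $\sigma_{BC}$, because $H^{\uparrow}_{\beta}$ sits on the bounded side as a supremum, whereas in Case 2 the inequality ($\ge$) need only be realized by \emph{one} convenient $\sigma_{BC}$. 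The product $(\alpha-1)(\beta-1)(\gamma-1)$ simply records which single $\tfrac1{x-1}$ is negative, and this is precisely what selects between the forward and reverse Hölder inequalities.

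Second, I would supply the engine. For Case 2, the natural move is to choose the ``glued'' reference $\sigma_{BC} \propto (\mathbb{1}_B\otimes\tau_C)^{a}\,\rho_{BC}^{b}\,(\mathbb{1}_B\otimes\tau_C)^{a}$ with $a,b$ fixed by matching the exponents $\beta,\gamma,\alpha$; plugging this candidate into $D_{\beta}$ and expanding $\Tr[(\cdots)^{\beta}]$ reduces the claim to a Hölder bound on a product of three fractional powers, and because $\alpha'=\beta'+\gamma'$ the three Schatten exponents are conjugate and the factors multiply out exactly. For Case 1, where $\sigma_{BC}$ is arbitrary, I would instead factor the reference inside the trace, $\mathbb{1}_{AB}\otimes\tau_C = (\mathbb{1}_A\otimes\sigma_{BC})^{u}\,T\,(\mathbb{1}_A\otimes\sigma_{BC})^{u}$ with $T$ supported on $BC$, and bound $\Tr[(\tau_C^{\frac{1-\alpha}{2\alpha}}\rho\,\tau_C^{\frac{1-\alpha}{2\alpha}})^{\alpha}]$ by a product of the $\beta$- and $\gamma$-traces via the same conjugate-index Hölder inequality; since this holds for every $\sigma_{BC}$ it bounds the supremum defining $H^{\uparrow}_{\beta}$.

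The hard part will be making the operator Hölder step rigorous in the presence of non-commutativity: the factors $\sigma_{BC}^{\frac{1-\beta}{2\beta}}$, $\tau_C^{\frac{1-\gamma}{2\gamma}}$ and $\rho$ act on overlapping tensor factors and do not commute, so one cannot naively split $\Tr[(\cdots)^{\alpha}]$ into a product of norms with the desired exponents. The clean way around this is complex interpolation: set up an analytic family $z \mapsto (\mathbb{1}_A\otimes\sigma_{BC})^{z}\,\rho^{1/2}\cdots$ bounded on the strip, verify the boundary Schatten-norm estimates, and apply the Stein--Hirschman three-lines theorem, which turns the formal conjugate-exponent matching into an honest inequality; equivalently one uses the Araki--Lieb--Thirring inequality together with the dual variational formula for $D_{\alpha}$ and Sion's minimax theorem to interchange the optimization over $\sigma_{BC}$ with the trace. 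The restriction $\alpha,\beta,\gamma \in (\tfrac12,1)\cup(1,\infty)$ is what keeps all fractional powers and the interpolation well-defined, and once the interpolation lemma is in hand the remaining sign analysis and exponent bookkeeping are routine.
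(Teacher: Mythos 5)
You should first know that the paper does not prove this statement at all: it is imported verbatim from Theorem 1 of Ref.~\cite{Dupuis2015}, and the only original content the paper attaches to it is the remark that the hypothesis can be stated in terms of the \emph{sign} of $(\alpha-1)(\beta-1)(\gamma-1)$ rather than the product being bounded by one, which one sees by inspecting Propositions 7 and 8 of that reference. So the comparison has to be made against the proof in Ref.~\cite{Dupuis2015} itself. Your outline does track that proof's architecture correctly: unfolding all three entropies into sandwiched divergences, noting that Case 1 is a universally quantified three-divergence inequality (because $H^{\uparrow}_{\beta}$ sits on the bounded side as a supremum) while Case 2 only requires a single witness $\sigma_{BC}$, and identifying Stein--Hirschman complex interpolation as the engine is exactly how Dupuis proceeds, with the two sign cases handled by two separate interpolation arguments (his Propositions 7 and 8).

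The genuine gap is that your proposal defers the entire mathematical content to the sentence ``set up an analytic family \ldots and apply the Stein--Hirschman three-lines theorem.'' The theorem \emph{is} that interpolation argument: one must exhibit the operator-valued family $F(z)$, choose the interpolation parameter and the boundary Schatten exponents, and verify that the boundary norms reproduce the $\beta$- and $\gamma$-divergences while the interior point reproduces the $\alpha$-divergence; the hypothesis $\frac{\alpha}{\alpha-1}=\frac{\beta}{\beta-1}+\frac{\gamma}{\gamma-1}$ is consumed only in that bookkeeping. In particular, it does not act as a single conjugate-exponent Hölder split of the trace you write down: the Schatten exponents actually present in $D_{\alpha}$, $D_{\beta}$, $D_{\gamma}$ are (up to a factor $2$) $\alpha,\beta,\gamma$ themselves, and the stated hypothesis does not imply $\frac{1}{\alpha}=\frac{1}{\beta}+\frac{1}{\gamma}$, so even the formal commutative factorization is not a one-shot Hölder application but an iterated one with weights --- which is precisely what the interpolation parametrization encodes. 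Two further sub-claims are shaky: your guessed witness $\sigma_{BC}\propto(\mathbb{1}_B\otimes\tau_C)^{a}\,\rho_{BC}^{b}\,(\mathbb{1}_B\otimes\tau_C)^{a}$ places the fractional power on $\rho_{BC}$ inside the product, whereas the state that emerges from the interpolation is of the tilted form $\bigl((\mathbb{1}_B\otimes\tau_C)^{a}\,\rho_{BC}\,(\mathbb{1}_B\otimes\tau_C)^{a}\bigr)^{b}$, and these differ non-commutatively; and the proposed alternative route via Araki--Lieb--Thirring plus Sion's minimax is speculative, since $\sigma_{BC}\mapsto D_{\beta}(\rho_{ABC}\|\mathbb{1}_{A}\otimes\sigma_{BC})$ lacks the concavity structure needed to invoke Sion. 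None of this means the plan is wrong --- it is the right plan --- but filling the gap amounts to reproducing Propositions 7 and 8 of Ref.~\cite{Dupuis2015}, which is exactly the proof the paper chose to cite rather than reproduce.
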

\begin{remark}
    In Ref. \cite{Dupuis2015}, the result is presented with the product $(\alpha-1)(\beta-1)(\gamma-1)$ to be bounded by one rather than zero. If one checks the conditions used to derive the theorem (\cite[Propositions 7 and 8]{Dupuis2015}), it only depends on the sign of the product. This has been realized in previous works, e.g.\ \cite{Tomamichel2016}.
\end{remark}

\begin{corollary}[Special Cases of Chain Rules]\label{prop:chain_rule_special_case}\ 
	Let $\rho_{ABC} \in \Density(ABC)$, $\tau_{C} \in \Density(C)$. Let $\alpha,\beta,\delta,\eta,\gamma,\zeta \in (1/2,1) \cup (1,\infty)$. 
	\begin{enumerate}
		\item Let $(\eta-1)(\delta-1)(\zeta-1) > 0$, $\frac{\eta}{\eta-1} = \frac{\delta}{\delta-1} + \frac{\zeta}{\zeta-1}$ and $B$ be a trivial (one-dimensional) register. Then
		$ H_{\delta}^{\uparrow}(A|C)_{\rho} \leq H_{\eta}(\rho_{AC}||\tau_{C}) + D_{\zeta}(\rho_{C}||\tau_{C}) \ . $
		\item Let $(\alpha-1)(\beta-1)(\delta-1) < 0$ and $\frac{\beta}{\beta-1} = \frac{\alpha}{\alpha-1} + \frac{\delta}{\delta-1}$. Then $H^{\uparrow}_{\beta}(AB|C)_{\rho} \leq H^{\uparrow}_{\alpha}(A|BC)_{\rho} + H^{\uparrow}_{\delta}(B|C)_{\rho} \ . $
	\end{enumerate}
\end{corollary}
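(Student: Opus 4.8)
The plan is to obtain both items as direct specializations of the two cases of \Cref{thm:chain_rule}, so no genuinely new estimate is required; essentially all the work is in matching parameters and, for the second item, in one supremum manipulation.

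For the first item, I would apply Case 1 of \Cref{thm:chain_rule} with the relabelling $\alpha \mapsto \eta$, $\beta \mapsto \delta$, $\gamma \mapsto \zeta$. This is legitimate because the sign hypothesis $(\eta-1)(\delta-1)(\zeta-1)>0$ and the relation $\frac{\eta}{\eta-1}=\frac{\delta}{\delta-1}+\frac{\zeta}{\zeta-1}$ are exactly the hypotheses of that case. Since $B$ is taken to be one-dimensional, the systems collapse as $\rho_{ABC} = \rho_{AC}$ and $\rho_{BC}=\rho_{C}$, so Case 1 reads $H_\delta^{\uparrow}(A|C)_\rho \le H_\eta(\rho_{AC}\|\tau_C) - H_\zeta(\rho_C\|\tau_C)$. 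The only remaining step is to recognise, from the definition $H_\zeta(\rho_C\|\tau_C) = -D_\zeta(\rho_C\|\mathbb{1}\otimes\tau_C) = -D_\zeta(\rho_C\|\tau_C)$ (the $A$-register of that definition being trivial here), that $-H_\zeta(\rho_C\|\tau_C) = D_\zeta(\rho_C\|\tau_C)$, which yields the claimed bound verbatim.

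For the second item, I would start from Case 2 of \Cref{thm:chain_rule} under the relabelling $\alpha\mapsto\beta$, $\beta\mapsto\alpha$, $\gamma\mapsto\delta$; one checks that the hypothesis $(\alpha-1)(\beta-1)(\delta-1)<0$ and the relation $\frac{\beta}{\beta-1}=\frac{\alpha}{\alpha-1}+\frac{\delta}{\delta-1}$ are precisely the relabelled hypotheses of that case. This gives, for every $\tau_C \in \Density(C)$,
\[
H_\beta(\rho_{ABC}\|\tau_C) \le H^{\uparrow}_\alpha(A|BC)_\rho + H_\delta(\rho_{BC}\|\tau_C).
\]
The key move is then to take the supremum over $\tau_C$ on both sides. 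Because the term $H^{\uparrow}_\alpha(A|BC)_\rho$ carries no $\tau_C$-dependence, the supremum distributes additively, and by the very definition of the up-arrow entropies one has $\sup_{\tau_C} H_\beta(\rho_{ABC}\|\tau_C) = H^{\uparrow}_\beta(AB|C)_\rho$ and $\sup_{\tau_C} H_\delta(\rho_{BC}\|\tau_C) = H^{\uparrow}_\delta(B|C)_\rho$. Combining these with monotonicity of the supremum (the supremum of the smaller function is bounded by that of the larger) delivers $H^{\uparrow}_\beta(AB|C)_\rho \le H^{\uparrow}_\alpha(A|BC)_\rho + H^{\uparrow}_\delta(B|C)_\rho$.

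The argument is routine once the bookkeeping is fixed, so the closest thing to an obstacle is the supremum step in the second item: one must confirm that optimising the reference state $\tau_C$ converts each ordinary divergence term $H_\gamma(\cdot\|\tau_C)$ into the corresponding up-arrow entropy while leaving the already-optimised $H^{\uparrow}_\alpha(A|BC)$ untouched, and that the parameter constraints inherited from \Cref{thm:chain_rule} survive the relabelling (in particular that $\alpha,\beta,\delta$ all remain in $(1/2,1)\cup(1,\infty)$, as assumed).
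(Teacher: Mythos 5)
Your proof is correct and takes essentially the same approach as the paper: Item 1 is obtained exactly as in the paper's proof by dropping the trivial register $B$, relabelling $(\alpha,\beta,\gamma)\to(\eta,\delta,\zeta)$ in Case 1 of the chain-rule theorem, and using the sign flip $-H_{\zeta}(\rho_{C}\|\tau_{C}) = D_{\zeta}(\rho_{C}\|\tau_{C})$ coming from the trivial register in the definition. For Item 2 the paper instead evaluates Case 2 at the particular $\tau_{C}$ attaining $H^{\uparrow}_{\beta}(AB|C)_{\rho}$ and then upper-bounds $H_{\delta}(\rho_{BC}\|\tau_{C})$ by $H^{\uparrow}_{\delta}(B|C)_{\rho}$, whereas you take the supremum over $\tau_{C}$ on both sides; these are the same idea, and your variant has the negligible advantage of not requiring the supremum to be attained.
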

\begin{proof}
	Item 1 follows from Item 1 of \Cref{thm:chain_rule} by noting if $B$ is trivial, the $B$ register may be dropped. Because $H_{\gamma}(\rho_{BC}||\tau_{C}) = -D_{\gamma}(\rho_{C}|| \idm_{\mathbb{C}} \otimes \tau_{C})$, the negative signs cancel. Dropping the register $B$ and setting $\beta \to \delta$, $\alpha \to \eta$, and $\gamma \to \zeta$ completes the proof. Item 2 follows from Item 2 of \Cref{thm:chain_rule} by letting $\tau_{C}$ be the optimizer for $H^{\uparrow}_{\alpha}(AB|C)_{\rho}$. Then by definition, $H^{\uparrow}_{\gamma}(B|C)_{\rho} \geq H_{\gamma}(\rho_{BC}||\tau_{C})$. Relabeling by $\alpha \to \beta$, $\beta \to \alpha$,  $\gamma \to \delta$ completes the proof.
\end{proof}

\subsection{Applying privacy amplification without smoothing}
In Ref. \cite{Dupuis2021}, it was shown that privacy amplification could be performed without smoothing. While it was stated as holding for strongly 2-universal hashing functions, it is straightforward to show that the proof is the same for a general family of 2-universal hash functions. Recall that a family of 2-universal hash functions is the pair $(p_{F},\cF)$ where $\cF$ is a set of functions from discrete alphabet $\cX$ to discrete alphabet $\cZ$ and $p_{f}$ is a probability distribution over $\cF$ such that $\Pr_{f}[f(x) = f(x')] \leq |\cZ|^{-1}$ for all $x \neq x' \in \cX$. Altering notation to better fit for our purposes, Theorem 8 of Ref. \cite{Dupuis2021} is in terms of $\mathbb{E}_{f \in \cF} \left \| \rho_{ZE}^{f} - \frac{1}{|\{0,1\}^{\ell}|} \mathbb{1}_{Z} \otimes \rho_{E} \right \|_{1}$ where $\rho^{f}_{ZE}$ is $\rho_{XE}$ after the hash function $f$ has been applied to register $X$ and $\cZ = \{0,1\}^{\ell}$. Using that announcing the choice of hash function $f \in \cF$ will make the state block diagonal in the register $F$ that stores this announcement, it is clear that 
\begin{align*}
\|\rho_{ZEF} - \frac{1}{|\{0,1\}^{\ell}|} \mathbb{1}_{Z} \otimes \rho_{EF} \|_{1} &= \sum_{f} p(f) \|\rho^{f}_{ZE} -  \frac{1}{|\{0,1\}^{\ell}|} \mathbb{1}_{Z} \otimes \rho_{E} \|_{1} \\
&= \mathbb{E}_{f} \|\rho^{f}_{ZE} - \frac{1}{|\{0,1\}^{\ell}|} \mathbb{1}_{Z} \otimes \rho_{E} \|_{1} \ , 
\end{align*}
so we state the result in the fashion more commonly used in QKD.
\begin{prop}[{Special Case of \cite[Theorem 8]{Dupuis2021}}]
	Let $\rho_{XE} \in \mathrm{D}(XE)$ be a classical-quantum state where $X \cong \mathbb{C}^{|\cX|}$ where $\cX$ is some finite set. Let $(p_{f},\cF)$ be a family of two-universal hash functions from $\cX \to \cZ$ and $\cZ = \{0,1\}^{\ell}$. Let $\alpha \in (1,2)$. Then
	\begin{align*}
		\left \| \rho_{ZEF} - \frac{1}{|\{0,1\}^{\ell}|} \mathbb{1}_{Z} \otimes \rho_{EF} \right \|_{1} \leq  2^{\frac{2}{\alpha} - 1} \cdot 2^{\frac{1-\alpha}{\alpha} \left( H_{\alpha}^{\uparrow}(X|E)_{\rho} - \ell \right)} \ ,
	\end{align*}
	where $F$ is the register storing the announcement of the hash function.
\end{prop}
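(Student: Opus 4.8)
The plan is to reduce the statement to Dupuis' leftover hashing lemma and then verify that his argument invokes the hash family only through the two-universality collision bound. The reduction is already in hand: since announcing the choice of hash function $f \in \cF$ renders the joint state block diagonal in the register $F$, the left-hand side equals $\mathbb{E}_{f}\, \| \rho^{f}_{ZE} - |\cZ|^{-1}\mathbb{1}_{Z}\otimes\rho_{E}\|_{1}$, as established in the display immediately above the proposition. It therefore suffices to bound this expectation by $2^{\frac{2}{\alpha}-1}\, 2^{\frac{1-\alpha}{\alpha}(H_{\alpha}^{\uparrow}(X|E)_{\rho} - \ell)}$, which is exactly the content of Theorem 8 of \cite{Dupuis2021} provided that theorem applies to a general (not necessarily \emph{strongly}) two-universal family.

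The core of Dupuis' estimate, which I would reproduce at the level of its structure, proceeds as follows. Fix a reference state $\sigma_{E}>0$ and weight the deviation by $\sigma_{E}$: by a H\"older-type (sandwiching) inequality one converts the trace norm into a $\sigma_{E}$-weighted sandwiched $\alpha$-norm, bounding $\| \rho^{f}_{ZE} - |\cZ|^{-1}\mathbb{1}_{Z}\otimes\rho_{E}\|_{1}$ by a Schatten-$\alpha$ ``collision'' functional of the hashed state. Taking the expectation over $f$ and pulling it inside the functional (legitimate for $\alpha \in (1,2)$, since the relevant power $\frac{\alpha-1}{\alpha}\in(0,1)$ is concave and Jensen applies in the favorable direction), the expectation of the collision functional splits into a diagonal contribution from the terms with coinciding preimages $x=x'$ and an off-diagonal contribution from pairs $x\neq x'$. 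The diagonal part reproduces the factor $|\cZ|^{-1}=2^{-\ell}$ multiplying a $\sigma_{E}$-weighted Schatten-$\alpha$ quantity, while the off-diagonal part is controlled entirely by the collision probabilities $\Pr_{f}[f(x)=f(x')]$. Bounding $\Pr_{f}[f(x)=f(x')]\le |\cZ|^{-1}=2^{-\ell}$ for all $x\neq x'$ and then optimizing over the reference state $\sigma_{E}$ — which is precisely what promotes $H_{\alpha}$ to the up-arrow entropy $H_{\alpha}^{\uparrow}(X|E)_{\rho}$ — yields exactly the claimed right-hand side, including the constant $2^{\frac{2}{\alpha}-1}$. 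No smoothing enters at any step, so the bound is stated directly in terms of $H_{\alpha}^{\uparrow}$.

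The only nontrivial point, and the \emph{main obstacle}, is confirming that two-universality is all Dupuis' proof demands of $\cF$. The defining relation of a strongly two-universal family is never used beyond the single collision estimate $\Pr_{f}[f(x)=f(x')]\le |\cZ|^{-1}$: because the expansion of the expected collision functional contains only pairwise terms, no higher-order independence of the hash outputs is exploited. Since a (weakly) two-universal family satisfies precisely this collision bound by definition, every inequality in the derivation goes through verbatim, establishing the proposition. I would therefore present the proof as the reduction above together with a careful inspection of this single averaging step in \cite{Dupuis2021}.
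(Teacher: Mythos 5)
Your proposal takes essentially the same route as the paper: the paper's entire proof of this proposition consists of the block-diagonal reduction you reuse (the display immediately above the statement) together with the citation of Dupuis' Theorem 8 and the remark that, since its proof invokes the hash family only through the pairwise collision bound $\Pr_{f}[f(x)=f(x')] \leq |\cZ|^{-1}$, weak two-universality suffices in place of strong two-universality. Your sketch of the internal structure of Dupuis' argument (H\"older sandwiching with a reference state $\sigma_{E}$, Jensen over $f$, diagonal/off-diagonal splitting, optimization over $\sigma_{E}$ yielding $H^{\uparrow}_{\alpha}$) goes beyond what the paper records, but the substance and the single load-bearing claim are identical.
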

\noindent From this we obtain the following corollary immediately.
\begin{corollary}\label{corr:LHL_without_smoothing}
	Let $\rho_{XE{|\Omega}} \in \mathrm{D}(XE)$ be a classical-quantum state. Let $\varepsilon_{sec} := \frac{1}{2} \cdot 2^{\frac{2}{\alpha} - 1} \cdot 2^{\frac{1-\alpha}{\alpha} \left( H_{\alpha}^{\uparrow}(X|E)_{\rho} - \ell \right)}$ where $\alpha \in (1,2)$. Then for an $\varepsilon_{sec}$-secret key, the length of the key, $\ell$, must satisfy the following bound:
	\begin{align*}
		\ell \leq H^{\uparrow}_{\alpha}(X|E)_{\rho} - \frac{\alpha}{\alpha-1}\log(\frac{1}{\vSec}) + 2\ .
	\end{align*}
\end{corollary}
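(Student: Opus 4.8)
The plan is to obtain the corollary as a direct algebraic consequence of the preceding Proposition (the special case of Dupuis' Theorem 8), so that no new information-theoretic input is required; the entire content is inverting the secrecy bound to solve for the largest admissible key length $\ell$.

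First I would identify the trace-distance quantity appearing in the Proposition with the secrecy criterion of \Cref{eq:secrecy}. Since the hash-function register $F$ is announced publicly it becomes part of Eve's side information, and $Z = f(X)$ is the final key, so the quantity controlling $\vSec$-secrecy is $\tfrac{1}{2}\| \rho_{ZEF} - 2^{-\ell}\,\mathbb{1}_{Z} \otimes \rho_{EF} \|_{1}$. The Proposition bounds the (unhalved) norm by $2^{\frac{2}{\alpha}-1} \cdot 2^{\frac{1-\alpha}{\alpha}(H_{\alpha}^{\uparrow}(X|E)_{\rho} - \ell)}$, so including the factor $\tfrac{1}{2}$ reproduces exactly the quantity $\varepsilon_{sec}$ defined in the statement. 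Imposing $\varepsilon_{sec} \leq \vSec$ and taking $\log_{2}$ of both sides — which preserves the inequality, as $\log$ is monotone increasing — yields a linear inequality in $\ell$ of the form $-1 + \left(\tfrac{2}{\alpha}-1\right) + \tfrac{1-\alpha}{\alpha}\bigl(H_{\alpha}^{\uparrow}(X|E)_{\rho} - \ell\bigr) \leq \log \vSec$.

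The one point that requires care is the final rearrangement: because $\alpha \in (1,2)$, the coefficient $\tfrac{1-\alpha}{\alpha}$ is strictly negative, so isolating $\ell$ flips the inequality direction. After multiplying through by $\tfrac{\alpha}{\alpha-1} > 0$ and collecting the constant terms, the combination $\tfrac{\alpha}{\alpha-1}\bigl(2 - \tfrac{2}{\alpha}\bigr) = \tfrac{2(\alpha-1)}{\alpha-1}$ collapses to the clean constant $+2$, while $\tfrac{\alpha}{\alpha-1}\log \vSec = -\tfrac{\alpha}{\alpha-1}\log(1/\vSec)$ supplies the stated penalty term. This delivers $\ell \leq H_{\alpha}^{\uparrow}(X|E)_{\rho} - \tfrac{\alpha}{\alpha-1}\log(1/\vSec) + 2$. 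I do not expect any genuine obstacle here beyond this sign bookkeeping, since the substantive work has already been absorbed into the Proposition; the corollary is purely the solution of the secrecy bound for the maximal key length.
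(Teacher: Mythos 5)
Your proposal is correct and takes essentially the same route as the paper's own proof: identify the Proposition's trace-distance bound with the secrecy parameter, take logarithms, and solve the resulting linear inequality for $\ell$, with the constants collapsing to the $+2$ and the $-\frac{\alpha}{\alpha-1}\log(1/\vSec)$ terms. If anything, your direction bookkeeping (imposing the achieved bound $\varepsilon_{sec}\leq\vSec$ and dividing by the positive coefficient $\frac{\alpha-1}{\alpha}$) is tidier than the paper's write-up, which starts from the reversed inequality $\vSec\leq 2^{\frac{2}{\alpha}-2}\cdot 2^{\frac{1-\alpha}{\alpha}(H_{\alpha}^{\uparrow}(X|E)_{\rho}-\ell)}$ and silently absorbs the sign flip.
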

\begin{proof}\begingroup
	Starting from how we defined $\vSec$:
	\allowdisplaybreaks
	\begin{align*}
		& \vSec \leq 2^{\frac{2}{\alpha} - 2} \cdot 2^{\frac{1-\alpha}{\alpha} \left( H_{\alpha}^{\uparrow}(X|E)_{\rho} - \ell \right)} \\
		\Rightarrow & \log(\vSec) \leq \frac{2}{\alpha} - 2 + \frac{1-\alpha}{\alpha} \left( H_{\alpha}^{\uparrow}(X|E)_{\rho} - \ell \right) \\
		\Rightarrow & \frac{1-\alpha}{\alpha}\ell \leq \frac{2}{\alpha} - 2 + \frac{1-\alpha}{\alpha} H_{\alpha}^{\uparrow}(X|E)_{\rho} + \log(\frac{1}{\vSec})\\
		\Rightarrow & \ell \leq H^{\uparrow}_{\alpha}(X|E)_{\rho} + \frac{\alpha}{1-\alpha}\log(\frac{1}{\vSec}) + 2 \ .
	\end{align*}
	Pulling a negative sign out of the $\frac{\alpha}{1-\alpha}$ term completes the proof.
	\endgroup
\end{proof}
\begin{remark}
	Note we added the $\frac{1}{2}$ term in defining $\vSec$ because trace distance should be divided by two for the operational interpretation.
\end{remark}

\subsection{Entropy Accumulation Theorem for upper bound on \texorpdfstring{$H_{\delta}^{\uparrow}$}{Lg}}
In this section we prove an upper bound on $H^{\uparrow}_{\delta}$ in terms of a max-tradeoff function. While almost all of the work is done by \cite{Dupuis2016}, we split it into two pieces as we will need to use a chain rule and also as it allows us to specify what results of \cite{Dupuis2016} we use at each step. As explained earlier, this result is implied to be expected to exist in some form in \cite[Footnote 13]{Dupuis2016}. We present it in its entirety for rigour.

As we switched the secret and public register notation, we emphasize in the current notation the Markov conditions are:
\begin{align}\label{eqn:Markov_cond_alt}
		A_1^{i-1} \leftrightarrow B_1^{i-1}  E \leftrightarrow B_{i} \qquad \forall i \in \{1, \ldots, n\} \ .
\end{align}
We proceed by using the original entropy price function which we know is sub-optimal in the second-order terms, but that will not matter for our purposes. For every $i$, we use a CPTP map $\mathcal{D}_{i}: X_{i} \to X_{i}D_{i}\overline{D}_{i}$
$$ \mathcal{D}_{i}(W_{X_{i}}) = \sum_{x \in \cX} \bra{x} W_{X_{i}} \ket{x} \dyad{x} \otimes \tau(x)_{D_{i}\overline{D}_{i}} \ , $$
where $\tau(x)_{D_{i}\ol{D}_{i}}$ is such that $H(D_{i}|\ol{D}_{i})_{\tau(x)} = \overline{g} -f(\delta_{x})$, $f$ is a min-tradeoff function, $[g_{\min},g_{\max}]$ is the smallest real interval containing $f(\PP(\cX))$ and $\overline{g} := (g_{\min} + g_{\max})/2$. This is possible by letting $\tau(x)_{D_{i}\overline{D}_{i}}$ be a mixture of the maximally entangled state and the maximally mixed state. Following the argument from \cite[Beginning of Proof of Proposition 4.5]{Dupuis2016}, this can be done if $|D_{i}| \leq \exp(\lceil \|\nabla f\|_{\infty} \rceil)$. We also note the general definition $\ol{\rho} := (\cD_{n} \circ \hdots \circ \cD_{1})(\rho)$. With all of this specified, we begin with our equivalent of Claim 4.6 of \cite{Dupuis2016}.

\begin{lemma}\label{lem:claim46}
	Let $\delta \in [1/2,\infty)$. Assuming the Markov chain conditions are satisfied [\Cref{eqn:Markov_cond_alt}], then
	$$ H^{\uparrow}_{\delta}(A^{n}_{1}|B^{n}_{1}E)_{\rho_{|\Omega}} \leq H^{\uparrow}_{\delta}(A^{n}_{1}D^{n}_{1}|B^{n}_{1}E\ol{D}^{n}_{1})_{\overline{\rho}_{|\Omega}} + nh - n\ol{g} \ ,$$
	where $h \in \mathbb{R}$ that satisfies the conditions assumed in \Cref{thm:EATv2} and $\ol{g}$ is as defined above.
\end{lemma}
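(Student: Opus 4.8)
The plan is to mirror the proof of Claim 4.6 in \cite{Dupuis2016}, treating the appended registers $D_i\ol D_i$ as an ``entropy bank'' and exploiting two structural features of the states $\tau(x)_{D_i\ol D_i}$ produced by the maps $\mathcal D_i$. First, since $\tau(x)$ is built as a mixture of a maximally entangled and a maximally mixed state, all of its conditional sandwiched R\'enyi entropies coincide with the von Neumann value, $H^\uparrow_\gamma(D_i|\ol D_i)_{\tau(x)}=H(D_i|\ol D_i)_{\tau(x)}=\ol g-f(\delta_x)$ for every order $\gamma$; this flatness is what will let the R\'enyi computation reduce to von Neumann-style bookkeeping with no $\delta$-dependent cross terms. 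Second, the reduced state of $\tau(x)$ on $\ol D_i$ is maximally mixed independently of $x$, so the $\ol D_i$ registers carry no information about $X_i$ (hence about $A_i,B_i$).

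Using the second feature, I would first observe that the map $\Delta_i$ only writes the subspace label into $\Lambda_i$ and appends $\tau(x)$, so the marginal $\ol\rho_{A^n_1B^n_1E\ol D^n_1}$ factorizes as $\rho_{A^n_1B^n_1E}\otimes\bigotimes_i(\idm/|\ol D_i|)$, and this factorization is preserved under conditioning on $\Omega$ because $\Omega$ is an event on $X^n_1$ alone, which is independent of $\ol D^n_1$. Hence adjoining $\ol D^n_1$ to the conditioning system is free, $H^\uparrow_\delta(A^n_1|B^n_1E\ol D^n_1)_{\ol\rho_{|\Omega}}=H^\uparrow_\delta(A^n_1|B^n_1E)_{\rho_{|\Omega}}$, and the lemma reduces to comparing $H^\uparrow_\delta(A^n_1|C)$ with $H^\uparrow_\delta(A^n_1D^n_1|C)$ for the fixed conditioning system $C:=B^n_1E\ol D^n_1$, i.e.\ to quantifying the effect of moving $D^n_1$ into the numerator.

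For that comparison I would invoke a chain rule for conditional sandwiched R\'enyi entropies derived from \Cref{thm:chain_rule} (in the spirit of \Cref{prop:chain_rule_special_case}), choosing the auxiliary order so that the $D^n_1$-term is evaluated where the flatness of $\tau$ applies. Because $X^n_1$ is a classical register and, conditioned on $X^n_1=x^n_1$, the registers $D^n_1\ol D^n_1$ are in the product state $\bigotimes_i\tau(x_i)$ uncorrelated with $A^n_1E$, the bank term collapses to $\sum_iH^\uparrow_\gamma(D_i|\ol D_i)_{\tau(x_i)}=\sum_i\bigl(\ol g-f(\delta_{x_i})\bigr)$ with no residual $\delta$-dependence. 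I would then use that $f$ is affine, so $\sum_if(\delta_{x_i})=n\,f(\freq{x^n_1})$ for each $x^n_1$, and that every $x^n_1\in\Omega$ satisfies the tradeoff bound on $f(\freq{x^n_1})$ supplied by the hypotheses of \Cref{thm:EATv2}; this replaces the bank term by $n\ol g-nh$ in the required direction and yields the stated inequality.

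The main obstacle is carrying out the chain-rule step at the level of the optimized conditional R\'enyi entropy $H^\uparrow_\delta$ rather than the von Neumann entropy while keeping the inequality pointing the correct way after conditioning on $\Omega$: a generic chain rule would leave the $D^n_1$-term attached to an order different from $\delta$ and contaminate it with correction terms, and it is exactly the flat-spectrum design of $\tau(x)$ (all R\'enyi orders giving the same value $\ol g-f(\delta_x)$) that neutralizes this and lets the elementary additive accounting survive for every $\delta\in[1/2,\infty)$. I expect the residual care to lie in checking that the sign condition on the product $(\eta-1)(\delta-1)(\zeta-1)$ and the harmonic relation $\tfrac{\eta}{\eta-1}=\tfrac{\delta}{\delta-1}+\tfrac{\zeta}{\zeta-1}$ among the chosen orders can be met simultaneously for all $\delta$ in the stated range, verified by the same case analysis that produced \Cref{prop:chain_rule_special_case} from \Cref{thm:chain_rule}.
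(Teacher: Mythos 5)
Your preliminary reductions (adjoining $\ol{D}^{n}_{1}$ to the conditioning system for free because its marginal is maximally mixed and independent, and removing the classical $X^{n}_{1}$) do match the paper's proof, but the core decomposition step fails. You propose to split $H^{\uparrow}_{\delta}(A^{n}_{1}D^{n}_{1}|C)$ with the three-parameter chain rule of \Cref{thm:chain_rule}. That rule forces the harmonic relation $\tfrac{\delta}{\delta-1}=\tfrac{\beta}{\beta-1}+\tfrac{\gamma}{\gamma-1}$ between the order $\delta$ of the composite term, the order $\beta$ of the extracted main term, and the order $\gamma$ of the bank term. Since $\tfrac{\gamma}{\gamma-1}\in(-\infty,-1)\cup(1,\infty)$ for every admissible $\gamma\in(1/2,1)\cup(1,\infty)$, you always get $\bigl|\tfrac{\beta}{\beta-1}-\tfrac{\delta}{\delta-1}\bigr|>1$, hence $\beta\neq\delta$ by a bounded-away-from-zero amount. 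The term that lands at the shifted order $\beta$ is $H^{\uparrow}_{\beta}(A^{n}_{1}|\cdots)$, evaluated on the \emph{arbitrary protocol state} rather than on the designer states $\tau(x)$, so no property of $\tau$ can move it back to order $\delta$: your argument would terminate in an inequality comparing $H^{\uparrow}_{\beta}$ to $H^{\uparrow}_{\delta}$ with $\beta\neq\delta$, which is not the claimed statement. The paper sidesteps this with a genuinely different key lemma, the exact single-order decomposition identity of \cite[Lemma 3.1]{Dupuis2016}, namely $-D_{\delta}(\rho_{A_{1}B}\|\idm_{A_1}\otimes\sigma_{B})+H_{\delta}(A_{2}|A_{1}B)_{\nu}=-D_{\delta}(\rho_{A_{1}A_{2}B}\|\idm_{A_1A_2}\otimes\sigma_{B})$, applied with $\sigma$ the product-form optimizer. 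Its price is that the bank term sits on a tilted state $\nu$, and one then needs the structural facts from \cite{Dupuis2016} (their Eqs.~(43), (46) and the argument below Eq.~(52)) to reduce it to $H_{\delta}(D^{n}_{1}|\ol{D}^{n}_{1}X^{n}_{1})_{\nu}\geq n\ol{g}-nh$. This is also why $X^{n}_{1}$ must stay in the numerator until the very end (removed via \cite[Lemma B.7]{Dupuis2016}), whereas you discard it before the decomposition and then need conditioning on it again for the bank bound.

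The second gap is that the ``flatness'' property your plan leans on is false. A mixture $p\,\Phi_{D\ol{D}}+(1-p)\idm/d^{2}$ of a maximally entangled and a maximally mixed state has one eigenvalue $p+(1-p)/d^{2}$ and $d^{2}-1$ equal eigenvalues $(1-p)/d^{2}$; since its $\ol{D}$-marginal is maximally mixed, $H_{\gamma}(D|\ol{D})_{\tau}=H_{\gamma}(\tau)-\log d$, which genuinely varies with $\gamma$ whenever $0<p<1$. What actually makes the bank bound work in the range where the lemma is applied ($\delta\in(1/2,1)$) is monotonicity of the R\'{e}nyi entropies in the order, $H_{\delta}\geq H_{1}=H$ for $\delta<1$, used after conditioning on the classical $X^{n}_{1}$ and invoking additivity over the product $\bigotimes_{i}\tau(x_{i})$ --- not order-independence. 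So the mechanism you invoke to ``neutralize'' the cross-order contamination does not exist, and without it your plan cannot avoid the same-order decomposition identity that the paper's proof is built on.
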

\begin{proof}
	First note that
	$$ \overline{\rho}_{A^{n}_{1}X^{n}_{1}B^{n}_{1}E\overline{D}^{n}_{1}|\Omega} = \frac{1}{\rho[\Omega]} \sum_{x^{n}_{1} \in \Omega} \dyad{x^{n}_{1}} \otimes \rho_{A^{n}_{1}B^{n}_{1}E,x^{n}_{1}} \otimes \Tr_{D^{n}_{1}}(\tau(x^{n}_{1})_{D^{n}_{1}\ol{D}^{n}_{1}}) \ . $$
	Note that $\Tr_{D_{i}}(\tau(x_{i})_{D_{i}\ol{D}_{i}}) = \frac{1}{\abs{\ol{D}_{i}}} \idm_{\ol{D}_{i}}$ for all $i$ and $x_{i}$ because $\tau(x_{i})$ is a mixture of the maximally mixed and maximally entangled state. It follows we know that register is independent of the rest of the registers and that
	\begin{align}\label{eq:marginal_indep} \overline{\rho}_{A^{n}_{1}X^{n}_{1}B^{n}_{1}E\overline{D}^{n}_{1}|\Omega} = \frac{1}{\rho[\Omega]} \sum_{x^{n}_{1} \in \Omega} \dyad{x^{n}_{1}} \otimes \rho_{A^{n}_{1}B^{n}_{1}E,x^{n}_{1}} \otimes \overline{\rho}_{\ol{D}^{n}_{1}} \ .
	\end{align}
	It follows the state is invariant under tracing off $\overline{\rho}_{\ol{D}^{n}_{1}}$ and re-appending it. Denote this map $\mathcal{R}$. By data-processing,
	\begin{align}\label{lemma1-entprice1:ineq}
		-D_{\delta}(\rho_{A^{n}_{1}X^{n}_{1}B^{n}_{1}E\ol{D}^{n}_{1}}||\idm_{A^{n}_{1}X^{n}_{1}}\otimes \sigma_{B^{n}_{1}E\ol{D}^{n}_{1}})_{\ol{\rho}_{|\Omega}} \, \leq & \, -D_{\delta}(\mathcal{R}(\rho_{A^{n}_{1}X^{n}_{1}B^{n}_{1}E\ol{D}^{n}_{1}})||\mathcal{R}(\idm_{A^{n}_{1}X^{n}_{1}}\otimes  \sigma_{B^{n}_{1}E\ol{D}^{n}_{1}}))_{\ol{\rho}_{|\Omega}} \nonumber \\
		= & \, 
		-D_{\delta}(\rho_{A^{n}_{1}X^{n}_{1}B^{n}_{1}E\ol{D}^{n}_{1}}||\idm_{A^{n}_{1}X^{n}_{1}} \otimes \ol{\sigma}_{B^{n}_{1}E\ol{D}^{n}_{1}})_{\ol{\rho}_{|\Omega}} \ ,
	\end{align}
	where $\ol{\sigma}_{B^{n}_{1}E\ol{D}^{n}_{1}}= \sigma_{B^{n}_{1}E} \otimes \ol{\rho}_{\ol{D}^{n}_{1}}$. This implies that $H^{\uparrow}_{\delta}(A^{n}_{1}X^{n}_{1}|B^{n}_{1}E\ol{D}^{n}_{1})_{\overline{\rho}_{|\Omega}}$ is obtained by a maximizer of the form $\ol{\sigma}_{B^{n}_{1}E\ol{D}^{n}_{1}}$. 
	
	Next we recall Lemma 3.1 of \cite{Dupuis2016} (with terms moved around):
	$$ -D_{\delta}(\rho_{A_{1}B}||\mathbb{1}_{A_{1}} \otimes \sigma_{B}) +  H_{\delta}(A_{2}|A_{1}B)_{\nu} = -D_{\delta}(\rho_{A_{1}A_{2}B}||\mathbb{1}_{A_{1}A_{2}} \otimes \sigma_{B})  \ , $$
	for any $\sigma_{B}$ where 
	\begin{align*}
		\nu_{A_{1}A_{2}B} = \nu^{\frac{1}{2}}_{A_{1}B}\rho_{A_{2}|A_{1}B}\nu^{\frac{1}{2}}_{A_{1}B} \, \text{ where } \, \nu_{A_{1}B} = \frac{\left(\rho^{\frac{1}{2}}_{A_{1}B}\sigma_{B}^{\frac{1-\alpha}{\alpha}}\rho^{\frac{1}{2}}_{A_{1}B} \right)^{\alpha}}{\tr\left(\rho^{\frac{1}{2}}_{A_{1}B}\sigma_{B}^{\frac{1-\alpha}{\alpha}}\rho^{\frac{1}{2}}_{A_{1}B} \right)^{\alpha}} \ .
	\end{align*} Making the substitutions $A_{1} = A^{n}_{1}X^{n}_{1}$, $A_{2} = D^{n}_{1}$, $B=B^{n}_{1}E\overline{D}^{n}_{1}$ and letting $\sigma = \ol{\sigma}_{B^{n}_{1}E\ol{D}^{n}_{1}}$ be the optimizer of $H^{\uparrow}_{\delta}(A^{n}_{1}X^{n}_{1}|B^{n}_{1}E\ol{D}^{n}_{1})_{\ol{\rho}_{|\Omega}}$, we obtain
	\begin{align*}
		H^{\uparrow}_{\delta}(A^{n}_{1}X^{n}_{1}|B^{n}_{1}E\ol{D}^{n}_{1})_{\overline{\rho}_{|\Omega}} + H_{\delta}(D^{n}_{1}|A^{n}_{1}X^{n}_{1}B^{n}_{1}E\ol{D}^{n}_{1})_{\nu} & = -D_{\delta}(\overline{\rho}_{A^{n}_{1}X^{n}_{1}D^{n}_{1}B^{n}_{1}E\overline{D}^{n}_{1}|\Omega} || \mathbb{1}_{A^{n}_{1}X^{n}_{1}D^{n}_{1}} \otimes \ol{\sigma}_{B^{n}_{1}E\ol{D}^{n}_{1}}) \\
		& \leq H^{\uparrow}_{\delta}(A^{n}_{1}X^{n}_{1}D^{n}_{1}|B^{n}_{1}E\overline{D}^{n}_{1})_{\overline{\rho}_{|\Omega}} \ ,
	\end{align*}
	where the second inequality is by the definition of $H^{\uparrow}_{\delta}$. Now the goal is to simplify.
	\begin{enumerate}
		\item $H^{\uparrow}_{\delta}(A^{n}_{1}X^{n}_{1}|B^{n}_{1}E\ol{D}^{n}_{1})_{\overline{\rho}_{|\Omega}} = H^{\uparrow}_{\delta}(A^{n}_{1}|B^{n}_{1}E)_{\rho_{|\Omega}}$ as \cite[Lemma B.7]{Dupuis2016} allows us to remove the $X^{n}_{1}$ register and $\ol{D}^{n}_{1}$ is independent of everything as shown in \Cref{eq:marginal_indep}.
		\item $H_{\delta}(D^{n}_{1}|A^{n}_{1}X^{n}_{1}B^{n}_{1}E\overline{D}^{n}_{1})_{\nu} = H_{\delta}(D^{n}_{1}|\ol{D}^{n}_{1}X^{n}_{1})_{\nu}$ as \cite[Eq. (46)]{Dupuis2016} still holds because our $\nu$ is of the same form.
		\item $H^{\uparrow}_{\delta}(A^{n}_{1}X^{n}_{1}D^{n}_{1}|B^{n}_{1}E\ol{D}^{n}_{1})_{\overline{\rho}_{|\Omega}} = H^{\uparrow}_{\delta}(A^{n}_{1}D^{n}_{1}|B^{n}_{1}E\ol{D}^{n}_{1})_{\overline{\rho}_{|\Omega}}$ by \cite[Eq. (43)]{Dupuis2016}, as the proof for this equation still holds in our setting.
	\end{enumerate}
	Therefore we have simplified to
	\begin{align*}
		H^{\uparrow}_{\delta}(A^{n}_{1}|B^{n}_{1}E)_{\rho_{|\Omega}} + H_{\delta}(D^{n}_{1}|\ol{D}^{n}_{1}X^{n}_{1})_{\nu} \leq H^{\uparrow}_{\delta}(A^{n}_{1}D^{n}_{1}|B^{n}_{1}E\ol{D}^{n}_{1})_{\ol{\rho}_{|\Omega}} \ .
	\end{align*}
	Using the fact that $H_{\delta}(D^{n}_{1}|\ol{D}^{n}_{1}X^{n}_{1})_{\nu} \geq n\overline{g} - nh$ as before \cite[Below Eq. (52)]{Dupuis2016} and by pushing that term over to the other side,
	\begin{align*}
		H^{\uparrow}_{\delta}(A^{n}_{1}|B^{n}_{1}E)_{\rho_{|\Omega}}  \leq H^{\uparrow}_{\delta}(A^{n}_{1}D^{n}_{1}|B^{n}_{1}E\ol{D}^{n}_{1})_{\ol{\rho}_{|\Omega}} - n\ol{g} + nh \ .
	\end{align*}
\end{proof}

\begin{theorem}\label{thm:uparrowmaxtradeoff}
	Let $\delta \in (1/2,1)$, $\eta \in (1,1+2/V)$, $\zeta \in (1/2,1) \cup (1,\infty)$,  such that $\frac{\eta}{\eta-1} = \frac{\delta}{\delta-1} + \frac{\zeta}{\zeta-1}$ and $(\eta-1)(\delta-1)(\zeta-1) > 0$. Assuming the Markov chain conditions are satisfied [\Cref{eqn:Markov_cond_alt}], then
	\begin{align*}H^{\uparrow}_{\delta}(A^{n}_{1}|B^{n}_{1}E)_{\rho_{|\Omega}} \leq nh - \frac{\delta}{1-\delta}\log(\rho[\Omega]) + n\frac{\eta - 1}{4}V^{2} \ ,
	\end{align*}
	where $V := 2\lceil \|\nabla f\|_{\infty} \rceil + 2\log(1+2d_{A})$.
\end{theorem}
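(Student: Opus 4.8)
The plan is to start from \Cref{lem:claim46}, which already absorbs the flattening registers $D_i,\ol{D}_i$ and reduces the claim to an upper bound on the flattened quantity $H^{\uparrow}_{\delta}(A_1^nD_1^n|B_1^nE\ol{D}_1^n)_{\ol{\rho}_{|\Omega}}$. Since \Cref{lem:claim46} gives $H^{\uparrow}_{\delta}(A_1^n|B_1^nE)_{\rho_{|\Omega}}\le H^{\uparrow}_{\delta}(A_1^nD_1^n|B_1^nE\ol{D}_1^n)_{\ol{\rho}_{|\Omega}}+nh-n\ol{g}$, it suffices to prove $H^{\uparrow}_{\delta}(A_1^nD_1^n|B_1^nE\ol{D}_1^n)_{\ol{\rho}_{|\Omega}}\le n\ol{g}-\tfrac{\delta}{1-\delta}\log\rho[\Omega]+n\tfrac{\eta-1}{4}V^2$; adding $nh-n\ol{g}$ then cancels the $n\ol{g}$ and produces exactly the stated bound. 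I would treat the conditioning on $\Omega$ and the accumulation of entropy as two separate contributions.

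First I would dispose of the conditioning. Writing $\hat{\ol{\rho}}=P_\Omega\,\ol{\rho}\,P_\Omega\le\ol{\rho}$ for the subnormalised projection onto $\Omega$ in the classical test registers, so that $\ol{\rho}_{|\Omega}=\hat{\ol{\rho}}/\rho[\Omega]$, I would use that the functional $Q_\delta(\cdot\|\sigma):=\Tr[(\sigma^{\frac{1-\delta}{2\delta}}(\cdot)\sigma^{\frac{1-\delta}{2\delta}})^{\delta}]$ appearing inside $D_\delta$ is homogeneous of degree $\delta$ in its first argument and, for $\delta\in(1/2,1)$, monotone under $\hat{\ol{\rho}}\le\ol{\rho}$ by operator monotonicity of $t\mapsto t^{\delta}$. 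Combining these two facts with the $\Tr(\cdot)$ normalisation built into the definition of $D_\delta$ yields the clean estimate $H^{\uparrow}_{\delta}(A_1^nD_1^n|B_1^nE\ol{D}_1^n)_{\ol{\rho}_{|\Omega}}\le H^{\uparrow}_{\delta}(A_1^nD_1^n|B_1^nE\ol{D}_1^n)_{\ol{\rho}}-\tfrac{\delta}{1-\delta}\log\rho[\Omega]$, isolating the penalty term with precisely the coefficient $\tfrac{\delta}{1-\delta}$ and leaving the unconditioned flattened entropy to be bounded.

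It then remains to show $H^{\uparrow}_{\delta}(A_1^nD_1^n|B_1^nE\ol{D}_1^n)_{\ol{\rho}}\le n\ol{g}+n\tfrac{\eta-1}{4}V^2$. Here I would invoke the chain rule \Cref{prop:chain_rule_special_case}(1) with $A=A_1^nD_1^n$, $C=B_1^nE\ol{D}_1^n$ and reference state $\tau_C=\ol{\rho}_{C}$: the hypotheses $\tfrac{\eta}{\eta-1}=\tfrac{\delta}{\delta-1}+\tfrac{\zeta}{\zeta-1}$ and $(\eta-1)(\delta-1)(\zeta-1)>0$ guarantee the rule applies for a suitable auxiliary $\zeta$, while the choice $\tau_C=\ol{\rho}_C$ makes $D_\zeta(\ol{\rho}_C\|\ol{\rho}_C)=0$ and collapses the bound to the single term $-D_{\eta}(\ol{\rho}\,\|\,\idm_{A_1^nD_1^n}\otimes\ol{\rho}_C)$. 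This divergence I would accumulate round by round exactly as in the proof of Proposition~4.5 of \cite{Dupuis2016}: the Markov conditions [\Cref{eqn:Markov_cond_alt}] let one peel off the rounds one at a time, the flattening of \Cref{lem:claim46} ensures each single-round augmented conditional entropy is bounded by $\ol{g}$, and a second-order Taylor expansion of $D_{\eta}$ about $\eta=1$ contributes at most $\tfrac{\eta-1}{4}V^2$ per round. The range quantity $V=2\lceil\|\nabla f\|_{\infty}\rceil+2\log(1+2d_A)$ enters as the operator-norm bound controlling this expansion, with $\lceil\|\nabla f\|_{\infty}\rceil$ bounding $\log|D_i|$ and $\log(1+2d_A)$ accounting for the dimension of $A_i$; the restriction $\eta<1+2/V$ is precisely what keeps the remainder of the expansion controlled. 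Summing over the $n$ rounds gives $n\ol{g}+n\tfrac{\eta-1}{4}V^2$.

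Combining the three pieces—\Cref{lem:claim46}, the conditioning estimate, and the accumulation bound—yields the theorem. I expect the main obstacle to be the accumulation step: establishing that the flattened channel has single-round conditional entropy uniformly controlled by $\ol{g}$ and that the accumulated second-order remainder of the $D_{\eta}$ expansion is uniformly bounded by $\tfrac{\eta-1}{4}V^2$ per round over \emph{all} states compatible with the Markov conditions. This is the technical heart imported from \cite{Dupuis2016}; the only genuinely new manipulations are the chain-rule reduction from $H^{\uparrow}_{\delta}$ to $H_{\eta}$ and the homogeneity/operator-monotonicity argument producing the conditioning penalty with the exact coefficient.
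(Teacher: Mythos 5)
Your proposal is correct and mirrors the paper's own proof step for step: it starts from \Cref{lem:claim46}, extracts a conditioning penalty of exactly $-\tfrac{\delta}{1-\delta}\log\rho[\Omega]$, applies \Cref{prop:chain_rule_special_case} (Item 1) with $\tau_C=\ol{\rho}_{B_1^nE\ol{D}_1^n}$ so that the $D_\zeta$ term vanishes, and then bounds $H_\eta(A_1^nD_1^n|B_1^nE\ol{D}_1^n)_{\ol{\rho}}\le n\ol{g}+n\tfrac{\eta-1}{4}V^2$ by the same round-by-round machinery of \cite{Dupuis2016} (Markov-condition peeling via Corollary 3.5, the quadratic $H_\eta$-to-$H$ correction of Lemma B.9, and the single-round bound by $\ol{g}$). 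The only cosmetic difference is that you prove the conditioning estimate inline—via homogeneity of degree $\delta$ of $Q_\delta$ in its first argument, operator monotonicity of $t\mapsto t^{\delta}$ for $\delta\in(1/2,1)$, and the normalization in $D_\delta$—whereas the paper obtains precisely this bound by citing \cite[Lemma B.5]{Dupuis2016}; your inline argument is sound.
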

\begin{proof}
	Combining \Cref{lem:claim46} with \cite[Lemma B.5]{Dupuis2016} applied to $\ol{\rho}$, which we can do by requiring $\delta \in (1/2,1)$ and noting $\ol{\rho} = \rho[\Omega]\ol{\rho}_{|\Omega} + (\ol{\rho} - \rho[\Omega]\ol{\rho}_{|\Omega})$,
	one obtains
	$$ H^{\uparrow}_{\delta}(A^{n}_{1}|B^{n}_{1}E)_{\rho_{|\Omega}} \leq H^{\uparrow}_{\delta}(A^{n}_{1}D^{n}_{1}|B^{n}_{1}E\ol{D}^{n}_{1})_{\overline{\rho}} - \frac{\delta}{\delta-1}\log(\frac{1}{\rho[\Omega]}) + nh - n\ol{g} \ .$$
	Next we use \Cref{prop:chain_rule_special_case} Item 1 with the substitutions $A = A^{n}_{1}D^{n}_{1}$, $C = B^{n}_{1}E\overline{D}^{n}_{1}$, $\tau_{C} = \ol{\rho}_{B^{n}_{1}E\overline{D}^{n}_{1}}$, where we will require $\eta \in (1,1 + 1/\log(1+d_{A}d_{D}))$.
	\begin{align*}
		H^{\uparrow}_{\delta}(A^{n}_{1}D^{n}_{1}|B^{n}_{1}E\ol{D}^{n}_{1})_{\overline{\rho}} \leq  H_{\eta}(A^{n}_{1}D^{n}_{1}|B^{n}_{1}E\ol{D}^{n}_{1})_{\overline{\rho}} + D_{\gamma}(\ol{\rho}_{B^{n}_{1}E\ol{D}^{n}_{1}}||\ol{\rho}_{B^{n}_{1}E\ol{D}^{n}_{1}} ) 
		=  H_{\eta}(A^{n}_{1}D^{n}_{1}|B^{n}_{1}E\ol{D}^{n}_{1})_{\overline{\rho}} \ ,
	\end{align*}
	where the equality is by positive definiteness of R{\'e}nyi divergences.
	This gives us 
	$$H^{\uparrow}_{\delta}(A^{n}_{1}|B^{n}_{1}E)_{\rho_{|\Omega}} \leq  H_{\eta}(A^{n}_{1}D^{n}_{1}|B^{n}_{1}E\ol{D}^{n}_{1})_{\overline{\rho}} -  \frac{\delta}{\delta-1}\log(\frac{1}{\rho[\Omega]}) + nh - n\ol{g} \ .$$
	We just need to bound the $H_{\eta}$ term, which follows the same type of argument as leading to \cite[Eq. (40)]{Dupuis2016}, as noted at the end of the proof of \cite[Proposition 4.5]{Dupuis2016}. We have
	\begin{align*}
		H_{\eta}(A^{n}_{1}D^{n}_{1}|B^{n}_{1}E\ol{D}^{n}_{1})_{\overline{\rho}} & \leq \sum_{i} \underset{\omega_{R_{i-1}R}}{\sup} H_{\eta}(A_{i}D_{i}|B_{i}\overline{D}_{i}R)_{(\mathcal{D}_{i}\circ \mathcal{M}_{i})(\omega)} \\ 
		&  \leq \sum_{i} \underset{\omega_{R_{i-1}R}}{\sup} H(A_{i}D_{i}|B_{i}\overline{D}_{i}R)_{(\mathcal{D}_{i}\circ \mathcal{M}_{i})(\omega)} + n(\eta - 1)\log^{2}(1+2d_{A}d_{D}) \\
		& \leq \sum_{i} \underset{\omega_{R_{i-1}R}}{\sup} H(A_{i}D_{i}|B_{i}\overline{D}_{i}R)_{(\mathcal{D}_{i}\circ \mathcal{M}_{i})(\omega)} + n\frac{\eta-1}{4}V^{2} \ ,
	\end{align*}
	where the first line is because the Markov conditions are satisfied even including the $D_{i}\ol{D}_{i}$ registers due to our assumption that the Markov conditions in \Cref{eqn:Markov_cond_alt} hold\footnote{We omit the proof of this implication; see the proof of \cite[Proposition 4.5]{Dupuis2016} which contains proof of this.} so we can apply \cite[Corollary 3.5]{Dupuis2016}, the second is by applying \cite[Lemma B.9]{Dupuis2016} which we can do by our restriction on $\eta$, and the third is by definition of $V$. Lastly one can show $H(A_{i}D_{i}|B_{i}\ol{D}_{i}R)_{(\mathcal{D}_{i}\circ \mathcal{M}_{i})(\omega)} \leq \ol{g}$ \cite[End of proof of Proposition 4.5]{Dupuis2016}. Combining these points gets us our theorem.
\end{proof}

\subsection{QKD security proof with R\'{e}nyi entropies}
We now can present the proof of \Cref{thm:keylengthwithoutsmoothing}. In the main text we present physical and virtual QKD protocols (\Cref{prot:physicalDDQKD} and \Cref{prot:DDQKD} respectively). As explained in the main text, they are equivalent and so the following security proof holds for both. From a technical viewpoint, we use the EAT which holds for sequential processes, so the security is most directly proven for the virtual QKD protocol because it is sequential.

\begin{proof}[Proof of \Cref{thm:keylengthwithoutsmoothing}]
    By the equivalence of \Cref{prot:physicalDDQKD} and \Cref{prot:DDQKD}, which is by construction, proving the security of \Cref{prot:physicalDDQKD} is equivalent to proving the security of \Cref{prot:DDQKD}. We refer to \Cref{prot:DDQKD} in the proof, but this is just for the ease of the reader to link to the main text and \Cref{prot:DDQKD}'s sequential presentation. Let $\Omega'$ be the event that \Cref{prot:DDEA} does not abort (i.e. $\Omega$ is satisfied) and error correction verification accepts (i.e.  $\hash(\mathbf{Z}) = \hash(\mathbf{\widehat{Z}})$). Let $\Omega''$ be the event that \Cref{prot:DDQKD} does not abort and error correction verification passes. Let $\tilde{\rho}_{|\Omega''}$ be the output of the \Cref{prot:DDQKD} conditioned on not aborting and error correction verification passing and $\rho_{|\Omega'}$ be the output of \Cref{prot:DDEA} conditioned on not aborting \textit{and} error correction verification passing.

	Fundamentally, we are interested in the entropy of the raw key which excludes the registers discarded in the general sifting step or the registers used for parameter estimation. As defined in \Cref{prot:DDQKD}, $\mathbf{Z} \equiv \ol{A}^{n}_{1} \setminus (\ol{A}_{\mathcal{S}} \cup \ol{A}_{\mathcal{T}})$. However, we note that Alice announces $\ol{A}_{\mathcal{T}}$ in \Cref{prot:DDQKD}, making it part of what is conditioned on, and $\ol{A}_{\mathcal{S}}$ is determined entirely by the registers $\widetilde{A}^{n}_{1}$, $\widetilde{B}^{n}_{1}$, and $T^{n}_{1}$. In other words, it is clear by \Cref{prot:DDQKD} that there exists a deterministic function $f$ which can reconstruct $\ol{A}_{\cS},\ol{A}_{\cT}$ using a map of the form $\sum_{(s,p)} \dyad{s} \otimes \dyad{p} (\cdot) \dyad{s} \otimes \dyad{p} \otimes \dyad{f(s,p)}_{\ol{A}_{\cS}\ol{A}_{\cT}}$. Therefore, by \cite[Lemma B.7]{Dupuis2016},
	$$ H_{\alpha}^{\uparrow}(\mathbf{Z}|\widetilde{A}^{n}_{1}\widetilde{B}^{n}_{1} T^{n}_{1} \ol{A}_{\mathcal{T}} C E)_{\tilde{\rho}_{|\Omega''}} = H_{\alpha}^{\uparrow}(\ol{A}^{n}_{1}|\widetilde{A}^{n}_{1}\widetilde{B}^{n}_{1} T^{n}_{1} \ol{A}_{\mathcal{T}} C E)_{\tilde{\rho}_{|\Omega''}} \ . $$
		
	We can handle the classical communication cost of error correction by \cite[Eq. (5.96)]{Tomamichel2016}:
	\begin{align}
		H^{\uparrow}_{\alpha}(\ol{A}^{n}_{1}|\widetilde{A}^{n}_{1}\widetilde{B}^{n}_{1} T^{n}_{1} \ol{A}_{\mathcal{T}} C E)_{\tilde{\rho}_{|\Omega''}} &\geq H^{\uparrow}_{\alpha}(\ol{A}^{n}_{1}|\widetilde{A}^{n}_{1}\widetilde{B}^{n}_{1} T^{n}_{1} \ol{A}_{\mathcal{T}} E)_{\tilde{\rho}_{|\Omega''}} - {\leak}_{\vEC} \\
		& = H^{\uparrow}_{\alpha}(\ol{A}^{n}_{1}|\widetilde{A}^{n}\widetilde{B}^{n} T^{n}_{1} \ol{A}_{\mathcal{T}} E)_{\rho_{|\Omega'}} - {\leak}_{\vEC} \label{eqn:alt_thm_starting_point}
	\end{align}
	where $C$ is the register storing the classical communication due to error correction, and the equality comes from noting that $\rho_{\ol{A}^{n}_{1}\widetilde{A}^{n}_{1}\widetilde{B}^{n}_{1}T^{n}_{1}X^{n}_{1}E{|\Omega'}} = \tilde{\rho}_{\ol{A}^{n}_{1}\widetilde{A}^{n}_{1}\widetilde{B}^{n}_{1}T^{n}_{1}X^{n}_{1}E{|\Omega''}}$. We stress that Bob's private register, $\ol{B}^{n}_{i}$, for both the EAT and the QKD protocol, is excluded in the equality as they differ. We note that $\ol{A}_{\mathcal{T}}$ is conditioned upon as Alice publicly announced it, but $X^{n}_{1}$ is not conditioned upon as Bob computes it locally and does not need to announce it unless aborting.
		
With all of this addressed, the entropy term that we would like to calculate to perform privacy amplification is
	\begin{align}\label{eq:goal_without_smoothing}
		H^{\uparrow}_{\alpha}(\ol{A}^{n}_{1}|\widetilde{A}^{n}_{1}\widetilde{B}^{n}_{1} T^{n}_{1} \ol{A}_{\mathcal{T}} E)_{\rho_{|\Omega'}} \ .
	\end{align}This imposes one major difficulty: We need to find a way to relate this term to the term $H^{\uparrow}_{\beta}(\ol{A}^{n}_{1} \ol{B}^{n}_{1}|\widetilde{A}^{n}\widetilde{B}^{n} T^{n}_{1} E)_{\rho_{|\Omega'}}$ on the left-hand side of \Cref{eq:entropyRate}, where the Entropy Accumulation Theorem can be applied.
	
	Following derivations in the DI-QKD setting \cite{Arnon2018}, but also handling the $\ol{A}_{\mathcal{T}}$ register, we use the following series of claims:
	\begin{align}
		H^{\uparrow}_{\alpha}(\ol{A}^{n}_{1}|\widetilde{A}^{n}_{1} \widetilde{B}^{n}_{1} T^{n}_{1} \ol{A}_{\mathcal{T}} E)_{\rho_{|\Omega'}} 
		&\geq H^{\uparrow}_{\alpha}(\ol{A}^{n}_{1}|\widetilde{A}^{n}_{1} \widetilde{B}^{n}_{1} T^{n}_{1} \ol{A}_{\mathcal{T}} \ol{B}^{n}_{1} E)_{\rho_{|\Omega'}}  \notag \\
		&\geq H^{\uparrow}_{\beta}(\ol{A}^{n}_{1} \ol{B}^{n}_{1} \ol{A}_{\cT}|\widetilde{A}^{n}_{1}\widetilde{B}^{n}_{1} T^{n}_{1} E)_{\rho_{|\Omega'}} - H^{\uparrow}_{\delta}(\ol{B}^{n}_{1}\ol{A}_{\cT}|\widetilde{A}^{n}_{1}\widetilde{B}^{n}_{1} T^{n}_{1} E)_{\rho_{|\Omega'}} \notag \\
		& \geq H^{\uparrow}_{\beta}(\ol{A}^{n}_{1} \ol{B}^{n}_{1} |\widetilde{A}^{n}_{1}\widetilde{B}^{n}_{1} T^{n}_{1} E)_{\rho_{|\Omega'}} - H^{\uparrow}_{\delta}(\ol{B}^{n}_{1}\ol{A}_{\cT}| T^{n}_{1} E)_{\rho_{|\Omega'}}
		\label{eqn:mid_inequality} \ .
	\end{align}
	The first inequality is the data processing inequality for sandwiched R{\'e}nyi entropies \cite[Corollary 5.5]{Tomamichel2016} for the specific choice of tracing out the $\ol{B}^{n}_{1}$ register. The second inequality is Case 2 of \Cref{prop:chain_rule_special_case} with the substitutions $A = \ol{A}^{n}_{1}$, $B = \ol{A}_{\cT}\ol{B}^{n}_{1}$, $C = \wt{A}^{n}_{1}\wt{B}^{n}_{1}T^{n}_{1}E$, where we also demand $\alpha \in(1,2)$, $\delta \in (1/2,1)$. The last inequality is again using the data-processing inequality for the $H^{\uparrow}_\delta$ term and \cite[Lemma B.7]{Dupuis2016} for the $H^{\uparrow}_\beta$ term using that $\ol{A}_{\cT}$ may be generated using a deterministic mapping from $\ol{A}^{n}_{1}$ and $T^{n}_{1}$.
		
	As \Cref{eqn:mid_inequality} includes the term in \Cref{eq:goal_without_smoothing}, we need to bound the $H^{\uparrow}_{\delta}$ term in \Cref{eqn:mid_inequality}. We will do this using \Cref{thm:uparrowmaxtradeoff}. We use the replacements $A_{i} = \ol{A}_{\cT_{i}}\ol{B}_{i}$, $B_{i} = T_{i}$, and $E = E$, from which it is clear how to define the EAT maps for this process. We note it is okay to use $\ol{A}_{\mathcal{T}}$ as we may treat it as if it were actually $n$ registers where it is $\perp$ whenever $T_{i} \neq 0$. The Markov conditions trivially hold since $T_{i}$ uses seeded randomness. Then using the fact that $\ol{A}_{\cT_{i}}\ol{B}_{i} = \perp \times \perp$ except when $T_{i}=1$ which happens with probability $\gamma$, we construct the max-tradeoff function analytically by the following observation:
	\begin{align}\label{eq:maxtradeoffconstruction}
		H(\ol{A}_{\cT_{i}} \ol{B}_{i}|T_{i}R')_{\cM_{i}(\omega)} & \leq H(\ol{A}_{\cT_{i}}\ol{B}_{i}|T_{i})_{\cM_{i}(\omega)}\notag\\
		& = (1-\gamma)H(\ol{A}_{\cT_{i}}\ol{B}_{i}|T_{i}=0)_{\cM_{i}(\omega)} + \gamma H(\ol{A}_{\cT_{i}}\ol{B}_{i}|T_{i}=1)_{\cM_{i}(\omega)} \notag \\
		& \leq \gamma \log|\mathcal{A} \times \mathcal{B}|
	\end{align}
	where $\mathcal{A}$ and $\mathcal{B}$ are the alphabets of private outcomes Bob utilizes during parameter estimation. This tells us we can let the max-tradeoff function  $f_{\max}(\vect{q}) = \gamma \log |\mathcal{A} \times \mathcal{B}|$ for all $\vect{q} \in \mathbb{P}(\cX)$. This also implies $\|\nabla f_{\max}\|_{\infty} = 0$. Note that for \Cref{thm:uparrowmaxtradeoff} to hold, we require $\delta \in (1/2,1)$, which is why we demanded this in the first chain rule. Therefore, using \Cref{thm:uparrowmaxtradeoff}
	we get 
	\begin{align}\label{eq:deltamaxtradeoff}
		& H^{\uparrow}_{\delta}(\ol{A}_{\cT} \ol{B}^{n}_{1}|T^{n}_{1} E)_{\rho_{|\Omega'}} \leq n\gamma\log|\mathcal{A} \times \mathcal{B}| - \frac{\delta}{1-\delta}\log(\vAcc) + n(\eta-1)\log(1+2d_{S})
	\end{align}
	where we have replaced $\rho[\Omega]$ with $\rho[\Omega']$ and then assumed $\vAcc \leq \rho[\Omega']$.\footnote{Note in principle this is a free parameter, but it does imply a lower bound on the probability that the input will pass parameter estimation and error correction verification, which is why we label it $\vAcc$ for accept.} We have also defined $d_{S} = (|\cA|+1)(|\cB|+1)$.
		
	Combining \Cref{eqn:mid_inequality} and \Cref{eq:deltamaxtradeoff},
	\begin{align*}
		H^{\uparrow}_{\alpha}(\ol{A}^{n}_{1}| \widetilde{A}^{n}_{1} \widetilde{B}^{n}_{1} T^{n}_{1} \ol{A}_{\mathcal{T}} E)_{\rho_{|\Omega'}} 
		&\geq H^{\uparrow}_{\beta}(\ol{A}^{n}_{1} \ol{B}^{n}_{1} |\widetilde{A}^{n}_{1}\widetilde{B}^{n}_{1} T^{n}_{1} E)_{\rho_{|\Omega'}} - n\gamma\log|\mathcal{A} \times \mathcal{B}| \notag\\
		& \hspace{0.75cm}
		+\frac{\delta}{1-\delta}\log(\vAcc) - n(\eta-1)\log(1+2d_{S}) \ .
	\end{align*}
	We now apply the equivalent of \Cref{thm:entRate} to the $H_{\beta}^{\uparrow}$ term. We can do this by our restriction on $\beta$ earlier along with noting on the relevant marginal that any frequency distribution $\q$ accepted in the event $\Omega'$ is by definition also accepted in the event $\Omega$ so the value of $h$ will hold. Therefore, the only formal change in \Cref{thm:entRate} is using the replacement $\vEA \to \vAcc$. Therefore, we have that either the input state aborts with probability greater than $1-\vAcc$ or 
	\begin{equation}
	\begin{aligned}
		H^{\uparrow}_{\alpha}(\ol{A}^{n}_{1}| \widetilde{A}^{n}_{1} \widetilde{B}^{n}_{1} T^{n}_{1} \ol{A}_{\mathcal{T}} E)_{\rho_{|\Omega'}} 
		&>  n h - n \frac{(\beta-1)\ln 2}{2}V^{2} - \frac{\beta}{\beta-1} \log \frac{1}{\vAcc} - n(\beta-1)^{2} K_{\beta} \notag \\
	    & \hspace{0.75cm} - n\gamma\log|\mathcal{A} \times \mathcal{B}|
		+ \frac{\delta}{1-\delta}\log(\vAcc) \notag \\
		& \hspace{1.25cm} - n(\eta-1)\log(1+2d_{S}) \ .
	\end{aligned}
	\end{equation}
	Grouping terms, the $\vAcc$ correction term is of the form $\frac{\beta-\delta}{(\beta+1)(1-\delta)}\log(\vAcc)$. Then, plugging this back into \Cref{eqn:alt_thm_starting_point} and applying the leftover hash lemma for non-smooth entropies \cite{Dupuis2021}, \Cref{corr:LHL_without_smoothing}, resolves everything except that one has many conditions on $\alpha,\beta,\delta,\gamma,\eta,\zeta$.
		
	To get the simplified conditions on $\alpha,\beta,\delta,\gamma,\eta,\zeta$ in the theorem, note the conditions on $\alpha,\beta,\delta$ are satisfied for $\beta \in (1,2),\delta \in (1/2,1), \alpha = \frac{-\beta +\delta}{-1+2\delta-\beta\delta}$. The conditions on $\delta \in (1/2,1),\eta \in (1,1+\log^{-1}(1+d_{S})),\zeta \in (1/2,1) \cup (1,\infty)$ are satisfied for the whole relevant range so long as $\zeta = \frac{-\delta + \eta}{-1+2\eta-\delta\eta}$, which does not enter into our key length. Finally, we note that $\eta > 1$, $\log(1+2d_{S}) > 1$, so $-n(\eta-1)\log(1+2d_{S}) < 0$, so this term is always a penalty on the key rate. In principle we would need to minimize $\eta$, but noting that $-n(\eta-1)\log(1+2d_{S})$ goes to $0$ as $\eta$ goes to $1$, and $\eta$ is free to tend towards $1$, we choose $\eta$ such that $-n(\eta-1)\log(1+2d_{S}) = -1$.
	
	Lastly, we need to explain the security claim in the statement of the theorem. There are two cases. Either $\rho[\Omega] < \vAcc$ or $\rho[\Omega] \geq \vAcc$. If $\rho[\Omega] < \vAcc$, then for that input, the security is trivially bounded by $\vAcc$ by using \Cref{eq:security-scaled-by-accept} with the replacement $\Pr(\text{accept}) = \rho[\Omega] < \vAcc$. Therefore the focus is on the case $\rho[\Omega] \geq \vAcc$. First we consider the secrecy. Starting from the subnormalized state\footnote{Subnormalization is due to aborting the protocol on the input.} $\rho_{S_{A}EF}$ where $S_{A}:= h(\mathbf{Z})$ is the hashed key, $h$ is the hash function, and $F$ is the announcement of the hash function,
	\begin{align*}
	     \frac{1}{2} \| \rho_{S_{A}EF} - |S_{A}|^{-1}\idm_{S_{A}} \otimes \rho_{EF} \|_{1}
	    = \frac{1}{2} \rho[\Omega'] \| \rho_{S_{A}F_{|\Omega'}} - |S_{A}|^{-1}\idm_{S_{A}} \otimes \rho_{EF_{|\Omega'}} \|_{1} 
	    \leq \rho[\Omega'] \vSec \leq \vSec \ ,
	\end{align*}
	where we started from the subnormalized state, normalized it by the condition of passing, and in the last line used that \Cref{corr:LHL_without_smoothing} tells us the trace norm divided by two is at most $\vSec$. Therefore any input such that $\rho[\Omega'] \geq \vAcc$, the key is $\vSec$-secret.  Finally, we need to address correctness which is an upper bound on $\Pr[S_{A} \neq S_{B} \, \& \Omega']$. By the analysis of error correction verification using 2-universal hash functions \cite{Renner2005}, $\vEC$ is exactly what bounds this term, so we have the key is $\vEC$-correct. Therefore, when $\rho[\Omega'] \geq \vAcc$, the key is $\vSec + \vEC$-secure. Combining these points, we have the protocol is $\max\{\vAcc,\vSec+\vEC\}$-secure. Finally, noting that if $\vAcc < \vEC + \vSec$ you could increase to $\vAcc = \vEC + \vSec$ to increase the key length without changing the security claim, so without loss of generality $\vAcc \geq \vEC + \vSec$. As such, we can just say the protocol is $\vAcc$-secure.
	\end{proof}
	
We remark on an alternative key length that may be obtained largely following the same proof that allows us to express things in terms of $\vEA$ directly. This does not seem to provide a clear advantage over the previous proof. As such, we note that the use of \cite[Lemma B.5]{Dupuis2016} could presumably be replaced by \cite[Lemma 10]{Tomamichel2017} for the smoothed version, but we omit such a proof in this work.
\begin{prop}
    Consider any QKD protocol which follows \Cref{prot:physicalDDQKD} and satisfies \Cref{assumption:block_diagonal}. $\vEA,\vSec,\vEC \in (0,1]$, $\vECRob\in[0,1]$, such that $\vEA \geq \vSec + \vEC$. Let $h$ such that $f(\vect{q}) \geq h$ for all $\vect{q} \in \mathcal{Q}$ where $f$ is the min-tradeoff function generated by \Cref{alg:Algorithm1_min_tradeoff_function} or \Cref{alg:Algorithm2_min_tradeoff_function}. \\
			
			\noindent Let $\beta \in (1,2)$, $\delta \in (1/2,1)$ and $\alpha = \frac{-\beta +\delta}{-1+2\delta-\beta\delta}$.
			The QKD protocol is $\vEA$-secure for key length
			\begin{equation}\label{eq:KeyLength_without_smoothing_alt} 
				\begin{aligned}
					\ell \leq & n h - \leak_{\vEC} - n \frac{(\beta-1)\ln 2}{2}V^{2} - n(\beta-1)^{2} K_{\beta} - n\gamma\log|\mathcal{A} \times \mathcal{B}| \\
					& \hspace{3mm} 
					+\frac{\beta-\delta}{(\beta-1)(1-\delta)}\log(\vEA) 
					+ \frac{\alpha}{\alpha - 1} \log(\ve_{\mathrm{sec}}\cdot (1-\vECRob)) + 1
				\end{aligned}
			\end{equation}
			where 
			\begin{align*}
				V &= \sqrt{\Var{f}+2} + \log(2 d_{S}^{2}+1) \\
				K_\beta &= \frac{1}{6(2-\beta)^{3} \ln 2} 2^{(\beta-1)(\log d_{S} + \Max{f} - \MinSigma{{f})}} \ln^{3} \left( 2^{\log d_{S} + \Max{f} - \MinSigma{f}} + e^{2} \right) \ ,
			\end{align*}
			where
			$\mathcal{A}$ and $\mathcal{B}$ are the alphabets of private outcomes for Alice's and Bob's announcements excluding the symbol $\perp$, respectively, $d_{S} = (|\cA|+1)(|\cB|+1)$, $\leak_{\vEC}$ is the amount of information leakage during the error correction step.
\end{prop}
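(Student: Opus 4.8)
The plan is to reuse the proof of \Cref{thm:keylengthwithoutsmoothing} almost verbatim, changing only the bookkeeping of the two probability factors so that the failure probability is tracked through $\vEA$, the entropy-accumulation failure probability on the parameter-estimation event $\Omega$, rather than through $\vAcc$. I would start exactly as before: invoke the equivalence of \Cref{prot:physicalDDQKD} and \Cref{prot:DDQKD}, peel off the error-correction leakage $\leak_{\vEC}$ from the raw-key entropy via \cite[Eq.~(5.96)]{Tomamichel2016}, and reduce the goal to lower-bounding $H^{\uparrow}_{\alpha}(\ol{A}^{n}_{1}|\wt{A}^{n}_{1}\wt{B}^{n}_{1}T^{n}_{1}\ol{A}_{\cT}E)_{\rho_{|\Omega'}}$, the quantity on which the leftover hashing lemma \Cref{corr:LHL_without_smoothing} must ultimately be evaluated. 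The chain-rule decomposition \Cref{eqn:mid_inequality} into an accumulation term $H^{\uparrow}_{\beta}$ and a max-tradeoff penalty $H^{\uparrow}_{\delta}$, together with the analytic max-tradeoff function $f_{\max}(\vect{q})=\gamma\log|\cA\times\cB|$ from \Cref{eq:maxtradeoffconstruction}, carries over unchanged.

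The first genuine modification is the robustness parameter $\vECRob$, which I would introduce to pass from the event $\Omega'$ (parameter estimation accepts \emph{and} error-correction verification passes) to the event $\Omega$ (parameter estimation accepts), on which \Cref{thm:entRate} and \Cref{thm:uparrowmaxtradeoff} are stated. Taking $1-\vECRob$ as a lower bound on the conditional probability $\Pr[\Omega'|\Omega]=\rho[\Omega']/\rho[\Omega]$, the bridging step is a sub-event conditioning relation of the form $H^{\uparrow}_{\alpha}(\cdot)_{\rho_{|\Omega'}}\ge H^{\uparrow}_{\alpha}(\cdot)_{\rho_{|\Omega}}+\tfrac{\alpha}{\alpha-1}\log(1-\vECRob)$, so that all downstream estimates can be run on the natural event $\Omega$ with threshold $\vEA$, while the $\vECRob$ penalty is localized to the hashing step rather than contaminating the $\vEA$ terms.

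The second modification is the collection of the two logarithmic factors on $\rho_{|\Omega}$. Applying \Cref{thm:uparrowmaxtradeoff} to the $H^{\uparrow}_{\delta}$ term and \Cref{thm:entRate} to the $H^{\uparrow}_{\beta}$ term, retaining $\vEA$ in both (using $\vEA\le\rho[\Omega]$ from the EAT dichotomy), produces the contributions $\tfrac{\delta}{1-\delta}\log(\vEA)$ and $\tfrac{\beta}{\beta-1}\log(\vEA)$; the elementary identity $\tfrac{\beta}{\beta-1}+\tfrac{\delta}{1-\delta}=\tfrac{\beta-\delta}{(\beta-1)(1-\delta)}$ then yields the $\vEA$-coefficient appearing in \Cref{eq:KeyLength_without_smoothing_alt}. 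Feeding the resulting lower bound on $H^{\uparrow}_{\alpha}(\cdot)_{\rho_{|\Omega'}}$ into \Cref{corr:LHL_without_smoothing}, the $\vECRob$ correction merges with $-\tfrac{\alpha}{\alpha-1}\log\tfrac{1}{\vSec}$ into $\tfrac{\alpha}{\alpha-1}\log(\vSec(1-\vECRob))$, and the residual $+1$ is obtained exactly as before by choosing $\eta$ so that $-n(\eta-1)\log(1+2d_{S})=-1$ and combining it with the $+2$ of the hashing lemma. The parameter relations ($\alpha=\tfrac{-\beta+\delta}{-1+2\delta-\beta\delta}$ and $\zeta=\tfrac{-\delta+\eta}{-1+2\eta-\delta\eta}$) and the final security dichotomy (either $\rho[\Omega]<\vEA$, giving trivial security through the accept probability, or the key is $\vSec$-secret and $\vEC$-correct, hence $(\vSec+\vEC)$-secure, so $\vEA$-secure under $\vEA\ge\vSec+\vEC$) are identical to the original argument.

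I expect the main obstacle to be precisely the sub-event conditioning relation used to bridge $\rho_{|\Omega'}$ and $\rho_{|\Omega}$: cleanly justifying this passage for the sandwiched R\'enyi entropy $H^{\uparrow}_{\alpha}$ and confirming that the $\vECRob$ factor lands in the $\vSec$ term with the correct $\tfrac{\alpha}{\alpha-1}$ prefactor. This is the place where the accept-probability machinery of \cite[Lemma B.5]{Dupuis2016} (or, in the smoothed version, \cite[Lemma 10]{Tomamichel2017}) must be deployed, and it is the only genuinely new estimate relative to \Cref{thm:keylengthwithoutsmoothing}; everything else is a rearrangement of the same inequalities.
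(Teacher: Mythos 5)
Your proposal matches the paper's own proof essentially step for step: the paper likewise starts at \Cref{eqn:alt_thm_starting_point}, bridges from $\rho_{|\Omega'}$ to $\rho_{|\Omega}$ via \cite[Lemma B.5, Eq.~(80)]{Dupuis2016} with the bound $\rho[\Omega'|\Omega]\geq 1-\vECRob$ (yielding exactly your penalty $\tfrac{\alpha}{\alpha-1}\log(1-\vECRob)$, which merges into the $\vSec$ term), then reruns the original argument on $\rho_{|\Omega}$ with $\vAcc$ replaced by $\vEA$ and closes with the same $\max\{\vEA,\vSec+\vEC\}$ security dichotomy. Your identification of the sub-event conditioning step as the only genuinely new estimate, and your algebra for the $\tfrac{\beta-\delta}{(\beta-1)(1-\delta)}$ coefficient, are both correct.
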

\begin{proof}
    We largely follow the previous proof with some modifications which we explain. We start at \Cref{eqn:alt_thm_starting_point}. First we note that $\rho_{|\Omega} =\rho[\Omega'|\Omega]\rho_{|\Omega'} + (\rho_{|\Omega}-\rho[\Omega'|\Omega]\rho_{|\Omega'})$, where $[\Omega'|\Omega]$ denotes the event $\Omega'$ occurring conditioned on $\Omega$ occurring, i.e.\ the probability that error correction succeeds given we pass parameter estimation.\footnote{As the notation is a slight abuse of notation from conditional probability which would have $\rho[\text{EC accepts}|\Omega]$, we clarify. Recall one starts with the output after error correction $\rho$. We can partition the total state according to the events error correction and parameter estimation pass, $\Omega'$, just parameter estimation passes, $\Omega \setminus \Omega'$, and everything else, so that $\rho = \rho[\Omega']\rho_{|\Omega'} + \rho[\Omega\setminus \Omega']\rho_{\Omega \setminus \Omega'} + (\rho - \rho[\Omega']\rho_{|\Omega'} + \rho[\Omega\setminus \Omega']\rho_{\Omega \setminus \Omega'})$. Then by definition, $\rho_{|\Omega} = \rho[\Omega]^{-1}(\rho[\Omega']\rho_{|\Omega'} + \rho[\Omega' \setminus \Omega] \rho_{|\Omega' \setminus \Omega})$. Recalling $\Omega'$ is $\Omega$ and error correction passes, we can think of $\rho[\Omega'|\Omega]:=\rho[\Omega']/\rho[\Omega]$ as the probability of error correction passing given parameter estimation.} Moreover, since ultimately $\alpha > 1$ will always hold on the conditions that complete the proof, we can apply \cite[Lemma B.5, Eq. (80)]{Dupuis2016} to determine
    \begin{align*}
        H^{\uparrow}_{\alpha}(\ol{A}^{n}_{1}|\widetilde{A}^{n}\widetilde{B}^{n} T^{n}_{1} \ol{A}_{\mathcal{T}} E)_{\rho_{|\Omega'}} & \geq H^{\uparrow}_{\alpha}(\ol{A}^{n}_{1}|\widetilde{A}^{n}\widetilde{B}^{n} T^{n}_{1} \ol{A}_{\mathcal{T}} E)_{\rho_{|\Omega}} - \frac{\alpha}{\alpha-1}\log(\frac{1}{\rho[\Omega'|\Omega]}) \\
        & \geq  H^{\uparrow}_{\alpha}(\ol{A}^{n}_{1}|\widetilde{A}^{n}\widetilde{B}^{n} T^{n}_{1} \ol{A}_{\mathcal{T}} E)_{\rho_{|\Omega}} + \frac{\alpha}{\alpha-1}\log(1-\vECRob)
        ,
    \end{align*}
    where the second inequality follows from the assumption $\rho[\Omega'|\Omega] \geq 1-\vECRob$.\footnote{The choice of notation is as follows. The $\ve$-robustness of a protocol on an input $\rho$ is the probability of it aborting on that input \cite{Renner2005}. Ideally, we want our error correction scheme to be $0$-robust whenever parameter estimation has passed, i.e. $\rho[\Omega'|\Omega] = 1$ is the ideal case. As such, the assumption we have made is roughly that if parameter estimation passes, then the error correction will be $\vECRob$-robust.} Everything through \Cref{eq:deltamaxtradeoff} is the same argument but with $\rho_{|\Omega}$. The only change this results in is that any term with $\vAcc \to \vEA$. Since we consider $\rho_{|\Omega}$, we can apply \Cref{thm:entRate} directly, so we replace $\vAcc$ in previous proof with $\vEA$ again. Combining things and simplifying the greek alphabet parameters is the largely the same as before. This completes the proof.
    
    As for the security claim. We split into two cases in the same fashion. In the case $\rho[\Omega] < \vEA$, the state is trivially $\vEA$ secure. In the case $\rho[\Omega] \geq \vEA$, we have identical analysis to the previous case. The correctness argument is the same as before, so the security claim is $\ve = \max\{\vEA,\vSec+\vEC\}$.
\end{proof}

\section{Relation between Fixed-Length and Probabilistic Round-by-Round Parameter Estimation Protocols} \label{app:EATtoDDQKDCorrespondence}

In this appendix we provide the proof of \Cref{thm:EATOutvsTrueOut}. Effectively, we need to describe the QKD protocol in enough detail that we can view the output as a (somewhat complicated) CQ state whose probability of certain sequences can be determined using the structure of the protocol. To do this, we first describe the general description of a device-dependent QKD protocol through the parameter estimation step in the most explicit formal fashion, i.e. in terms of conditional states. We then show how this relates to the output of the EAT subprotocol, and finally we relate conditional states of these two cases. This allows us to determine exactly how the security parameter relates in the two cases. We stress that the notation in this appendix slightly differs from that of the rest of the paper so as to keep things concise. In this appendix, we use registers $Q_AQ_B$ instead of register $Q$ to separate Alice's and Bob's registers. We try to keep the notation as close to the main text for consistency. 

As explained in \Cref{subsec:ApplyingEATtoDDQKD}, since we are considering device-dependent QKD protocols, we may make the assumption that the measurements by Alice and Bob are performed in an $n$-fold manner, i.e. $\mathcal{E}^{M} = \Phi_{\mathcal{P}}^{\otimes n}$ where $\Phi_{\mathcal{P}} : Q_AQ_B \to \mathbb{C}^{\abs{\cA \times \cB}}$ where, following the notation of \Cref{thm:keylengthwithoutsmoothing}, $\cA \times \cB$ is a joint finite alphabet of Alice and Bob's outcomes for parameter estimation, and we dropped the subscript $i$ from the Hilbert spaces as traditionally device-dependent QKD assumes the dimension to be fixed in each round. Given this assumption, we may write the output state of the measurement as
\begin{align*}
	(\mathcal{E}^{M} \otimes \id_{E})(\rho_{Q^{n}_{A,1} Q^{n}_{B,1}E}) = \sum_{(a^{n}_{1},b^{n}_{1},p^{n}_{1})} \dyad{a^{n}_{1}} \otimes \dyad{b^{n}_{1}} \otimes \dyad{p^{n}_{1}} \otimes \rho_{E}^{(a^{n}_{1},b^{n}_{1},p^{n}_{1})} \ ,
\end{align*}
where $a_{i}$, $b_{i}$ are outcomes of Alice's and Bob's private registers and outcomes of all public registers of the $i^{\text{th}}$ round are grouped as $p_{i}$, and $\rho_{E}^{(a^{n}_{1},b^{n}_{1},p^{n}_{1})}$ is the (subnormalized) conditional state of Eve.

Next Alice and Bob perform parameter estimation.\footnote{In \Cref{prot:DDQKD}, sifting and the key map happen each round along with parameter estimation. In Renner's original framework \cite{Renner2005}, it is not the case as the sifting and key map happen afterwards in the blockwise processing, which is the reason for this discrepancy from the main text.} Canonically the assumption is they are going to pick a random subset of the $n$ signals of size $m$ and announce their respective outcomes on those $m$ rounds.\footnote{This is equivalent to applying a random permutation and then performing parameter estimation on the first $m$ indices, but to match the EAT, we do not reformat in this manner. We note an alternative model one might propose that seems to be intuitive is to perform probabilistic round-by-round testing, count the total number of tests at the end, and then a posteriori treat the protocol as having picked that many tests. This does not fit into a pre-existing universally composable security proof for fixed-length testing. In particular, it does not fit into the security proof method of \cite{Renner2005}.} Finally they decide if they continue the protocol or not based on those outcomes. That is, they have some set of observed sequences $\mathcal{Q} \subset {\cX'}^{\times m}$ which they accept on, where $\cX'$ is as defined in \Cref{sec:crossover-min-tradeoff}. It follows that one could split $\mathcal{E}^{PE}$ as a channel which generates which rounds to test, a map where they make their announcements, and a map where they abort if the observations are not in a set of pre-agreed upon acceptable observations $\mathcal{Q}$:

$$ \mathcal{E}^{PE} = \mathcal{E}^{\mathcal{Q}} \circ \mathcal{E}^{PE,ann} \circ \mathcal{E}^{PE, test gen} \ . $$

Note that while fundamentally $\mathcal{Q}$ is a set of sequences of length $m$, $x^{m}_{1} \in {\cX'}^{\times m}$, it can trivially be extended to a set of sequences of length $n$, $x_{1}^{n} \in (\cX' \cup \{\perp\})^{\times n}$, which better matches the EAT and is denoted $\Omega$. In the case that it is a set of sequences of length $n$, it only makes sense for all sequences in $\Omega$ to have $n-m$ of the entries be the $\perp$ symbol. To match with EAT we mean this event $\Omega$ than aligns properly with extending $\cQ$.

Having completed the explanation of the protocol, we can see the natural definitions for these maps are:
$$ \mathcal{E}^{PE, test gen}(\alpha) = \frac{1}{|T_m|} \sum_{t^{n}_{1} \in T_{m}} \dyad{t^{n}_{1}} $$
where $T_{m} = \{x \in \{0,1\}^{n} : \text{Hamming Weight}(x) = m \}$.
\begin{align*}
	\mathcal{E}^{PE,ann}(\rho_{A^{n}_{1}B^{n}_{1}T^{n}_{1}}) =& \sum_{(a^{n}_{1},b^{n}_{1},t^{n}_{1})} \dyad{\mathrm{x}^{\times n}(a^{n}_{1},b^{n}_{1},t^{n}_{1},p^{n}_{1})}_{X^{n}_{1}} \otimes \dyad{a^{n}_{1}} \\ 
	& \hspace{6cm} \otimes \dyad{b^{n}_{1}} \otimes \dyad{p^{n}_{1}} \otimes \rho_{E}^{(a^{n}_{1},b^{n}_{1},p^{n}_{1})}
\end{align*}
where $\mathrm{x}(a_{i},b_{i},t_{i},p_{i}) = (a_{i} \times b_{i}) \in \cA \times \cB$ when $t_{i} = 1$ and otherwise equals $\perp$, i.e.\ $\mathrm{x}$ is the \textit{deterministic} announcement function due to testing that constructs $x_{i} \in \cX$.

$$ \mathcal{E}^{\mathcal{Q}}(\rho_{X^{n}_{1}}) = \sum_{x^{n}_{1} \in \Omega} \dyad{x^{n}_{1}}\rho_{X^{n}_{1}}\dyad{x^{n}_{1}} \ ,$$
where we the event $\Omega$ as the set of sequences of length $n$ that satisfy $\mathcal{Q}$ as stated earlier. 

Since $p_{i}$ is a function of $(a_{i},b_{i},t_{i})$ by construction of \Cref{prot:physicalDDQKD} and $x_{i}$ is a function of $(a_{i},b_{i},t_{i},p_{i})$, we are interested in the set $\{(a^{n}_{1},b^{n}_{1},t^{n}_{1}) \text{ such that } x^{n}_{1} \in \Omega \text{ with non-zero probability}\}$. Throughout, when we write a summation over something of the form  ``$(a^{n}_{1},b^{n}_{1},t^{n}_{1}) : x^{n}_{1} \in \Omega$," we are indicating the set of sequences that result in $x^{n}_{1} \in \Omega$. Therefore, combining all of this, we find the output state after parameter estimation is 
\begin{align*}
	\widetilde{\rho}^{out}
	=& \frac{1}{|T_{m}|} \sum_{\substack{(a^{n}_{1},b^{n}_{1},t^{n}_{1}) \, : x^{n}_{1} \in \Omega}} \dyad{\mathrm{x}^{\times n}(a^{n}_{1},b^{n}_{1},t^{n}_{1},p^{n}_{1})}_{X^{n}_{1}} \otimes \dyad{a^{n}_{1}} \\
	& \hspace{5cm} \otimes \dyad{b^{n}_{1}} \otimes \dyad{p^{n}_{1}} \otimes \dyad{t^{n}_{1}} \otimes  \rho_{E}^{(a^{n}_{1},b^{n}_{1},p^{n}_{1})} \ ,
\end{align*}
where $ \widetilde{\rho}^{out} = (\mathcal{E}^{PE,ann} \circ \mathcal{E}^{PE,test gen} \circ \mathcal{E}^{M} \otimes \id_{E})(\rho_{Q^{A,n}_{1}{Q}^{B,n}_{1}E})$. We stress by construction of $\Omega$, that any element of $(a^{n}_{1},b^{n}_{1},t^{n}_{1}) : x^{n}_{1} \in \Omega$ has $t^{n}_{1} \in T_{m}$, which is why we don't restrict the space $T^{n}$. We will define the final output after sifting and the key map (but prior to error correction) as
\begin{align*}
	\rho^{out}
	=& \frac{1}{|T_{m}|} \sum_{(\hat{a}^{n}_{1},\hat{b}^{n}_{1},t^{n}_{1}) : x^{n}_{1} \in \Omega} \dyad{\mathrm{x}^{\times n}(\hat{a}^{n}_{1}, \hat{b}^{n}_{1},t^{n}_{1},p^{n}_{1})}_{X^{n}_{1}} \otimes \dyad{\hat{a}^{n}_{1}} \\
	& \hspace{5cm}\otimes \dyad{\hat{b}^{n}_{1}} \otimes \dyad{p^{n}_{1}} \otimes \dyad{t^{n}_{1}} \otimes  \rho_{E}^{(\hat{a}^{n}_{1},\hat{b}^{n}_{1},p^{n}_{1})} \ ,
\end{align*}
where hats denoted updated registers, and the reason that we can define $\mathrm{x}$ on the updated sequence is that by assumption testing commutes with sifting and the key map, given the registers $T^{n}_{1}$.

This completes our account of the output state of parameter estimation in the fixed-length device-dependent QKD protocol. However, we will need these extra definitions when comparing to the output of the EAT protocol:
\begin{equation}
    \begin{aligned}
	\rho^{out}_{|\Omega_{sub}} = (\mathcal{E}^{\mathcal{Q}} \otimes \id_{E})(\rho^{out}) \qquad
	\rho^{out}_{|\Omega} = \frac{1}{\Tr(\rho^{out}_{\Omega_{sub}})} \rho^{out}_{\Omega_{sub}} \ .
	\end{aligned}
\end{equation}

Using the same notation as we did for the fixed-length device-dependent QKD protocol, it is straightforward from the effective tensor product structure of device-dependent QKD (\Cref{subsec:ApplyingEATtoDDQKD}) that the output of the EAT protocol can be written as:
\begin{align*}
	\rho_{EAT}^{out} =& \sum_{(\hat{a}^{n}_{1},\hat{b}^{n}_{1},t^{n}_{1}): x^{n}_{1} \in \Omega} \dyad{\mathrm{x}^{\times n}(\hat{a}^{n}_{1}, \hat{b}^{n}_{1},t^{n}_{1},p^{n}_{1})}_{X^{n}_{1}} \otimes \dyad{\hat{a}^{n}_{1}} \\
	& \hspace{5cm} \otimes \dyad{\hat{b}^{n}_{1}} \otimes \dyad{p^{n}_{1}} \otimes \dyad{t^{n}_{1}} \otimes  \rho_{E}^{(\hat{a}^{n}_{1},\hat{b}^{n}_{1},p^{n}_{1})} \ . 
\end{align*}
Our interest is then proving the relation of $\rho^{out}_{EAT}$ to $\rho^{out}$ to prove \Cref{thm:EATOutvsTrueOut}.

\begin{proof}[Proof of \Cref{thm:EATOutvsTrueOut}]
	Ultimately we need to argue about the probabilities of given sequences conditioned on passing, so we start by simplifying the joint probability of the sequences. For notational simplicity in this proof,  we let 
	$$\hat{r}_{i} := (\hat{a}_{i},\hat{b}_{i},p_{i}) \quad i \in [n] \hspace{5mm}\text{and}\hspace{5mm} \widehat{\cS}_{|\Omega} := \{(x^{n}_{1},r^{n}_{1},t^{n}_{1}) : x^{n}_{1} \in \Omega \} \ , $$
	and $r_{i}$, $\cS_{|\Omega}$ are defined the same way but with the un-updated $a_{i},b_{i}$.
	First consider $\rho^{in} \equiv \rho_{Q^{n}_{A,1} Q^{n}_{B,1}E}$ to the fixed-length protocol that satisfies data partitioning as in \Cref{prot:physicalDDQKD} (so that we consider the same announcement structure between the two settings for fair comparison). Then, for any \textit{non-zero} probability sequence, we have:
	\begin{align*}
		\Pr(x^{n}_{1},r^{n}_{1},t^{n}_{1}) 
		=  \Pr(r^{n}_{1},t^{n}_{1}) 
		=  \Pr(t^{n}_{1})\Pr(r^{n}_{1}) 
		=  \frac{1}{|T_{m}|}\Pr(r^{n}_{1}) \ ,
	\end{align*}
	where the first equality is because $x^{n}_{1}$ is produced by a deterministic function of the other sequences, the second equality is because $t^{n}_{1}$ is determined independent of the other sequences, and the third equality is because $t^{n}_{1}$ was uniformly chosen from the set $T_{m}$. Note that sifting and the key map, while deterministic, are not bijective in general. However, since accepting or not accepting a sequence is a deterministic function given $x^{n}_{1}$ which is a deterministic function of $(r^{n}_{1},t^{n}_{1})$ and doesn't change under updating, we have
	\begin{align*}
	\sum_{ (x^{n}_{1},\hat{r}^{n}_{1},t^{n}_{1}) \in \cS_{|\Omega}} \Pr(x^{n}_{1},\hat{r}^{n}_{1},t^{n}_{1})
	=\sum_{ (x^{n}_{1},r^{n}_{1},t^{n}_{1}) \in \cS_{|\Omega}} \Pr(x^{n}_{1},r^{n}_{1},t^{n}_{1})
	= \frac{1}{|T_{m}|} \sum_{(x^{n}_{1},r^{n}_{1},t^{n}_{1}) \in \cS_{|\Omega}} \Pr(x^{n}_{1}|r^{n}_{1},t^{n}_{1})\Pr(r^{n}_{1}) \ .
	\end{align*}
	
	Now we consider the EAT protocol. Note that the sequence $t^{n}_{1}$ is an i.i.d. sequence generated from sampling the distribution $Q =(1-\gamma , \gamma)^{T},$ whose probability is determined by the method of types. Following the notation used in \cite{Cover2005}, we have $\Pr(t^{n}) = Q^{n}(t^{n}_{1})$ for any binary sequence $t^{n}_{1}$ sampled from $Q$ $n$ times. It follows,
	\begin{align*}
		\Pr(r^{n}_{1},t^{n}_{1}) 
		=  \Pr(t^{n}_{1})\Pr(r^{n}_{1}) 
		= Q^{n}(t^{n}_{1})\Pr(r^{n}_{1})  \ ,
	\end{align*}
	which, again using that acceptance is a deterministic function of  $(r^{n}_{1},t^{n}_{1})$,
	\begin{align*}
		\sum_{(x^{n}_{1},\hat{r}^{n}_{1},t^{n}_{1}) : x^{n}_{1} \in \Omega} \Pr(x^{n}_{1},\hat{r}^{n}_{1},t^{n}_{1}) 
		=& \sum_{(x^{n}_{1},\hat{r}^{n}_{1},t^{n}_{1}) : x^{n}_{1} \in \Omega} \Pr(x^{n}_{1}|r^{n}_{1},t^{n}_{1})\Pr(r^{n}_{1},t^{n}_{1}) \\
		=& \sum_{(x^{n}_{1},r^{n}_{1},t^{n}_{1}): x^{n}_{1} \in \Omega} Q^{n}(t^{n}_{1})\Pr(x^{n}_{1}|r^{n}_{1},t^{n}_{1})\Pr(r^{n}_{1}) \ ,
	\end{align*}
	where in the first line we stress we have kept the sum indexing over the non-updated registers, so the sum is independent of how they get updated conditioned on $x^{n}_{1} \in \Omega$. By \cite[Theorem 11.1.2]{Cover2005}, we know $Q^{n}(t^{n}_{1}) = 2^{-n\left(H(f_{t}) - D(f_{t}||Q)\right)}$ where $f_{t}$ is the frequency distribution over $T$ induced by $t^{n}_{1}$ (i.e. the type). Therefore, if ${t}^{n}_{1}$ has Hamming weight $m$, since $m$ is defined as $m = \gamma n$, we have $Q^{n}({t}^{n}_{1}) = 2^{-nh(\gamma)}$ as the relative entropy term is zero, where $h(\cdot)$ is the binary entropy function. As a sequence will only be accepted if $t^{n}_{1}$ has Hamming weight $m$ and $x^{n}_{1}$ is acceptable, we have that
	\begingroup
	\allowdisplaybreaks
	\begin{align*}
		\rho^{out}_{EAT}[\Omega] & = \Tr(\sum_{(x^{n}_{1},\hat{r}^{n}_{1},t^{n}_{1}) \in \widehat{\cS}_{|\Omega}} \Pr(x^{n}_{1},\hat{r}^{n}_{1},t^{n}_{1}) \dyad{x^{n}_{1},\hat{r}^{n}_{1},t^{n}_{1}} \otimes \hat{\rho}_{E}^{(\hat{r}^{n}_{1})}) \\
		& = \sum_{(x^{n}_{1},\hat{r}^{n}_{1},t^{n}_{1}) \in \widehat{\cS}_{|\Omega}} \Pr(x^{n}_{1},\hat{r}^{n}_{1},t^{n}_{1}) \\
		& = 2^{-nh(\gamma)} \sum_{(x^{n}_{1},\hat{r}^{n}_{1},t^{n}_{1}) \in \cS_{|\Omega}} \Pr(x^{n}_{1}|r^{n}_{1},t^{n}_{1})\Pr(r^{n}_{1}) \\
		& = 2^{-nh(\gamma)} |T_{m}| \sum_{(x^{n}_{1},\hat{r}^{n}_{1},t^{n}_{1}) \in \cS_{|\Omega}} \frac{1}{|T_{m}|} \Pr(x^{n}_{1}|r^{n}_{1},t^{n}_{1})\Pr(r^{n}_{1}) \\
		& = 2^{-nh(\gamma)} |T_{m}| \Tr(\sum_{(x^{n}_{1},\hat{r}^{n}_{1},t^{n}_{1}) \in \widehat{\cS}_{|\Omega}} \Pr(x^{n}_{1},\hat{r}^{n}_{1},t^{n}_{1}) \dyad{x^{n}_{1}\hat{r}^{n}_{1},t^{n}_{1}} \otimes \hat{\rho}_{E}^{(\hat{r}^{n}_{1})}) \\
		&= 2^{-nh(\gamma)} \binom{n}{m} \rho^{out}[\Omega] \ ,
	\end{align*}
	\endgroup
	where the last line is because there are $\binom{n}{m}$ $n$-length bit strings with Hamming weight $m$.\\
	
	To see that $\rho^{out}_{EAT|\Omega} = \rho^{out}_{|\Omega}$, we make the following observations:
	\begin{enumerate}
		\item The set of sequences $(a^{n}_{1},b^{n}_{1},p^{n}_{1},t^{n}_{1})$ allowed by $\Omega$ is exactly the same in both cases.
		\item For any tuple of sequences $(a^{n}_{1},b^{n}_{1},p^{n}_{1})$, $p(a^{n}_{1},b^{n}_{1},p^{n}_{1})$ is the same for both protocols given that $\rho^{in}$ is the same and the measurement and announcement procedures are the same--- it is just the distribution over $\Pr(t^{n}_{1})$ that differs.
		\item For every \textit{allowed} tuple of sequences, the probability $\Pr(t^{n}_{1})$ is a fixed (uniform) value. Specifically, for allowed tuples of sequences, $\Pr(t^{n}_{1}) = 2^{-nh(\gamma)}$ for EAT, but $\Pr(t^{n}_{1}) =\frac{1}{|T_{m}|}$ for fixed test length. While the fixed value differs in the two cases, it cancels under the renormalization of conditioning the state on the allowed tuples of sequences.\footnote{For a simple example, imagine we have the two states: $\rho_{1} = y p_{00} \dyad{00} + x p_{01} \dyad{01} + x p_{10} \dyad{10} + z p_{11} \dyad{11}$ and $\rho_{2} = y' p_{00} \dyad{00} + x' p_{01} \dyad{01} + x' p_{10} \dyad{10} + z' p_{11} \dyad{11}$. Say we condition only on the strings of Hamming weight one. Then, we have $\rho_{1|\Omega} = \frac{1}{xp_{01} + xp_{10}}(x p_{01} \dyad{01} + x p_{10} \dyad{10})$ and $\rho_{2|\Omega} = \frac{1}{x'p_{01} + x'p_{10}}(x' p_{01} \dyad{01} + x' p_{10} \dyad{10})$. These conditional states are the same because the renormalization cancels the fixed (but different) pre-factor $x$ (resp. $x'$) for all the strings we are interested in.}
	\end{enumerate}
	It therefore follows under the renormalization that the distribution over the sequences in both cases is identical (and $\rho_{E}^{(a^{n}_{1},b^{n}_{1},p^{n}_{1})}$ for each sequence is the same too since both are based on $(a^{n}_{1},b^{n}_{1},p^{n}_{1})$ followed by the introduction of an independent sequence and then deterministic functions).
\end{proof}
Note this equivalence was able to be derived in effect because the testing in both cases is independent and the probability is uniform over sequences contained in the event $\Omega$ in question beyond the conditional probabilities that arise from the measurement, and since the measurements and announcements are the same process in both cases, this is not altered.

Finally, we explain the derivation of the bound presented below \Cref{thm:EATOutvsTrueOut}:
\begin{align*}
	\log( \vBar \rho^{out}_{EAT}[\Omega]) &= \log(\vBar) + \log(2^{-nh(\gamma)} {n \choose m} \rho^{out}[\Omega]) \\
	&= \log(\vBar \rho^{out}[\Omega]) - nh(\gamma) + \log{n \choose m} \\
	&\geq \log(\vBar \rho^{out}[\Omega]) + \log( \frac{e^2}{\sqrt{2\pi}} \sqrt{\gamma(1-\gamma)n}) \\
	& = \log(\vBar \vPE) + \log( \frac{e^2}{\sqrt{2\pi}} \sqrt{\gamma(1-\gamma)n})     \ ,
\end{align*}
where the inequality follows from using  $\sqrt{2\pi} n^{n+1/2}e^{-n}\leq n! \leq en^{n+1/2}e^{-n}$  \cite{Romik2000} to bound ${n \choose m}$.

\section{Security of QKD using EAT with Smoothing}\label{app:EAT-Sec-with-Smoothing}
In this section we present the security proof for smooth min-entropy rather than sandwiched R{\'e}nyi entropies. 
\subsection{Background smooth entropy results}
For completeness, in this section we present the max-tradeoff function and original version of the EAT \cite{Dupuis2016} for max-tradeoff functions which we make use of in the security proof in this Section. Likewise, we give an account of the Leftover Hashing Lemma \cite{Renner2005} and how it is applied in \Cref{thm:keyLengthWithSmoothing}.

We first state the min-tradeoff function in terms of smooth min-entropy.
\begin{prop}[Theorem V.2 of \cite{Dupuis2019} (Special Case)]
	\label{prop:EAT2}
	Consider EAT channels $\cM_{1},...,\cM_{n}$ and $\rho_{S^{n}_{1}P^{n}_{1}X^{n}_{1}E}$ as defined in \Cref{eq:outputStateForm} such that it satisfies the Markov conditions [\Cref{eq:Markov_cond}]. Let $h \in \mathbb{R}$, $\alpha \in (1,2)$, $f$ be a min-tradeoff function for $\cM_1,\dots,\cM_n$,  and let $\bar{\varepsilon} \in (0,1)$. Let $S_{i}$ always be classical. Then, for any event $\Omega \subseteq \cX^n$ that implies $f(\freq{X_1^n}) \geqslant h$,
	\begin{align}
		\label{eqn:eat-min_v2} H_{\min}^{\bar{\varepsilon}}(S_1^n | P_1^n E)_{\rho_{|\Omega}} & \geq nh - n \frac{(\alpha-1)\ln 2}{2}V^{2} - \frac{1}{\alpha-1} \log \frac{2}{\vBar^{2} \rho[\Omega]^{2}} - n(\alpha-1)^{2} K_{\alpha}
	\end{align}
	holds for 
	\begin{align}
		V &= \sqrt{\Var{f}+2} + \log(2d_{S}^{2}+1) \; \label{eq: EAT constant V_EATv2} \\
		K_\alpha &= \frac{1}{6(2-\alpha)^{3} \ln 2} 2^{(\alpha-1)(\log d_{S} + \Max{f} - \MinSigma{{f})}} \ln^{3} \left( 2^{\log d_{S} + \Max{f} - \MinSigma{f}} + e^{2} \right) \label{eq: EAT constant Kalpha_EATv2}
	\end{align}
	where $\Max{f})$, $\Min{f}$, $\mathrm{Min}_{\Sigma}(f)$ and $\mathrm{Var}(f)$ are defined in \Cref{eq:definition_minmaxvar} and $d_S = \max_{i\in [n]} |S_i|$ is the maximum dimension of the systems~$S_i$.

	Moreover, for a specific choice of $\alpha \in (1,2)$ \cite{Dupuis2019}, 
	\begin{align*}
		H_{\min}^{\bar{\varepsilon}}(S_1^n | P_1^n E)_{\rho_{|\Omega}} & > n h - c \sqrt{n} - c'
	\end{align*}
	where $c$ and $c'$ are functions of $\vBar$, $d_{S}$, $\rho[\Omega]$, and properties of the tradeoff function as follows:
	\begin{align}
	    c &= \sqrt{2 \ln 2} \left( \log (2d_S^2+1) + \sqrt{2 + \Var{f}} \right) \sqrt{ 1-2\log (\bar{\varepsilon} \rho[\Omega])}\; \label{eq:EAT_constant1} \\
	    c' &= \frac{35(1-2\log (\bar{\varepsilon} \rho[\Omega]))}{\left( \log(2 d_S^2 + 1) + \sqrt{2+\Var{f}} \right)^2} 2^{\log d_S + \Max{f} -  \MinSigma{f} } \ln^3\left( 2^{ \log d_S + \Max{f} -  \MinSigma{f}} + e^2 \right) \label{eq:EAT_constant2} \ .
	\end{align}
\end{prop}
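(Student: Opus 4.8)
The plan is to deduce this smooth min-entropy statement directly from the sandwiched R\'enyi statement already available as \Cref{thm:EATv2}, rather than re-running the entropy-accumulation machinery from scratch. The only extra ingredient needed is the standard conversion that lower-bounds the smooth min-entropy of a normalized state by its $H^{\uparrow}_{\alpha}$ entropy with a penalty scaling like $\tfrac{1}{\alpha-1}$: for $\alpha \in (1,2)$ and the re-normalized conditioned state $\rho_{|\Omega}$,
\[ H_{\min}^{\bar\varepsilon}(S_1^n|P_1^n E)_{\rho_{|\Omega}} \geq H_\alpha^{\uparrow}(S_1^n|P_1^n E)_{\rho_{|\Omega}} - \frac{1}{\alpha-1}\log\frac{2}{\bar\varepsilon^2}, \]
which follows from the known relation $H_{\min}^{\bar\varepsilon}\geq H^{\uparrow}_{\alpha}-\tfrac{1}{\alpha-1}\log\tfrac{1}{1-\sqrt{1-\bar\varepsilon^2}}$ together with $\tfrac{1}{1-\sqrt{1-\bar\varepsilon^2}}\leq \tfrac{2}{\bar\varepsilon^2}$ \cite{Tomamichel2016}; this is precisely the step used in \cite{Dupuis2016,Dupuis2019} to pass from the R\'enyi EAT to its smooth form. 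First I would check that all hypotheses of \Cref{thm:EATv2} hold (EAT channels, the Markov conditions, classical $S_i$, and $\Omega$ implying $f(\freq{X_1^n})\geq h$), so its conclusion is available verbatim with the same constants $V$ and $K_\alpha$.

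Second, I would chain the two bounds. Inserting the $H^{\uparrow}_{\alpha}$ lower bound of \Cref{thm:EATv2} into the conversion yields
\[ H_{\min}^{\bar\varepsilon}(S_1^n|P_1^n E)_{\rho_{|\Omega}} > nh - n\tfrac{(\alpha-1)\ln 2}{2}V^2 - \tfrac{\alpha}{\alpha-1}\log\tfrac{1}{\rho[\Omega]} - \tfrac{1}{\alpha-1}\log\tfrac{2}{\bar\varepsilon^2} - n(\alpha-1)^2 K_\alpha, \]
with $V$ and $K_\alpha$ unchanged, since the conversion touches only the smoothing/probability term. It then remains to match the stated combined form $-\tfrac{1}{\alpha-1}\log\tfrac{2}{\bar\varepsilon^2\rho[\Omega]^2}$. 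Expanding it as $-\tfrac{1}{\alpha-1}\log\tfrac{2}{\bar\varepsilon^2}-\tfrac{2}{\alpha-1}\log\tfrac{1}{\rho[\Omega]}$ and using $\alpha<2$, so that $\tfrac{\alpha}{\alpha-1}\log\tfrac{1}{\rho[\Omega]}\leq \tfrac{2}{\alpha-1}\log\tfrac{1}{\rho[\Omega]}$ because $\log\tfrac{1}{\rho[\Omega]}\geq 0$, shows the chained bound is no weaker than the claimed one, establishing the first inequality.

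For the ``moreover'' part, I would follow \cite{Dupuis2019} and substitute the explicit near-optimal $\alpha$ that balances the $n(\alpha-1)V^2$ term against the $\tfrac{1}{\alpha-1}$ smoothing term, i.e.\ $\alpha-1\propto 1/\sqrt{n}$; noting $\log\tfrac{2}{\bar\varepsilon^2\rho[\Omega]^2}=1-2\log(\bar\varepsilon\rho[\Omega])$ reproduces the square-root factor appearing in $c$, and bounding the $K_\alpha$ factor (with $\alpha<2$ keeping $(2-\alpha)^{-3}$ controlled) produces $c'$, giving the constants in \Cref{eq:EAT_constant1,eq:EAT_constant2}. I expect the main obstacle to be purely bookkeeping rather than conceptual: pinning down the exact constant in the conversion lemma and carrying out the $\alpha$-optimization precisely enough to recover the factor $35$ and the square-root terms in $c,c'$. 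The conceptual content is entirely inherited from \Cref{thm:EATv2}; equivalently, and more briefly, the statement is Theorem~V.2 of \cite{Dupuis2019} specialized to classical $S_i$, in which case the simplified $K_\alpha$ follows from the same observation used for \Cref{thm:EATv2} (the discussion after Eq.~(22) of \cite{Dupuis2019}).
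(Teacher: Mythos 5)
Your proposal is correct, and it actually supplies more than the paper itself does: the paper never proves \Cref{prop:EAT2}, importing it verbatim by citation as a special case of Theorem V.2 of Dupuis and Fawzi (2019), exactly as your closing remark observes. Your derivation---chaining \Cref{thm:EATv2} with the standard conversion $H_{\min}^{\bar{\varepsilon}} \geq H_{\alpha}^{\uparrow} - \tfrac{1}{\alpha-1}\log\tfrac{2}{\bar{\varepsilon}^{2}}$, then absorbing the accept-probability term via $\tfrac{\alpha}{\alpha-1}\log\tfrac{1}{\rho[\Omega]} \leq \tfrac{2}{\alpha-1}\log\tfrac{1}{\rho[\Omega]}$ (valid since $\alpha < 2$ and $\log\tfrac{1}{\rho[\Omega]} \geq 0$)---is precisely the route by which the cited reference obtains its Theorem V.2 from its Proposition V.3, so nothing is lost: the constants $V$ and $K_{\alpha}$ pass through untouched, including the classical-$S_i$ simplification of $K_{\alpha}$ that \Cref{thm:EATv2} already incorporates, and your combined penalty $-\tfrac{1}{\alpha-1}\log\tfrac{2}{\bar{\varepsilon}^{2}\rho[\Omega]^{2}}$ matches the stated one. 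The only portion left as a sketch is the ``moreover'' clause, where recovering the explicit factor $35$ in $c'$ and the square-root form of $c$ requires carrying out the reference's specific choice $\alpha - 1 \propto 1/\sqrt{n}$ together with its bound on $K_{\alpha}$; since the paper takes that form on faith from the same reference, this bookkeeping omission is not a gap. In short, what your approach buys is a self-contained reduction of the smoothed statement to the Rényi statement the paper already uses, making explicit that the smooth min-entropy version is strictly weaker information---which is exactly the observation the paper exploits in the main text when it argues that \Cref{thm:keylengthwithoutsmoothing} should outperform \Cref{thm:keyLengthWithSmoothing}.
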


Next we recall the notion of max-tradeoff function.
\begin{definition}
	Following the notation of \Cref{sec:EATBackground}. An affine function $f$ on $\mathbb{P}(\cX)$ is called max-tradeoff function $f$ for an EAT Channel $\cM_{i}$ if it satisfies
	\begin{align*}
		f(\vect{q}) \geq  \underset{\nu \in \Sigma_{i}(\vect{q})}{\sup} H(S_{i}|P_{i}R)_{\nu}, \hspace{.5cm} \forall \vect{q} \in \mathbb{P}(\cX) \ .
	\end{align*}
\end{definition}

We now state the EAT for max-tradeoff functions from \cite{Dupuis2016}.
\begin{theorem}[Part of Theorem 4.4 of \cite{Dupuis2016}]\label{thm:EATv1} 
	Let $\mathcal{M}_1,\dots,\mathcal{M}_n$ be EAT channels and $\rho_{S_1^n P_1^n X_1^n E}$ be of the form in \Cref{eq:outputStateForm} which satisfies the Markov conditions from \Cref{eq:Markov_cond}. Let  $h \in \mathbb{R}$, let $f$ be an affine max-tradeoff function for $\mathcal{M}_1,\dots,\mathcal{M}_n$,  and let  $\bar{\varepsilon} \in (0,1)$. Then, for any event $\Omega \subseteq \mathcal{X}^n$ that implies $f(\freq{X_1^n}) \leqslant h$,
	\begin{align}
		H_{\max}^{\bar{\varepsilon}}(S_1^n | P_1^n E)_{\rho_{|\Omega}} & < n h +  c \sqrt{n}
	\end{align}
	holds for $c = 2 \bigl(\log (1+2 d_S) + \left\lceil \| \nabla f \|_\infty \right\rceil  \bigr) \sqrt{1- 2 \log (\bar{\varepsilon} \rho[\Omega])}$, where $d_S:= \max_{i\in [n]}\{|S_i|\}$.
\end{theorem}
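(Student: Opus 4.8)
\emph{Proof proposal.} This is the smooth-max-entropy, max-tradeoff half of the original entropy accumulation theorem, so the plan is to run the same blueprint as the $H^{\uparrow}_{\delta}$ argument already carried out in \Cref{lem:claim46} and \Cref{thm:uparrowmaxtradeoff}, with the inequalities reversed for the max case and with a final smoothing step into $H^{\bar{\varepsilon}}_{\max}$ rather than keeping a Rényi entropy (this is the route taken in \cite{Dupuis2016} for Theorem 4.4). First I would pass from the smooth max-entropy $H^{\bar{\varepsilon}}_{\max}(S^{n}_{1}|P^{n}_{1}E)_{\rho_{|\Omega}}$ to a sandwiched Rényi entropy $H^{\uparrow}_{\alpha}$ with $\alpha \in (1/2,1)$ using a standard smoothing inequality (the max-entropy analogue of \cite[Lemma B.5]{Dupuis2016}, cf.\ \cite{Tomamichel2016}); since we work with the state conditioned on $\Omega$, this is where the dependence on $\bar{\varepsilon}$ and $\rho[\Omega]$ enters, producing a $1/(1-\alpha)$ penalty that ultimately supplies the factor $\sqrt{1-2\log(\bar{\varepsilon}\rho[\Omega])}$ after $\alpha$ is optimized.

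Next I would introduce per-round auxiliary registers through a channel $\mathcal{D}_i : X_i \to X_i D_i \overline{D}_i$ tuned to the max-tradeoff function $f$, so that on each round the appended conditional entropy equals $f(\delta_{x_i}) - \overline{g}$, where $\overline{g}$ is the midpoint of the range of $f$. By affineness of $f$ the sum of these per-round contributions collapses to $n f(\freq{X^{n}_{1}}) - n\overline{g}$, and the hypothesis that $\Omega$ implies $f(\freq{X^{n}_{1}}) \leq h$ then yields the leading term $nh$. This is exactly the bookkeeping of \Cref{lem:claim46}, carried out with the sign conventions appropriate to an upper bound, and it also fixes $|D_i| \leq \exp(\lceil \|\nabla f\|_{\infty} \rceil)$, which is how the gradient of $f$ will enter the final constant.

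The core of the argument is the single-round reduction. Using a Rényi chain rule (\Cref{thm:chain_rule}, specialized in \Cref{prop:chain_rule_special_case}) I would decompose the augmented global entropy $H^{\uparrow}_{\alpha}(S^{n}_{1}D^{n}_{1}|P^{n}_{1}E\overline{D}^{n}_{1})$ into a sum of suprema of single-round conditional entropies. Here the Markov conditions \Cref{eq:Markov_cond} are essential: they are precisely what license replacing the global conditional state by a round-wise supremum over inputs (\cite[Corollary 3.5]{Dupuis2016}) without entropy leaking between rounds, and one must check that appending the $D_i\overline{D}_i$ registers preserves the Markov structure. Each single-round Rényi term is then expanded about $\alpha = 1$ and bounded by its von Neumann value plus a second-order correction of order $(\alpha-1)\log^{2}(1+2d_S d_D)$ (\cite[Lemma B.9]{Dupuis2016}); absorbing the $d_D$ factor via $|D_i| \leq \exp(\lceil\|\nabla f\|_\infty\rceil)$ gives the constant $V = 2(\log(1+2d_S)+\lceil\|\nabla f\|_\infty\rceil)$. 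Finally the per-round von Neumann term is at most $\overline{g}$ by construction of $\mathcal{D}_i$, cancelling the $-n\overline{g}$ from the tradeoff bookkeeping.

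Optimizing $\alpha \in (1/2,1)$ then trades the $(\alpha-1)$-linear second-order term against the $1/(1-\alpha)$ smoothing penalty, yielding the advertised bound $nh + c\sqrt{n}$ with $c = 2(\log(1+2d_S)+\lceil\|\nabla f\|_\infty\rceil)\sqrt{1-2\log(\bar{\varepsilon}\rho[\Omega])}$. I expect the main obstacle to be this single-round step: verifying that the Markov condition genuinely allows the chain rule to split into round-wise suprema after the auxiliary registers are appended, and tracking the dimension factor $d_D$ carefully so that the gradient enters only through $\lceil\|\nabla f\|_\infty\rceil$. The smoothing reduction and the final scalar optimization over $\alpha$ are comparatively routine.
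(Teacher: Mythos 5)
The first thing to flag is that the paper does not actually prove this statement: \Cref{thm:EATv1} is imported verbatim as ``Part of Theorem 4.4 of \cite{Dupuis2016}'' and used as a black box in \Cref{app:EAT-Sec-with-Smoothing}. The closest in-paper analogue is the unsmoothed R\'{e}nyi bound \Cref{thm:uparrowmaxtradeoff}, proved via \Cref{lem:claim46}, which adapts exactly the machinery you describe: auxiliary registers $D_i\ol{D}_i$ tuned to the tradeoff function, the chain rules of \Cref{thm:chain_rule}, the per-round reduction of \cite[Corollary 3.5]{Dupuis2016}, the near-$1$ expansion of \cite[Lemma B.9]{Dupuis2016}, and penalties in $\bar{\varepsilon}$ and $\rho[\Omega]$ traded against an $n$-linear term by optimizing $\alpha$. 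So your strategy is the right one, and your accounting of how the constant $c$ arises (with $V = 2(\log(1+2d_S)+\lceil \|\nabla f\|_\infty\rceil)$ and the square-root factor from balancing the two penalty types) is correct. One small conflation: \cite[Lemma B.5]{Dupuis2016} is the conditioning-on-$\Omega$ step, responsible for the $\rho[\Omega]$ penalty; the passage from $H^{\bar{\varepsilon}}_{\max}$ to $H^{\uparrow}_{\alpha}$ is a separate smoothing lemma. Both carry $1/(1-\alpha)$-type prefactors and combine under the square root as you say, but they are distinct steps.

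However, there is a genuine sign error at the heart of your construction, and as written your bookkeeping is internally inconsistent. The auxiliary states must satisfy $H(D_i|\ol{D}_i)_{\tau(x)} = \ol{g} - f(\delta_x)$, as in \Cref{lem:claim46} and \cite{Dupuis2016}, not $f(\delta_x) - \ol{g}$. The sign is forced by two requirements, and your proposal needs both. First, the entire purpose of the $D_i$ registers is to cancel the statistics-dependence of the per-round entropy: with the correct sign one gets, per round, a bound of the form $H(S_iD_i|P_i\ol{D}_iR) \leq f(\vect{q}) + (\ol{g} - f(\vect{q})) = \ol{g}$ for every input (using the max-tradeoff property $H(S_i|P_iR) \leq f(\vect{q})$), and a constant that is uniform over inputs is exactly what the reduction of \cite[Corollary 3.5]{Dupuis2016} requires; with your sign the per-round bound is $2f(\vect{q}) - \ol{g}$, which still depends on $\vect{q}$, so the reduction no longer yields $n\ol{g}$. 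Second, since schematically $H(S_1^n|\cdot) = H(S_1^nD_1^n|\cdot) - H(D_1^n|\ol{D}_1^nX_1^n)$ (made rigorous through \cite[Lemma 3.1]{Dupuis2016}), an \emph{upper} bound on $H(S_1^n|\cdot)$ requires a \emph{lower} bound on the $D$-entropy. With the correct sign that entropy is $n\ol{g} - nf(\freq{X_1^n}) \geq n\ol{g} - nh$ on $\Omega$, precisely because $\Omega$ implies $f(\freq{X_1^n}) \leq h$; with your sign it equals $nf(\freq{X_1^n}) - n\ol{g}$, and the hypothesis on $\Omega$ only upper-bounds it --- the wrong direction. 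Indeed your two claims, ``the per-round von Neumann term is at most $\ol{g}$'' and ``the appended entropies sum to $nf(\freq{X_1^n}) - n\ol{g}$,'' correspond to opposite sign conventions and cannot both hold for a single construction. Flipping your construction to $\ol{g} - f(\delta_x)$ repairs both steps simultaneously, after which the rest of your outline goes through.
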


This completes the statement of the EAT for max-tradeoff functions. We now turn our attention to the Leftover Hashing Lemma (LHL). The LHL tells us that applying a two-universal hash function to a somewhat secret string constructs a shorter $\varepsilon$-secret key. The notion of secrecy is measured by distance from uniformity given Eve's side-information.

\begin{theorem}[Corollary 5.6.1 of \cite{Renner2005}]\label{thm:LHL}
	Let $\rho_{XE} \in \mathrm{D}(XE)$ be a classical-quantum state where $X \cong \mathbb{C}^{|\Sigma|}$ where $\Sigma$ is some finite set. Let $\mathcal{F}$ be a family of two-universal hash functions from $\Sigma \to \{0,1\}^{\ell}$ and $\varepsilon > 0$. Then
	\begin{align*}
		\frac{1}{2} \left \| \rho_{F(X)EF} - \frac{1}{|\{0,1\}^{\ell}|} \mathbb{1}_{F(X)} \otimes \rho_{EF} \right \|_{1} \leq \varepsilon + 2^{- \frac{1}{2} (H^{\varepsilon'}_{\min}(X|E) - \ell - 2)} \ .
	\end{align*}
\end{theorem}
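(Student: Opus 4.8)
The plan is to follow the standard two-part structure of the Leftover Hashing Lemma: first dispose of the smoothing by passing to a nearby state on which the min-entropy is attained, and then bound the distance from uniformity of that non-smoothed state directly using the two-universality of $\mathcal{F}$. First I would invoke the definition of the smooth min-entropy to fix a (subnormalized) state $\tilde{\rho}_{XE} \in \mathcal{B}^{\varepsilon'}(\rho_{XE})$ achieving $H_{\min}(X|E)_{\tilde{\rho}} = H^{\varepsilon'}_{\min}(X|E)_{\rho}$. Since both the hashing operation $X \mapsto F(X)$ together with the announcement register $F$, and the partial trace onto $E$, are CPTP maps that cannot increase trace distance, the triangle inequality gives
\[
\tfrac{1}{2}\Big\| \rho_{F(X)EF} - \tfrac{1}{|\{0,1\}^{\ell}|}\mathbb{1}\otimes \rho_{EF}\Big\|_{1} \leq \tfrac{1}{2}\Big\| \tilde{\rho}_{F(X)EF} - \tfrac{1}{|\{0,1\}^{\ell}|}\mathbb{1}\otimes \tilde{\rho}_{EF}\Big\|_{1} + R,
\]
where the remainder $R$ collects the terms built from $\rho - \tilde{\rho}$, each controlled by $\tfrac{1}{2}\|\rho_{XE} - \tilde{\rho}_{XE}\|_{1} \leq P(\rho_{XE},\tilde{\rho}_{XE}) \leq \varepsilon'$ via monotonicity of the purified distance and the Fuchs--van de Graaf inequality. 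Careful accounting of these contributions produces the additive $\varepsilon$ term in the statement.

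The core of the argument is bounding the first term for the fixed state $\tilde{\rho}$. I would average over the choice of hash function and use that announcing $F$ makes the state block-diagonal in $F$, so by Jensen's inequality it suffices to control $\mathbb{E}_{F}\,\tfrac{1}{2}\| \tilde{\rho}_{F(X)E} - \tfrac{1}{|\{0,1\}^{\ell}|}\mathbb{1}\otimes \tilde{\rho}_{E}\|_{1}$. The standard tool here is the weighted Cauchy--Schwarz bound $\|A\|_{1} \leq \sqrt{\mathrm{rank}}\,\|A\|_{2,\sigma}$, taken relative to the reference state $\sigma_{E}$ that optimizes the min-entropy definition $\tilde{\rho}_{XE} \leq 2^{-H_{\min}(X|E)_{\tilde{\rho}}}\,\mathbb{1}_{X}\otimes \sigma_{E}$. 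Passing from the trace norm to the $\sigma$-weighted Hilbert--Schmidt norm and applying Jensen once more reduces the problem to estimating $\mathbb{E}_{F}\,\| \tilde{\rho}_{F(X)E} - \tfrac{1}{|\{0,1\}^{\ell}|}\mathbb{1}\otimes \tilde{\rho}_{E}\|_{2,\sigma}^{2}$.

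Expanding that squared weighted $2$-norm, the diagonal part and the uniform part cancel against each other up to the off-diagonal collision terms, which are governed by $\mathbb{E}_{F}\big[\Pr_{F}[F(x) = F(x')]\big] \leq |\{0,1\}^{\ell}|^{-1}$ for $x \neq x'$, the defining property of a two-universal family. This yields a bound of the form $2^{-\ell}\,2^{-H_{2}(X|E)_{\tilde{\rho}|\sigma}}$ in terms of the conditional collision entropy; using $H_{2} \geq H_{\min}$ and the chosen optimizer $\sigma_{E}$ collapses this to $2^{-\ell}\,2^{-H_{\min}(X|E)_{\tilde{\rho}}}$. Taking square roots, reinstating the factor $\tfrac{1}{2}$, and tracking the normalization of $\mathbb{1}/|\{0,1\}^{\ell}|$ gives the term $2^{-\frac{1}{2}(H^{\varepsilon'}_{\min}(X|E) - \ell - 2)}$, where the $-2$ in the exponent absorbs the accumulated conversion constants. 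The main obstacle I anticipate is the clean bookkeeping across these reductions: selecting and consistently propagating the reference state $\sigma_{E}$ through the decoupling-style $2$-norm inequality, and reconciling the subnormalization introduced by smoothing so that the smoothing radius contributes exactly $\varepsilon$ to the front term while the constants emerge as exactly $-2$ rather than a looser value.
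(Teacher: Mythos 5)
You should first be aware that the paper contains no proof of this statement at all: it is imported verbatim as Corollary 5.6.1 of \cite{Renner2005} and used as a black box to obtain \Cref{corr:LHL}, so there is no in-paper proof to compare against. Judged on its own terms, your proposal is a correct outline of the standard argument---which is essentially Renner's original proof of this very corollary: pass to a nearby state $\tilde{\rho}_{XE}$ attaining the smooth min-entropy, then prove the non-smoothed leftover hashing bound by averaging over the two-universal family, converting the trace norm to a $\sigma_E$-weighted Hilbert--Schmidt norm via Cauchy--Schwarz (with $\sigma_E$ the operator certifying $\tilde{\rho}_{XE} \leq 2^{-H_{\min}}\,\mathbb{1}_X \otimes \sigma_E$), expanding the square so that only collision terms survive, invoking $\Pr_F[F(x)=F(x')] \leq 2^{-\ell}$, and finishing with $H_2 \geq H_{\min}$ and Jensen. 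All of these steps are sound and in the right order.

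The one genuine soft spot is exactly the step you wave through with ``careful accounting'': the smoothing bookkeeping. The triangle inequality produces \emph{two} error terms, one from $\|\rho_{F(X)EF} - \tilde{\rho}_{F(X)EF}\|_1$ and one from $\|\pi \otimes (\rho_{EF} - \tilde{\rho}_{EF})\|_1$, each bounded by $\varepsilon'$ via monotonicity and Fuchs--van de Graaf, so the naive execution yields an additive $2\varepsilon'$, not $\varepsilon'$. This cannot be traded into the exponent: the $-2$ there corresponds to a \emph{multiplicative} factor of $2$ on the hashing term, which is a different kind of slack. This is precisely why this lemma circulates with varying constants ($2\varepsilon$ in some versions, shifted exponents in others), and it interacts with two further conventions you would need to pin down: the paper's statement itself conflates $\varepsilon$ and $\varepsilon'$ (they are meant to be the same smoothing parameter), and the paper's smooth min-entropy uses purified-distance balls whereas Renner's Corollary 5.6.1 is stated for trace-distance smoothing (the former ball is contained in the latter, so Renner's version does imply the paper's, but your proof should say which ball it is working in). None of this invalidates your approach, but until the $2\varepsilon'$-versus-$\varepsilon'$ discrepancy is resolved---either by accepting the weaker constant or by a sharper comparison argument---the proposal does not yet establish the inequality with the constants as stated.
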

\noindent From this we obtain the following corollary immediately.
\begin{corollary}\label{corr:LHL}
	Let $\rho_{XE} \in \mathrm{D}(XE)$ be a classical-quantum state. Let $\varepsilon_{sec} := \varepsilon' + 2^{- \frac{1}{2} (H^{\varepsilon'}_{\min}(X|E) - \ell - 2)}$ and $\varepsilon' > 0$. Let $\vPA := \varepsilon_{sec} - \varepsilon' > 0$. Then for an $\varepsilon_{sec}$-secret key, the length of the key, $\ell$, must satisfy the following bound:
	\begin{align*}
		\ell \leq H^{\varepsilon'}_{\min}(X|E) - 2\log(\frac{2}{\vPA}) \ .
	\end{align*}
\end{corollary}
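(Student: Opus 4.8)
The plan is to treat this as a direct algebraic inversion of the Leftover Hashing Lemma (\Cref{thm:LHL}), exactly in the spirit of the non-smoothed companion result \Cref{corr:LHL_without_smoothing}. The pivotal observation is that $\vPA$ has been defined precisely so that it equals the entropic (second) term in the secrecy bound of \Cref{thm:LHL}: from $\varepsilon_{sec} := \varepsilon' + 2^{-\frac{1}{2}(H^{\varepsilon'}_{\min}(X|E) - \ell - 2)}$ together with $\vPA := \varepsilon_{sec} - \varepsilon'$ one reads off immediately that $\vPA = 2^{-\frac{1}{2}(H^{\varepsilon'}_{\min}(X|E) - \ell - 2)}$. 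Hence \Cref{thm:LHL} guarantees that the resulting state is within trace distance $\varepsilon_{sec}$ of the ideal uniform, decoupled state, i.e.\ the produced key is $\varepsilon_{sec}$-secret, and it remains only to solve this relation for $\ell$.

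First I would take the base-$2$ logarithm of the identity above to linearize the exponent, obtaining $\log \vPA = -\frac{1}{2}(H^{\varepsilon'}_{\min}(X|E) - \ell - 2)$. Multiplying through by $-2$ and isolating $\ell$ gives $\ell = H^{\varepsilon'}_{\min}(X|E) - 2 + 2\log \vPA$, and the final cosmetic step absorbs the additive constant: since $2\log \vPA - 2 = -2\log(2/\vPA)$, this is exactly $\ell = H^{\varepsilon'}_{\min}(X|E) - 2\log(2/\vPA)$. To justify writing $\leq$ rather than equality in the statement, I would invoke monotonicity of the secrecy bound: the right-hand side of \Cref{thm:LHL} is increasing in $\ell$, so any key length no larger than this threshold still yields distance at most $\varepsilon_{sec}$, hence an $\varepsilon_{sec}$-secret key. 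Equivalently, one runs the same chain of implications used for \Cref{corr:LHL_without_smoothing}, substituting the smooth min-entropy hashing bound for the R\'enyi one.

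There is essentially no substantive obstacle here; all of the content sits in \Cref{thm:LHL}, and the corollary is a one-line inversion of it. The only point deserving care is bookkeeping of the numerical constants and sign conventions---the factor $\tfrac{1}{2}$ in the exponent of the hashing lemma becomes the coefficient $2$ on the $\log$ term, and the $-2$ offset inside the exponent reappears as the $2$ inside $\log(2/\vPA)$. I would therefore double-check each of these against the precise statement of \Cref{thm:LHL} to confirm that no factor of two is dropped or misplaced when passing between the additive offset and the argument of the logarithm.
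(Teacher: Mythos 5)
Your proposal is correct and matches the paper's treatment: the paper states this corollary as following "immediately" from \Cref{thm:LHL}, and your algebraic inversion (reading off $\vPA = 2^{-\frac{1}{2}(H^{\varepsilon'}_{\min}(X|E) - \ell - 2)}$, taking logarithms, and rewriting $2\log\vPA - 2 = -2\log(2/\vPA)$) is exactly the computation carried out explicitly for the companion result \Cref{corr:LHL_without_smoothing}. Your monotonicity remark justifying the inequality sign, and your careful tracking of the factor $\tfrac{1}{2}$ and the additive offset $-2$, are both sound, so there is nothing to correct.
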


\subsection{Technical lemmas for smooth entropy key rate}
The following lemma shows that generating a classical register from the classical side information does not change the value of the smooth min-entropy.
\begin{lemma}\label{lem: min entropy extra side information}
	For any CQ state $\rho_{AB} = \sum_i p_i \rho_A^i \ox \dyad{i}_B$ and $\rho_{ABC} = \sum_i p_i \rho_A^i \ox \dyad{i}_B \ox \dyad{f(i)}_C$ with a given function $f$, it holds that
	\begin{align}
		H_{\min}^{\varepsilon}(A|B) = H_{\min}^{\varepsilon}(A|BC).
	\end{align}
\end{lemma}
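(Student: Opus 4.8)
The plan is to establish the claimed equality by proving the two inequalities $H_{\min}^{\varepsilon}(A|B) \geq H_{\min}^{\varepsilon}(A|BC)$ and $H_{\min}^{\varepsilon}(A|B) \leq H_{\min}^{\varepsilon}(A|BC)$ separately, each as an immediate consequence of the data processing inequality for the smooth conditional min-entropy under a CPTP map acting only on the conditioning register (see \cite{Tomamichel2016}). Recall this inequality states that if $\mathcal{E}_{B \to B'}$ is CPTP and $\sigma_{AB'} := (\id_A \ox \mathcal{E})(\omega_{AB})$, then $H_{\min}^{\varepsilon}(A|B')_{\sigma} \geq H_{\min}^{\varepsilon}(A|B)_{\omega}$; that is, processing the side information never decreases the min-entropy.

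For the direction $H_{\min}^{\varepsilon}(A|B) \geq H_{\min}^{\varepsilon}(A|BC)$, I would take $\mathcal{E} = \Tr_C$ to be the partial trace over $C$, which is CPTP from $BC$ to $B$. Since discarding $C$ from $\rho_{ABC}$ recovers $\rho_{AB}$ exactly, the data processing inequality gives $H_{\min}^{\varepsilon}(A|B)_{\rho_{AB}} \geq H_{\min}^{\varepsilon}(A|BC)_{\rho_{ABC}}$.

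For the reverse direction, I would construct the CPTP map $\mathcal{E}_{B \to BC}$ that reads the classical register $B$ and appends the value $f(i)$ to a fresh register $C$, namely
\begin{equation*}
    \mathcal{E}(\sigma_B) = \sum_i \big(\dyad{i}_B \, \sigma_B \, \dyad{i}_B\big) \ox \dyad{f(i)}_C \, .
\end{equation*}
This is a valid measure-and-prepare channel. Applying $\id_A \ox \mathcal{E}$ to $\rho_{AB}$ reproduces $\rho_{ABC}$, because $\rho_{AB}$ is already block-diagonal in the basis $\{\ket{i}\}_B$, so the dephasing leaves it invariant while the appended register carries $f(i)$. The data processing inequality then yields $H_{\min}^{\varepsilon}(A|BC)_{\rho_{ABC}} \geq H_{\min}^{\varepsilon}(A|B)_{\rho_{AB}}$. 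Combining the two inequalities gives the equality.

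The only point requiring care is verifying that the copying map $\mathcal{E}$ is genuinely CPTP (trace preservation follows from $\sum_i \bra{i}\sigma_B\ket{i} = \Tr(\sigma_B)$) and that it reproduces $\rho_{ABC}$ exactly on the classical input; both are routine given that $B$ is classical. I do not expect an essential obstacle here, as the statement is entirely a consequence of the monotonicity of the smooth min-entropy under data processing of the conditioning system, used once in each direction.
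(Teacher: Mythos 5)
Your proof is correct, but it takes a different route from the paper. The paper's proof is a one-liner: it defines the isometry $V:\ket{i}_B \to \ket{i}_B\ket{f(i)}_C$, observes that $\rho_{ABC} = V\rho_{AB}V^\dagger$, and invokes the invariance of the smooth min-entropy under local isometries (\cite[Corollary 6.11]{Tomamichel2016}). You instead derive the equality from the data-processing inequality applied twice: once with $\Tr_C$ to get $H_{\min}^{\varepsilon}(A|B) \geq H_{\min}^{\varepsilon}(A|BC)$, and once with the measure-and-prepare channel $\mathcal{E}(\sigma_B) = \sum_i \dyad{i}\sigma_B\dyad{i} \ox \dyad{f(i)}_C$, which is CPTP (its Kraus operators $K_i = \dyad{i}_B \ox \ket{f(i)}_C$ satisfy $\sum_i K_i^\dagger K_i = \idm_B$) and reproduces $\rho_{ABC}$ exactly because $\rho_{AB}$ is classical on $B$, so the dephasing is harmless. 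In effect you have unpacked the isometric-invariance argument into its two constituent monotonicity steps, which is how that invariance is typically proven anyway. What each approach buys: the paper's is maximally compact given the ready-made corollary; yours is more self-contained, relying only on the data-processing inequality (which the paper already cites elsewhere), and it makes explicit that the classicality of $B$ is what allows the forward channel to reconstruct $\rho_{ABC}$ without disturbance. Your argument also extends verbatim to the case the paper mentions in its closing remark, where $f(i)$ is replaced by a stochastic assignment: the appending channel simply prepares the corresponding mixture on $C$, and both data-processing steps go through unchanged.
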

\begin{proof}
	This can be seen by defining isometry $V:\ket{i}_B \to \ket{i}_B\ket{f(i)}_C$. Then $\rho_{ABC} = V\rho_{AB}V^\dagger$. The result  then holds by the fact that the smooth min-entropy is invariant under local isometries~\cite[Corollary 6.11]{Tomamichel2016}. Note that this holds for any function $f$ which may not necessarily be deterministic one (c.f.~\cite[Eqs. 6.77]{Tomamichel2016}).
\end{proof}

\begin{lemma}\label{lem:reconstructingPrivateRegister}
	Consider a CCQ state $\rho_{XPE} = \sum_{(x,p)} \dyad{x}{x} \otimes \dyad{p}{p} \otimes \rho_{E}^{(x,p)}$. Consider another classical register $Y$ which be constructed using a deterministic function on the registers $X$ and $P$, i.e. $f: X \times P \to Y$. That is, there exists an isometry, $V \equiv \sum_{(x,p)} \dyad{x} \otimes \dyad{p}{p} \otimes \ket{f(x,p)}_{Y}$ reconstructing $Y$ from $X$ and $P$. Then
	\begin{align}
		H^{\ve}_{\min}(YX|PE)_{V \rho V^{\dagger}} = H^{\ve}_{\min}(X|PE)_{\rho} \ . 
	\end{align}
\end{lemma}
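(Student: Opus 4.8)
The plan is to show that the two bipartite states $\rho_{XPE}$ and $V\rho V^{\dagger}$ (with the grouping $A=X$, resp.\ $A=XY$, and $B=PE$ in both cases) have identical smooth min-entropies, exploiting that $V$ is a genuine isometry on $XP$ that, block-by-block in the classical register $P$, reduces to a \emph{local} isometry on $X$. Writing $V=\sum_{p}V_{p}\otimes\dyad{p}_{P}$ with $V_{p}:=\sum_{x}\dyad{x}_{X}\otimes\ket{f(x,p)}_{Y}$, each $V_{p}$ is an isometry $X\to XY$ acting only on the numerator system. First I would dispose of the direction $H^{\ve}_{\min}(XY|PE)_{V\rho V^{\dagger}}\le H^{\ve}_{\min}(X|PE)_{\rho}$ by data processing: tracing out $Y$ is a CPTP map on the $A$-system, and since $\rho$ is diagonal in $X$ and $P$ one has $\Tr_{Y}(V\rho V^{\dagger})=\rho$; data processing of the smooth min-entropy under maps on $A$ then gives the inequality.

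For the non-smooth equality I would use the SDP characterization $H_{\min}(A|B)_{\tau}=-\log\min\{\Tr\sigma_{B}:\tau_{AB}\le \idm_{A}\otimes\sigma_{B}\}$. Because both $\rho$ and $V\rho V^{\dagger}$ are block diagonal in $P$, the optimal $\sigma_{PE}$ may be taken block diagonal, $\sigma_{PE}=\sum_{p}\dyad{p}\otimes\sigma_{E}^{(p)}$, and the operator inequality splits into one constraint per block. Using that $V_{p}$ is an isometry with $V_{p}^{\dagger}(\idm_{XY}\otimes\sigma_{E}^{(p)})V_{p}=\idm_{X}\otimes\sigma_{E}^{(p)}$, the block constraint $V_{p}\tau^{(p)}V_{p}^{\dagger}\le\idm_{XY}\otimes\sigma_{E}^{(p)}$ is \emph{equivalent} to $\tau^{(p)}\le\idm_{X}\otimes\sigma_{E}^{(p)}$. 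Hence the two feasible sets coincide with the same objective, proving $H_{\min}(XY|PE)_{V\rho V^{\dagger}}=H_{\min}(X|PE)_{\rho}$, and the identical block computation establishes this equality for \emph{every} state that is block diagonal in $P$.

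Finally I would lift this to the smooth quantities, and here lies the main obstacle: states in the smoothing ball $\mathcal{B}^{\ve}(\rho)$ need not be classical in $P$, so the clean block structure is a priori lost. I would resolve this by a pinching argument. The dephasing map $\Delta_{P}$ on register $P$ is a CPTP map on $B$, hence cannot decrease $H_{\min}(X|PE)$; and since $\rho=\Delta_{P}(\rho)$, data processing of the purified distance gives $P(\Delta_{P}(\tilde\rho),\rho)\le P(\tilde\rho,\rho)$, so $\Delta_{P}(\tilde\rho)\in\mathcal{B}^{\ve}(\rho)$ whenever $\tilde\rho\in\mathcal{B}^{\ve}(\rho)$. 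Thus on both sides the optimal smoothing state may be taken block diagonal in $P$ without loss of generality.

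Since $V$ is a global isometry it preserves purified distance, so it maps the set of $P$-block-diagonal states in $\mathcal{B}^{\ve}(\rho)$ bijectively onto the corresponding $P$-block-diagonal states in $\mathcal{B}^{\ve}(V\rho V^{\dagger})$. Combining this correspondence with the block-wise non-smooth equality established above yields $H^{\ve}_{\min}(YX|PE)_{V\rho V^{\dagger}}=H^{\ve}_{\min}(X|PE)_{\rho}$, as claimed. I expect the only delicate points to be verifying the pinching reduction on the $V\rho V^{\dagger}$ side and confirming that the purified-distance-preserving image of the restricted ball is exactly the restricted ball around $V\rho V^{\dagger}$; both follow from $V$ being an isometry together with the commutation of $\Delta_{P}$ with $V$.
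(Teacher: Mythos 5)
Your proposal has the right per-block picture, but it fails on exactly the direction that carries the content of the lemma. The step you dispose of first --- $H^{\ve}_{\min}(YX|PE)_{V\rho V^{\dagger}}\le H^{\ve}_{\min}(X|PE)_{\rho}$ ``by data processing under tracing out $Y$'' --- is not a valid inference. Smooth min-entropy is monotone under CPTP maps on the \emph{conditioning} system, and under sub-unital maps on the $A$ system, but the partial trace on $A$ is not sub-unital, and the claimed inequality is false in general: for $\tau_{XYPE}=\rho_{XPE}\otimes\pi_{Y}$ with $\pi_{Y}$ maximally mixed one has $H_{\min}(XY|PE)_{\tau}=H_{\min}(X|PE)_{\rho}+\log\abs{Y}$, strictly larger than the entropy of the $Y$-traced state. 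The inequality you want holds here only because $Y=f(X,P)$ is redundant given $X$ and $P$; it is precisely the nontrivial half of the lemma and cannot come from a structure-free tracing argument.

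Your fallback, the closing ``bijection'' claim, does not rescue it: $V(\cdot)V^{\dagger}$ preserves purified distance and hence maps $\mathcal{B}^{\ve}(\rho)$ injectively \emph{into} the $P$-block-diagonal states of $\mathcal{B}^{\ve}(V\rho V^{\dagger})$, but not onto them --- a block-diagonal state in the latter ball may put weight on vectors $\ket{x}\ket{p}\ket{y}$ with $y\neq f(x,p)$, i.e.\ outside the range of $V$. Consequently, combining your (correct) per-block SDP equivalence with this map only yields the $\geq$ direction, which is the easy one (the paper gets it directly from non-negativity of the conditional entropy of a classical register plus $\Tr_{Y}(V\rho V^{\dagger})=\rho$). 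The missing idea is to transport smoothing states \emph{backwards}: take the optimal smoothing state $\bar\rho_{YXPE}$ for the left-hand side (which may be taken classical on the classical registers), pinch the operator inequality in $P$ --- your pinching reduction is sound, since $\Delta_{P}$ acts on the conditioning system --- and then apply the trace-non-increasing map $V^{\dagger}(\cdot)V$. This map cannot increase purified distance, so $V^{\dagger}\bar\rho V\in\mathcal{B}^{\ve}(\rho)$, and feasibility of the min-entropy SDP transfers block by block because $V_{p}^{\dagger}(\idm_{XY}\otimes\sigma_{E}^{(p)})V_{p}=\idm_{X}\otimes\sigma_{E}^{(p)}$; note this identity requires $\sigma_{PE}$ to be block diagonal in $P$, which is exactly why the pinching must precede the un-computing of $Y$. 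This backward transport is the paper's proof; with it replacing your step one and your bijection, the rest of your argument (the decomposition $V=\sum_{p}V_{p}\otimes\dyad{p}{p}$ and the per-block isometry analysis) goes through.
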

\begin{proof}
	To prove this proposition, we simply prove \begin{align}
		& H^{\ve}_{\min}(YX|PE)_{V \rho V^{\dagger}} \geq H^{\ve}_{\min}(X|PE)_{\rho} \label{eqn:propProofGoal1} \ ,\\ 
		& H^{\ve}_{\min}(YX|PE)_{V \rho V^{\dagger}} \leq H^{\ve}_{\min}(X|PE)_{\rho} \ . \label{eqn:propProofGoal2}
	\end{align} 
	We start with \Cref{eqn:propProofGoal1} as it is straightforward using previous results. Namely, we see:
	\begin{align}
		H^{\ve}_{\min}(YX|PE)_{V \rho V^{\dagger}} \geq H^{\ve}_{\min}(X|PE)_{V\rho V^{\dagger}} = H^{\ve}_{\min}(X|PE)_{\rho} \ .
	\end{align}
	The first inequality comes from the fact that the entropy of a classical register is non-negative~\cite[Lemma 6.17]{Tomamichel2016}. The equality follows from noting that $\Tr_{Y} (V \rho V^{\dagger}) = \rho.$ Therefore we can focus on the other direction of the inequality.
	
	By combining~\cite[Definitions 6.2 and 6.9]{Tomamichel2016} we note the definition of the smooth min-entropy terms we are concerned with:
	\begin{align}
		H^{\ve}_{\min}(YX|PE)_{V\rho V^{\dagger}} & := \underset{\overline{\rho} \in \mathcal{B}^{\ve}(V \rho V^{\dagger})}{\max} \,
		\underset{\overline{M}_{PE} \in \Density_{\leq}(PE) }{\max} \max \left\{\overline{\lambda} \in \mathbb{R} : \mathbb{1}_{Y} \otimes \mathbb{1}_{X} \otimes 2^{-\overline{\lambda}} \overline{M}_{PE} \geq  \overline{\rho}_{YXPE} \right\} \\
		H^{\ve}_{\min}(X|PE)_{\rho} & := \underset{\widetilde{\rho} \in \mathcal{B}^{\ve}(\rho)}{\max} \, \underset{\widetilde{M}_{PE} \in \Density_{\leq}(PE) }{\max} \max \left\{ \widetilde{\lambda} \in \mathbb{R} : \mathbb{1}_{X} \otimes 2^{- \widetilde{\lambda}} \widetilde{M}_{PE} \geq  \widetilde{\rho}_{XPE} \right\} \ ,
	\end{align}
	where $\Density_{\leq}$ is the set of sub-normalized states.
	One can see from these definitions it would be sufficient to show the optimal solution for 
	$H^{\ve}_{\min}(YX|PE)_{V\rho V^{\dagger}}$ is a feasible solution for  $H^{\ve}_{\min}(X|PE)_{\rho}$ to prove \Cref{eqn:propProofGoal2}. Suppose the optimal solution of $ H^{\ve}_{\min}(YX|PE)_{V\rho V^{\dagger}}$ is taken at  $(\overline{\lambda}, \overline{M}_{PE}, \overline{\rho}_{YXPE})$ where $\overline{\rho}_{YXPE}$ is chosen as a CCCQ state by~\cite[Lemma 6.13]{Tomamichel2016}. This gives 
	\begin{align}
		\mathbb{1}_{Y} \otimes \mathbb{1}_{X} \otimes 2^{- \overline{\lambda}}  \overline{M}_{PE} \geq  \overline{\rho}_{YXPE}:= \sum_{(y,x,p)}
		\dyad{y} \otimes \dyad{x} \otimes \dyad{p} \otimes \overline{\rho}^{(y,x,p)}_{E} \ .
	\end{align}
	Applying the pinching map $\Delta_P(\cdot):= \sum_p \ket{p}\bra{p} \cdot \ket{p}\bra{p}$ on both sides, we get
	\begin{align}
		\sum_{(y,x,p)}  \dyad{y} \otimes \dyad{x} \otimes  \dyad{p} \ox 2^{- \overline{\lambda}} \overline{M}_{E}^{(p)} \geq  \sum_{(y,x,p)}
		\dyad{y} \otimes \dyad{x} \otimes \dyad{p} \otimes \overline{\rho}^{(y,x,p)}_{E},
	\end{align}
	with $\overline{M}_{E}^{(p)}:= \<p|\overline{M}_{PE}|p\>$. We can further `un-compute' $Y$ by applying $V^\dagger \cdot V$ on both sides,
	\begin{align}
		\sum_{(x,p)} \dyad{x} \otimes  \dyad{p} \ox 2^{- \overline{\lambda}} \overline{M}_{E}^{(p)} \geq  \sum_{(x,p)} \dyad{x} \otimes \dyad{p} \otimes \overline{\rho}^{(f(x,p),x,p)}_{E},
	\end{align}
	which is equivalent to
	\begin{align}
		\mathbb{1}_{X} \ox 2^{- \overline{\lambda}} \sum_p \dyad{p} \ox  \overline{M}_{E}^{(p)} \geq  V^\dagger \overline{\rho}_{YXPE} V \ .
	\end{align}
	Since $V^\dagger \cdot V$ is a CPTNI map, we know that the purified distance cannot increase under this map. Therefore,
	\begin{align}
		P(V^\dagger \overline{\rho}_{YXPE} V, \rho_{XPE}) \leq P(\overline{\rho}_{YXPE}, V\rho_{XPE}V^\dagger) \leq \ve,
	\end{align}where the second inequality follows by the assumption of $\overline{\rho}_{YXPE}$. Thus, it is the case that $(\overline{\lambda},\sum_p \dyad{p} \ox  \overline{M}_{E}^{(p)}, V^\dagger \overline{\rho}_{YXPE} V)$ is a feasible solution for $H^{\ve}_{\min}(X|PE)_{\rho}$, which implies \Cref{eqn:propProofGoal2}.
\end{proof}

\subsection{Entropy Accumulation subprotocol with smoothing}
Here we present the equivalent of \Cref{thm:entRate}. The proof is the same as in the main text except that we need to use the smooth min-entropy EAT (\Cref{prop:EAT2}).
\begin{shaded}
	\begin{theorem} 
		\label{thm:entRatewithSmoothing} 
		Consider the entropy accumulation protocol defined in \Cref{prot:DDEA} where the announcements $\widetilde{A}_i$'s and $\widetilde{B}_i$'s satisfy \Cref{assumption:block_diagonal}. Let $\Omega = \{x^{n}_{1} \in \cX^{n}: \freq{x^{n}_{1}} \in \mathcal{Q}\}$ and $\rho$ be the output of the protocol. Let $h$ such that $f(\q) \geq h$ for all $\q \in \mathcal{Q}$ where $f$ is the min-tradeoff function generated by \Cref{alg:Algorithm1_min_tradeoff_function} or \Cref{alg:Algorithm2_min_tradeoff_function}. Then for any $\alpha \in (1,2)$, $\vEA,\vBar \in (0,1)$, either the protocol aborts with probability greater than $1-\vEA$, or
		\begin{align}\label{eq:entropyRate_with_smoothing}
			H^{\vBar}_{\min}(\overline{A}^{n}_{1} \overline{B}^{n}_{1}| \widetilde{A}^{n}_{1} \widetilde{B}^{n}_{1} T^{n}_{1} E)_{\rho_{|\Omega}} > n h -n \frac{(\alpha-1)\ln 2}{2}V^{2} - \frac{1}{\alpha-1} \log \frac{2}{\vBar^{2} \vEA^{2}} - n(\alpha-1)^{2} K_{\alpha}
		\end{align}
		where $V$ and $K_{\alpha}$ are given by \Cref{eq: EAT constant V_EATv2,eq: EAT constant Kalpha_EATv2}  and we replace $\rho[\Omega]$ by $\vEA$.\\
		
		\noindent Moreover, the bound can be further optimized by using \Cref{eqn:eat-min_v2} and optimizing over $\alpha$.
	\end{theorem}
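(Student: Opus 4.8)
The plan is to follow exactly the proof of \Cref{thm:entRate}, which verifies that the entropy accumulation subprotocol satisfies the hypotheses of the EAT, but to invoke the smooth min-entropy version of the accumulation theorem (\Cref{prop:EAT2}) in place of the sandwiched R\'{e}nyi version (\Cref{thm:EATv2}). Since the two statements differ only in the entropic quantity on the left-hand side and in the exact form of the failure-probability correction term, while all of the structural hypotheses (Markov conditions, form of the testing map, validity of the min-tradeoff function) are identical, essentially all of the work has already been done. Concretely, I would use the register relabeling from the main text, $S_i \leftrightarrow \ol{A}_i \ol{B}_i$, $P_i \leftrightarrow \wt{A}_i \wt{B}_i T_i$, $X_i \leftrightarrow X_i$, and $E \leftrightarrow E$, so that the target quantity $H^{\vBar}_{\min}(\ol{A}^n_1 \ol{B}^n_1 | \wt{A}^n_1 \wt{B}^n_1 T^n_1 E)_{\rho_{|\Omega}}$ is precisely $H^{\vBar}_{\min}(S^n_1 | P^n_1 E)_{\rho_{|\Omega}}$.

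First I would check the three hypotheses of \Cref{prop:EAT2}. (i) Because the protocol satisfies \Cref{assumption:block_diagonal}, the announcements are weakly dependent, so \Cref{thm:Markov_block_diag} (equivalently \Cref{prop:Markov_block_diag2}) guarantees that, without loss of generality, Eve's optimal attack produces an output state obeying the Markov conditions in \Cref{eq:Markov_cond}. Here I would invoke the remark following the proof of \Cref{prop:Markov_block_diag2}, which notes that the block-diagonalization argument is stated directly for the smooth min-entropy via its strong subadditivity and so applies verbatim. (ii) By the definition $X_i = t(\ol{A}_i, \ol{B}_i, \wt{A}_i, \wt{B}_i)$, the testing map $\cT_i$ has the form required by \Cref{eq:eat_testingMap} in \Cref{defn:EATchannels}. (iii) The function $f$ produced by \Cref{alg:Algorithm1_min_tradeoff_function} or \Cref{alg:Algorithm2_min_tradeoff_function} is a valid min-tradeoff function by \Cref{prop:alg1_correctness} or \Cref{prop:alg2_correctness}, and by hypothesis $f(\vect{q}) \geq h$ for every $\vect{q} \in \mathcal{Q}$; since $\Omega = \{x^n_1 : \freq{x^n_1} \in \mathcal{Q}\}$, the event $\Omega$ implies $f(\freq{X^n_1}) \geq h$, matching the requirement in \Cref{prop:EAT2}.

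With the hypotheses verified, applying \Cref{prop:EAT2} directly yields the stated inequality, initially with $\rho[\Omega]$ appearing in the term $-\frac{1}{\alpha-1}\log\frac{2}{\vBar^2 \rho[\Omega]^2}$ and with $V, K_\alpha$ as in \Cref{eq: EAT constant V_EATv2,eq: EAT constant Kalpha_EATv2}. The final step is to dispose of the attack-dependent quantity $\rho[\Omega]$ by a case split. If $\rho[\Omega] < \vEA$, then the protocol aborts with probability $1-\rho[\Omega] > 1-\vEA$, which is the first alternative in the statement. Otherwise $\rho[\Omega] \geq \vEA$, whence $\log\frac{2}{\vBar^2\rho[\Omega]^2} \leq \log\frac{2}{\vBar^2\vEA^2}$, and since $\frac{1}{\alpha-1}>0$ for $\alpha \in (1,2)$ the penalty term only grows in magnitude when $\rho[\Omega]$ is replaced by $\vEA$; the lower bound therefore still holds with $\vEA$ in place of $\rho[\Omega]$, giving the displayed inequality. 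The concluding remark of the theorem (that the bound may be re-optimized over $\alpha$) is immediate, as $\alpha \in (1,2)$ was arbitrary and the bound holds for each such choice.

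I do not expect a genuine obstacle, since this is a routine transcription of the proof of \Cref{thm:entRate} with the R\'{e}nyi EAT swapped for its smooth-entropy predecessor. The only point requiring a moment's care is confirming that the Markov-condition reduction of \Cref{prop:Markov_block_diag2} remains valid for $H^{\ve}_{\min}$ rather than for $H^{\uparrow}_{\alpha}$, which is exactly what the remark after that proposition asserts via strong subadditivity of the smooth min-entropy.
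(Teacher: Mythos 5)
Your proposal is correct and follows essentially the same route as the paper: the paper's own proof consists of the observation that the argument for \Cref{thm:entRate} carries over verbatim with \Cref{prop:EAT2} substituted for \Cref{thm:EATv2}, which is exactly what you do. Your additional care in spelling out the case split on $\rho[\Omega]$ versus $\vEA$ and in confirming that the Markov-condition reduction of \Cref{prop:Markov_block_diag2} is stated natively for the smooth min-entropy are both accurate and consistent with the paper's reasoning.
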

\end{shaded}

\subsection{QKD security proof with smooth entropies}
Here we present the equivalent of \Cref{thm:keylengthwithoutsmoothing}. The proof is similar, but we include it for completeness.
\begin{shaded}
	\begin{theorem}\label{thm:keyLengthWithSmoothing} 
		Consider any QKD protocol which follows \Cref{prot:physicalDDQKD} and satisfies \Cref{assumption:block_diagonal}. Let $\vBar,\vPA,\vEC,\vAcc \in (0,1)$ such that $\vAcc \geq \vBar + \vPA + \vEC$. Let $h$ such that $f(\vect{q}) \geq h$ for all $\vect{q} \in \mathcal{Q}$ where $f$ is the min-tradeoff function generated by \Cref{alg:Algorithm1_min_tradeoff_function} or \Cref{alg:Algorithm2_min_tradeoff_function}. For $\alpha \in (1,2)$, the QKD protocol is $\vAcc$-secure for key length
		\begin{align}
			\label{eq:KeyLength_with_smoothing} 
			\ell &\leq nh - {\leak}_{\vEC} - n\frac{(\alpha-1)\ln 2}{2}V^{2} - n(\alpha-1)^{2}K_{\alpha} - n\gamma \log(|\cA||\cB|) \notag \\
			& \hspace{.25cm} - 2\sqrt{n}\log(1+2d_{S})\sqrt{1-2\log(\vBar/4 \cdot \vAcc)} - \frac{1}{\alpha-1}\log \frac{32}{(\vBar \cdot \vAcc)^{2}}  \\
			& \hspace{.5cm} -2\log(1-\sqrt{1-(\vBar/4)^{2}}) - 2\log(2/\vPA)  \notag
		\end{align}
		where $V$ and $K_{\alpha}$ are given by \Cref{eq: EAT constant V_EATv2,eq: EAT constant Kalpha_EATv2},
		$\mathcal{A}$ and $\mathcal{B}$ are the alphabets of private outcomes Alice and Bob will announce excluding the symbol $\perp$, $d_{S} = (|\cA|+1)(|\cB|+1)$, $\leak_{\vEC}$ is the amount of information leaked during error correction.
	\end{theorem}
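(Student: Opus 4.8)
The plan is to mirror the proof of \Cref{thm:keylengthwithoutsmoothing}, replacing each sandwiched R\'enyi tool by its smooth-entropy counterpart. By the construction-based equivalence of \Cref{prot:physicalDDQKD} and \Cref{prot:DDQKD}, it suffices to prove security for the sequential virtual protocol, so I would work with \Cref{prot:DDQKD} and its associated entropy-accumulation subprotocol \Cref{prot:DDEA}. First I would introduce the same two events $\Omega'$ (\Cref{prot:DDEA} does not abort and error-correction verification succeeds) and $\Omega''$ (\Cref{prot:DDQKD} does not abort and verification succeeds), and reduce the smooth min-entropy of the hashed key to $H^{\vBar}_{\min}(\ol{A}^n_1 \mid \wt{A}^n_1 \wt{B}^n_1 T^n_1 \ol{A}_{\cT} E)_{\rho_{|\Omega'}}$. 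This reduction reuses \Cref{lem:reconstructingPrivateRegister} to pass between the raw key $\mathbf{Z}$ and $\ol{A}^n_1$ (the sifted and test registers being deterministic functions of the announcements), together with the classical-communication bound for error correction, which contributes the $-\leak_{\vEC}$ term.

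The core is to convert this conditional smooth min-entropy, in which Alice has publicly announced her test data $\ol{A}_{\cT}$, into the quantity $H^{\vBar'}_{\min}(\ol{A}^n_1 \ol{B}^n_1 \mid \wt{A}^n_1 \wt{B}^n_1 T^n_1 E)$ to which the smooth-entropy accumulation rate (\Cref{thm:entRatewithSmoothing}) applies. I would first use data processing to append $\ol{B}^n_1$ to the conditioning system, which only decreases the entropy and hence yields a valid lower bound, and then apply a smooth-entropy chain rule of the form $H^{\vBar_1}_{\min}(\ol{A}^n_1 \mid \ol{B}^n_1 \ol{A}_{\cT} \cdots E) \geq H^{\vBar_2}_{\min}(\ol{A}^n_1 \ol{B}^n_1 \ol{A}_{\cT} \mid \cdots E) - H^{\vBar_3}_{\max}(\ol{B}^n_1 \ol{A}_{\cT} \mid \cdots E) - g(\vBar_2,\vBar_3)$, where the smoothing-combination penalty $g$ accounts for the $-2\log(1-\sqrt{1-(\vBar/4)^2})$ term in \Cref{eq:KeyLength_with_smoothing}. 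A further application of \Cref{lem:reconstructingPrivateRegister} then removes the deterministically reconstructible $\ol{A}_{\cT}$ from the min-entropy term.

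To control the subtracted $H_{\max}$ term I would invoke the max-tradeoff EAT (\Cref{thm:EATv1}) with the analytically constructed constant max-tradeoff function $f_{\max}(\vect{q}) = \gamma \log|\mathcal{A}\times\mathcal{B}|$, exactly as in \Cref{eq:maxtradeoffconstruction}; since $\|\nabla f_{\max}\|_\infty = 0$ its correction is $2\log(1+2d_S)\sqrt{1-2\log(\vBar' \rho[\Omega])}\,\sqrt{n}$, producing the $-2\sqrt{n}\log(1+2d_S)\sqrt{1-2\log(\vBar/4\cdot\vAcc)}$ penalty once $\rho[\Omega]$ is replaced by $\vAcc$ and $\vBar$ by $\vBar/4$. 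Applying \Cref{thm:entRatewithSmoothing}, again with $\rho[\Omega]\to\vAcc$ and $\vBar\to\vBar/4$, to the remaining main term yields the $nh$, $V^2$, $K_\alpha$, and $\tfrac{1}{\alpha-1}\log\tfrac{32}{(\vBar\vAcc)^2}$ contributions. Finally I would feed the resulting lower bound into the Leftover Hashing Lemma (\Cref{corr:LHL}), which supplies the $-2\log(2/\vPA)$ term, and close with the same two-case security argument: if $\rho[\Omega]<\vAcc$ the protocol is trivially $\vAcc$-secure, and otherwise it is $\vSec+\vEC$-secure, hence $\vAcc$-secure since $\vAcc \geq \vBar+\vPA+\vEC$.

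The main obstacle I expect is the careful apportioning and bookkeeping of smoothing parameters. Unlike the R\'enyi proof, where the entropic orders $(\alpha,\beta,\delta,\ldots)$ combine through clean algebraic identities, here each chain-rule, data-processing, and smoothing step incurs its own additive cost, and these must be budgeted so that the $\vBar/4$ allocations reproduce exactly the constants $\tfrac{32}{(\vBar\vAcc)^2}$, $\sqrt{1-2\log(\vBar/4\cdot\vAcc)}$, and $-2\log(1-\sqrt{1-(\vBar/4)^2})$ appearing in \Cref{eq:KeyLength_with_smoothing}. Verifying that the particular chain rule chosen produces the stated $1-\sqrt{1-(\vBar/4)^2}$ smoothing penalty, and that all smoothing radii sum within the budget implied by $\vAcc \geq \vBar+\vPA+\vEC$, is the delicate part of the argument.
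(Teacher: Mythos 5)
Your proposal is correct and follows essentially the same route as the paper's own proof: the same reduction via $\Omega'$, $\Omega''$ and \Cref{lem:reconstructingPrivateRegister}, the same data-processing plus chain-rule step (the paper uses \cite[Eq.~(6.60)]{Tomamichel2016}, which yields exactly the $-2\log(1-\sqrt{1-(\vBar/4)^2})$ penalty you anticipate), the same constant max-tradeoff function with \Cref{thm:EATv1}, the same application of \Cref{thm:entRatewithSmoothing} with $\rho[\Omega]\to\vAcc$ and $\vBar\to\vBar/4$, the Leftover Hashing Lemma, and the identical two-case security argument. The smoothing-parameter bookkeeping you flag as delicate is handled in the paper precisely by the uniform $\vBar/4$ allocation you describe, so no gap remains.
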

\end{shaded}

\begin{proof} 
	By the equivalence of \Cref{prot:physicalDDQKD} and \Cref{prot:DDQKD}, which is by construction, proving the security of \Cref{prot:physicalDDQKD} is equivalent to proving the security of \Cref{prot:DDQKD}. We refer to \Cref{prot:DDQKD} in the proof, but this is just for the ease of the reader to link to the main text and \Cref{prot:DDQKD}'s sequential presentation. Let $\Omega'$ be the event that \Cref{prot:DDEA} does not abort (i.e. $\Omega$ is satisfied) and error correction succeeds (i.e. $\hash(\mathbf{Z}) = \hash(\mathbf{\widehat{Z}})$). Let $\Omega''$ be the event that \Cref{prot:DDQKD} does not abort and error correction verification succeeds. Let $\tilde{\rho}_{|\Omega''}$ be the output of the \Cref{prot:DDQKD} conditioned on not aborting and error correction succeeding and $\rho_{|\Omega'}$ be the output of \Cref{prot:DDEA} conditioned on not aborting \textit{and} error correction being applied and succeeding.
	
	Fundamentally, we are interested in the entropy of the raw key which excludes the sifted registers or the registers used for parameter estimation. As defined in \Cref{prot:DDQKD}, $\mathbf{Z} \equiv \ol{A}^{n}_{1} \setminus (\ol{A}_{\mathcal{S}} \cup \ol{A}_{\mathcal{T}})$. However, we note that Alice announces $\ol{A}_{\mathcal{T}}$, making it part of what is conditioned on, and $\ol{A}_{\mathcal{S}}$ is determined entirely by the registers $\wt{A}^{n}_{1}$, $\wt{B}^{n}_{1}$, and $T^{n}_{1}$. Therefore, by \Cref{lem:reconstructingPrivateRegister}, we can conclude:
	$$ H_{\min}^{\vBar}(\mathbf{Z}|\wt{A}^{n}_{1}\wt{B}^{n}_{1} T^{n}_{1} \ol{A}_{\mathcal{T}} C E)_{\tilde{\rho}_{|\Omega''}} = H_{\min}^{\vBar}(\ol{A}^{n}_{1}|\wt{A}^{n}_{1}\wt{B}^{n}_{1} T^{n}_{1} \ol{A}_{\mathcal{T}} C E)_{\tilde{\rho}_{|\Omega''}} \ . $$
	
	We can handle the classical communication cost of error correction by \cite[Lemma 6.17]{Tomamichel2016}:
	\begin{align}
		H^{\vBar}_{\min}(\ol{A}^{n}_{1}|\wt{A}^{n}_{1}\wt{B}^{n}_{1} T^{n}_{1} \ol{A}_{\mathcal{T}} C E)_{\tilde{\rho}_{|\Omega''}} &\geq H^{\vBar}_{\min}(\ol{A}^{n}_{1}|\wt{A}^{n}_{1}\wt{B}^{n}_{1} T^{n}_{1} \ol{A}_{\mathcal{T}} E)_{\tilde{\rho}_{|\Omega''}} - {\leak}_{\vEC} \\
		& = H^{\vBar}_{\min}(\ol{A}^{n}_{1}|\wt{A}^{n}\wt{B}^{n} T^{n}_{1} \ol{A}_{\mathcal{T}} E)_{\rho_{|\Omega'}} - {\leak}_{\vEC}
	\end{align}
	where $C$ is the register storing the classical communication due to error correction, and the equality comes from noting that $\rho_{\ol{A}^{n}_{1}\wt{A}^{n}_{1}\wt{B}^{n}_{1}T^{n}_{1}X^{n}_{1}E_{|\Omega'}} = \tilde{\rho}_{\ol{A}^{n}_{1}\wt{A}^{n}_{1}\wt{B}^{n}_{1}T^{n}_{1}X^{n}_{1}E_{|\Omega''}}$. We stress that Bob's private register, $\ol{B}^{n}_{i}$, for both the EAT and the QKD protocol, is excluded in the equality as they differ. We note that $\ol{A}_{\mathcal{T}}$ is conditioned upon as Alice publicly announced it, but $X^{n}_{1}$ is not conditioned upon as Bob computes it locally and does not need to announce it unless aborting.
	
	With all of this addressed, we can conclude that we would like to do privacy amplification using the smooth min-entropy term:
	\begin{align}
		\label{eq:goal}
		H^{\vBar}_{\min}(\ol{A}^{n}_{1}|\wt{A}^{n}_{1}\wt{B}^{n}_{1} T^{n}_{1} \ol{A}_{\mathcal{T}} E)_{\rho_{|\Omega'}} \ .
	\end{align}
	\sloppy This imposes one major difficulty: we need to find a way to relate this to the term $H^{\vBar}_{\min}(\ol{A}^{n}_{1} \ol{B}^{n}_{1}|\wt{A}^{n}\wt{B}^{n} T^{n}_{1} E)_{\rho_{|\Omega'}}$ on the left-hand side of \Cref{eq:entropyRate_with_smoothing}. 
	 
	Following derivations in the DI-QKD setting \cite{Arnon2018}, but also handling the $\ol{A}_{\mathcal{T}}$ register, we use the following series of claims:
	\begin{align}
		& H^{\vBar}_{\min}(\ol{A}^{n}_{1}|\wt{A}^{n}_{1} \wt{B}^{n}_{1} T^{n}_{1} \ol{A}_{\mathcal{T}} E)_{\rho_{|\Omega'}} \notag \\
		\geq & H^{\vBar}_{\min}(\ol{A}^{n}_{1}|\wt{A}^{n}_{1} \wt{B}^{n}_{1} \ol{A}_{\cT} \ol{B}^{n}_{1} T^{n}_{1} \ol{A}_{\mathcal{T}} E)_{\rho_{|\Omega'}} \notag \\
		\geq & H^{\vBar/4}_{\min}(\ol{A}^{n}_{1} \ol{B}^{n}_{1} \ol{A}_\cT |\wt{A}^{n}_{1}\wt{B}^{n}_{1} T^{n}_{1} E)_{\rho_{|\Omega'}} - H^{\vBar/4}_{\max}(\ol{B}^{n}_{1}\ol{A}_{\cT}|\wt{A}^{n}_{1}\wt{B}^{n}_{1} T^{n}_{1} E)_{\rho_{|\Omega'}}  - 2 \log \left( 1 - \sqrt{1-(\vBar/4)^2} \right) \notag \\
		\geq & H^{\vBar/4}_{\min}(\ol{A}^{n}_{1} \ol{B}^{n}_{1} |\wt{A}^{n}_{1}\wt{B}^{n}_{1} T^{n}_{1} E)_{\rho_{|\Omega'}} - H^{\vBar/4}_{\max}(\ol{B}^{n}_{1}\ol{A}_{\cT}|T^{n}_{1} E)_{\rho_{|\Omega'}} - 2 \log \left( 1 - \sqrt{1-(\vBar/4)^2} \right) \ .
	\end{align}
	The first inequality holds by data-processing. The second inequality follows from \cite[Eq. (6.60)]{Tomamichel2016}. The third inequality comes from data-processing for the smooth max-entropy term and applying \Cref{lem:reconstructingPrivateRegister} to remove $\ol{A}_{\cT}$ in the smooth min-entropy term. More specifically, the strong subadditivity follows from the data-processing inequality of smooth max-entropy of \cite[Theorem 6.19]{Tomamichel2016} where we have applied the partial trace map on the side information.
	
	Now all that is left is to bound the max-entropy term using the original version of EAT (\Cref{thm:EATv1}) \cite{Dupuis2016}. To do this we use the EAT in the form of \Cref{thm:EATv1} with the replacements $S_{i} \to \ol{A}_{\cT_{i}}\ol{B}_{i}$, $P_{i} \to T_{i}$, and $E \to E$. We note it is okay to use $\ol{A}_{\mathcal{T}}$ as we may treat it as if it were actually $n$ registers where it is $\perp$ whenever $T_{i} \neq 0$. The Markov conditions trivially hold since $T_{i}$ uses seeded randomness. Then using the fact that $\ol{A}_{\cT_{i}}\ol{B}_{i} = \perp \times \perp$ except when $T_{i}=1$ which happens with probability $\gamma$, we construct the max-tradeoff function analytically by the following observation:
	\begin{align}\label{eq:maxtradeoffconstruction_with_smoothing}
		H(\ol{A}_{\cT_{i}} \ol{B}_{i}|T_{i}R')_{\cM_{i}(\omega)} & \leq H(\ol{A}_{\cT_{i}}\ol{B}_{i}|T_{i})_{\cM_{i}(\omega)}\notag\\
		& = (1-\gamma)H(\ol{A}_{\cT_{i}}\ol{B}_{i}|T_{i}=0)_{\cM_{i}(\omega)} + \gamma H(\ol{A}_{\cT_{i}}\ol{B}_{i}|T_{i}=1)_{\cM_{i}(\omega)} \notag \\
		& \leq \gamma \log|\mathcal{A} \times \mathcal{B}|
	\end{align}
	where $\mathcal{A}$ and $\mathcal{B}$ are the alphabets of private outcomes Bob utilizes during parameter estimation. This tells us we can let the max-tradeoff function $f_{\max}(\vect{q}) = \gamma \log |\mathcal{A} \times \mathcal{B}|$ for all $\vect{q} \in \mathbb{P}(\cX)$. This also implies $\|\nabla f_{\max}\|_{\infty} = 0$. Therefore, using this with \Cref{thm:EATv1}
	we get 
	\begin{align}
		& H^{\vBar/4}_{\max}(\ol{B}^{n}_{1}\ol{A}_{\cT}|T^{n}_{1} E)_{\rho_{|\Omega'}} \leq n\gamma\log|\mathcal{A} \times \mathcal{B}| + \sqrt{n} 2\log(1+2 d_{S})\sqrt{1-2\log(\vBar/4 \cdot \vAcc)}
	\end{align}
	where we have assumed $\rho[\Omega'] \geq \vAcc$, and $d_{S} := (|\cA|+1)(|\cB|+1)$ using that $|\ol{B}_{i}|=|\cB|+1, |\ol{A}_{\cT,i}|=|\cA|+1$.
	
	We now put all of these together to get the following:
	\begin{align}
		H^{\vBar}_{\min}(\ol{A}^{n}_{1}| \wt{A}^{n}_{1} \wt{B}^{n}_{1} T^{n}_{1} \ol{A}_{\mathcal{T}} E)_{\rho_{|\Omega'}} 
		&\geq H^{\vBar/4}_{\min}(\ol{A}^{n}_{1} \ol{B}^{n}_{1}|\wt{A}^{n}_{1} \wt{B}^{n}_{1} T^{n}_{1} E)_{\rho_{|\Omega'}} -\leak_{\vEC}  -n\gamma\log(|\mathcal{A} \times \mathcal{B}|) \notag\\ & \hspace{0.75cm}
		- 2\sqrt{n}\log(1+2d_{S})\sqrt{1-2\log(\vBar/4 \cdot \vAcc)} \notag\\  &\hspace{1cm}  - 2 \log \left( 1 - \sqrt{1-(\vBar/4)^2} \right) \ .
	\end{align}
	We now apply \Cref{thm:entRatewithSmoothing} to the right-hand-side min-entropy term (with appropriate choice of $\vBar/4$ and $\vEA \to \vAcc$) so that we have that either the input state aborts with probability greater than $1-\vAcc$ or 
	\begin{equation}
		\begin{aligned}
			& H^{\vBar}_{\min}(\ol{A}^{n}_{1}| \wt{A}^{n}_{1} \wt{B}^{n}_{1} T^{n}_{1} \ol{A}_{\mathcal{T}} E)_{\rho_{|\Omega'}} \\  
			& \geq  nh - \leak_{\vEC} -n \frac{(\alpha-1)\ln 2}{2}V^{2} - n(\alpha-1)^{2} K_{\alpha} -n\gamma\log(|\mathcal{A} \times \mathcal{B}|) \\
			& \hspace{0.5cm} -2\sqrt{n}\log(1+2 d_{S})\sqrt{1-2\log(\vBar/4 \cdot \vAcc)} - 2 \log \left( 1 - \sqrt{1-(\vBar/4)^2} \right) \\
			& \hspace{1cm}  - \frac{1}{\alpha-1} \log \frac{32}{(\vBar \cdot \vAcc)^{2}} \ .
		\end{aligned}
	\end{equation}
	Applying the leftover hash lemma \cite{Renner2005} (\Cref{corr:LHL}) completes the proof. 
	
	The proof of the security claim is nearly identical to the non-smooth case. We consider two mutually exclusive and exhaustive cases. When $\rho[\Omega] < \vAcc$, then the output is trivially $\vAcc$-secure. Now consider $\rho[\Omega]$. Starting from the subnormalized state\footnote{Subnormalization is due to aborting the protocol on the input} $\rho_{S_{A}EF}$ where $S_{A}:= h(\mathbf{Z})$ is the hashed key, $h$ is the hash function, and $F$ is the announcement of the hash function,
	\begin{align*}
	     \frac{1}{2} \| \rho_{S_{A}EF} - |S_{A}|^{-1}\idm_{S_{A}} \otimes \rho_{EF} \|_{1} 
	    =& \frac{1}{2} \rho[\Omega'] \| \rho_{S_{A}F_{|\Omega'}} - |S_{A}|^{-1}\idm_{S_{A}} \otimes \rho_{EF_{|\Omega'}} \|_{1} \\
	    \leq& \frac{1}{2}\| \rho_{S_{A}F_{|\Omega'}} - |S_{A}|^{-1}\idm_{S_{A}} \otimes \rho_{EF_{|\Omega'}} \|_{1} \ , \\
	    \leq& \vSec = \vBar + \vPA
	\end{align*}
	where the second inequality is by \cref{thm:LHL} along with the definition of $\vSec$ in \Cref{corr:LHL} by choosing $\ve' \to \vBar$. The final inequality is by the definition $\vPA := \vSec - \ve'$ in \Cref{corr:LHL}. The correctness argument is the same as in the unsmoothed proof. Therefore we the key is $\max\{\vAcc,\vBar+\vPA+\vEC\}$-secure. As in the previous proofs, note that if $\vAcc < \vBar+\vPA+\vEC$, you could improve the key rate without changing the security claim by setting $\vAcc = \vBar + \vPA + \vEC$. This completes the proof.
\end{proof}

\section{Key Rate Optimization Based on Quantum Conditional Entropy}\label{app:key_rate_formula}

We find a way to represent Eve's conditional states in terms of Alice's and Bob's joint state and express the quantum conditional entropy $H(S_i|P_iR)$ with those conditional states. (We use register $R$ to refer to Eve's register in a single round as depicted in \Cref{fig:EATProcess}.) We then show this function is convex. We write the objective function based on the quantum conditional entropy as
\begin{aeq}\label{eq:objective_function_conditional_entropy}
	W(\rho)&:= H(S_i|P_iR)_{\rho}, 
	\end{aeq}for $\rho \in \DM(Q_i)$. \Cref{prop:conditional_entropy_form} gives us a more explicit expression of $W(\rho)$ in terms of $\rho \in \DM(Q_i)$ and Alice's and Bob's joint POVM. The convexity of $W(\rho)$ is given in \Cref{prop:convexity_conditional_entropy}.
We now prove \Cref{prop:conditional_entropy_form}.
\begin{proof}[Proof of \Cref{prop:conditional_entropy_form}]
	Recall from the definition of conditional entropy that for a classical-quantum state $\rho_{S_iP_iR} = \sum_{s,p} \dyad{s}{s}_{S_i} \otimes \dyad{p}{p}_{P_i} \otimes \rho_R^{(s,p)}$, we have $H(S_i|P_iR) = \sum_{s,p} H(\rho_R^{(s,p)}) - \sum_p H(\rho_R^{(p)})$, where $\rho_R^{(p)} = \sum_{s} \rho_R^{(s,p)}$. We need to find the marginal states $\rho_R^{(s,p)} = \Tr_{AB}[ (M_{sp} \otimes \idm_R) \rho_{Q_iR}]$. Without loss of generality, we can write the purifying state as $\rho_{Q_iR} = \dyad{\psi}{\psi}$, where $\ket{\psi} = \sqrt{\rho}\otimes \idm_R \ket{\Phi^+}$ and $\ket{\Phi^+} = \sum_j \ket{j}_{Q_i}\ket{j}_R$, from which we find
	\begin{aeq}
		\rho_R^{(s,p)} &= \Tr_{Q_i}[ (M_{sp} \otimes \idm_R) \rho_{Q_iR}]\\
		&= \Tr_{Q_i}[(\sqrt{\rho}M_{sp}\sqrt{\rho} \otimes \idm_R) \dyad{\Phi^+}{\Phi^+}]\\
		&= \sum_{s,p} \left(\sqrt{\rho}M_{sp}\sqrt{\rho}\right)^T \ .
	\end{aeq}By the fact that for any operator $A$, $(A^\dagger A)$ and $(A^\dagger A)^T$ have the same spectrum, it is the case that $H(\rho_R^{(s,p)}) = H\Big([(\sqrt{\rho}K_{sp}) (K_{sp}\sqrt{\rho})]^T\Big) = H(K_{sp} \rho K_{sp})= H(K_{sp} \rho K_{sp}^{\dagger})$ and a similar expression holds for $H(\rho_R^{(p)})$.
\end{proof}

\begin{prop}\label{prop:convexity_conditional_entropy}
	The function $W(\rho)$ is convex.
\end{prop}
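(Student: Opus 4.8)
The plan is to prove convexity of $W$ by recasting it as a single quantum relative entropy evaluated on linear images of $\rho$, exploiting that the relative entropy $D(\cdot\|\cdot)$ is jointly convex and that precomposition with linear maps preserves convexity. The immediate difficulty is that the expression from \Cref{prop:conditional_entropy_form}, $W(\rho)=\sum_{s,p}H(\cK_{sp}(\rho))-\sum_p H(\cK_p(\rho))$, is a difference of entropies coming from \emph{different} completely positive maps: the first sum is concave in $\rho$ (von Neumann entropy is concave and each $\cK_{sp}$ is linear), while the second contributes a convex term, so splitting $W$ this way is inconclusive. I therefore need to exhibit structure linking the two families $\{\cK_{sp}\}$ and $\{\cK_p\}$.

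First I would relate the two maps for each fixed $p$. Writing $K_{sp}=\sqrt{M_{sp}}$, $K_p=\sqrt{M_p}$ with $M_p=\sum_s M_{sp}$, and letting $K_p^{+}$ be the inverse of $K_p$ on its support (with $\Pi_p$ the projector onto $\operatorname{supp}(M_p)$), I set $\widetilde K_{sp}:=K_{sp}K_p^{+}$. Since $M_{sp}\le M_p$ we have $\ker(M_p)\subseteq\ker(M_{sp})$, hence $K_{sp}\Pi_p=K_{sp}$, which gives $\widetilde K_{sp}\,\cK_p(\rho)\,\widetilde K_{sp}^\dagger=K_{sp}\rho K_{sp}^\dagger=\cK_{sp}(\rho)$. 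Moreover $\sum_s\widetilde K_{sp}^\dagger\widetilde K_{sp}=K_p^{+}M_pK_p^{+}=\Pi_p$, so $\{\widetilde K_{sp}\}_s$ are the Kraus operators of a channel on $\operatorname{supp}(M_p)$. Defining the instrument $\cE_p(\sigma):=\sum_s (\widetilde K_{sp}\sigma\widetilde K_{sp}^\dagger)\otimes\dyad{s}_S$ and using $H(\cK_p(\rho))=H(K_p\rho K_p^\dagger)$, this recasts the objective as $W(\rho)=\sum_p\big[H(\cE_p(\cK_p(\rho)))-H(\cK_p(\rho))\big]$, where each summand now involves a single map applied to the linear image $\cK_p(\rho)$.

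Next I would convert each summand into a relative entropy. Let $V_p$ be a Stinespring isometry of $\cE_p$, i.e.\ $V_p\ket{\psi}=\sum_s \widetilde K_{sp}\ket{\psi}\otimes\ket{s}_S$, and let $\Delta_S$ denote the dephasing (pinching) map on the flag register $S$, so that $\cE_p(\sigma)=\Delta_S(V_p\sigma V_p^\dagger)$. Using the standard identity $D(\tau\|\Delta_S\tau)=H(\Delta_S\tau)-H(\tau)$ for a pinching $\Delta_S$ (valid for subnormalised $\tau$, since $\Tr\tau=\Tr\Delta_S\tau$) with $\tau=V_p\cK_p(\rho)V_p^\dagger$, each summand equals $D\big(V_p\cK_p(\rho)V_p^\dagger\,\big\|\,\Delta_S(V_p\cK_p(\rho)V_p^\dagger)\big)$. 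Collecting the blocks, this is precisely $W(\rho)=D(\cG(\rho)\|\cZ(\cG(\rho)))$ for the linear CPTP map $\cG(\rho)=\sum_p\dyad{p}_P\otimes V_p\cK_p(\rho)V_p^\dagger$ and the pinching $\cZ=\id_P\otimes\Delta_S$.

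Finally, convexity follows immediately: the map $\rho\mapsto\big(\cG(\rho),\cZ(\cG(\rho))\big)$ is linear, and $D(\cdot\|\cdot)$ is jointly convex, so the composition is convex (this is exactly the mechanism used for the asymptotic objective in \cite{Winick2018}). I expect the main obstacle to be the middle step---establishing the instrument relation $\cK_{sp}(\rho)=\widetilde K_{sp}\cK_p(\rho)\widetilde K_{sp}^\dagger$ together with the channel property $\sum_s\widetilde K_{sp}^\dagger\widetilde K_{sp}=\Pi_p$, including the careful treatment of $\ker(M_p)$ through the pseudo-inverse---since without this bridge the two sums defining $W$ live on unrelated output systems and no convexity conclusion is available. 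Everything after that bridge is a routine application of joint convexity of the quantum relative entropy.
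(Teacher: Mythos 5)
Your proof is correct, but it takes a genuinely different route from the paper's. The paper never leaves the entropic picture: starting from the definition $W(\rho)=H(S_i|P_iR)$ evaluated on a purification, it takes a decomposition $\bar\rho_Q=\sum_\lambda p_\lambda \rho_Q^\lambda$, forms the flagged extension $\rho_{QR\Lambda}=\sum_\lambda p_\lambda\,\rho^\lambda_{QR}\otimes\dyad{\lambda}_\Lambda$, observes that $\sum_\lambda p_\lambda W(\rho^\lambda_Q)=H(S|PR\Lambda)$ for this state, and then invokes the fact that the conditional entropy can only decrease when the extension $R\Lambda$ of $\bar\rho_Q$ is replaced by a purifying system of $\bar\rho_Q$ (any extension is obtained from the purification by a channel on the conditioning system) --- physically, Eve does better sending the average state and keeping its purification than sending a flagged mixture. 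You instead start from the explicit form in \Cref{prop:conditional_entropy_form}, construct the bridge $\cK_{sp}(\rho)=\widetilde K_{sp}\,\cK_p(\rho)\,\widetilde K_{sp}^\dagger$ with $\sum_s\widetilde K_{sp}^\dagger\widetilde K_{sp}=\Pi_p$ via the pseudo-inverse, and recast $W$ as a single relative entropy $D\big(\cG(\rho)\,\|\,\cZ(\cG(\rho))\big)$ of linear images of $\rho$, so convexity follows from joint convexity of $D(\cdot\|\cdot)$. Your support bookkeeping is sound: $M_{sp}\leq M_p$ gives $K_{sp}\Pi_p=K_{sp}$; $\cK_p(\rho)$ is supported on $\mathrm{supp}(M_p)$, so the instrument is trace-preserving where it is applied; and the pinching identity is legitimate since $\mathrm{supp}(\tau)\subseteq\mathrm{supp}(\Delta_S\tau)$. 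As for what each approach buys: the paper's argument is shorter, needs no structural lemma about the POVM, and works directly at the level of the conditional entropy without ever writing Kraus operators; your argument is heavier on algebra but yields as a by-product the identity $W(\rho)=D(\cG(\rho)\|\cZ(\cG(\rho)))$, i.e., it reconciles the paper's conditional-entropy objective with the relative-entropy objective of Ref.~\cite{Winick2018}, which is independently useful (for instance for gradient formulas and for transferring results between the two formulations).
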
 
\begin{proof}
	Let $\bar \rho_Q = \sum_\lambda p_\lambda \rho^\lambda_Q$ be a convex decomposition of $\bar \rho_Q$. We want to show that
	\begin{IEEEeqnarray}{rL}
		\sum_\lambda p_{\lambda} W(\rho_Q^\lambda) \geq W(\bar \rho_Q)\,.
	\end{IEEEeqnarray}	
	Let $\rho_{QR}^\lambda$ be the purification of $\rho^\lambda_Q$. From these, we define the state $\rho_{QR\Lambda} = \sum_\lambda p_{\lambda} \rho^\lambda_{QR} \otimes \ketbra \lambda_\Lambda$, which satisfies $\Tr_{R\Lambda}[\rho_{QR\Lambda}] = \bar \rho_Q$. Then
	\begin{IEEEeqnarray}{rL}
		\sum_\lambda p_{\lambda} W(\rho_Q^\lambda) &:= \sum_\lambda p(\lambda) H(S|PR)_{\widetilde\cM\otimes \id_R(\rho^\lambda_{QR})}\\
		&= H(S|PR\Lambda)_{\widetilde\cM\otimes \id_{R\Lambda}(\rho_{QR\Lambda})}\\
		&\geq H(S|PR)_{\widetilde\cM\otimes \id_{R}(\bar \rho_{QR})}\\
		&:= W(\bar \rho_Q)
	\end{IEEEeqnarray}
	where $\bar \rho_{QR}$ is the purification of $\bar \rho_Q$ and where the inequality comes from subadditivity because the entropy will be minimized when Eve holds a purification of $\bar\rho_Q$.
\end{proof}
The physical intuition of the proof is the fact that if Eve has a family of states $\rho_{Q}^\lambda$ that she sends out with probability $p(\lambda)$, while keeping their purification, then that attack will be suboptimal. This is because it is better to send out the average state $\bar \rho_Q = \sum_\lambda p(\lambda)\rho_Q^\lambda$ and keep the corresponding purification, as this produces the same measurement statistics and Eve holds more information about the state.

As mentioned in the main text, we need to introduce a small perturbation to guarantee that the gradient $\nabla W(\rho)$ is well-defined for any $\rho \geq 0$. To do so, we introduced the perturbed objective function $W_\epsilon(\rho)$, which is
\begin{align}
	W_\epsilon(\rho)= \sum_{s,p} H\left(\cK^\epsilon_{sp}(\rho)\right) - \sum_p H\left( \cK^\epsilon_p(\rho)\right).
\end{align}Then the gradient $\nabla W_\epsilon(\rho)$ always exists for all $\rho \geq  0$:
\begin{align}
	\nabla W_\epsilon(\rho) = - \sum_{s,p} (\cK^\epsilon_{sp})^\dagger \left[\log \cK^\epsilon_{sp}(\rho) \right] + \sum_p (\cK_{p}^\epsilon)^\dagger \left[\log \cK^\epsilon_p(\rho)\right].
\end{align}
By~\cite[Lemma 7]{Winick2018}, we know that 
\begin{align}
	\norm{\cK_{sp}(\rho) - \cK_{sp}^\epsilon(\rho)}_1  \leq \epsilon(d-1) \quad \text{and} \quad \norm{\cK_{p}(\rho) - \cK_{p}^\epsilon(\rho)}_1 \leq \epsilon(d-1).
\end{align}
By the continuity of von Neumann entropy (known as Fannes' inequality), it holds for any $\epsilon \in (0,1/(e(d-1))]$~\footnote{Here $e$ is the base of the natural logarithm.} and any density operator $\rho$,
\begin{align}
	\left|H\left(\cK_{sp}(\rho)\right) - H\left(\cK_{sp}^\epsilon(\rho)\right)\right| \leq \zeta_\epsilon \quad \text{and}\quad
	\left|H\left(\cK_p(\rho)\right) - H\left(\cK_p^\epsilon(\rho)\right)\right| \leq \zeta_\epsilon.
\end{align}
with $\zeta_\epsilon = \epsilon(d-1)\log \frac{d}{\epsilon(d-1)}$. Thus we have \Cref{eq:continuity_bound} after taking the worst-case scenario among all contributing terms.

\section{Proof for the Second Algorithm Construction}\label{app:alg_proof}
\subsection{Basic definitions}
For completeness, we review some basic definitions. In this appendix, $f, g, h$ denote some general functions. Readers are cautioned not to confuse them with min-tradeoff functions used in the main text.
\begin{definition}[Affine hull]\label{def:affine_hull}
	For a set $C \subseteq \mbR^n$, the \emph{affine hull} of $C$, denoted by $\aff(C)$, is defined as
	\begin{aeq}
		\aff(C) = \Big\{\sum_{i=1}^k \theta_i x_i:  x_1, \dots, x_k \in C, \sum_{i=1}^k \theta_i = 1 \text{ for some }k \in \mbN\Big\}.
		\end{aeq} 
	\end{definition}
We note that the affine hull is the smallest affine set that contains $C$.
\begin{definition}[Relative interior]\label{def:relative_interior}
	The \emph{relative interior} of a set $C \subseteq \mbR^n$, denoted by $\relint(C)$ is defined as
	\begin{aeq}
		\relint(C) =\{x \in C: B_r(x) \cap \aff(C) \subseteq C \text{ for some } r > 0 \},
		\end{aeq}where $B_r(x)$ is a ball centered at $x$ with a radius $r$.
	\end{definition}

It is often convenient to let a function $f$ be defined over the entire $\mbR^n$ and let it take the value $+\infty$ outside its domain which is denoted by $\dom(f)$. It is thus interesting to talk about the extended-value extension.
\begin{definition}[Extended-value extension]\label{def:extended-value-extension}
	\index{extended-value extension}
	If $f:\dom(f)\subseteq\mbR^n \rightarrow \mbR$ is convex, we define its \emph{extended-value extension} $\tilde{f}: \mbR^n \rightarrow \mbR \cup \{+\infty\}$ by
	\begin{aeq}\label{eq:extended-value-extension}
		\tilde{f}(x)  = \begin{cases} 
			f(x) & x\in \dom(f) \\
			+ \infty & x\not\in \dom(f).
		\end{cases}
	\end{aeq} 
	\end{definition}
It is sometimes useful to define the domain of an extension as
\begin{aeq}\label{eq:domain_of_extension}
	\dom(\tilde{f})=\{x \in \mbR^n: \tilde{f}(x) < \infty\}.
	\end{aeq}

\subsection{Fenchel duality}\label{sec:fenchel_duality}
We review one form of duality called \emph{Fenchel duality} that is relevant for our algorithm. See \cite{Borwein2006} for details about Fenchel duality.
\begin{definition}[Fenchel conjugate]\label{def:fenchel_conjugate}
	\index{Fenchel conjugate}
	The \emph{Fenchel conjugate} of a function $h: \mbR^n \rightarrow [-\infty, +\infty]$ is the function $h^*: \mbR^n  \rightarrow [-\infty, +\infty]$ defined by 
	\begin{aeq}\label{eq:fenchel_conjugate}
		h^*(v) = \sup_{x \in \mbR^n} \{\langle v, x \rangle  - h(x)\}.
		\end{aeq}
	\end{definition}
 We now state Fenchel's duality theorem \cite[Theorem 3.3.5]{Borwein2006} with the condition for strong duality replaced by \Cref{eq:fenchel_duality_condition} according to \cite[page 74, Exercise 20(e)]{Borwein2006}.
 \begin{shaded}
\begin{theorem}[{Fenchel duality, \cite[Theorem 3.3.5]{Borwein2006}}]\label{thm:fenchel_duality}
	\index{Fenchel duality}
	For given functions $f: \mbR^n \rightarrow (-\infty, +\infty]$ and $g: \mbR^m \rightarrow (-\infty, +\infty]$ and a (bounded) linear map $A: \mbR^n \rightarrow \mbR^m$, let $p, d \in [-\infty, +\infty]$ be primal and dual values in the following Fenchel problems:
	\begin{align}
		p &= \inf_{x \in \mbR^n} \{f(x)+g(Ax)\} \label{eq:fenchel_primal}, \\
		d  &= \sup_{v \in \mbR^m} \{-f^*(A^{\dagger}v) - g^*(-v)\} \label{eq:fenchel_dual}.
		\end{align}These values satisfy the weak duality $p \geq d$. 
	
	If $f$ and $g$ are convex and satisfy the condition
	\begin{aeq}\label{eq:fenchel_duality_condition}
	\relint(\dom(g)) \cap A \relint(\dom(f)) \neq \emptyset,
		\end{aeq}where $A \relint(\dom(f))  = \{Ax: x \in \relint(\dom(f)) \}$, then the strong duality holds; i.e., $p=d$, and the supremum in the dual problem [\Cref{eq:fenchel_dual}] is attained if $d < + \infty$. \index{$\relint$: relative interior}
	\end{theorem}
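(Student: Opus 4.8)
The plan is to establish the two claims separately: weak duality $p \geq d$ holds unconditionally and follows from the Fenchel--Young inequality, whereas the equality $p = d$ requires the constraint qualification. For weak duality, I would take any $x \in \mbR^n$ and $v \in \mbR^m$ and invoke the definition of the conjugate (\Cref{def:fenchel_conjugate}) twice: $f^*(A^{\dagger}v) \geq \langle A^{\dagger}v, x\rangle - f(x) = \langle v, Ax\rangle - f(x)$ and $g^*(-v) \geq -\langle v, Ax\rangle - g(Ax)$. Adding these and rearranging gives $-f^*(A^{\dagger}v) - g^*(-v) \leq f(x) + g(Ax)$ for every admissible pair; taking the supremum over $v$ on the left and the infimum over $x$ on the right yields $d \leq p$.

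For strong duality the plan is to use the perturbation (optimal-value function) approach. I would define $h: \mbR^m \to [-\infty,+\infty]$ by $h(u) := \inf_{x\in\mbR^n}\{f(x) + g(Ax+u)\}$, so that $p = h(0)$, and first check the routine facts that $h$ is convex (being the infimal projection of the jointly convex map $(x,u)\mapsto f(x)+g(Ax+u)$) and that $\dom(h) = \dom(g) - A\,\dom(f)$. A direct computation with the definition of the conjugate then shows that the dual objective coincides, up to the sign change $v \mapsto -v$, with $-h^*$, so that $d = h^{**}(0)$, the biconjugate of $h$ evaluated at the origin. In this language the desired identity $p = d$ is exactly the statement $h(0) = h^{**}(0)$.

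The crux, and the step I expect to be the main obstacle, is therefore to show $h(0) = h^{**}(0)$, which amounts to $h$ being subdifferentiable at the origin with $h(0)$ finite. This is precisely where the constraint qualification enters: the hypothesis $\relint(\dom(g)) \cap A\,\relint(\dom(f)) \neq \emptyset$ is equivalent to $0 \in \relint(\dom(h))$. Under this assumption there is a point $x_0 \in \relint(\dom f)$ with $Ax_0 \in \dom g$, so $h(0) \leq f(x_0)+g(Ax_0) < +\infty$, ruling out $p = +\infty$. The cleanest route is then the finite-dimensional fact that a convex function is subdifferentiable throughout the relative interior of its domain (the ``max formula''), which gives $\partial h(0) \neq \emptyset$; any subgradient $\bar v \in \partial h(0)$ satisfies the Fenchel--Young equality $h(0) + h^*(\bar v) = \langle \bar v, 0\rangle = 0$, whence $h(0) = -h^*(\bar v) \leq d$. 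Combined with weak duality this forces $p = d$, and $\bar v$ (after the sign change) attains the dual supremum.

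Two degenerate cases must be dispatched. As noted, the qualification already excludes $p = +\infty$; if instead $p = -\infty$, then weak duality gives $d \leq -\infty$, so $d = -\infty = p$ with nothing to attain. Thus only the finite case requires the subgradient argument. If I wanted a self-contained proof avoiding the quoted max formula, I would instead run a separating-hyperplane argument on the augmented epigraph $\{(u,t): t \geq h(u)\}$: since $(0,p)$ lies on its boundary and the relative interior condition guarantees the set is ``thick'' near $u=0$, there exists a nonvertical supporting hyperplane whose slope is the sought $\bar v \in \partial h(0)$. The delicate point in that route---again exactly what the constraint qualification secures---is excluding vertical supporting hyperplanes, which would otherwise permit a duality gap.
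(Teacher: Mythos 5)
Your proposal is correct, but there is nothing in the paper to compare it against: the paper does not prove this theorem, it imports it wholesale from \cite[Theorem 3.3.5]{Borwein2006} (with the core-point condition relaxed to the relative-interior condition via the exercise cited there). What you have written is essentially the textbook's own argument: weak duality by two applications of Fenchel--Young, and strong duality via the value function $h(u)=\inf_{x}\{f(x)+g(Ax+u)\}$, the identities $p=h(0)$ and $d=h^{**}(0)$, and the production of a subgradient $\bar v\in\partial h(0)$ whose Fenchel--Young equality closes the gap and certifies dual attainment. Two remarks. First, the step where you convert the constraint qualification into $0\in\relint(\dom(h))$ silently uses the finite-dimensional relative-interior calculus $\relint\big(\dom(g)-A\dom(f)\big)=\relint(\dom(g))-A\relint(\dom(f))$; this is standard (it fails in infinite dimensions, which is exactly why the original theorem uses core points), but it is the load-bearing fact and deserves to be named. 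Second, before invoking subdifferentiability of $h$ on $\relint(\dom(h))$ you need $h$ to be \emph{proper}, not merely finite at the origin; this is quickly repaired by noting that a convex function taking the value $-\infty$ anywhere equals $-\infty$ throughout the relative interior of its domain, so finiteness of $h(0)$ at the relint point $0$ forces $h>-\infty$ everywhere. With that one-line patch your case analysis (the qualification rules out $p=+\infty$; weak duality disposes of $p=-\infty$) is complete and the argument is sound.
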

\end{shaded}
We note that although the formulation of this theorem is for unconstrained optimization, one can easily handle constrained optimization by the extended-value extension. Similarly, one can define an indicator function for a set $C$ by 
\begin{aeq}\label{eq:indicator_function}
	\delta_{C}(x) = \begin{cases}
		0 & \text{ if }  x \in C \\
		\infty & \text{ if }  x \not\in C, 
		\end{cases}
	\end{aeq}and use this indicator function to convert a constraint set $C$ into a part of the objective function.

	\subsection{Deriving dual problem of  \Cref{eq:algorithm2_primal}}\label{app_sec:algorithm2details}
	In this section, let $\cH$ be a Hilbert space and define a map that stores the measurement outcomes of the POVM $\{M_{\lambda}\}$ with associated alphabet $\cX$ for outcomes as
\begin{aeq}\label{eq:PE_measurement_channel_vector}
	\Phi_{\vect{\cM}}(\sigma) = \sum_{\lambda \in \cX} \Tr( \sigma M_{\lambda}) \ket{\lambda}.
\end{aeq}Let $W$ be the quantum conditional entropy function in \Cref{eq:objective_conditional_entropy}. Let $m= \abs{\cX}$ be the number of outcomes. We define the set
	\begin{aeq}
	\cF_\dual:=\{\vect{f} \in \mbR^{m}: \vect{f} \cdot \Phi_{\vect{\cM}}(\rho) \leq W(\rho) \ \forall \rho \in \DM(\cH)\}.
	\end{aeq}It is also easy to see that $\cF_\dual$ is a convex set. Moreover, we can write the indicator function [see \Cref{eq:indicator_function}] for the set $\cF_\dual$ in a useful way by the following lemma.
\begin{lemma}\label{lemma:rewrite_dual_set_indicator_function}
Let $\delta_{\cF_\dual}$ be the indicator function for the set $\cF_\dual$. Then,
	\begin{aeq}\label{eq:algorithm2_dual_set_indicator_relation}
		\sup_{\rho}\{\langle \vect{f},\Phi_{\vect{\cM}}(\rho) \rangle - W(\rho): \rho \in \pos(\cH)\}=\delta_{\cF_\dual}(\vect{f}) \ \, \forall \vect{f} \in \mbR^m.
	\end{aeq}
	\end{lemma}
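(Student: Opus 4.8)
\textbf{Proof plan for \Cref{lemma:rewrite_dual_set_indicator_function}.}

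The plan is to prove the two cases of the indicator function separately, splitting on whether $\vect{f} \in \cF_\dual$ or $\vect{f} \notin \cF_\dual$, and to handle the fact that the supremum in \Cref{eq:algorithm2_dual_set_indicator_relation} is taken over the positive cone $\pos(\cH)$ rather than the set of density operators. The central observation I would exploit is that both the map $\Phi_{\vect{\cM}}$ and the objective $W$ have a specific homogeneity under scaling of $\rho$, so that the supremum over the full cone is governed entirely by the behavior on normalized states.

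First I would establish the scaling behavior. For $\rho \in \pos(\cH)$ with $\Tr(\rho) = c > 0$, write $\rho = c\,\sigma$ with $\sigma \in \DM(\cH)$. The map $\Phi_{\vect{\cM}}$ is linear, so $\langle \vect{f}, \Phi_{\vect{\cM}}(c\sigma)\rangle = c\,\langle \vect{f}, \Phi_{\vect{\cM}}(\sigma)\rangle$. For the entropic objective, $W(\rho) = H(S_i|P_iR)$ evaluated on the (sub-normalized) state obtained by applying $\widetilde{\cM}_i \otimes \id_R$ to a purification of $\rho$; using \Cref{prop:conditional_entropy_form}, $W$ is built from von Neumann entropy terms $H(\cK_{sp}(\rho)) - \sum_p H(\cK_p(\rho))$, which satisfy $H(c X) = c H(X) - c\log(c)\Tr(X)$. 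Because the $\cK_{sp}$ form a resolution summing to those $\cK_p$ and these cancel appropriately across the positive and negative terms, the $\log(c)$ contributions cancel and one is left with $W(c\sigma) = c\,W(\sigma)$, i.e. $W$ is positively homogeneous of degree one on the cone (the $c=0$ case being handled by continuity with the convention $0\log 0 = 0$). Thus the whole bracketed expression $\langle \vect{f}, \Phi_{\vect{\cM}}(\rho)\rangle - W(\rho)$ equals $c\big(\langle \vect{f}, \Phi_{\vect{\cM}}(\sigma)\rangle - W(\sigma)\big)$.

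Given this homogeneity, the two cases follow cleanly. If $\vect{f} \in \cF_\dual$, then by definition $\langle \vect{f}, \Phi_{\vect{\cM}}(\sigma)\rangle - W(\sigma) \leq 0$ for every $\sigma \in \DM(\cH)$, so each scaled quantity $c(\cdots)$ is $\leq 0$, and the value $0$ is attained at $\rho = 0$ (or in the limit $c \to 0^+$); hence the supremum is exactly $0 = \delta_{\cF_\dual}(\vect{f})$. If $\vect{f} \notin \cF_\dual$, there exists some $\sigma_0 \in \DM(\cH)$ with $\langle \vect{f}, \Phi_{\vect{\cM}}(\sigma_0)\rangle - W(\sigma_0) =: \kappa > 0$; then taking $\rho = c\sigma_0$ and letting $c \to +\infty$ drives the expression $c\kappa \to +\infty$, so the supremum is $+\infty = \delta_{\cF_\dual}(\vect{f})$. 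Combining both cases yields \Cref{eq:algorithm2_dual_set_indicator_relation}.

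The main obstacle I anticipate is rigorously verifying the degree-one homogeneity of $W$ on the cone, since $W$ is a priori defined only on density operators and its formula in \Cref{prop:conditional_entropy_form} mixes positive and negative entropy terms; one must check carefully that the sub-normalization factors (the $\log c$ terms from $H(cX)$) cancel between the $\sum_{s,p} H(\cK_{sp}(\rho))$ and $\sum_p H(\cK_p(\rho))$ contributions, using that $\sum_s \cK_{sp} = \cK_p$ in trace. Once that cancellation is confirmed, the remainder is a routine cone-scaling argument.
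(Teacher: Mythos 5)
Your proof is correct and follows essentially the same route as the paper: split on whether $\vect{f} \in \cF_\dual$, use linearity of $\Phi_{\vect{\cM}}$ and positive homogeneity $W(c\sigma) = c\,W(\sigma)$ to reduce the supremum over the cone $\pos(\cH)$ to normalized states, attain $0$ at $\rho = 0$ in the first case, and scale a violating $\sigma_0$ to $+\infty$ in the second. The only difference is that the paper simply asserts the homogeneity of $W$, whereas you verify it via the cancellation of the $c\log c$ terms using $\Tr[\cK_p(\rho)] = \sum_s \Tr[\cK_{sp}(\rho)]$ --- a worthwhile detail that the paper leaves implicit.
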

\begin{proof}
	Let $\alpha(\vect{f}) :=	\sup_{\rho}\{\langle \vect{f},\Phi_{\vect{\cM}}(\rho) \rangle - W(\rho): \rho \in \pos(\cH)\}$. 
	
	If $\vect{f} \in \cF_\dual$, then $\langle \vect{f},\Phi_{\vect{\cM}}(\rho) \rangle - W(\rho) \leq 0$ for any $\rho \in \DM(\cH)$. As any positive operator can be scaled from $\rho \in \DM(\cH)$ by a non-negative coefficient, it is the case that $\alpha \leq 0$. Thus, $\alpha(\vect{f}) = 0$ with $\rho = 0$. 
	
	If  $\vect{f} \not\in \cF_\dual$, then there exists $\rho_0 \in \DM(\cH)$ such that $\langle \vect{f},\Phi_{\vect{\cM}}(\rho_0) \rangle - W(\rho_0) > 0$. Let $\gamma = \langle \vect{f},\Phi_{\vect{\cM}}(\rho_0) \rangle - W(\rho_0)$ and $\rho_{\beta} = \beta \rho_0$ with $\beta \geq 0$. We notice that $W(\beta \rho) = \beta W(\rho)$ for any $\beta \geq 0$. Because $\Phi_{\vect{\cM}}$ is linear, with $\rho_\beta$, it is the case that $\alpha(\vect{f}) > \beta \gamma$. With $\beta \rightarrow \infty$, $\alpha(\vect{f}) \rightarrow \infty$. Thus, \Cref{eq:algorithm2_dual_set_indicator_relation} holds.
	\end{proof}

To derive the dual problem of \Cref{eq:algorithm2_primal}, we first compute the Fenchel conjugate of two functions that will become useful.
\begin{lemma}\label{lemma:conjugate_of_s}
	For $c_0\geq 0$ and $c_1 >0$, let $s: \mbR \rightarrow \mbR\  \cup \{+\infty\}$ be defined as
	\begin{aeq}
	s(x) = \begin{cases} 
		-\sqrt{c_0^2 - \abs{x}^2/c_1^2}  & \text{ if } \abs{x} \leq c_0 c_1,\\
		\infty & \text{ otherwise.}
		\end{cases}
		\end{aeq}The Fenchel conjugate function of $s$ is
	\begin{aeq}\label{eq:fenchel_conjugate_of_s}
		s^*(y) = c_0 \sqrt{1+c_1^2y^2}.
		\end{aeq}
	\end{lemma}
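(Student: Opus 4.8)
The goal is to compute the Fenchel conjugate $s^*(y) = \sup_{x \in \mbR} \{yx - s(x)\}$ of the given function $s$, which is finite only on the interval $[-c_0 c_1, c_0 c_1]$ where it equals $-\sqrt{c_0^2 - |x|^2/c_1^2}$.

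\medskip

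My plan is to proceed by direct calculation. First, I would observe that because $s(x) = +\infty$ outside $[-c_0 c_1, c_0 c_1]$, the supremum defining $s^*(y)$ is effectively restricted to $x \in [-c_0 c_1, c_0 c_1]$, so I need to compute
\begin{equation*}
	s^*(y) = \sup_{|x| \leq c_0 c_1} \Big\{ yx + \sqrt{c_0^2 - x^2/c_1^2} \Big\}.
\end{equation*}
The objective here is concave and differentiable on the open interval $(-c_0 c_1, c_0 c_1)$, so I would find the stationary point by setting the derivative with respect to $x$ to zero:
\begin{equation*}
	y - \frac{x}{c_1^2 \sqrt{c_0^2 - x^2/c_1^2}} = 0 .
\end{equation*}

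\medskip

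Next, I would solve this equation for the optimal $x^\star$. Rearranging gives $y c_1^2 \sqrt{c_0^2 - x^2/c_1^2} = x$, and squaring and collecting terms yields $x^\star = \dfrac{y c_0 c_1^2}{\sqrt{1 + c_1^2 y^2}}$ (taking the root with the correct sign so that $\operatorname{sgn}(x^\star) = \operatorname{sgn}(y)$). I would check that $|x^\star| \leq c_0 c_1$ always holds, so the maximizer is interior (for $y \neq 0$) and the constraint is never active; the boundary cases and $y = 0$ can be handled by continuity or directly. Substituting $x^\star$ back into the objective, I would compute $\sqrt{c_0^2 - (x^\star)^2/c_1^2} = c_0 / \sqrt{1 + c_1^2 y^2}$ and then
\begin{equation*}
	s^*(y) = y x^\star + \frac{c_0}{\sqrt{1 + c_1^2 y^2}} = \frac{c_0 c_1^2 y^2}{\sqrt{1 + c_1^2 y^2}} + \frac{c_0}{\sqrt{1 + c_1^2 y^2}} = \frac{c_0 (1 + c_1^2 y^2)}{\sqrt{1 + c_1^2 y^2}} = c_0 \sqrt{1 + c_1^2 y^2},
\end{equation*}
which is exactly the claimed formula.

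\medskip

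The calculation itself is routine single-variable optimization, so I do not expect a genuine conceptual obstacle. The only point requiring a little care is bookkeeping with signs and the degenerate case $c_0 = 0$ (where $s$ collapses to the indicator of $\{0\}$ and $s^*(y) = 0$, consistent with the formula), together with confirming that the interior critical point indeed lies within $[-c_0 c_1, c_0 c_1]$ rather than on the boundary. I would also note, as an optional cleaner alternative, that one could recognize $s$ as (the negative of) a shifted semicircle and invoke the standard conjugate-pair relationship between $-\sqrt{c_0^2 - (\cdot)^2/c_1^2}$ and $c_0\sqrt{1 + c_1^2(\cdot)^2}$, but the direct differentiation argument above is self-contained and is what I would write out.
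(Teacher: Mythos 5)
Your proposal is correct and is essentially the paper's own argument: the paper likewise restricts the supremum to $|x|\leq c_0c_1$ and then (in a footnote) sets the derivative to zero and substitutes back, which is exactly your computation carried out explicitly. Your additional checks — that the stationary point $x^{\star}= yc_0c_1^{2}/\sqrt{1+c_1^{2}y^{2}}$ is strictly interior, and the degenerate case $c_0=0$ — only make the "simple calculation" the paper gestures at fully rigorous.
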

\begin{proof}
	By \Cref{def:fenchel_conjugate}, the conjugate function of $s$ is
	\begin{aeq}
		s^*(y) &= \sup_{x \in \mbR} \{xy - s(x)\}\\
		&=\sup_{x \in \mbR} \{xy + \sqrt{c_0^2 - \abs{x}^2/c_1^2}: \abs{x} \leq c_0c_1\}
		\end{aeq}By a simple calculation to optimize over $x$, it is easy to verify that \Cref{eq:fenchel_conjugate_of_s} is indeed the conjugate function of $s$.\footnote{To find the supremum, one can solve case by case for two cases: $x\geq 0$ and $x < 0$. For each case, one looks for a solution of $x$ where the derivative is zero by simple computation. Then, one verifies that both solutions in these two cases lead to the same optimal value which is indeed the supremum.} 
	\end{proof}
\begin{lemma}\label{lemma:eat_second_order_correction_conjugate}
	For $c_0\geq 0$ and $c_1 >0$, let $\cE(\vect{\lambda}): \mbR^m \rightarrow \mbR  \ \cup \{+\infty\}$ be defined as 
	\begin{aeq}	
		\cE(\vect{\lambda}) = 
	\begin{cases}
		s(\norm{\vect{\lambda}}_1/2) & \text{ if } \sum_{x} \vect{\lambda}(x)=0 \\
	\infty & \text{ otherwise,}
	\end{cases}
\end{aeq}where $s$ is defined in \Cref{lemma:conjugate_of_s}. Then $\cE$ is a convex function and its conjugate function $\cE^*(\vect{f})$ is $\cE^*(\vect{f}) = s^*(\max(\vect{f})-\min(\vect{f}))$, that is, 
	\begin{aeq}
		\cE^*(\vect{f}) =  c_0 \sqrt{1+c_1^2[\max(\vect{f})-\min(\vect{f})]^2}\ .
		\end{aeq}
	\end{lemma}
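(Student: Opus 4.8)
The goal is to compute the Fenchel conjugate $\cE^*(\vect{f})$ of the function $\cE(\vect{\lambda})$, which is constrained to the hyperplane $\sum_x \vect{\lambda}(x) = 0$ and equals $s(\norm{\vect{\lambda}}_1/2)$ there. My plan is to work directly from the definition of the Fenchel conjugate (\Cref{def:fenchel_conjugate}) and reduce the supremum over $\vect{\lambda}$ to a one-dimensional supremum matching the conjugate $s^*$ computed in \Cref{lemma:conjugate_of_s}.

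First I would write out
\begin{aeq}
  \cE^*(\vect{f}) = \sup_{\vect{\lambda}} \Big\{ \langle \vect{f}, \vect{\lambda}\rangle - s(\norm{\vect{\lambda}}_1/2) : \sum_x \vect{\lambda}(x) = 0 \Big\},
\end{aeq}
using that $\cE$ is $+\infty$ off the hyperplane so those $\vect{\lambda}$ drop out of the supremum. The key observation is that for a fixed value $r := \norm{\vect{\lambda}}_1/2$, the penalty term $s(r)$ is held fixed, so the inner problem becomes maximizing the linear functional $\langle \vect{f}, \vect{\lambda}\rangle$ over all $\vect{\lambda}$ subject to the two constraints $\sum_x \vect{\lambda}(x) = 0$ and $\norm{\vect{\lambda}}_1 = 2r$. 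I would show that this maximum equals $r\,(\max(\vect{f}) - \min(\vect{f}))$: the optimal $\vect{\lambda}$ places mass $+r$ on the index achieving $\max(\vect{f})$ and mass $-r$ on the index achieving $\min(\vect{f})$ (this respects both the zero-sum constraint and the $\ell_1$-norm constraint, and it is optimal because moving any positive/negative mass toward the extremal coordinates can only increase the inner product). This is essentially the statement that, over the zero-sum $\ell_1$-ball, the support functional of $\vect{f}$ is the "range" $\max(\vect{f})-\min(\vect{f})$ scaled by $r$.

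Having reduced the inner maximization, I would then take the remaining supremum over $r \geq 0$:
\begin{aeq}
  \cE^*(\vect{f}) = \sup_{r \geq 0} \big\{ r\,(\max(\vect{f}) - \min(\vect{f})) - s(r) \big\}.
\end{aeq}
Writing $y := \max(\vect{f}) - \min(\vect{f}) \geq 0$ and noting $s$ is an even function (so restricting to $r \geq 0$ loses nothing when paired against $y \geq 0$), this is exactly $\sup_{x} \{ xy - s(x)\} = s^*(y)$ as computed in \Cref{lemma:conjugate_of_s}. Substituting \Cref{eq:fenchel_conjugate_of_s} then yields
\begin{aeq}
  \cE^*(\vect{f}) = s^*(\max(\vect{f}) - \min(\vect{f})) = c_0\sqrt{1 + c_1^2[\max(\vect{f})-\min(\vect{f})]^2}.
\end{aeq}
The convexity of $\cE$ I would establish separately and more easily: $s$ is convex (it is the negative of a concave square-root of a concave quadratic on its domain, and one checks $s''\geq 0$ directly), $\vect{\lambda}\mapsto \norm{\vect{\lambda}}_1/2$ is convex, and $s$ is nondecreasing on the relevant range $[0, c_0c_1]$, so the composition $s(\norm{\vect{\lambda}}_1/2)$ is convex; intersecting the domain with the affine hyperplane $\sum_x\vect{\lambda}(x)=0$ (adding the indicator of an affine set) preserves convexity.

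The main obstacle I anticipate is rigorously justifying the inner-maximization step — that $\max\{\langle\vect{f},\vect{\lambda}\rangle : \sum_x\vect{\lambda}(x)=0,\ \norm{\vect{\lambda}}_1 = 2r\}$ equals $r(\max(\vect{f})-\min(\vect{f}))$. One must argue both the achievability (exhibit the two-point $\vect{\lambda}$) and the upper bound; the upper bound follows by splitting $\vect{\lambda}$ into positive and negative parts $\vect{\lambda}_+, \vect{\lambda}_-$ with equal total mass $r$ (forced by the two constraints together), and bounding $\langle \vect{f},\vect{\lambda}_+\rangle \leq r\max(\vect{f})$ and $\langle\vect{f},-\vect{\lambda}_-\rangle \leq -r\min(\vect{f})$. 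A minor care point is that the $\sup$ over $r$ may need the constraint $r \leq c_0 c_1$ coming from $\dom(s)$, but since $s=+\infty$ outside that interval this is automatically enforced and aligns precisely with the one-dimensional conjugate already handled in \Cref{lemma:conjugate_of_s}.
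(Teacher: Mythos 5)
Your proposal is correct and follows essentially the same route as the paper's proof: both exploit the decomposition $\vect{\lambda} = \vect{\lambda}_+ - \vect{\lambda}_-$ with equal masses forced by the zero-sum constraint, concentrate the mass on the coordinates achieving $\max(\vect{f})$ and $\min(\vect{f})$, and reduce the remaining one-dimensional supremum to $s^*(\max(\vect{f})-\min(\vect{f}))$ via \Cref{lemma:conjugate_of_s}. The only (cosmetic) difference is organizational—you fix the radius $r = \norm{\vect{\lambda}}_1/2$ first and then optimize the direction, whereas the paper optimizes $\vect{\lambda}_-$, then $\vect{\lambda}_+$, then the scalar—and your convexity argument via monotone composition is in fact spelled out more carefully than the paper's.
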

\begin{proof}
	The convexity of $\cE$ follows from the convexity of $s$ and the convexity of the set $\{\vect{\lambda} \in \mbR^m: \sum_{x} \vect{\lambda}(x) = 0\}$. From \Cref{def:fenchel_conjugate},
	\begin{aeq}
		\cE^*(\vect{f}) &= \sup_{\vect{\lambda}\in \mbR^m} \{\langle \vect{\lambda}, \vect{f}\rangle - \cE(\vect{\lambda})\}\\
		& = \sup_{\vect{\lambda}\in \mbR^m} \{\langle \vect{\lambda}, \vect{f}\rangle - s(\norm{\vect{\lambda}}_1/2):  \sum_{x} \vect{\lambda}(x)=0 \}\\
			& = \sup_{\vect{\lambda}\in \mbR^m} \{\langle  \vect{\lambda}, \vect{f}\rangle	+ \sqrt{c_0^2 - \norm{\vect{\lambda}}_1^2/(4c_1^2)}: \norm{\vect{\lambda}}_1 \leq 2c_0c_1,    \sum_{x} \vect{\lambda}(x)=0 \}
		\end{aeq}Let $\vect{\lambda_+}$ be defined as $\vect{\lambda_+}(x) = \max(\vect{\lambda}(x),0)$ and  $\vect{\lambda_-}$ be defined as $\vect{\lambda_-}(x) = \max(-\vect{\lambda}(x),0)$. Then, it is clear that $\vect{\lambda} = \vect{\lambda_+} - \vect{\lambda_1}$ and $\norm{\vect{\lambda}}_1 =\norm{\vect{\lambda_+}}_1  +\norm{\vect{\lambda_-}}_1$. The condition $\sum_{x}\vect{\lambda}(x)=0$ implies $\norm{\vect{\lambda_+}}_1  = \norm{\vect{\lambda_-}}_1  = \frac{1}{2}\norm{\vect{\lambda}}_1.$ Thus,
	\begin{aeq}
			\cE^*(\vect{f}) &= \sup_{\substack{\vect{\lambda_+}\in \mbR^m\\ \vect{\lambda_-} \in \mbR^m}} \{\langle  \vect{f}, \vect{\lambda_+}\rangle - \langle \vect{f}, \vect{\lambda_-}\rangle 	+ \sqrt{c_0^2 - \norm{\vect{\lambda_+}}_1^2/c_1^2}: \norm{\vect{\lambda_+}}_1 = \norm{\vect{\lambda_-}}_1 \leq c_0c_1, \vect{\lambda_+} \geq 0, \vect{\lambda_-} \geq 0 \}\\
			&=\sup_{\vect{\lambda_+}\in \mbR^m, v \in \mbR} \{  \langle  \vect{f}, \vect{\lambda_+}\rangle - v \min(\vect{f})	+ \sqrt{c_0^2 - v^2/c_1^2}:
			0 \leq v = \norm{\vect{\lambda_+}}_1 \leq c_0c_1, \vect{\lambda_+} \geq 0 \}\\
			& = \sup_{v \in \mbR}  \{  v[\max(\vect{f}) - \min(\vect{f})]	+ \sqrt{c_0^2 - v^2/c_1^2}:
			0 \leq v \leq c_0c_1 \}\\
				& = \sup_{v \in \mbR}  \{  v[\max(\vect{f}) - \min(\vect{f})]	+ \sqrt{c_0^2 - v^2/c_1^2}:
		\abs{v} \leq c_0c_1 \}\\
			&= \sup_{v \in \mbR}  \{  v[\max(\vect{f}) - \min(\vect{f})] - s(v)\}\\
			&= s^*(\max(\vect{f}) - \min(\vect{f})). 
		\end{aeq}In the second line above, we optimize $\vect{\lambda_-}$ and the optimal value is achieved when $\vect{\lambda_-}$ contains a single nonzero entry in the position corresponding to $\min(\vect{f})$. In the third line, we optimize $\vect{\lambda_+}$ and the optimal value is achieved when $\vect{\lambda_+}$ contains a single nonzero entry in the position corresponding to $\max(\vect{f})$. In the fourth line, we drop the constraint $v \geq 0$ since the optimal value is always achieved when $v \geq 0$ due to the fact $\max(\vect{f}) - \min(\vect{f}) \geq 0$. In the fifth line, we use the definition of $s$. The last line follows from the definition of $s^*$.
	\end{proof}
\begin{prop}\label{prop:algorithm2_primal_dual_relation}
	 Let $c_0\geq 0$ and $c_1 >0$ be constants. For the problem
	\begin{aeq}\label{eq:eat_fenchel_primal_problem}
h_{\primal}(\vect{q}_0):=	&\minimize_{\rho} & \; \; W(\rho) + \sqrt{c_0^2 - \norm{\vect{q}_0- \Phi_{\vect{\cM}}(\rho)}_1^2/(4c_1^2)}\\
		&\st & \ \Tr(\rho) = 1 \\
	&	&\	\norm{\vect{q}_0- \Phi_{\vect{\cM}}(\rho)}_1 \leq 2c_0c_1\\
	&	& \ \rho \geq 0, 
		\end{aeq}its Fenchel dual problem is
	\begin{aeq}\label{eq:eat_fenchel_dual_problem}
	h_{\dual}(\vect{q}_0):=&	\maximize_{\vect{f}} & \  \vect{q}_0 \cdot \vect{f} - c_0\sqrt{1+c_1^2[\max(\vect{f})-\min(\vect{f})]^2} \\ 
&\st	&\ \vect{f} \cdot \Phi_{\vect{\cM}}(\rho) \leq W(\rho) \  \forall \rho \in \DM(\cH).
		\end{aeq}Moreover, if there exists $\sigma \in \DM(\cH)$ such that $\Phi_{\vect{\cM}}(\sigma) = \vect{q}_0$, then $h_{\primal}(\vect{q}_0)= 	h_{\dual}(\vect{q}_0)$.
	\end{prop}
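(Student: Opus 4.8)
The plan is to recognize \Cref{eq:eat_fenchel_primal_problem} as an instance of the Fenchel primal in \Cref{thm:fenchel_duality} and to read off its dual directly from the two conjugate computations already established in \Cref{lemma:rewrite_dual_set_indicator_function} and \Cref{lemma:eat_second_order_correction_conjugate}. Concretely, I would take the bounded linear map to be $A = \Phi_{\vect{\cM}}$, the state-to-statistics map of \Cref{eq:PE_measurement_channel_vector}, and split the objective as $F(\rho) + G(A\rho)$ with $F(\rho) := W(\rho) + \delta_{\pos(\cH)}(\rho)$ and $G(\vect{p}) := \cE(\vect{q}_0 - \vect{p})$. The key observation that makes this identification exact is the POVM completeness relation $\sum_{\lambda} M_{\lambda} = \idm$: since $\cE$ is $+\infty$ unless its argument sums to zero, the domain of $G \circ A$ automatically enforces $\sum_{\lambda}(\vect{q}_0 - \Phi_{\vect{\cM}}(\rho))(\lambda) = 1 - \Tr(\rho) = 0$, i.e.\ the normalization constraint $\Tr(\rho) = 1$, while the finite branch of $\cE$ reproduces both the penalty term and the $\ell_1$-ball constraint $\norm{\vect{q}_0 - \Phi_{\vect{\cM}}(\rho)}_1 \le 2c_0c_1$ of \Cref{eq:eat_fenchel_primal_problem}.

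With this split I would compute the two pieces of the Fenchel dual $d = \sup_{\vect{v}}\{-F^*(A^{\dagger}\vect{v}) - G^*(-\vect{v})\}$. First, since $A^{\dagger}\vect{v} = \sum_{\lambda}\vect{v}(\lambda)M_{\lambda}$, one has $F^*(A^{\dagger}\vect{v}) = \sup_{\rho \in \pos(\cH)}\{\langle \vect{v}, \Phi_{\vect{\cM}}(\rho)\rangle - W(\rho)\}$, which is exactly the quantity shown to equal the indicator $\delta_{\cF_\dual}(\vect{v})$ in \Cref{lemma:rewrite_dual_set_indicator_function}; hence $-F^*(A^{\dagger}\vect{v})$ contributes precisely the dual feasibility constraint $\vect{f}\cdot \Phi_{\vect{\cM}}(\rho) \le W(\rho)$ for all $\rho \in \DM(\cH)$. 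Second, a change of variables $\vect{\lambda} = \vect{q}_0 - \vect{p}$ gives $G^*(-\vect{v}) = -\langle \vect{v}, \vect{q}_0\rangle + \cE^*(\vect{v})$, and substituting the conjugate $\cE^*(\vect{v}) = c_0\sqrt{1 + c_1^2[\max(\vect{v}) - \min(\vect{v})]^2}$ from \Cref{lemma:eat_second_order_correction_conjugate} turns $-G^*(-\vect{v})$ into the dual objective of \Cref{eq:eat_fenchel_dual_problem} after renaming $\vect{v} \to \vect{f}$. Weak duality $h_{\primal}(\vect{q}_0) \ge h_{\dual}(\vect{q}_0)$ is then immediate from \Cref{thm:fenchel_duality}.

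For the strong-duality claim I would verify the constraint qualification \Cref{eq:fenchel_duality_condition}, namely $\relint(\dom(G)) \cap A\,\relint(\dom(F)) \ne \emptyset$. Here $\dom(G) = \{\vect{p} : \sum_{\lambda}\vect{p}(\lambda) = 1,\ \norm{\vect{p} - \vect{q}_0}_1 \le 2c_0c_1\}$ has affine hull the probability-sum hyperplane, so its relative interior is the open $\ell_1$-ball about $\vect{q}_0$, which contains $\vect{q}_0$ itself; and $\relint(\dom(F))$ is the set of positive-definite operators, so $A\,\relint(\dom(F)) = \{\Phi_{\vect{\cM}}(\rho) : \rho > 0\}$. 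The hypothesis supplies some $\sigma \in \DM(\cH)$ with $\Phi_{\vect{\cM}}(\sigma) = \vect{q}_0$, but $\sigma$ need not be full rank, so I would instead take the perturbed state $\sigma_t = (1-t)\sigma + t\,\idm/\dim(\cH)$ with $t > 0$ small. Then $\sigma_t > 0$, so $\Phi_{\vect{\cM}}(\sigma_t) \in A\,\relint(\dom(F))$, and since $\norm{\Phi_{\vect{\cM}}(\sigma_t) - \vect{q}_0}_1 = t\,\norm{\Phi_{\vect{\cM}}(\idm/\dim(\cH)) - \vect{q}_0}_1$ shrinks to $0$ while normalization is preserved, $\Phi_{\vect{\cM}}(\sigma_t)$ lies in $\relint(\dom(G))$ for small enough $t$. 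This produces the required common point, so \Cref{thm:fenchel_duality} yields $p = d$ together with attainment of the dual supremum.

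The main obstacle I anticipate is not the conjugate bookkeeping (which is handled by the two lemmas) but getting the convex-analytic setup exactly right: ensuring $F$ and $G$ are proper convex functions on the correct ambient spaces, confirming that the $\cE$-domain constraint genuinely encodes $\Tr(\rho)=1$ through $\sum_{\lambda} M_{\lambda} = \idm$ (so that no separate normalization indicator is needed and $F^*$ matches \Cref{lemma:rewrite_dual_set_indicator_function} verbatim), and the relative-interior perturbation argument for the constraint qualification, where the subtlety is that a feasible $\sigma$ with $\Phi_{\vect{\cM}}(\sigma) = \vect{q}_0$ may sit on the boundary of the positive cone and must be pushed into its interior without leaving the open penalty ball around $\vect{q}_0$.
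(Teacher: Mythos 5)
Your proof takes essentially the same route as the paper's: the identical Fenchel splitting $F(\rho)=W(\rho)+\delta_{\pos(\cH)}(\rho)$ and $G=\cE(\pm(\cdot-\vect{q}_0))$ (the sign choice is immaterial), the same use of \Cref{lemma:rewrite_dual_set_indicator_function} to turn $F^*(\Phi_{\vect{\cM}}^{\dagger}\vect{f})$ into the dual feasibility constraint, of \Cref{lemma:eat_second_order_correction_conjugate} for the dual objective, and the same mixing-with-the-identity (depolarizing) perturbation $\sigma_t=\cD_t(\sigma)$ to verify the constraint qualification \Cref{eq:fenchel_duality_condition}. If anything, your constraint-qualification step is stated slightly more carefully than the paper's, which asserts $\Phi_{\vect{\cM}}(\cD_\epsilon(\sigma))=(1-\epsilon)\vect{q}_0+\epsilon\vect{1}/m$ (valid only when all the $\Tr(M_x)$ are equal), whereas you invoke only $\norm{\Phi_{\vect{\cM}}(\sigma_t)-\vect{q}_0}_1\to 0$, which is all the argument actually needs.
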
 
\begin{proof}
	We introduce the indicator function $\delta_{\pos(\cH)}$ for the set $\pos(\cH)$ (see \Cref{eq:indicator_function}). We identify any $\rho \in \pos(\cH)$ as an element in $\mbR^n$ for $n=4\dim(\cH)^2$ by representing $\rho$ with its real and imaginary parts, and then applying an appropriate vectorization of $\rho$. Similarly, we abuse the notation $\Phi_{\vect{\cM}}$ and let it denote the corresponding linear map from $\mbR^n$ to $\mbR^m$ where $m$ is the number of POVM elements for this measurement map. To apply \Cref{thm:fenchel_duality}, we define $f(\rho)= W(\rho) + \delta_{\pos(\cH)}(\rho)$, $g(\vect{\lambda}) = \cE(\vect{\lambda}-\vect{q}_0)$ and let $\Phi_{\vect{\cM}}$ play the role of $A$ in the theorem. We first calculate $f^*$ and $g^*$:
	\begin{align}
		f^*(Z) &= \sup_{\rho} \{\langle Z, \rho \rangle - f(\rho)\} \notag \\
		&= \sup_{\rho} \{\langle Z, \rho \rangle - W(\rho): \rho \in \pos(\cH)\} \\
			g^*(\vect{f}) &=  \sup_{\vect{\lambda}} \{\langle \vect{f}, \vect{\lambda} \rangle -\cE(\vect{\lambda}-\vect{q_0})\}\notag \\
			&=  \sup_{\vect{\lambda'}} \{\langle \vect{f}, \vect{\lambda'}  \rangle + \langle \vect{f}, \vect{q_0} \rangle -\cE(\vect{\lambda'})\}\notag \\
			&= \cE^*(\vect{f}) + \langle  \vect{f},\vect{q_0} \rangle.
		\end{align}

Because $\sum_x \vect{q_0}(x) = 1$ and $\sum_x \Phi_{\vect{\cM}}(\rho) (x)= 1$ for $\Tr(\rho)=1$, one can use the definition of $f$ and $\cE$ to rewrite the problem in \Cref{eq:eat_fenchel_primal_problem} as
	\begin{aeq}
	h_{\primal}(\vect{q_0})&= \inf_{\rho \in \mbR^n} \{ f(\rho) + \cE(\Phi_{\vect{\cM}}(\rho)-\vect{q_0}) \}	\\
	&= \inf_{\rho \in \mbR^n}\{f(\rho)+ g(\Phi_{\vect{\cM}}(\rho))\}.
	\end{aeq}
We then use the indicator function for the set $\cF_\dual$ and the conjugate function $\cE^*$ of $\cE$ to rewrite \Cref{eq:eat_fenchel_dual_problem} as
\begin{aeq}
	h_{\dual}(\vect{q}_0)&=\sup_{\vect{f}}	\{- \langle \vect{q}_0, -\vect{f} \rangle  - \cE^*(\vect{f}) - \delta_{\cF_\dual}(\vect{f})\}\\
	&= \sup_{\vect{f}}	\{-g^*(-\vect{f}) - \delta_{\cF_\dual}(\vect{f})\}.
	\end{aeq}By \Cref{lemma:rewrite_dual_set_indicator_function}, $\delta_{\cF_\dual}(\vect{f})= \sup_{\rho}\{ \langle \vect{f},\Phi_{\vect{\cM}}(\rho) \rangle  - W(\rho): \rho \in \pos(\cH)\}.$ Using this substitution, we find
\begin{aeq}
		h_{\dual}(\vect{q}_0)	&= \sup_{\vect{f} \in \mbR^m}\{-\sup_{\rho}\{\langle \vect{f},\Phi_{\vect{\cM}}(\rho) \rangle -W(\rho): \rho \in \pos(\cH) \}- g^*(-\vect{f}) \}\\
		&= \sup_{\vect{f} \in \mbR^m}\{-\sup_{\rho}\{\langle \Phi_{\vect{\cM}}^{\dagger}(\vect{f}), \rho \rangle -W(\rho): \rho \in \pos(\cH) \}- g^*(-\vect{f}) \}\\
	&= \sup_{\vect{f} \in \mbR^m}\{-f^*(\Phi_{\vect{\cM}}^{\dagger}(\vect{f}))- g^*(-\vect{f})\}.
\end{aeq}Therefore, the optimization problem in \Cref{eq:eat_fenchel_dual_problem} is the dual problem of the optimization problem in \Cref{eq:eat_fenchel_primal_problem}. 

To show $h_{\primal}(\vect{q}_0)= 	h_{\dual}(\vect{q}_0)$, we just need to verify the condition 
\begin{aeq}\label{eq:Fenchel_strong_duality_condition}
\relint(\dom(g)) \cap \Phi_{\vect{\cM}} \relint(\dom(f)) \neq \emptyset.
\end{aeq}From the definition of domain [see \Cref{eq:domain_of_extension}], \begin{aeq}
\dom(g) &= \{\vect{\lambda} \in \mbR^m: \sum_{x}\vect{\lambda}(x)=1, \norm{\vect{\lambda}-\vect{q_0}}_1 \leq 2c_0c_1\}\\
\dom(f) & =  \{\rho \in \mbR^n: \rho \geq 0\}
\end{aeq}It is clear that $ \vect{q}_0 \in \relint(\dom(g))$. Moreover, there exists some $\epsilon >0$ such that $(1-\epsilon)\vect{q}_0 + \epsilon \vect{1}/m \in \relint(\dom(g))$. We now show that $ (1-\epsilon)\vect{q}_0 + \epsilon \vect{1}/m \in \Phi_{\vect{\cM}} \relint(\dom(f))$. By the assumption, there exists $\sigma \in \DM(\cH)$ such that $\Phi_{\vect{\cM}}(\sigma) = \vect{q}_0$.  Then for any $\delta > 0$, it is the case that $\cD_\delta(\sigma) \in \relint(\dom(f))$. In particular, $\cD_\epsilon(\sigma) \in \relint(\dom(f))$. Since $\Phi_{\vect{\cM}}(\cD_\epsilon(\sigma) ) = (1-\epsilon)\vect{q}_0 + \epsilon \vect{1}/m$, it is the case that  $(1-\epsilon)\vect{q}_0 + \epsilon \vect{1}/m \in \Phi_{\vect{\cM}} \relint(\dom(f))$. Therefore, \Cref{eq:Fenchel_strong_duality_condition} holds. We conclude that $h_{\primal}(\vect{q}_0)= h_{\dual}(\vect{q}_0)$.
	\end{proof}
Since we establish that $h_{\primal}(\vect{q}_0)= 	h_{\dual}(\vect{q}_0)$, we can now solve the primal problem in \Cref{eq:eat_fenchel_primal_problem}. The difficulty is that the objective function is not differentiable at points where$\Phi_{\vect{\cM}}(\rho) = \vect{q}_0$. Since many standard solvers handle differentiable functions more effectively, we introduce a slack variable and rewrite the problem to an equivalent one according to the following lemma.
\begin{lemma}
	The following optimization problem
	\begin{aeq}
\tilde{h}_\primal(\vect{q}_0):=&	\minimize_{\rho, \vect{\xi}} \  & \ W(\rho) - \sqrt{c_0^2 - \Big(\sum_x\vect{\xi}(x)\Big)^2/(4c_1^2)} \\
	&\st \ & \ -\vect{\xi} \leq \Phi_{\vect{\cM}}(\rho) - \vect{q}_0 \leq \vect{\xi}\\
	&\ & \ \sum_x \vect{\xi}(x) \leq 2c_0c_1\\
	&\ & \Tr(\rho) = 1\\
	&\ & \ \rho \geq 0, \ \vect{\xi} \in \mbR^{\abs{\cX}}
	\end{aeq}has the same optimal value as the problem in \Cref{eq:eat_fenchel_primal_problem}, that is, $\tilde{h}_\primal(\vect{q}_0)=h_\primal(\vect{q}_0).$
	\end{lemma}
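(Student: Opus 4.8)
The plan is to eliminate the slack vector $\vect{\xi}$ by performing, for each fixed $\rho$, the inner minimization over $\vect{\xi}$, and to show that this inner minimization reproduces exactly the objective and the feasible set of the problem $h_\primal(\vect{q}_0)$ in \Cref{eq:eat_fenchel_primal_problem}. Since the coupling between $\rho$ and $\vect{\xi}$ is mild, this separation is the natural route.

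First I would unpack the box constraint. Componentwise, the pair $-\vect{\xi} \leq \Phi_{\vect{\cM}}(\rho) - \vect{q}_0 \leq \vect{\xi}$ is equivalent to $\vect{\xi}(x) \geq |(\Phi_{\vect{\cM}}(\rho) - \vect{q}_0)(x)|$ for every $x \in \cX$. In particular $\vect{\xi} \geq 0$ entrywise, so $t := \sum_x \vect{\xi}(x) \geq 0$, and summing the componentwise bound gives $t \geq \norm{\vect{q}_0 - \Phi_{\vect{\cM}}(\rho)}_1$.

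Next comes the monotonicity step. For fixed $\rho$, the objective depends on $\vect{\xi}$ only through $t$, via the term $-\sqrt{c_0^2 - t^2/(4c_1^2)}$. On the admissible range $0 \leq t \leq 2c_0 c_1$, where the radicand is nonnegative and the square root is well-defined, this map is nondecreasing in $t$, since its derivative is $\tfrac{t/(4c_1^2)}{\sqrt{c_0^2 - t^2/(4c_1^2)}} \geq 0$. Hence the inner minimization over $\vect{\xi}$ is attained by making $t$ as small as its lower bound permits, i.e. $t = \norm{\vect{q}_0 - \Phi_{\vect{\cM}}(\rho)}_1$, realized by the explicit choice $\vect{\xi}(x) = |(\Phi_{\vect{\cM}}(\rho) - \vect{q}_0)(x)|$. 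Substituting this back turns the objective into $W(\rho) - \sqrt{c_0^2 - \norm{\vect{q}_0 - \Phi_{\vect{\cM}}(\rho)}_1^2/(4c_1^2)}$, which is precisely the objective of $h_\primal(\vect{q}_0)$. I would then check the feasibility correspondence: for fixed $\rho$ with $\Tr\rho = 1$ and $\rho \geq 0$, an admissible $\vect{\xi}$ with $\sum_x \vect{\xi}(x) \leq 2c_0 c_1$ exists if and only if $\norm{\vect{q}_0 - \Phi_{\vect{\cM}}(\rho)}_1 \leq 2c_0 c_1$, which is exactly the constraint cutting out the feasible set of $h_\primal$. Taking the outer infimum over $\rho$ then yields $\tilde{h}_\primal(\vect{q}_0) = h_\primal(\vect{q}_0)$.

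The only genuine subtlety — and thus the step I would treat most carefully — is justifying that the box constraint is tight at the optimum: this rests on the sign of the derivative (monotonicity direction) together with the forced nonnegativity $t \geq 0$, and on verifying that the radicand stays nonnegative so the reduced objective is well-defined throughout. Everything else is bookkeeping on the two feasible sets, and I would note explicitly that the reduction is value-preserving in both directions, so no infimum is lost in passing from $\tilde{h}_\primal$ to $h_\primal$.
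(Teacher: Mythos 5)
Your proof is correct and rests on exactly the same ingredients as the paper's: the box constraint forces $\sum_x \vect{\xi}(x) \geq \norm{\vect{q}_0 - \Phi_{\vect{\cM}}(\rho)}_1$, the penalty term $-\sqrt{c_0^2 - t^2/(4c_1^2)}$ is nondecreasing in $t$ on the admissible range, and the choice $\vect{\xi}(x) = \abs{(\Phi_{\vect{\cM}}(\rho)-\vect{q}_0)(x)}$ attains the bound. The only difference is organizational --- you phrase it as an inner minimization over $\vect{\xi}$ for each fixed $\rho$, whereas the paper proves the two inequalities $\tilde{h}_\primal \geq h_\primal$ and $\tilde{h}_\primal \leq h_\primal$ separately (the latter via an optimal $\rho^\star$) --- which is the same argument in a slightly different arrangement.
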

\begin{proof}
		We first notice that the function $- \sqrt{c_0^2 - t^2/(4c_1^2)}$ is a monotonically non-decreasing function of $0 \leq t \leq 2c_0c_1$. For $\vect{\xi} \geq 0, \norm{\vect{\xi}}_1 = \sum_j\vect{\xi}(j).$ The condition  $-\vect{\xi} \leq \Phi_{\vect{\cM}}(\rho) - \vect{q_0} \leq \vect{\xi}$ implies that $\vect{\xi} \geq 0$ and $\norm{\vect{\xi}}_1 \geq \norm{\vect{q}_0- \Phi_{\vect{\cM}}(\rho)}_1$. Thus, it is clear that $\tilde{h}_\primal(\vect{q}_0) \geq h_\primal(\vect{q}_0)$. Let $\rho^\star$ be an optimal solution for $h_\primal(\vect{q}_0)$ and let $\vect{\xi}^\star(j) := \abs{\vect{q}_0(j)- \Phi_{\vect{\cM}}(\rho^\star)(j)}$. Then $\norm{\vect{\xi}^\star}_1 = \norm{\vect{q}_0- \Phi_{\vect{\cM}}(\rho^\star)}_1 \leq 2c_0c_1.$ This implies, $(\rho^\star, \vect{\xi}^\star)$ is a feasible solution for $\tilde{h}_\primal(\vect{q}_0)$ that achieves the same objective function value as $h_\primal(\vect{q}_0)$. Therefore, $\tilde{h}_\primal(\vect{q}_0) \leq h_\primal(\vect{q}_0)$. We conclude that $\tilde{h}_\primal(\vect{q}_0)=h_\primal(\vect{q}_0).$
	\end{proof}

\end{document}